\documentclass[copyright,creativecommons]{eptcs}
\usepackage{amssymb,stmaryrd}
\usepackage{breakurl}
\usepackage{enumerate}
\usepackage{mathrsfs}
\usepackage{fancybox}
\newfont{\bbb}{bbm10 scaled 1100}        
\def\text#1{\textrm{#1}}

\def\precond#1{{\vphantom{#1}}^\bullet #1}
\def\postcond#1{{#1}^\bullet}

\def\Production#1{\stackrel{#1}{\Longrightarrow}}
\def\production#1{\stackrel{#1}{\longrightarrow}}
\def\equivalent{\Leftrightarrow}
\newfont{\fsc}{eusm10 scaled 1100}      
\def\powermultiset#1{\nat^{#1}}
\def\implies{\Rightarrow}

\def\equivalent{\Leftrightarrow}
\def\mathrlap{\mathpalette\mathrlapinternal}
\def\mathrlapinternal#1#2{%
  \rlap{$\mathsurround=0pt#1{#2}$}}
\def\mathllap{\mathpalette\mathllapinternal}
\def\mathllapinternal#1#2{%
  \llap{$\mathsurround=0pt#1{#2}$}}

\def\into{\rightarrow}

\def\defitem#1{\emph{#1}}

\def\rpair#1{\langle#1\rangle}

\def\onespace#1{\let\argument=#1\ifx\onespace#1\else~\fi\argument}

\def\readyset{\mathcal{R}}

\def\structuralM{\mbox{\sf M}}
\let\origmin\min
\def\min{\mathord{\origmin}}
\let\origmax\max
\def\max{\mathord{\origmax}}

\def\quireunderscore{_}
\def\quire#1{%
  \def\tmp{#1}%
  \ifx\tmp\quireunderscore%
    \def\tmp{\quireindexed_}
  \else%
    \def\tmp{\mathcal{Q}#1}
  \fi\tmp}
\def\quireindexed_#1{\mathcal{Q}_{\text{#1}}}

\makeatletter
\def\goesto{\@transition\rightarrowfill}
\def\Goesto{\@transition\Rightarrowfill}
\def\ngoesto{\@transition\nrightarrowfill}
\def\nGoesto{\@transition\nRightarrowfill}
\def\@transition#1{\@ifnextchar[{\@@transition{#1}}{\@@transition{#1}[]}}
\newbox\@transbox
\newbox\@arrowbox
\def\rightarrowfill{$\m@th\mathord-\mkern-6mu%
  \cleaders\hbox{$\mkern-2mu\mathord-\mkern-2mu$}\hfill
  \mkern-6mu\mathord\rightarrow$}
\def\Rightarrowfill{$\m@th\mathord=\mkern-6mu%
  \cleaders\hbox{$\mkern-2mu\mathord=\mkern-2mu$}\hfill
  \mkern-6mu\mathord\Rightarrow$}
\def\@@transition#1[#2]%
   {\setbox\@transbox\hbox
       {\vrule height 1.5ex depth 1ex width 0ex\hskip0.25em$\scriptstyle#2$\hskip0.25em}
   \ifdim\wd\@transbox<1.5em
      \setbox\@transbox\hbox to 1.5em{\hfil\box\@transbox\hfil}\fi
   \setbox\@arrowbox\hbox to \wd\@transbox{#1}
   \ht\@arrowbox\z@\dp\@arrowbox\z@
   \setbox\@transbox\hbox{$\mathop{\box\@arrowbox}\limits^{\box\@transbox}$}
   \ht\@transbox\z@\dp\@transbox\z@
   \mathrel{\box\@transbox}}
\makeatother

\def\alignedcaption[#1&#2]{\mbox{\scriptsize $\mathllap{#1{}}\mathrlap{#2}$}}

\def\ie{i.e.\ }

\def\varnothing{\emptyset}
\def\Act{{\rm Act}}
\def\Loc{\textrm{\upshape Loc}}
\def\concurrent{\smile}

\def\restrictedto{\mathop\upharpoonright}

\newcommand{\dcup}{\stackrel{\mbox{\huge .}}{\cup}}   
\newcommand{\plat}[1]{\raisebox{0pt}[0pt][0pt]{#1}}   
\newcommand{\inp}{\mathbin\in}                        
\def\idx#1#2#3#4#5{
  \def\argone{#1}
  \def\argtwo{#2}
  \def\argthree{#3}
  \def\argfour{#4}
  \def\argfive{#5}
  \def\testprime{'}
  \def\testdprime{''}
  \def\testtprime{'''}
  {\vphantom{\argthree}}_%
    {\vphantom{\argfour}\argone}^%
    {\vphantom{\argfive}{\argtwo}}%
  \argthree_%
    {\vphantom{\argone}\argfour}%
    \ifx\argfive\testprime\argfive\else%
    \ifx\argfive\testdprime\argfive\else%
    \ifx\argfive\testtprime\argfive\else%
    ^{\vphantom{\argtwo}\argfive}\fi\fi\fi%
}



\def\indexset{\mathfrak{K}}

\def\impl#1{\mathcal{I}(#1)}

\newcommand{\visible}{}

\newenvironment{itemise}{\begin{list}{$\bullet$}{\leftmargin 12pt \labelwidth\leftmargin\advance\labelwidth-\labelsep \topsep 4pt \itemsep 2pt \parsep 2pt}}{\end{list}}

\def\justempty{}

\DeclareFontFamily{T1}{la}{}
\DeclareFontShape{T1}{la}{m}{n}{<->s*[0.8571428571]la14}{}

\newcommand{\Rel}{\mathcal{B}\,}

\def\titlerunning{On Distributability of Petri Nets}
\title{\titlerunning\thanks{This work was
    partially supported by the DFG (German Research Foundation).\newline
    An extended abstract of this paper appeared in L.~Birkedal, ed.: Proc.\ 15th
    Int.\ Conf.\ on Foundations of Software Science and
    Computation Structures (FoSSaCS 2012), LNCS 7213,
    Springer, 2012, pp. 331--345, doi:\urlalt{http://dx.doi.org/10.1007/978-3-642-28729-9_22}
    {10.1007/978-3-642-2872-9_22}.}}

\author{
Rob van Glabbeek
\institute{NICTA, Sydney, Australia}
\institute{School of Computer Science and Engineering\\
 Univ.\ of New South Wales,
 Sydney, Australia}
\email{rvg@cs.stanford.edu}
 \and
Ursula Goltz  \qquad~
Jens-Wolfhard Schicke-Uffmann
\institute{Institute for Programming and Reactive Systems\\
 TU Braunschweig, Germany}
 \email{goltz@ips.cs.tu-bs.de  \qquad \qquad drahflow@gmx.de}
}

\newtheorem{defi}{Definition}
\newtheorem{theo}{Theorem}
\newtheorem{prop}{Proposition}
\newtheorem{lemm}{Lemma}
\newtheorem{coro}{Corollary}
\newtheorem{exam}{Example}
\newtheorem{obse}{Observation}
\newtheorem{clai}{Claim}

\newenvironment{definition}[1]{\begin{defi} \rm \label{df-#1}} {\end{defi}}
\newenvironment{theorem}[1]{\begin{theo} \rm \label{thm-#1}} {\end{theo}}
\newenvironment{proposition}[1]{\begin{prop} \rm \label{pr-#1}} {\end{prop}}
\newenvironment{lemma}[1]{\begin{lemm} \rm \label{lem-#1}} {\end{lemm}}
\newenvironment{corollary}[1]{\begin{coro} \rm \label{cor-#1}} {\end{coro}}

\newenvironment{observation}[1]{\begin{obse} \rm \label{obs-#1}} {\end{obse}}
\newenvironment{claim}[1]{\begin{clai} \rm \label{cl-#1}} {\end{clai}}
\newenvironment{proof}{\begin{trivlist} \item[\hspace{\labelsep}\bf Proof:]}{\hfill $\Box$\end{trivlist}}
\newenvironment{proofNobox}{\begin{trivlist} \item[\hspace{\labelsep}\bf Proof:]}{\end{trivlist}}
\newcommand{\filledbox}{\rule{1.2ex}{1.2ex}}
\newenvironment{proofclaim}{\begin{trivlist} \item[\hspace{\labelsep}\it Proof:]}{\hfill\filledbox\end{trivlist}}
\newenvironment{proofclaimNobox}{\begin{trivlist} \item[\hspace{\labelsep}\it Proof:]}{\end{trivlist}}

\newcommand{\refdf}[1]{Definition~\ref{df-#1}}
\newcommand{\refthm}[1]{Theorem~\ref{thm-#1}}
\newcommand{\refpr}[1]{Proposition~\ref{pr-#1}}
\newcommand{\reflem}[1]{Lemma~\ref{lem-#1}}

\newcommand{\reffig}[1]{Figure~\ref{fig-#1}}

\newcommand{\refobs}[1]{Observation~\ref{obs-#1}}
\newcommand{\refcl}[1]{Claim~\ref{cl-#1}}
\newcommand{\refsec}[1]{Section~\ref{sec-#1}}
\newcommand{\reftab}[1]{Table~\ref{tab-#1}}
\newcommand{\UI}{\Omega}             
\newcommand{\UIij}{\UI_{i}}          
\newcommand{\ui}{\mbox{\it \i}}      
\newcommand{\dist}[1][p]{\textsf{distribute}_#1}
\newcommand{\ini}[1][j]{\textsf{initialise}_#1}
\newcommand{\trans}[2][h]{\textsf{transfer}^#1_#2}
\newcommand{\exec}[2][i]{\textsf{execute}^#1_#2}
\newcommand{\fetch}[1][p,c]{\textsf{fetch}_{i,j}^{#1}}
\newcommand{\fetched}[2][i]{\textsf{fetched}^#1_#2}
\newcommand{\comp}[2][i]{\textsf{finalise}^#1}

\newcommand{\fire}{\textsf{fire}}
\newcommand{\undo}[1][i]{\textsf{undo}_{#1}}
\newcommand{\und}[1][f]{\textsf{undo}(#1)}
\newcommand{\undone}{\textsf{undone}}
\newcommand{\reset}[1][i]{\textsf{reset}_{#1}}
\newcommand{\elide}[1][i]{\textsf{elide}_{#1}}
\newcommand{\ack}[1][i]{\textsf{ack}_{#1}}
\newcommand{\keep}[1][i]{\rho_{#1}}
\newcommand{\Fired}{\textsf{fired}}
\newcommand{\take}{\textsf{take}}
\newcommand{\took}{\textsf{took}}

\newcommand{\Pre}{\textsf{pre}}
\newcommand{\transin}[2][h]{\textsf{trans}^#1_#2\textsf{-in}}
\newcommand{\transout}[2][h]{\textsf{trans}^#1_#2\textsf{-out}}
\newcommand{\fetchin}[1][p,c]{\textsf{fetch}_{i,j}^{#1}\textsf{-in}}
\newcommand{\fetchout}[1][p,c]{\textsf{fetch}_{i,j}^{#1}\textsf{-out}}

\newcommand{\weight}[1]{\hfill\mbox{\scriptsize $F'(#1)$}}

\newcommand{\leqc}{\leq^\#\!}
\newcommand{\confeq}{\mathbin{\plat{$\stackrel{\#}{=}$}}}
\newcommand{\confeqscript}{\stackrel{\#}{=}}
\newcommand{\nat}{\mbox{\bbb N}}
\newcommand{\Int}{\mbox{\bbb Z}}
\newcommand{\fin}{\in_f}
\newcommand{\marking}[1]{\llbracket#1\rrbracket}  
\newcommand{\NF}{\mbox{\it NF}}                          
\newcommand{\opt}[1]{\mbox{\tiny\rm(}#1\mbox{\tiny\rm)}} 
\usepackage{pstricks}
\usepackage{pst-node}
\usepackage{graphicx}

\newrgbcolor{darkblue}{0 0 0.5}
\newrgbcolor{darkred}{0.7 0 0}

\newcounter{netimage}

\def\p#1:#2;{\cnode #1{0.3}{n\thenetimage-#2}}
\def\P#1:#2;{\p #1:#2;\pscircle*#1{0.1}}
\def\q#1:#2:#3;{\p #1:#2;\rput#1{\rput[l](0.45,0){\large #3}}}
\def\Q#1:#2:#3;{\P #1:#2;\rput#1{\rput[l](0.45,0){\large #3}}}
\def\qq#1:#2:#3;{\p #1:#2;\rput#1{\rput[t](0,-0.5){\large #3}}}
\def\ql#1:#2:#3;{\p #1:#2;\rput#1{\rput[r](-0.45,0){\large #3}}}
\def\qr#1:#2:#3;{\p #1:#2;\rput#1{\rput[l](0.45,0){\large #3}}}
\def\qt#1:#2:#3;{\p #1:#2;\rput#1{\rput[b](0,0.45){\large #3}}}
\def\qb#1:#2:#3;{\p #1:#2;\rput#1{\rput[t](0,-0.45){\large #3}}}
\def\Ql#1:#2:#3;{\P #1:#2;\rput#1{\rput[r](-0.45,0){\large #3}}}
\def\Qr#1:#2:#3;{\P #1:#2;\rput#1{\rput[l](0.45,0){\large #3}}}
\def\Qt#1:#2:#3;{\P #1:#2;\rput#1{\rput[b](0,0.45){\large #3}}}
\def\Qb#1:#2:#3;{\P #1:#2;\rput#1{\rput[t](0,-0.45){\large #3}}}
\def\qx#1:#2:#3:#4;{\p #1:#2;\rput#1{\rput#4{\large #3}}}
\def\QXX#1:#2:#3:#4:#5;{\p #1:#2;\rput#1{\rput#4{\large #3}}\pscircle*#5{0.1}}
\def\s#1:#2:#3;{\p #1:#2;\rput#1{\rput(-0.03,0){\large #3}}}
\def\t#1:#2:#3;{\rput#1{\rnode{n\thenetimage-#2}{\psframebox{%
  \vbox to 0.6cm{\vfil\hbox to 0.6cm{\hfil\Large #3\hfil}\vfil}}}}}
\def\u#1:#2:#3:#4;{\rput#1{\rnode{n\thenetimage-#2}{\psframebox{%
  \vbox to 0.6cm{\vfil\hbox to 0.6cm{\hfil\Large #3\hfil}\vfil}}}}%
  \rput#1{\rput[l](0.6,0){\large #4}}}
\def\ut#1:#2:#3:#4;{\rput#1{\rnode{n\thenetimage-#2}{\psframebox{%
  \vbox to 0.6cm{\vfil\hbox to 0.6cm{\hfil\Large #3\hfil}\vfil}}}}%
  \rput#1{\rput[b](0,0.6){\large #4}}}
\def\ub#1:#2:#3:#4;{\rput#1{\rnode{n\thenetimage-#2}{\psframebox{%
  \vbox to 0.6cm{\vfil\hbox to 0.6cm{\hfil\Large #3\hfil}\vfil}}}}%
  \rput#1{\rput[t](0,-0.6){\large #4}}}
\def\ul#1:#2:#3:#4;{\rput#1{\rnode{n\thenetimage-#2}{\psframebox{%
  \vbox to 0.6cm{\vfil\hbox to 0.6cm{\hfil\Large #3\hfil}\vfil}}}}%
  \rput#1{\rput[r](-0.6,0){\large #4}}}
\def\ur#1:#2:#3:#4;{\rput#1{\rnode{n\thenetimage-#2}{\psframebox{%
  \vbox to 0.6cm{\vfil\hbox to 0.6cm{\hfil\Large #3\hfil}\vfil}}}}%
  \rput#1{\rput[l](0.6,0){\large #4}}}
\def\a#1->#2;{\ncline{->}{n\thenetimage-#1}{n\thenetimage-#2}}
\def\aBack#1->#2;{\ncline{-}{n\thenetimage-#1}{n\thenetimage-#2}\ncline[nodesep=0.2]{-<}{n\thenetimage-#1}{n\thenetimage-#2}}
\def\aEarly#1->#2;{\ncline{->}{n\thenetimage-#1}{n\thenetimage-#2}\ncline[nodesep=0.25]{-<}{n\thenetimage-#1}{n\thenetimage-#2}}
\def\aLate#1->#2;{\ncline{->}{n\thenetimage-#1}{n\thenetimage-#2}\ncline[nodesep=0.6]{-<}{n\thenetimage-#1}{n\thenetimage-#2}}
\def\aUndo#1->#2;{\ncline[arrowinset=0]{->}{n\thenetimage-#1}{n\thenetimage-#2}}
\def\aReset#1->#2;{\ncline[arrowinset=0]{->}{n\thenetimage-#1}{n\thenetimage-#2}\ncline[linecolor=white,arrowinset=0,arrowscale=0.6,linestyle=none,nodesep=0.07]{->}{n\thenetimage-#1}{n\thenetimage-#2}}
\let\aFar\aLate

\let\aAck\aReset
\def\A#1->#2;{\ncarc[arcangle=22]{->}{n\thenetimage-#1}{n\thenetimage-#2}}
\def\B#1->#2;{\ncarc[arcangle=-18]{->}{n\thenetimage-#1}{n\thenetimage-#2}}
\def\BB#1->#2;{\ncarc[arcangle=-32]{->}{n\thenetimage-#1}{n\thenetimage-#2}}
\def\avlinearc{0.2}
\def\av#1[#2]-#3->[#4]#5;{
  \SpecialCoor
  \psline[linearc=\avlinearc]{->}([angle=#2]n\thenetimage-#1)#3([angle=#4]n\thenetimage-#5)
}
\def\interface#1:#2:#3:#4;{\rput#1{\rnode{n\thenetimage-#2}{\psframebox*[fillstyle=solid,fillcolor=black]{\vbox to 0.5cm{\vfill\hbox to 0.05cm{}}}}\rput[lt](0.05,-0.1){#4}}\ncline{-}{n\thenetimage-#2}{n\thenetimage-#3}}

\makeatletter
\long\def\petrinet(#1)#2\end{\psscalebox{0.8}{\pspicture(#1)\stepcounter{netimage}#2\endpspicture}\end}

\makeatother

\psset{arrowsize=5pt}

\begin{document}

\maketitle

\begin{abstract}
We formalise a general concept of distributed systems as sequential
components interacting asynchronously.  We define a corresponding
class of Petri nets, called LSGA nets, and precisely characterise
those system specifications which can be implemented as LSGA nets up
to branching ST-bisimilarity with explicit divergence.
\end{abstract}

\section{Introduction}
The aim of this paper is to contribute to a fundamental understanding
of the concept of a distributed reactive system and the paradigms of
synchronous and asynchronous interaction. We start by giving an
intuitive characterisation of the basic features of distributed
systems. In particular we assume that distributed systems consist of
components that reside on different locations, and that any signal
from one component to another takes time to travel.  Hence the only
interaction mechanism between components is asynchronous
communication.

Our aim is to characterise which system specifications may be implemented as
distributed systems.  In many formalisms for system specification or design,
synchronous communication is provided as a basic notion; this happens for
example in process algebras.  Hence a particular challenge is that it may be
necessary to simulate synchronous communication by asynchronous communication.
 
Trivially, any system specification may be implemented distributedly by locating
the whole system on one single component. Hence we need to pose some additional
requirements. One option would be to specify locations for system activities and
then to ask for implementations satisfying this distribution and still
preserving the behaviour of the original specification. This is done in
\cite{BCD02}. Here we pursue a different approach. We add another requirement to
our notion of a distributed system, namely that its components only allow
sequential behaviour. We then ask whether an arbitrary system specification may
be implemented as a distributed system consisting of sequential components in an
optimal way, that is without restricting the concurrency of the original
specification. This is a particular challenge when synchronous communication
interacts with concurrency in the specification of the original system. We will
give a precise characterisation of the class of distributable systems, which
answers in particular under which conditions synchronous communication may be
implemented in a distributed setting.
  
For our investigations we need a model which is expressive enough to represent
concurrency. It is also useful to have an explicit representation of the
distributed state space of a distributed system, showing in particular the local
control states of components. We choose Petri nets, which offer these
possibilities and additionally allow finite representations of infinite
behaviours. We work within the class of \emph{structural conflict nets}
\cite{glabbeek11ipl}---a proper generalisation of the class of
one-safe place/transition systems, where conflict and concurrency are clearly separated.

For comparing the behaviour of systems with their distributed implementation we
need a suitable equivalence notion. Since we think of open systems interacting
with an environment, and since we do not want to restrict concurrency in
applications, we need an equivalence that respects branching time and
concurrency to some degree. Our implementations use transitions which are
invisible to the environment, and this should be reflected in the equivalence
by abstracting from such transitions. However, we do not want implementations to
introduce divergence. In the light of these requirements we work with two
semantic equivalences. \emph{Step readiness equivalence} is one of the
weakest equivalences that captures branching time, concurrency and divergence to some
degree; whereas \emph{branching ST-bisimilarity with explicit divergence}
fully captures branching time, divergence, and those aspects of concurrency that
can be represented by concurrent actions overlapping in time.
We obtain the same characterisation for both notions of equivalence, and thus
implicitly for all notions in between these extremes.

We model distributed systems consisting of sequential components as an
appropriate class of Petri nets, called \emph{LSGA nets}.  These are obtained by
composing nets with sequential behaviour by means of an asynchronous parallel
composition. We show that this class corresponds exactly to a more abstract
notion of distributed systems, formalised as \emph{distributed nets}
\cite{glabbeek08syncasyncinteractionmfcs}.

We then consider distributability of system specifications which are represented
as structural conflict nets. A net $N$ is \emph{distributable} if there exists a
distributed implementation of $N$, that is a distributed net which is
semantically equivalent to $N$.  In the implementation we allow unobservable
transitions, and labellings of transitions, so that single actions of the
original system may be implemented by multiple transitions. However, the system
specifications for which we search distributed implementations are \emph{plain}
nets without these features.

We give a precise characterisation of distributable nets in terms of a
semi-structural property.  This characterisation provides a formal proof that
the interplay between choice and synchronous communication is a key issue for
distributability.

To establish the correctness of our characterisation we develop a new
method for rigorously proving the equivalence of two Petri nets, one of 
which known to be plain, up to branching ST-bisimilarity with explicit divergence.
 
\section{Basic Notions}
\label{sec-basic}
In this paper we employ \emph{signed multisets}, which generalise
multisets by allowing elements to occur in it with a negative multiplicity.

\begin{definition}{multiset}
Let $X$ be a set.
\begin{list}{{\bf --}}{\leftmargin 18pt
                        \labelwidth\leftmargini\advance\labelwidth-\labelsep
                        \topsep 0pt \itemsep 0pt \parsep 0pt}
\item A {\em signed multiset} over $X$ is a function $A\!:X \rightarrow \Int$,
  \ie $A\in \Int^{X}$.\\
  It is a \emph{multiset} iff $A\in \nat^X$, \ie iff $A(x)\geq 0$ for all $x\in X$.
\item $x \in X$ is an \defitem{element of} a signed multiset $A\in\nat^X$, notation $x \in
  A$, iff $A(x) \neq 0$. 
\item For signed multisets $A$ and $B$ over $X$ we write $A \leq B$ iff
 \mbox{$A(x) \leq B(x)$} for all $x \inp X$;
\\ $A\cup B$ denotes the signed multiset over $X$ with $(A\cup B)(x):=\text{max}(A(x), B(x))$,
\\ $A\cap B$ denotes the signed multiset over $X$ with $(A\cap B)(x):=\text{min}(A(x), B(x))$,
\\ $A + B$ denotes the signed multiset over $X$ with $(A + B)(x):=A(x)+B(x)$,
\\ $A - B$ denotes the signed multiset over $X$ with $(A - B)(x):=A(x)-B(x)$, and\\
for $k\inp\nat$ the signed multiset $k\cdot A$ is given by $(k \cdot A)(x):=k\cdot A(x)$.
\item The function $\emptyset\!:X\rightarrow\nat$, given by
  $\emptyset(x):=0$ for all $x \inp X$, is the \emph{empty} multiset over $X$.
\item If $A$ is a signed multiset over $X$ and $Y\subseteq X$ then
  $A\restrictedto Y$ denotes the signed multiset over $Y$ defined by
  $(A\restrictedto Y)(x) := A(x)$ for all $x \inp Y$.
\item The cardinality $|A|$ of a signed multiset $A$ over $X$ is given by
  $|A| := \sum_{x\in X}|A(x)|$.
\item A signed multiset $A$ over $X$ is \emph{finite}
  iff $|A|<\infty$, i.e.,
  iff the set $\{x \mid x \inp A\}$ is finite.\\
  We write $A\fin\Int^X$ or $A\fin\nat^X$ to indicate that $A$ is a finite (signed)
  multiset over $X$.
\item Any function $f:X\rightarrow\Int$ or $f:X\rightarrow\Int^Y$ from
  $X$ to either the integers or the signed multisets over some set $Y$
  extends to the finite signed multisets $A$ over $X$ by $f(A)=\sum_{x\in X}A(x)\cdot f(x)$.
\end{list}
\end{definition}
Two signed multisets $A\!:X \rightarrow \Int$ and $B\!:Y\rightarrow \Int$
are \emph{extensionally equivalent} iff
$A\restrictedto (X\cap Y) = B\restrictedto (X\cap Y)$,
$A\restrictedto (X\setminus Y) = \emptyset$, and
$B \restrictedto (Y\setminus X) = \emptyset$.
In this paper we often do not distinguish extensionally equivalent
signed multisets. This enables us, for instance, to use $A + B$ even
when $A$ and $B$ have different underlying domains.
A multiset $A$ with $A(x) \in\{0,1\}$ for all $x$ is
identified with the set $\{x \mid A(x)=1\}$.
A signed multiset with elements $x$ and $y$, having
multiplicities $-2$ and $3$, is denoted as $-2\cdot\{x\}+3\cdot\{y\}$.

We consider here general labelled place/transition systems with arc weights. Arc weights
are not necessary for the results of the paper, but are included for the sake of generality.

\begin{definition}{Petri net}
  Let \Act{} be a set of \emph{visible actions} and
  $\tau\mathbin{\not\in}\Act$ be an \emph{invisible action}. Let $\Act_\tau:=\Act \dcup \{\tau\}$.
  A (\emph{labelled}) \defitem{Petri net} (\emph{over $\Act_\tau$}) is a tuple
  $N = (S, T, F, M_0, \ell)$ where
  \begin{list}{{\bf --}}{\leftmargin 18pt
                        \labelwidth\leftmargini\advance\labelwidth-\labelsep
                        \topsep 0pt \itemsep 0pt \parsep 0pt}
    \item $S$ and $T$ are disjoint sets (of \defitem{places} and \defitem{transitions}),
    \item $F: (S \times T \cup T \times S) \rightarrow \nat$
      (the \defitem{flow relation} including \defitem{arc weights}),
    \item $M_0 : S \rightarrow \nat$ (the \defitem{initial marking}), and
    \item \plat{$\ell: T \into \Act_\tau$} (the \defitem{labelling function}).
  \end{list}
\end{definition}

\noindent
Petri nets are depicted by drawing the places as circles and the
transitions as boxes, containing their label.  Identities of places
and transitions are displayed next to the net element.  When
$F(x,y)>0$ for $x,y \inp S\cup T$ there is an arrow (\defitem{arc})
from $x$ to $y$, labelled with the \emph{arc weight} $F(x,y)$.
Weights 1 are elided.  When a Petri net represents a concurrent
system, a global state of this system is given as a \defitem{marking},
a multiset $M$ of places, depicted by placing $M(s)$ dots
(\defitem{tokens}) in each place $s$.  The initial state is $M_0$.

To compress the graphical notation, we also allow universal
quantifiers of the form $\forall x. \phi(x)$ to appear
in the drawing (cf.~\reffig{conflictrepl}).
A quantifier replaces occurrences of $x$ in element identities
with all concrete values for which $\phi(x)$ holds, possibly
creating a set of elements instead of the depicted single one.
An arc of which only one end is replicated by a given quantifier
results in a fan of arcs, one for each replicated element.
If both ends of an arc are affected by the same quantifier,
an arc is created between pairs of elements corresponding to the same $x$,
but not between elements created due to differing values of $x$.

The behaviour of a Petri net is defined by the possible moves between
markings $M$ and $M'$, which take place when a finite multiset $G$ of
transitions \defitem{fires}.  In that case, each occurrence of a
transition $t$ in $G$ consumes $F(s,t)$ tokens from each 
place $s$.  Naturally, this can happen only if $M$ makes all these
tokens available in the first place.  Next, each $t$ produces $F(t,s)$ tokens
in each $s$.  \refdf{firing} formalises this notion of behaviour.

\begin{definition}{preset}
Let $N = (S, T, F, M_0, \ell)$ be a Petri net and $x\inp S\cup T$.\\
The multisets $\precond{x},~\postcond{x}: S\cup T \rightarrow
\nat$ are given by $\precond{x}(y)=F(y,x)$ and
$\postcond{x}(y)=F(x,y)$ for all $y \inp S \cup T$.
If $x\in T$, the elements of $\precond{x}$ and $\postcond{x}$ are
called \emph{pre-} and \emph{postplaces} of $x$, respectively, and if
$x\in S$ we speak of \emph{pre-} and \emph{posttransitions}.
The \emph{token replacement function} $\marking{\_\!\_}:T\rightarrow \Int^S$
is given by $\marking{t}=\postcond{t}-\precond{t}$ for all $t\in T$.
These functions extend to finite signed multisets
as usual (see \refdf{multiset}).
\end{definition}

\begin{definition}{firing}
Let $N \mathbin= (S, T, F, M_0, \ell)$ be a Petri net,
$G \inp \nat^T\!$, $G$ non-empty and finite, and $M, M' \in \nat^S$.\\
$G$ is a \defitem{step} from $M$ to $M'$,
written \plat{$M~[G\rangle_N~ M'$}, iff
\begin{list}{{\bf --}}{\leftmargin 18pt
                        \labelwidth\leftmargini\advance\labelwidth-\labelsep
                        \topsep 0pt \itemsep 0pt \parsep 0pt}
  \item $\precond{G} \leq M$ ($G$ is \defitem{enabled}) and
  \item $M' = (M - \precond{G}) + \postcond{G} = M + \marking{G}$.
\end{list}
\end{definition}
Note that steps are (finite) multisets, thus allowing self-concurrency,
\ie the same transition can occur multiple times in a single step.
We write $M~[t\rangle_N~ M'$ for $M\mathrel{[\{t\}\rangle_N} M'$, whereas
$M [G\rangle_N$ abbreviates $\exists M'.~ M \mathrel{[G\rangle_N} M'$.
We may omit the subscript $N$ if clear from context.

In our nets transitions are labelled with \emph{actions} drawn from a
set \plat{$\Act \dcup \{\tau\}$}. This makes it possible to see these
nets as models of \defitem{reactive systems} that interact with their
environment. A transition $t$ can be thought of as the occurrence of
the action $\ell(t)$. If $\ell(t)\inp\Act$, this occurrence can be
observed and influenced by the environment, but if $\ell(t)\mathbin=\tau$,
it cannot and $t$ is an \defitem{internal} or \defitem{silent} transition.
Transitions whose occurrences cannot be distinguished by the
environment carry the same label. In particular, since
the environment cannot observe the occurrence of internal
transitions at all, they are all labelled $\tau$.

The labelling function $\ell$ extends to finite multisets of transitions $G\in\Int^T$
by $\ell(G):=\sum_{t\in T}G(t)\cdot\{\ell(t)\}$. For $A,B\in\Int^{\Act_\tau}$
we write $A\equiv B$ iff $\ell(A)(a)=\ell(B)(a)$ for all $a\in\Act$, i.e.\ iff $A$ and $B$
contain the same (numbers of) visible actions, allowing $\ell(A)(\tau)\neq \ell(B)(\tau)$.
Hence $\ell(G)\equiv\emptyset$ indicates that $\ell(t)=\tau$ for all transitions $t\in T$ with $G(t)\neq 0$.

\begin{definition}{onesafe}
  Let $N = (S, T, F, M_0, \ell)$ be a Petri net.
\begin{list}{{\bf --}}{\leftmargin 18pt
                        \labelwidth\leftmargini\advance\labelwidth-\labelsep
                        \topsep 0pt \itemsep 0pt \parsep 0pt}
\item
  The set $[M_0\rangle_N$ of \defitem{reachable markings of $N$} is defined as the
  smallest set containing $M_0$ that is closed under $[G\rangle_N$, meaning that if
  $M \in [M_0\rangle_N$ and $M \mathrel{[G\rangle_N} M'$ then $M' \in [M_0\rangle_N$.
\item
  $N$ is \defitem{one-safe}
  iff $M \in [M_0\rangle_N \implies \forall s \in S.~ M(s) \leq 1$.
\item
  The \defitem{concurrency relation} $\mathord{\concurrent} \subseteq
  T^2$ is given by $t \concurrent u \equivalent \exists
  M \inp [M_0\rangle.~ M [\{t\}\mathord+\{u\}\rangle$.
\item
  $N$ is a \hypertarget{scn}{\defitem{structural conflict net}} iff
  for all $t,u\in T$ with $t\smile u$ we have $\precond{t} \cap \precond{u} = \emptyset$.
\end{list}
\end{definition}
We use the term \hypertarget{plain}{\defitem{plain nets}} for Petri nets where $\ell$ is
injective and no transition has the label $\tau$, \ie essentially unlabelled nets. 

This paper first of all aims at studying finite Petri nets: nets with finitely many places
and transitions. However, our work also applies to infinite nets with the properties that
$\precond{t} \ne \varnothing$ for all transitions $t\in T$, and
any reachable marking (a) is finite, and (b) enables only finitely many transitions.
Henceforth, we call such nets \hypertarget{finitary}{\emph{finitary}}.
Finitariness can be ensured by requiring $|M_0| \mathbin< \infty \wedge \forall t \in T.\,
\precond{t} \ne \varnothing \wedge \forall x \in S\cup T.\, |\postcond{x}| < \infty$, \ie
that the initial marking is finite, no transition has an empty set of preplaces, and each
place and transition has only finitely many outgoing arcs.

\section{Semantic Equivalences}\label{sec-equivalences}

In this section, we give an overview on some semantic equivalences for reactive systems. Most of these may be defined  formally for Petri nets in a uniform way, by first defining equivalences for transition systems and then associating different transition systems with a Petri net. This yields in particular different non-interleaving equivalences for Petri nets.

\newcommand{\lts}{\mathfrak{L}}
\newcommand{\st}{\mathfrak{S}}
\newcommand{\tr}{\mathfrak{T}}
\newcommand{\inist}{\mathfrak{M_0}}
\newcommand{\mm}{\mathfrak{M}}
\newcommand{\act}{\mathfrak{Act}}

\begin{definition}{LTS}
Let $\act$ be a set of \emph{visible actions} and
$\tau\mathbin{\not\in}\act$ be an \emph{invisible action}. Let $\act_\tau:=\act \dcup \{\tau\}$.
A \emph{labelled transition system} (LTS) (\emph{over $\act_\tau$}) is a triple
$\lts=(\st,\tr,\inist)$ with
\begin{list}{{\bf --}}{\leftmargin 18pt
                        \labelwidth\leftmargini\advance\labelwidth-\labelsep
                        \topsep 0pt \itemsep 0pt \parsep 0pt}
\item $\st$ a set of \emph{states},
\item $\tr\subseteq \st\times \act_\tau \times \st$ a \emph{transition relation}
\item and $\inist\in\st$ the \emph{initial state}.
\end{list}
\end{definition}
Given an LTS $(\st,\tr,\inist)$ with $\mm,\mm'\in\st$ and $\alpha\in\act_\tau$,
we write $\mm \goesto[\alpha] \mm'$ for $(\mm,\alpha,\mm')\in \tr$.
We write $\mm \goesto[\alpha]$ for $\exists \mm'.~ \mm \goesto[\alpha] \mm'$ and
$\mm \arrownot\goesto[\alpha]$ for $\nexists \mm'.~ \mm \goesto[\alpha] \mm'$.
Furthermore, $\mm \goesto[\opt{\alpha}] \mm'$ denotes
$\mm \goesto[\alpha] \mm' \vee (\alpha\mathbin=\tau \wedge \mm\mathbin=\mm')$,
meaning that in case \mbox{$\alpha\mathbin=\tau$} performing a $\tau$-transition is optional.
      For $\,a_1 a_2 \cdots a_n \in \act^*$ we write
      $\mm \Goesto[\,a_1 a_2 \cdots a_n~] \mm'$ when\vspace{-4pt}
      \[
      \mm
      \Goesto \production{a_1}
      \Goesto \production{a_2}
      \Goesto \cdots
      \Goesto \production{a_n}
      \Goesto
      \mm'\vspace{-4pt}
      \]
      where $\Goesto$ denotes the reflexive and transitive closure of $\goesto[\tau]$.
  A state $\mm \in \st$ is said to be \defitem{reachable} iff there is a
  $\sigma \in \act^*$ such that $\inist \Production{\sigma} \mm$. The set of all
  reachable states is denoted by $[\inist\rangle$.
In case there are $\mm_i\in [\inist\rangle$ for all $i\geq 1$ with
$\mm_1 \goesto[\tau] \mm_2 \goesto[\tau] \cdots$
the LTS is said to display \emph{divergence}.

Many semantic equivalences on LTSs that in some way abstract from internal transitions are
defined in the literature; an overview can be found in \cite{vanglabbeek93linear}.  On
divergence-free LTSs, the most discriminating semantics in the spectrum of equivalences of
\cite{vanglabbeek93linear}, and the only one that fully respects the branching structure
of related systems, is \emph{branching bisimilarity}, proposed in \cite{GW96}.

\begin{definition}{branching LTS}
Two LTSs $(\st_1,\tr_1,\inist_1)$ and $(\st_2,\tr_2,\inist_2)$ are
\emph{branching bisimilar} iff there exists a relation $\Rel
\subseteq \st_1 \times \st_2$---a \emph{branching bisimulation}---such
that, for all $\alpha\inp\act_\tau$:
\begin{enumerate}[~~1.]
\item $\mathfrak{M_0}_1\Rel \mathfrak{M_0}_2$;
\item if $\mathfrak{M}_1\Rel \mathfrak{M}_2$ and
  $\mathfrak{M}_1\!\goesto[\alpha]\mathfrak{M}'_1$
  then $\exists \mathfrak{M}^\dagger_2,\mathfrak{M}'_2$ such that
  $\mathfrak{M}_2\Goesto[] \mathfrak{M}^\dagger_2 \!\goesto[\opt{\alpha}] \mathfrak{M}'_2$,
  ~$\mathfrak{M}_1\Rel \mathfrak{M}^\dagger_2$ and $\mathfrak{M}'_1\Rel \mathfrak{M}'_2$;
\item if $\mathfrak{M}_1\Rel \mathfrak{M}_2$ and
  $\mathfrak{M}_2\!\goesto[\alpha]\mathfrak{M}'_2$
  then $\exists \mathfrak{M}^\dagger_1,\mathfrak{M}'_1$ such that
  $\mathfrak{M}_1\Goesto[] \mathfrak{M}^\dagger_1 \!\goesto[\opt{\alpha}] \mathfrak{M}'_1$,
  ~$\mathfrak{M}^\dagger_1\Rel \mathfrak{M}_2$ and $\mathfrak{M}'_1\Rel \mathfrak{M}'_2$.
\end{enumerate}
\end{definition}
Branching bisimilarity \emph{with explicit divergence} \cite{GW96,GLT09}, is a variant of
branching bisimilarity that fully respects the diverging behaviour of related
systems. Since in this paper we mainly compare systems of which one admits no divergence at all, the
definition simplifies to the requirement that the other system may not diverge either.

One of the semantics reviewed in \cite{vanglabbeek93linear} that respects branching time
and divergence only to a small extent, is \emph{readiness equivalence}, proposed in \cite{OH86}.

\begin{definition}{readiness}
  Let $\lts = (\st,\tr,\inist)$ be an LTS, $\sigma \in \act^*$ and $X \subseteq \act$.
  $\rpair{\sigma, X}$ is a \defitem{ready pair} of $\lts$ iff\vspace{-3pt}
  $$\exists \mm.~ \inist \Production{\sigma} \mm \wedge \mm \arrownot\production{\tau}
  \wedge \, X = \{a\inp \act \mid \mm \goesto[a]\}.$$\\[-3ex]
  We write $\mathfrak{R}(\lts)$ for the set of all ready pairs of $\lts$.\\
  Two LTSs $\lts_1$ and $\lts_2$ are \defitem{readiness equivalent}
  iff $\mathfrak{R}(\lts_1) = \mathfrak{R}(\lts_2)$.
\end{definition}

As indicated in \cite{vanglabbeek01refinement}, see in particular the diagram on Page 317 (or 88),
equivalences on LTSs have been ported to Petri nets and other causality respecting models
of concurrency chiefly in five ways: we distinguish \emph{interleaving semantics},
\emph{step semantics}, \emph{split semantics}, \emph{ST-semantics} and \emph{causal semantics}.
Causal semantics fully respect the causal relationships between the actions of related
systems, whereas interleaving semantics fully abstract from this information.
Step semantics differ from interleaving semantics by taking into account the possibility
of multiple actions to occur simultaneously (in \emph{one step}); this carries a minimal
amount of causal information.
ST-semantics respect causality to the extent that it can be expressed in terms of
the possibility of durational actions to overlap in time. They are formalised by executing
a visible action $a$ in two phases: its start $a^+$ and its termination $a^-$.
Moreover, terminating actions are properly matched with their starts. Split semantics are a
simplification of ST-semantics in which the matching of starts and terminations is dropped.

Interleaving semantics on Petri nets can be formalised by associating to each net
$N=(S,T,F,M_0,\ell)$ the LTS $(\st,\tr,M_0)$ with $\st$ the set of markings of $N$
and $\tr$ given by 
$$M_1 \production{\alpha} M_2 :\equivalent \exists\, t \mathbin\in T
      .~ \alpha \mathbin= \ell(t) \wedge M_1~[t\rangle~ M_2.$$
Here we take $\act := \Act$.
Now each equivalence on LTSs from \cite{vanglabbeek93linear} induces a corresponding
interleaving equivalence on nets by declaring two nets equivalent iff the associated LTSs are.
For example, \emph{interleaving branching bisimilarity} is the relation of \refdf{branching LTS}
with the $\mm$'s denoting markings, and the $\alpha$'s actions from $\Act_\tau$.

Step semantics on Petri nets can be formalised by associating another LTS to each net.
Again we take $\st$ to be the markings of the net, and $\inist$ the initial marking,
but this time $\act$ consists of the \emph{steps} over $\Act$,
the non-empty, finite multisets $A$ of visible actions from $\Act$,
and the transition relation $\tr$ is given by\vspace{-1ex}
      $$M_1 \production{A} M_2 :\equivalent \exists\, G \fin\nat^T
      .~ A = \ell(G)\wedge M_1~[G\rangle~ M_2$$
with $\tau$-transitions defined just as in the interleaving case.
In particular, the step version of readiness equivalence would be the relation of \refdf{readiness}
with the $\mm$'s denoting markings, the $a$'s steps over $\Act$, and the $\sigma$'s sequences of steps.
However, variations in this type of definition are possible.
In this paper, following \cite{glabbeek08syncasyncinteractionmfcs}, we employ 
a form of step readiness semantics that is 
a bit closer to interleaving semantics: $\sigma$ is
a sequence of single actions, whereas the menu $X$ of possible continuations after
$\sigma$ is a set of steps.

\begin{definition}{step readiness}
  {Let $N = (S, T, F, M_0, \ell)$ be a Petri net, $\sigma \in \Act^*$ and
  $X \subseteq \powermultiset{\Act}$.}
  $\rpair{\sigma, X}$ is a \defitem{step ready pair} of $N$ iff\vspace{-3pt}
  $$\exists M. M_0 \Production{\sigma} M \wedge M \arrownot\production{\tau}
  \wedge \, X = \{A\inp \nat^\Act \mid M \goesto[A]\}.$$\\[-3ex]
  We write $\readyset(N)$ for the set of all step ready pairs of $N$.\\
  Two Petri nets $N_1$ and $N_2$ are \defitem{step readiness equivalent},
  $N_1 \approx_\mathscr{R} N_2$, iff $\readyset(N_1) = \readyset(N_2)$.
\end{definition}

Next we propose a general definition on Petri nets of ST-versions of each of the semantics
of \cite{vanglabbeek93linear}. Again we do this through a mapping from nets to a suitable LTS\@.
An \emph{ST-marking} of a net $(S,T,F,M_0,\ell)$ is a pair $(M,U)\inp\nat^S \mathord\times T^*$
of a normal marking, together with a sequence of transitions \emph{currently firing}.
The \emph{initial} ST-marking is $\mathfrak{M_0}:=(M_0,\varepsilon)$.
The elements of $\Act^\pm:=\{a^+,\, a^{-n} \mid a \inp \Act, ~n\mathbin> 0\}$ are called
\emph{visible action phases}, and \plat{$Act^\pm_\tau:=\Act^\pm\dcup\{\tau\}$}.
For $U\in T^*$, we write $t\in^{(n)}U$ if $t$ is the  \plat{$n^{\it th}$}
element of $U$. Furthermore $U^{-n}$ denotes $U$ after removal of the \plat{$n^{\it th}$}
transition.

\begin{definition}{ST-marking}
Let $N=(S,T,F,M_0,\ell)$ be a Petri net, labelled over \plat{$\Act_\tau$}.

The \emph{ST-transition relations} $\goesto[\eta]$ for $\eta\inp\Act^\pm_\tau$ between ST-markings are given by\vspace{1pt}

$(M,U)\goesto[a^+](M',U')$ iff $\exists t\inp T.~ \ell(t)=a \wedge M[t\rangle
\wedge M'=M-\precond{t} \wedge U'=U t$.

$(M,U)\goesto[a^{-n}](M',U')$ iff $\exists t\in^{(n)} U.~\ell(t)=a \wedge
 U'=U^{-n} \wedge M'=M+\postcond{t}$.

$(M,U)\goesto[\tau](M',U')$ iff $M\goesto[\tau]M' \wedge U'=U$.
\end{definition}
Now the ST-LTS associated to a net $N$ is $(\st,\tr,\inist)$ with $\st$ the set of
ST-markings of $N$, $\act:=\Act^\pm$, $\tr$ as defined in \refdf{ST-marking}, and $\inist$ the initial ST-marking.
Again, each equivalence on LTSs from \cite{vanglabbeek93linear} induces a corresponding
ST-equivalence on nets by declaring two nets equivalent iff their associated LTSs are.
In particular, \emph{branching ST-bisimilarity} is the relation of \refdf{branching LTS}
with the $\mm$'s denoting ST-markings, and the $\alpha$'s action phases from $\Act^\pm_\tau$.
We write $N_1\approx^\Delta_{bSTb}N_2$ iff $N_1$ and $N_2$ are branching ST-bisimilar with explicit divergence.

\emph{ST-bisimilarity} was originally proposed in \cite{GV87}. It was extended to a setting
with internal actions in \cite{Vo93}, based on the notion of \emph{weak bisimilarity} of
\cite{Mi89}, which is a bit less discriminating than branching bisimilarity.
The above can be regarded as a reformulation of the same idea; the notion of weak
ST-bisimilarity defined according to the recipe above agrees with the ST-bisimilarity of \cite{Vo93}.

The next proposition says that branching ST-bisimilarity with explicit divergence is more
discriminating than (\ie \emph{stronger} than, \emph{finer} than, or included in) step readiness equivalence.

\begin{proposition}{step ready ST}
Let $N_1$ and $N_2$ be Petri nets. If $N_1\approx^\Delta_{bSTb}N_2$ then $N_1 \approx_\mathscr{R} N_2$.
\end{proposition}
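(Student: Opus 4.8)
The plan is to derive $\readyset(N_1)=\readyset(N_2)$ from a branching ST-bisimulation with explicit divergence $\Rel$ relating the ST-LTSs of $N_1$ and $N_2$. Since $\Rel$ can be used in either direction, it suffices to prove $\readyset(N_1)\subseteq\readyset(N_2)$, so I fix a step ready pair $\rpair{\sigma,X}\in\readyset(N_1)$, witnessed by a marking $M_1$ with $M_0\Production{\sigma}M_1$, $M_1\arrownot\production{\tau}$ and $X=\{A\in\nat^\Act\mid M_1\goesto[A]\}$, where $M_0$ is the initial marking of $N_1$; the goal is to build a witness $M_2$ for $\rpair{\sigma,X}\in\readyset(N_2)$.

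First I would relate the interleaving behaviour of \refdf{step readiness} to the ST-LTS carrying $\Rel$. Writing $\sigma=a_1\cdots a_n$ and $\tilde\sigma:=a_1^+a_1^{-1}\cdots a_n^+a_n^{-1}$, the pivotal correspondence is that $M\Production{\sigma}M'$ holds in the interleaving LTS exactly when $(M,\varepsilon)\Goesto[\tilde\sigma](M',\varepsilon)$ holds in the ST-LTS. The forward implication merely replaces each atomic firing $M[t\rangle M'$ by $(M,\varepsilon)\goesto[a^+](M-\precond{t},t)\goesto[a^{-1}](M',\varepsilon)$ and copies $\tau$-steps verbatim. The backward implication carries the real work: along a $\tilde\sigma$-labelled run the second component returns to $\varepsilon$ after each $a_i^{-1}$, so between $a_i^+$ and $a_i^{-1}$ only $\tau$-steps occur while $\precond{t_i}$ is withheld, and these in-flight $\tau$-steps must be commuted past the firing of $t_i$. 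This is legitimate by monotonicity of the firing rule (if $M[u\rangle M'$ then $M+K[u\rangle M'+K$ for every $K\in\nat^S$), which lets me reassemble a genuine interleaving computation for $\sigma$. I expect this serialisation of in-flight $\tau$-steps to be the main obstacle, even though monotonicity makes it routine.

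Next I would establish two facts about $\Rel$. Fact (i): if $(M_1,\varepsilon)\Rel(M_2',U')$ then $U'=\varepsilon$, for otherwise the first element of $U'$, of some label $a$, enables $(M_2',U')\goesto[a^{-1}]$, a visible phase that $(M_1,\varepsilon)$ cannot match by clause~3 of \refdf{branching LTS}, since no termination phase is enabled at an ST-marking with empty second component and $a^{-1}\neq\tau$. Fact (ii): starting visible actions from a $\tau$-stable marking never enables a $\tau$, because a start only removes tokens and any $\tau$ enabled below a stable marking would already be enabled at it. Using the standard transfer property of branching bisimulations (induction on trace length via clause~2), from $(M_0,\varepsilon)\Rel(M_0',\varepsilon)$ and $(M_0,\varepsilon)\Goesto[\tilde\sigma](M_1,\varepsilon)$ I obtain some $(M_2^*,U^*)$ with $(M_0',\varepsilon)\Goesto[\tilde\sigma](M_2^*,U^*)$ and $(M_1,\varepsilon)\Rel(M_2^*,U^*)$; fact~(i) forces $U^*=\varepsilon$, where $M_0'$ is the initial marking of $N_2$.

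Then I would stabilise and compare menus. Since $(M_1,\varepsilon)$ enables no $\tau$, no $\tau$-run starts from it, so it is non-divergent; by explicit divergence the related $(M_2^*,\varepsilon)$ admits no infinite $\tau$-run either, and clause~3 together with the stability of $(M_1,\varepsilon)$ shows that every $\tau$-successor of $(M_2^*,\varepsilon)$ stays related to $(M_1,\varepsilon)$. A maximal, hence finite, $\tau$-run therefore reaches a stable $(M_2,\varepsilon)$ with $(M_1,\varepsilon)\Rel(M_2,\varepsilon)$; appending these $\tau$-steps to $\tilde\sigma$ and applying the backward correspondence yields $M_0'\Production{\sigma}M_2$ with $M_2\arrownot\production{\tau}$. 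For the menu, take $A\in X$ with a witness $G=t_1+\cdots+t_k$, $\ell(G)=A$, $M_1[G\rangle$, and fire the starts $\ell(t_1)^+,\ldots,\ell(t_k)^+$ in turn from $(M_1,\varepsilon)$; by fact~(ii) each $N_2$-state met while matching is $\tau$-stable (being a submarking of the stable $M_2$), so by clause~2 the matches insert no $\tau$ and are genuine starts of transitions $t_i'$ with $\ell(t_i')=\ell(t_i)$. As these $k$ starts fire consecutively from $M_2$ and only remove presets, $\precond{(t_1'+\cdots+t_k')}\le M_2$, i.e.\ $M_2[t_1'+\cdots+t_k'\rangle$ with $\ell(t_1'+\cdots+t_k')=A$, whence $M_2\goesto[A]$. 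The symmetric argument gives the reverse inclusion of menus, so $M_2$ realises $\rpair{\sigma,X}$ and $\rpair{\sigma,X}\in\readyset(N_2)$; exchanging the roles of $N_1$ and $N_2$ finishes the proof.
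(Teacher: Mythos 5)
Your proposal is correct and follows essentially the same route as the paper's proof: encode the interleaving run as the ST-run $a_1^+a_1^{-1}\cdots a_n^+a_n^{-1}$, transfer it through the branching ST-bisimulation, invoke explicit divergence to reach a stable related marking $M_2$ with ${M_0}_2\Goesto[\sigma]M_2$, and compare the step menus via start-phases using clauses 2 and 3. The extra details you supply (the empty-$U$ argument, the commuting of in-flight $\tau$-steps, and the stability of submarkings of a stable marking) are exactly the points the paper leaves implicit, and they are argued correctly.
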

\begin{proof}
Suppose $N_1\approx^\Delta_{bSTb}N_2$ and $\rpair{\sigma, X}\in\readyset(N_1)$.
By symmetry it suffices to show that $\rpair{\sigma, X}\in\readyset(N_2)$.

There must be a branching bisimulation $\Rel$ between the ST-markings of
$N_1=(S_1,T_1,F_1,{M_0}_1,\ell_1)$ and $N_2=(S_2,T_2,F_2,{M_0}_2,\ell_2)$.
In particular, $({M_0}_1,\epsilon)\Rel({M_0}_2,\epsilon)$.
Let $\sigma := a_1 a_2 \cdots a_n \in \Act^*$.
Then \plat{$
      {M_0}_1
      \Goesto \production{a_1}
      \Goesto \production{a_2}
      \Goesto \cdots
      \Goesto \production{a_n}
      \Goesto
      M'_1
      $}\vspace{-4pt}
for a marking $M'_1\inp\nat^{S_1}$ with
\plat{$X = \{A\inp \nat^\Act \mid M'_1 \goesto[A]\}$} and $M'_1 \arrownot\production{\tau}$.
Hence $
      ({M_0}_1,\epsilon)
      \Goesto \production{a_1^+}\production{a_1^{-1}}
      \Goesto \production{a_2^+}\production{a_2^{-1}}
      \Goesto \cdots
      \Goesto \production{a_n^+}\production{a_n^{-1}}
      \Goesto
      (M'_1,\epsilon)
      $.
Thus, using the properties of a branching bisimulation on the ST-LTSs associated to $N_1$
\vspace{-3pt}and $N_2$, there must be a marking \plat{$M'_2\inp\nat^{S_2}$} such that $
      ({M_0}_2,\epsilon)\!
      \Goesto \production{a_1^+}\production{a_1^{-1}}
      \Goesto \production{a_2^+}\production{a_2^{-1}}
      \Goesto \cdots
      \Goesto \production{a_n^+}\production{a_n^{-1}}
      \Goesto\!
      (M'_2,\epsilon)
      $
and $(M'_1,\epsilon)\Rel(M'_2,\epsilon)$.
Since \plat{$(M'_1,\epsilon) \arrownot\production{\tau}$}, the ST-marking $(M'_1,\epsilon)$
admits no divergence. As \plat{$\approx^\Delta_{bSTb}$} respects this property, also $(M'_2,\epsilon)$
admits no divergence, and there must be an $M''_2\inp\nat^{S_2}$ with $M''_2 \arrownot\production{\tau}$
and $(M'_2,\epsilon)\Goesto(M''_2,\epsilon)$. Clause 3.\ of a branching bisimulation
gives $(M'_1,\epsilon)\Rel(M''_2,\epsilon)$, and
\refdf{ST-marking} yields ${M_0}_2 \Goesto[\sigma] M''_2$.\vspace{-3pt}

Now let $B=\{b_1,\ldots,b_n\}\in X$. Then $M'_1\production{B}$, so
$(M'_1,\epsilon)\production{b_1^+}\production{b_2^+}\cdots \production{b_m^+}$.\vspace{-3pt}
Property 2.\ of a branching bisimulation implies
$(M''_2,\epsilon)\production{b_1^+}\production{b_2^+}\cdots \production{b_m^+}$
and hence $M''_2\production{B}$. Likewise, with Property 3., $M''_2\production{B}$ implies 
\plat{$M'_1\production{B}$} for all $B\in\nat^\Act$.
It follows that $\rpair{\sigma, X}\in\readyset(N_2)$.
\end{proof}

In this paper we employ both step readiness equivalence and 
branching ST-bisimilarity with explicit divergence.
Fortunately it will turn out that for our purposes the latter equivalence coincides
with its split version
(since always one of the compared nets is plain, see \refpr{split}).

A \emph{split marking} of a net $N=(S,T,F,M_0,\ell)$ is a pair $(M,U)\in\nat^S \times\nat^T$
of a normal marking $M$, together with  a multiset of transitions currently firing.
The \emph{initial} split marking is $\mathfrak{M_o}:=(M_0,\emptyset)$.
A split marking can be regarded as an abstraction from an ST-marking, in which the total
order on the (finite) multiset of transitions that are currently firing has been dropped.
Let $\Act^\pm_{\rm split}:=\{a^+,\, a^- \mid a \in \Act\}$.

\begin{definition}{split marking}
Let $N=(S,T,F,M_0,\ell)$ be a Petri net, labelled over $\Act_\tau$.

The \emph{split transition relations} $\goesto[\zeta]$ for $\zeta\inp\Act^\pm_{\rm split}\dcup\{\tau\}$
between split markings are given by

$(M,U)\goesto[a^+](M',U')$ iff $\exists t\inp T.~ \ell(t)=a \wedge M[t\rangle
\wedge M'=M-\precond{t} \wedge U'=U + \{t\}$.

$(M,U)\goesto[a^{-}](M',U')$ iff $\exists t\inp U.~\ell(t)=a \wedge
 U'=U-\{t\} \wedge M'=M+\postcond{t}$.

$(M,U)\goesto[\tau](M',U')$ iff $M\goesto[\tau]M' \wedge U'=U$.
\end{definition}
Note that $(M,U)\goesto[a^+]$ iff $M\goesto[a]$, whereas $(M,U)\goesto[a^-]$
iff $a\in\ell(U)$.
With induction on reachability of markings it is furthermore easy to check that
$(M,U) \in [\inist\rangle$ iff $\ell(U)\in\nat^\Act$ and $M+\!\precond{U}\in [M_0\rangle$.

The split LTS associated to a net $N$ is $(\st,\tr,\inist)$ with $\st$ the set of split
markings of $N$, $\act:=\Act^\pm$, $\tr$ as defined in \refdf{split marking}, and $\inist$ the initial split marking.
Again, each equivalence on LTSs from \cite{vanglabbeek93linear} induces a corresponding
split equivalence on nets by declaring two nets equivalent iff their associated LTSs are.
In particular, \emph{branching split bisimilarity} is the relation of \refdf{branching LTS}
with the $\mm$'s denoting split markings, and the $\alpha$'s action phases from
\plat{$\Act^\pm_{\rm split}\dcup\{\tau\}$}.
\vspace{2pt}

For $\mathfrak{M}=(M,U)\in \nat^S\times T^*$ an ST-marking, let
$\overline{\mathfrak{M}}=(M,\overline{U})\in \nat^S \times \nat^T$ be the split marking obtained by
converting the sequence $U$ into the multiset $\overline{U}$, where $\overline{U}(t)$ is
the number of occurrences of the transition $t\in T$ in $U$.
Moreover, define $\ell(\mathfrak{M})$ by $\ell(M,U) := \ell(U)$ and
$\ell(t_1 t_2 \cdots t_k) := \ell(t_1) \ell(t_2) \cdots \ell(t_k)$.
Furthermore, for $\eta\in\Act^\pm_\tau$, let \plat{$\overline{\eta}\in \Act^\pm_{\rm split}\dcup\{\tau\}$}
be given by \plat{$\overline{a^+}:= a^+$}, \plat{$\overline{a^{-n}}:= a^-$} and $\overline{\tau}:= \tau$.

\begin{observation}{match}
Let $\mathfrak{M},\mathfrak{M}'$ be ST-markings, $\mathfrak{M}^\dagger$ a split marking,
$\eta\inp\Act^\pm_\tau$ and \plat{$\zeta\in\Act^\pm_{\rm split}\cup\{\tau\}$}. Then
\begin{list}{{\bf --}}{\leftmargin 18pt
                        \labelwidth\leftmargini\advance\labelwidth-\labelsep
                        \topsep 0pt \itemsep 0pt \parsep 0pt}
\item $\mathfrak{M}\in \nat^S\times T^*$ is the initial ST-marking of $N$ iff
$\overline\mathfrak{M}\in \nat^S\times \nat^T$ is the initial split marking of $N$;
\item if $\mathfrak{M} \goesto[\eta] \mathfrak{M}'$ then
$\overline\mathfrak{M} \goesto[\overline{\eta}] \overline{\mathfrak{M}'}$;
\item if $\overline\mathfrak{M} \goesto[\zeta] \mathfrak{M}^\dagger$ then there
is a $\mathfrak{M}'\in \nat^S\times T^*$ and \plat{$\eta\in\Act^\pm_\tau$} such that $\mathfrak{M} \goesto[\eta]
\mathfrak{M}'$, $\overline{\eta}=\zeta$ and $\overline{\mathfrak{M}'}=\mathfrak{M}^\dagger$;
\item if $\mathfrak{M} \goesto[\opt{\eta}] \mathfrak{M}'$ then
$\overline\mathfrak{M} \goesto[\overline{\opt{\eta}}] \overline{\mathfrak{M}'}$;
\item if $\overline\mathfrak{M} \goesto[\opt{\zeta}] \mathfrak{M}^\dagger$ then there
is a $\mathfrak{M}'\in \nat^S\times T^*$ and \plat{$\eta\in\Act^\pm_\tau$} such that $\mathfrak{M} \goesto[\opt{\eta}]
\mathfrak{M}'$, $\overline{\eta}=\zeta$ and $\overline{\mathfrak{M}'}=\mathfrak{M}^\dagger$;
\item if $\mathfrak{M} \Goesto \mathfrak{M}'$ then
$\overline\mathfrak{M} \Goesto \overline{\mathfrak{M}'}$;
\item if $\overline\mathfrak{M} \Goesto \mathfrak{M}^\dagger$ then there
is a $\mathfrak{M}'\in \nat^S\times T^*$ such that $\mathfrak{M} \Goesto
\mathfrak{M}'$ and $\overline{\mathfrak{M}'}=\mathfrak{M}^\dagger$.\hfill $\Box$
\end{list}
\end{observation}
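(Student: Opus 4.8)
The plan is to establish the seven items in the order stated, as each of the later ones rests on the earlier ones. Items~1--3 are the core; items~4--5 are cosmetic extensions of items~2--3, and items~6--7 follow from items~2--3 by induction. Item~1 I would dispatch immediately from the definitions: the initial ST-marking is $(M_0,\varepsilon)$ and $\overline{\varepsilon}=\emptyset$, so $\overline{(M_0,\varepsilon)}=(M_0,\emptyset)$ is the initial split marking; conversely $\overline{(M,U)}=(M_0,\emptyset)$ forces $M=M_0$ and $U=\varepsilon$.

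For item~2 (forward simulation of a single phase) I would do a case distinction on $\eta\in\Act^\pm_\tau$. In each of the three cases ($a^+$, $a^{-n}$, $\tau$) one reads off the witnessing transition $t$ from \refdf{ST-marking} and checks that the very same $t$ witnesses the corresponding split step of \refdf{split marking}: appending $t$ to $U$, deleting the $n$th element of $U$, or leaving $U$ fixed corresponds under $\overline{\cdot}$ to adding $\{t\}$ to, removing $\{t\}$ from, or not touching the multiset $\overline{U}$, while $\overline{a^+}=a^+$, $\overline{a^{-n}}=a^-$ and $\overline{\tau}=\tau$ handle the labels. Item~3 is the reverse direction and is the only step requiring real work, because it must \emph{reconstruct} the forgotten total order. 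Given $\overline{\mathfrak{M}}\goesto[\zeta]\mathfrak{M}^\dagger$ with $\mathfrak{M}=(M,U)$, I again split on $\zeta$. For $\zeta=a^+$ and $\zeta=\tau$ the witnessing transition of the split step directly yields the ST-step (append $t$, resp.\ leave $U$ fixed), and $\overline{\mathfrak{M}'}=\mathfrak{M}^\dagger$ is immediate. The interesting case is $\zeta=a^-$: the split step removes one copy of some $t$ from the multiset $\overline{U}$, and we must choose a \emph{position} at which to delete $t$ from the sequence $U$. Since $t\in\overline{U}$ means $t$ occurs in $U$, I would pick any $n$ with $t\in^{(n)}U$ and set $\eta:=a^{-n}$ and $\mathfrak{M}':=(M+\postcond{t},U^{-n})$; the identity to verify is $\overline{U^{-n}}=\overline{U}-\{t\}$, i.e.\ that deleting the $n$th element of $U$, which is $t$, removes exactly one occurrence of $t$ from the underlying multiset.

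Items~4 and~5 then follow by adjoining to items~2 and~3 the trivial optional case, in which $\eta$ (resp.\ $\zeta$) equals $\tau$ and the marking does not change; here one only needs $\overline{\opt{\eta}}=\opt{\overline{\eta}}$, which holds because $\overline{\tau}=\tau$. Items~6 and~7 I would obtain by induction on the number of $\tau$-steps in a $\Goesto$-path, invoking items~2 and~3 with $\eta=\zeta=\tau$ at each step. For item~7 one additionally uses that $\overline{\eta}=\tau$ forces $\eta=\tau$, since $\tau$ is the only phase mapped to $\tau$ by $\overline{\cdot}$; hence the reconstructed ST-steps are again silent and compose into a $\Goesto$-path.

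I expect the only genuine obstacle to be the $a^-$ case of item~3, namely the choice of a deletion position realising the multiset subtraction $\overline{U}-\{t\}$ and the verification of $\overline{U^{-n}}=\overline{U}-\{t\}$. Everything else is a direct unfolding of \refdf{ST-marking} and \refdf{split marking} together with the elementary bookkeeping relating a sequence $U$ to its underlying multiset $\overline{U}$.
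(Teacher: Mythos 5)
Your argument is correct and is exactly the routine unfolding of \refdf{ST-marking} and \refdf{split marking} that the paper has in mind: it states \refobs{match} without proof, regarding all seven items as immediate. Your treatment of the one point with any content --- choosing a position $n$ with $t\in^{(n)}U$ in the $a^-$ case of item~3 and checking $\overline{U^{-n}}=\overline{U}-\{t\}$ --- is right, so nothing is missing.
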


\begin{lemma}{label sequence}
Let $N_1=(S_1,T_1,F_1,{M_0}_1,\ell)$ and $N_2=(S_2,T_2,F_2,{M_0}_2,\ell_2)$ be two nets, $N_2$ being
\hyperlink{plain}{plain};
let $\mathfrak{M}_1,\mathfrak{M}'_1$ be ST-markings of $N_1$, and
$\mathfrak{M}_2,\mathfrak{M}'_2$ ST-markings of $N_2$.
  If $\ell(\mathfrak{M}_2)=\ell(\mathfrak{M}_1)$,
  $\mathfrak{M}_1\goesto[\eta] \mathfrak{M}'_1$ and
  $\mathfrak{M}_2\goesto[\opt{\eta'}] \mathfrak{M}'_2$ with $\overline{\eta'}=\overline{\eta}$,
  then there is an $\mathfrak{M}''_2$ with $\mathfrak{M}_2\goesto[\opt{\eta}] \mathfrak{M}''_2$,
  $\ell(\mathfrak{M}''_2)=\ell(\mathfrak{M}'_1)$,
  and $\overline{\mathfrak{M}''_2}=\overline{\mathfrak{M}'_2}$.
\end{lemma}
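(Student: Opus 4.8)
The plan is to do a case analysis on the phase $\eta\in\Act^\pm_\tau$, using that the side condition $\overline{\eta'}=\overline{\eta}$ nearly determines $\eta'$. Since $\overline{\cdot}$ maps $a^+\mapsto a^+$, every $a^{-n}\mapsto a^-$, and $\tau\mapsto\tau$, the equation $\overline{\eta'}=\overline{\eta}$ forces $\eta'$ to be of the same shape as $\eta$: if $\eta=a^+$ then $\eta'=a^+$, if $\eta=\tau$ then $\eta'=\tau$, and if $\eta=a^{-n}$ then $\eta'=a^{-m}$ for some position $m$ that need not equal $n$. Write $\mathfrak{M}_1=(M_1,U_1)$, $\mathfrak{M}'_1=(M'_1,U'_1)$, $\mathfrak{M}_2=(M_2,U_2)$ and $\mathfrak{M}'_2=(M'_2,U'_2)$. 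In the start and silent cases I expect to simply take $\mathfrak{M}''_2:=\mathfrak{M}'_2$; all the genuine work is in the termination case, where the given $N_2$-move ends the transition in the wrong position $m$ and I must re-route it to the prescribed position $n$.

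For the start case $\eta=a^+$, the forced move $\mathfrak{M}_2\goesto[a^+]\mathfrak{M}'_2$ appends to $U_2$ a transition labelled $a$, so with $\mathfrak{M}''_2:=\mathfrak{M}'_2$ the split condition is trivial and, since $\ell(\mathfrak{M}_2)=\ell(\mathfrak{M}_1)$ and both firing sequences grow by one symbol $a$, the label condition $\ell(\mathfrak{M}''_2)=\ell(\mathfrak{M}'_1)$ follows. For the silent case $\eta=\tau$, the optional move $\mathfrak{M}_2\goesto[\opt{\tau}]\mathfrak{M}'_2$ leaves the currently-firing sequence untouched, so $\ell(\mathfrak{M}'_2)=\ell(U_2)=\ell(U_1)=\ell(\mathfrak{M}'_1)$ and again $\mathfrak{M}''_2:=\mathfrak{M}'_2$ works, with $\mathfrak{M}_2\goesto[\opt{\eta}]\mathfrak{M}''_2$ immediate because $\opt{\eta}=\opt{\tau}$.

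The termination case $\eta=a^{-n}$ is the heart of the argument. Here the $N_1$-move removes the $n$th firing transition $t_1\in^{(n)}U_1$ (with $\ell(t_1)=a$), yielding $U'_1=U_1^{-n}$. Because $\ell(U_2)=\ell(U_1)$, the $n$th symbol of $\ell(U_2)$ is $a$ as well, so the $n$th element $t\in^{(n)}U_2$ satisfies $\ell_2(t)=a$ and I can fire $\mathfrak{M}_2\goesto[a^{-n}](M_2+\postcond{t},\,U_2^{-n})=:\mathfrak{M}''_2$. Deleting the $n$th symbol from the two equal label sequences $\ell(U_2)=\ell(U_1)$ gives $\ell(\mathfrak{M}''_2)=\ell(U_2^{-n})=\ell(U_1^{-n})=\ell(\mathfrak{M}'_1)$, so the label condition holds.

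The main obstacle, and the unique point where plainness of $N_2$ is used, is to show $\overline{\mathfrak{M}''_2}=\overline{\mathfrak{M}'_2}$ even though the given $N_2$-move terminated at position $m$ rather than $n$. Writing that move as $\mathfrak{M}_2\goesto[a^{-m}](M_2+\postcond{t^\ast},\,U_2^{-m})=\mathfrak{M}'_2$ with $t^\ast\in^{(m)}U_2$ and $\ell_2(t^\ast)=a$, injectivity of $\ell_2$ (since $N_2$ is plain) forces $t=t^\ast$ from $\ell_2(t)=\ell_2(t^\ast)=a$. Hence the marking components agree, $M_2+\postcond{t}=M_2+\postcond{t^\ast}$, and since removing one occurrence of the \emph{same} transition $t=t^\ast$ from $U_2$ yields the same multiset regardless of the position deleted, $\overline{U_2^{-n}}=\overline{U_2}-\{t\}=\overline{U_2^{-m}}$. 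Thus $\overline{\mathfrak{M}''_2}=\overline{\mathfrak{M}'_2}$, which finishes the termination case and with it the lemma.
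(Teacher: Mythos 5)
Your proof is correct and follows essentially the same route as the paper's: the same case split on the shape of $\eta$, taking $\mathfrak{M}''_2:=\mathfrak{M}'_2$ in the $a^+$ and $\tau$ cases, and in the $a^{-n}$ case using $\ell(\mathfrak{M}_2)=\ell(\mathfrak{M}_1)$ to fire $a^{-n}$ in $N_2$ and plainness of $N_2$ to identify the $n$th and $m$th transitions of $U_2$, whence the split markings coincide. The only cosmetic difference is that the paper dispatches the label condition once and for all at the start (noting $\ell(\mathfrak{M}')$ is determined by $\ell(\mathfrak{M})$ and $\eta$) where you verify it case by case.
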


\begin{proof}
If $\mathfrak{M}\goesto[\eta] \mathfrak{M}'$ or $\mathfrak{M}\goesto[\opt{\eta}] \mathfrak{M}'$
then $\ell( \mathfrak{M}')$ is completely determined by $\ell(\mathfrak{M})$ and $\eta$.
For this reason the requirement
$\ell(\mathfrak{M}''_2)=\ell(\mathfrak{M}'_1)$ will hold as soon as
the other requirements are met. 

First suppose $\eta$ is of the form $\tau$ or $a^+$. Then
$\overline{\eta}=\eta$ and moreover $\overline{\eta'}=\overline{\eta}$ implies $\eta'=\eta$.
Thus we can take $\mathfrak{M}''_2:=\mathfrak{M}'_2$.

Now suppose $\eta:=a^{-n}$ for some $n>0$. Then $\eta'=a^{-m}$ for
some $m>0$. As $\mathfrak{M}_1\goesto[\eta]$, the $n^{\it th}$ element
of $\ell(\mathfrak{M}_1)$ must (exist and) be $a$.
Since $\ell(\mathfrak{M}_2)=\ell(\mathfrak{M}_1)$, also the $n^{\it th}$ element
of $\ell(\mathfrak{M}_2)$ must be $a$, so there is an
$\mathfrak{M}''_2$ with $\mathfrak{M}_2\goesto[\opt{\eta}]
\mathfrak{M}''_2$. Let $\mathfrak{M}_2:=(M_2,U_2)$. Then $U_2$ is a
sequence of transitions of which the $n^{\it th}$ and the $m^{\it th}$ elements
are both labelled $a$. Since the net $N_2$ is plain, those two
transitions must be equal. Let $\mathfrak{M}'_2:=(M'_2,U'_2)$ and $\mathfrak{M''}_2:=(M''_2,U''_2)$.
We find that $M''_2\mathbin=M'_2$ and $\overline{U''_2}\mathbin=\overline{U'_2}$. It
follows that $\overline{\mathfrak{M}''_2}=\overline{\mathfrak{M}'_2}$.
\end{proof}

\begin{observation}{label sequence}
If $\mathfrak{M}\Goesto \mathfrak{M}'$ for ST-markings
$\mathfrak{M},\mathfrak{M}'$ then $\ell(\mathfrak{M}')=\ell(\mathfrak{M})$.
\end{observation}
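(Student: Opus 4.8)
The plan is to reduce the statement to a single observation about the shape of $\tau$-transitions in the ST-LTS and then lift it along the reflexive-transitive closure $\Goesto$ by a routine induction. The guiding idea is that the label sequence $\ell(\mathfrak{M})$ of an ST-marking $\mathfrak{M}=(M,U)$ reads off only the firing-sequence component $U$, and that $\tau$-steps never touch $U$.

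First I would unfold the definition of $\ell$ on ST-markings. Since $\ell(M,U) := \ell(U)$, the value $\ell(\mathfrak{M})$ is completely determined by the component $U$ recording the transitions currently firing, and is independent of the normal marking $M$. Hence it suffices to show that $U$ is preserved by every individual $\tau$-step, and the desired equality of label sequences follows at once.

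The key step is then the inspection of \refdf{ST-marking}: the clause defining $\tau$-transitions between ST-markings reads $(M,U)\goesto[\tau](M',U')$ iff $M\goesto[\tau]M'$ and $U'=U$. Thus a single ST-$\tau$-transition alters only the normal marking and leaves the sequence of currently firing transitions fixed. Consequently, whenever $\mathfrak{M}\goesto[\tau]\mathfrak{M}'$ with $\mathfrak{M}=(M,U)$ and $\mathfrak{M}'=(M',U')$, we have $U'=U$ and therefore $\ell(\mathfrak{M}')=\ell(U')=\ell(U)=\ell(\mathfrak{M})$.

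Finally I would close the argument by induction on the number of $\tau$-steps witnessing $\mathfrak{M}\Goesto\mathfrak{M}'$, recalling that $\Goesto$ is by definition the reflexive and transitive closure of $\goesto[\tau]$. The base case of zero steps (where $\mathfrak{M}=\mathfrak{M}'$) is immediate, and the inductive step simply composes the induction hypothesis with the single-step invariance established above. I expect no real obstacle here; the only point requiring care is to set up the induction over the whole $\tau$-sequence underlying $\Goesto$ rather than over a single move.
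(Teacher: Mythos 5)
Your proof is correct and is exactly the argument the paper intends: the paper leaves this as an unproved observation precisely because the claim follows immediately from $\ell(M,U):=\ell(U)$ together with the clause $(M,U)\goesto[\tau](M',U')$ iff $M\goesto[\tau]M'\wedge U'=U$ of Definition~\ref{df-ST-marking}, lifted to $\Goesto$ by induction on the length of the $\tau$-sequence. Nothing is missing.
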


\begin{observation}{end-phase determinism 1}
  If $\ell(\mathfrak{M}_1)=\ell(\mathfrak{M}_2)$ and
  $\mathfrak{M}_2\goesto[a^{-n}]$ for some $a\in\Act$ and $n>0$, then $\mathfrak{M}_1\goesto[a^{-n}]$.
\end{observation}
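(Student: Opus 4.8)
The plan is to unfold the definition of the termination phase in \refdf{ST-marking} and to observe that whether $\mathfrak{M}\goesto[a^{-n}]$ holds depends only on the label sequence $\ell(\mathfrak{M})$, not on the underlying marking. Writing $\mathfrak{M}_i=(M_i,U_i)$, the transition $\mathfrak{M}_2\goesto[a^{-n}]$ holds iff there is a $t$ that is the $n^{\it th}$ element of $U_2$ with $\ell(t)=a$. The key feature is that the termination phase imposes no enabling precondition on the marking: for any such $t$ the target $(M_2+\postcond{t},U_2^{-n})$ is automatically well defined, so the sole requirement for $\mathfrak{M}_2\goesto[a^{-n}]$ is the existence of a suitably labelled $n^{\it th}$ entry in $U_2$.

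First I would rephrase this existence condition in terms of label sequences. Since $\ell(U_2)=\ell(\mathfrak{M}_2)$ is obtained from $U_2$ by applying $\ell$ position by position (recall $\ell(t_1\cdots t_k):=\ell(t_1)\cdots\ell(t_k)$), the statement ``$U_2$ has an $n^{\it th}$ element labelled $a$'' is precisely the statement that the sequence $\ell(\mathfrak{M}_2)$ has $a$ in its $n^{\it th}$ position. Thus the hypothesis $\mathfrak{M}_2\goesto[a^{-n}]$ is equivalent to the purely sequential fact that the $n^{\it th}$ symbol of $\ell(\mathfrak{M}_2)$ equals $a$.

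Then I would transport this across the hypothesis $\ell(\mathfrak{M}_1)=\ell(\mathfrak{M}_2)$: the $n^{\it th}$ symbol of $\ell(\mathfrak{M}_1)=\ell(U_1)$ is likewise $a$, so $U_1$ has an $n^{\it th}$ element $t'$ with $\ell(t')=a$. Taking $M'=M_1+\postcond{t'}$ and $U'=U_1^{-n}$ then witnesses $\mathfrak{M}_1\goesto[a^{-n}]$ by \refdf{ST-marking}, completing the argument. There is no genuine obstacle here; the only point worth flagging is the asymmetry in \refdf{ST-marking} between the two phases: the start phase $a^+$ does carry the precondition $M[t\rangle$, so the analogue for $a^+$ would fail, whereas the marking-independence of $a^{-n}$ is exactly what lets the label sequence alone decide enabledness.
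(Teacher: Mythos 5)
Your argument is correct and is exactly the reasoning the paper leaves implicit: the observation is stated without proof, and the key point you isolate---that $\goesto[a^{-n}]$ carries no marking precondition, so its enabledness is determined by the $n^{\it th}$ symbol of $\ell(\mathfrak{M})$ alone---is the same one the paper spells out inside the proof of \reflem{label sequence}. Nothing is missing.
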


\begin{observation}{end-phase determinism 2}
  If $\mathfrak{M}\goesto[a^{-n}]\mathfrak{M}'$ and $\mathfrak{M}\goesto[a^{-n}]\mathfrak{M}''$
 for some $a\in\Act$ and $n>0$, then $\mathfrak{M}'_1=\mathfrak{M}'_2$.
\end{observation}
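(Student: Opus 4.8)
The plan is to exploit a key asymmetry between the two visible action phases of \refdf{ST-marking}: whereas a start phase $a^+$ may be witnessed by any currently enabled transition labelled $a$, an end phase $a^{-n}$ is witnessed by a \emph{unique} transition, namely the $n^{\it th}$ element of the firing sequence recorded in the ST-marking. Determinism of $\goesto[a^{-n}]$ then follows immediately from the defining clause, with no induction or case analysis required.

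In detail, I would write $\mathfrak{M}=(M,U)$. By \refdf{ST-marking}, from $\mathfrak{M}\goesto[a^{-n}]\mathfrak{M}'$ we obtain some $t\in^{(n)}U$ with $\ell(t)=a$, and $\mathfrak{M}'=(M+\postcond{t},\,U^{-n})$; likewise from $\mathfrak{M}\goesto[a^{-n}]\mathfrak{M}''$ we obtain some $t'\in^{(n)}U$ with $\ell(t')=a$, and $\mathfrak{M}''=(M+\postcond{t'},\,U^{-n})$. The single non-trivial point is that the $n^{\it th}$ element of the sequence $U$ is determined by $U$ and $n$ alone, so $t=t'$.

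Hence both the resulting marking $M+\postcond{t}$ and the resulting firing sequence $U^{-n}$ coincide across the two derivations, giving $\mathfrak{M}'=\mathfrak{M}''$ (the primed subscripts in the statement being a slip for $\mathfrak{M}'$ and $\mathfrak{M}''$). I do not expect any real obstacle: the only thing worth flagging is the contrast with the start phase, where distinct transitions sharing the label $a$ may be simultaneously enabled, so that $\goesto[a^+]$ is genuinely nondeterministic. It is precisely the positional index $n$ that singles out one transition in the end phase and thereby forces determinism.
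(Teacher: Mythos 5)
Your argument is correct and is exactly the intended one: the paper states this as an unproven observation, and the evident justification is that the defining clause for $\goesto[a^{-n}]$ in \refdf{ST-marking} picks out the unique $n^{\it th}$ element $t$ of $U$, so the target $(M+\postcond{t},\,U^{-n})$ is fully determined by $\mathfrak{M}$, $a$ and $n$. You are also right that $\mathfrak{M}'_1=\mathfrak{M}'_2$ in the statement is a slip for $\mathfrak{M}'=\mathfrak{M}''$.
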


\begin{proposition}{split}
Let $N_1=(S_1,T_1,F_1,{M_0}_1,\ell)$ and
$N_2=(S_2,T_2,F_2,{M_0}_2,\ell_2)$ be two nets, $N_2$ being \hyperlink{plain}{plain}.
Then $N_1$ and $N_2$ are branching ST-bisimilar (with explicit
divergence) iff they are branching split bisimilar (with explicit divergence).
\end{proposition}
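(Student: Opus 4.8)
The plan is to prove the two implications separately, in each case transporting a bisimulation across the projection $\mathfrak{M}\mapsto\overline{\mathfrak{M}}$ from ST-markings to split markings and using \refobs{match} to move transitions back and forth between the two LTSs. Two features make the plain net $N_2$ rigid and will be used repeatedly: being plain it has no $\tau$-labelled transitions, so $\Goesto$ is the identity on its markings and it admits no divergence; and for a plain net a split move $\goesto[a^-]$ is deterministic, since $a$ labels a unique transition.

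For the forward implication, suppose $\Rel$ is a branching ST-bisimulation and put $\mathcal{C}:=\{(\overline{\mathfrak{M}_1},\overline{\mathfrak{M}_2})\mid \mathfrak{M}_1\Rel\mathfrak{M}_2\}$. To check that $\mathcal{C}$ is a branching split bisimulation I take a split move $\overline{\mathfrak{M}_1}\goesto[\zeta]\mathfrak{M}^\dagger$, lift it to an ST move $\mathfrak{M}_1\goesto[\eta]\mathfrak{M}'_1$ with $\overline{\eta}=\zeta$ by \refobs{match}, match it using $\Rel$, and project the matching ST computation back down with \refobs{match}, noting that projection turns $\goesto[\opt{\eta}]$ into $\goesto[\opt{\zeta}]$. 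Clause~1 is immediate and the symmetric clause is identical; this direction needs neither plainness nor the label machinery.

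The substantial direction is the converse. Given a branching split bisimulation $\Rel$, I define $\mathcal{C}:=\{(\mathfrak{M}_1,\mathfrak{M}_2)\mid \ell(\mathfrak{M}_1)=\ell(\mathfrak{M}_2)\wedge \overline{\mathfrak{M}_1}\Rel\overline{\mathfrak{M}_2}\}$; the label-sequence conjunct is exactly what will let terminations be matched by their index. For a move $\mathfrak{M}_1\goesto[\eta]\mathfrak{M}'_1$ on $N_1$ I project to split, match with $\Rel$ (where $\Goesto$ collapses on the $N_2$ side since $N_2$ has no $\tau$), lift the answer back to an ST move $\mathfrak{M}_2\goesto[\opt{\eta'}]\mathfrak{M}'_2$ with $\overline{\eta'}=\overline{\eta}$, and then invoke \reflem{label sequence} to replace $\eta'$ by the exact phase $\eta$; \refobs{label sequence} keeps the two label sequences aligned under $\Goesto$, and the label conjunct of $\mathcal{C}$ then holds automatically for the resulting $\mathfrak{M}''_2$.

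The main obstacle is the symmetric clause for a terminating move $\mathfrak{M}_2\goesto[a^{-n}]\mathfrak{M}'_2$ of the plain net, which must be answered on $N_1$ by a move carrying the \emph{same} index $n$ --- yet a split bisimulation records nothing about the order of firing and may have terminated a different $a$-labelled transition of $N_1$. I plan to sidestep the split bisimulation's choice entirely: after obtaining $\mathfrak{M}_1\Goesto\mathfrak{M}^\dagger_1$ with $\overline{\mathfrak{M}^\dagger_1}\Rel\overline{\mathfrak{M}_2}$ and $\ell(\mathfrak{M}^\dagger_1)=\ell(\mathfrak{M}_2)$, I use \refobs{end-phase determinism 1} to fire precisely the $n$-th phase, $\mathfrak{M}^\dagger_1\goesto[a^{-n}]\mathfrak{M}''_1$. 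To verify $\overline{\mathfrak{M}''_1}\Rel\overline{\mathfrak{M}'_2}$ I project this move to $\overline{\mathfrak{M}^\dagger_1}\goesto[a^-]\overline{\mathfrak{M}''_1}$ and re-apply the split bisimulation to the already-established pair $\overline{\mathfrak{M}^\dagger_1}\Rel\overline{\mathfrak{M}_2}$; because $N_2$ has no $\tau$-moves the answer is a single $\goesto[a^-]$ out of $\overline{\mathfrak{M}_2}$, and because split termination is deterministic in a plain net this answer can only be $\overline{\mathfrak{M}'_2}$, which yields the required relatedness. The cases $\eta\in\{\tau,a^+\}$ are routine, since $\overline{\cdot}$ is injective on these phases and the started transition need not be matched by position. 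Finally, explicit divergence carries over in both directions because the $\tau$-moves of $N_1$ correspond bijectively under $\overline{\cdot}$, so the ST- and split-LTSs of $N_1$ diverge together, whereas $N_2$ never diverges.
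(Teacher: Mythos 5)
Your proposal is correct and follows essentially the same route as the paper: the same two transported relations, \refobs{match} for lifting and projecting moves, \reflem{label sequence} for realigning termination indices, and \refobs{end-phase determinism 1} for the $a^{-n}$ case. The only cosmetic difference is that in the symmetric $a^{-n}$ subcase you argue directly via determinism of split termination in a plain net, where the paper re-invokes its already-established forward clause together with \refobs{end-phase determinism 2}; both arguments are sound and interchangeable.
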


\begin{proof}
Suppose $\Rel $ is a branching ST-bisimulation between $N_1$ and $N_2$.
Then, by \refobs{match}, the relation $\Rel _{\rm split} := \{(\overline{\mathfrak{M}_1},\overline{\mathfrak{M}_2}) \mid
(\mathfrak{M}_1,\mathfrak{M}_2)\in \Rel \}$ is a branching split bisimulation between $N_1$ and $N_2$.

Now let $\Rel $ be a branching split bisimulation between $N_1$ and $N_2$.
Then, using \refobs{match}, the relation $\Rel _{\rm ST} := \{(\mathfrak{M}_1,\mathfrak{M}_2) \mid
\ell_1(\mathfrak{M}_1)=\ell_2(\mathfrak{M}_2) \wedge 
(\overline{\mathfrak{M}_1},\overline{\mathfrak{M}_2})\in \Rel \}$ turns out to be a
branching ST-bisimulation between $N_1$ and $N_2$:
\begin{enumerate}[~~1.]
\item $\mathfrak{M_0}_1\Rel_{\rm ST} \mathfrak{M_0}_2$ follows from \refobs{match}, using that
  $\overline{\mathfrak{M_0}_1}\Rel \overline{\mathfrak{M_0}_2}$ and
  $\ell(\mathfrak{M_0}_1)\mathbin=\ell(\mathfrak{M_0}_2)\mathbin=\epsilon$.
\item Suppose $\mathfrak{M}_1\Rel_{\rm ST} \mathfrak{M}_2$ and
  $\mathfrak{M}_1\!\goesto[\eta]\mathfrak{M}'_1$.
  Then $\overline{\mathfrak{M}_1}\Rel \overline{\mathfrak{M}_2}$ and
  $\overline{\mathfrak{M}_1}\!\goesto[\overline\eta]\overline{\mathfrak{M}'_1}$.
  Hence $\exists \mathfrak{M}^\dagger_2,\mathfrak{M}^\ddagger_2$ such that
  $\overline{\mathfrak{M}_2}\Goesto[] \mathfrak{M}^\dagger_2 \!\goesto[\opt{\overline\eta}] \mathfrak{M}^\ddagger_2$,
  ~$\overline{\mathfrak{M}_1}\Rel \mathfrak{M}^\dagger_2$ and
  $\overline{\mathfrak{M}'_1}\Rel \mathfrak{M}^\ddagger_2$.
  As $N_2$ is plain, $\mathfrak{M}^\dagger_2=\overline{\mathfrak{M}_2}$.
  By \refobs{match}, using that $\overline{\mathfrak{M}_2}\goesto[\opt{\overline\eta}] \mathfrak{M}^\ddagger_2$,
  $\exists \mathfrak{M}'_2,\,\eta'$ such that
  $\mathfrak{M}_2\goesto[\opt{\eta'}] \mathfrak{M}'_2$, $\overline{\eta'}=\overline{\eta}$ and
  $\overline{\mathfrak{M}'_2} = \mathfrak{M}^\ddagger_2$.
  By \reflem{label sequence}, there is an ST-marking $\mathfrak{M}''_2$ such that
  $\mathfrak{M}_2\goesto[\opt{\eta}] \mathfrak{M}''_2$,
  $\ell(\mathfrak{M}''_2)=\ell(\mathfrak{M}'_1)$,
  and $\overline{\mathfrak{M}''_2}=\overline{\mathfrak{M}'_2} = \mathfrak{M}^\ddagger_2$.
  It follows that $\mathfrak{M}'_1\Rel_{\rm ST} \mathfrak{M}''_2$.
\item Suppose $\mathfrak{M}_1\Rel_{\rm ST} \mathfrak{M}_2$ and
  $\mathfrak{M}_2\!\goesto[\eta]\mathfrak{M}'_2$.
  Then $\overline{\mathfrak{M}_1}\Rel \overline{\mathfrak{M}_2}$ and
  $\overline{\mathfrak{M}_2}\!\goesto[\overline\eta]\overline{\mathfrak{M}'_2}$.
  Hence $\exists \mathfrak{M}^\dagger_1,\mathfrak{M}^\ddagger_1$ such that
  $\overline{\mathfrak{M}_1}\Goesto[] \mathfrak{M}^\dagger_1 \!\goesto[\opt{\overline\eta}] \mathfrak{M}^\ddagger_1$,
  ~$\mathfrak{M}^\dagger_1\Rel \overline{\mathfrak{M}_2}$ and
  $\mathfrak{M}^\ddagger_1\Rel \overline{\mathfrak{M}'_2}$.
  By \refobs{match}, $\exists \mathfrak{M}^*_1$ such that
  $\mathfrak{M}_1\Goesto[] \mathfrak{M}^*_1$ and
  $\overline{\mathfrak{M}^*_1} = \mathfrak{M}^\dagger_1$.
  By \refobs{label sequence},
  $\ell(\mathfrak{M}^*_1)=\ell(\mathfrak{M}_1)=\ell(\mathfrak{M}_2)$,
  so $\mathfrak{M}^*_1\Rel_{\rm ST} \mathfrak{M}_2$.
  Since $N_2$ is plain, $\eta\neq\tau$.
  \begin{itemize}
  \item Let $\eta=a^+$ for some $a\in\Act$.
  Using that $\overline{\mathfrak{M}^*_1}\goesto[\opt{\overline\eta}] \mathfrak{M}^\ddagger_1$,
  by \refobs{match} $\exists \mathfrak{M}'_1,\,\eta'$ such that
  $\mathfrak{M}^*_1\goesto[\opt{\eta'}] \mathfrak{M}'_1$, $\overline{\eta'}=\overline{\eta}$ and
  $\overline{\mathfrak{M}'_1} = \mathfrak{M}^\ddagger_1$.
  It must be that $\eta'=\eta=a^+$ and
  $\ell(\mathfrak{M}'_1)=\ell(\mathfrak{M}^*_1)a=\ell(\mathfrak{M}_2)a=\ell(\mathfrak{M}'_2)$.
  Hence $\mathfrak{M}'_1\Rel_{\rm ST} \mathfrak{M}'_2$.
\item Let $\eta=a^{-n}$ for some $a\in\Act$ and $n>0$.
  By \refobs{end-phase determinism 1}, $\exists \mathfrak{M}'_1$ with
  $\mathfrak{M}^*_1\goesto[\eta]\mathfrak{M}'_1$.
  By Part 2.\ of this proof, $\exists \mathfrak{M}''_2$ such that
  $\mathfrak{M}_2\goesto[\opt{\eta}] \mathfrak{M}''_2$ and
  $\mathfrak{M}'_1\Rel_{\rm ST} \mathfrak{M}''_2$.
  By \refobs{end-phase determinism 2} $\mathfrak{M}''_2=\mathfrak{M}'_2$.
  \end{itemize}
\end{enumerate}
Since the net $N_2$ is plain, it has no divergence. In such a case,
the requirement ``with explicit divergence'' requires $N_1$ to be free
of divergence as well, regardless of whether split or ST-semantics is
in used.
\end{proof}
In this paper we will not consider causal semantics.
The reason is that our distributed implementations will not fully preserve the causal behaviour of nets.
We will further comment on this in the conclusion.

\section{Distributed Systems}
\label{sec-distributed systems}
In this section, we stipulate what we understand by a distributed
system, and subsequently formalise a model of distributed systems in
terms of Petri nets.
\begin{list}{{\bf --}}{\leftmargin 18pt
                        \labelwidth\leftmargini\advance\labelwidth-\labelsep
                        \topsep 3pt \itemsep 0pt \parsep 0pt}
\item A distributed system consists of components residing on different locations.
\item Components work concurrently.
\item Interactions between components are only possible by explicit communications.
\item Communication between components is time consuming and asynchronous.
\end{list}
Asynchronous communication is the
only interaction mechanism in a
distributed system for exchanging signals or information.
\begin{list}{{\bf --}}{\leftmargin 18pt
                        \labelwidth\leftmargini\advance\labelwidth-\labelsep
                        \topsep 3pt \itemsep 0pt \parsep 0pt}
\item The sending of a message happens always strictly before its receipt (there is a causal relation between sending and receiving a message).
\item A sending component sends without regarding the state of the
  receiver; in particular there is no need to synchronise with a receiving component.
  After sending the sender continues its behaviour independently of receipt of the message.
\end{list}
As explained in the introduction, we will add another requirement to
our notion of a distributed system, namely that its components only
allow sequential behaviour.

Formally, we model distributed systems as nets consisting of component
nets with sequential behaviour and interfaces in terms of input and
output places.

\begin{definition}{component}
  Let $N \mathbin= (S, T, F, M_0, \ell)$ be a Petri net,
  $I, O \mathbin\subseteq S$,
  $I\mathop\cap O\mathbin=\emptyset$ and
  $\postcond{O} = \varnothing$.
  \begin{list}{{\bf --}}{\leftmargin 18pt
                        \labelwidth\leftmargini\advance\labelwidth-\labelsep
                        \topsep 0pt \itemsep 0pt \parsep 0pt}
    \item[1.] $(N, I, O)$ is a \defitem{component with interface $(I, O)$}.
    \item[2.] $(N, I, O)$ is a \defitem{sequential} component with interface $(I,
      O)$ iff\\ $\exists Q \mathbin\subseteq S \mathord\setminus (I \cup O)$ with
      $\forall t \in T. |\precond{t} \restrictedto Q| = 1 \wedge
          |\postcond{t}\! \restrictedto Q| = 1$ and
        $|M_0 \restrictedto Q| = 1$.
  \end{list}
\end{definition}
An input place $i\inp I$ of a component $\mathcal{C}\mathbin=(N,I,O)$ can be regarded as a mailbox of
$\mathcal{C}$ for a specific type of messages.  An output place $o\inp
O$, on the other hand, is an address outside $\mathcal{C}$ to which $\mathcal{C}$
can send messages. Moving a token into $o$ is like posting a
letter. The condition $\postcond{o}=\varnothing$ says that a message,
once posted, cannot be retrieved by the component.

A set of places like $Q$ above is called an \emph{$S$-invariant}.
The requirements guarantee that the number of tokens in these places
remains constant, in this case $1$. It follows that no two transitions
can ever fire concurrently (in one step).
Conversely, whenever a net is sequential, in the sense that no two
transitions can fire in one step, it is easily converted into a
behaviourally equivalent net with the required $S$-invariant, namely by
adding a single marked place with a self-loop to all transitions.
This modification preserves virtually all semantic equivalences on
Petri nets from the literature, including $\approx^\Delta_{bSTb}$.

Next we define an operator for combining components with asynchronous
communication by fusing input and output places.

\begin{definition}{parcomp}
  Let $\indexset$ be an index set.\\
  Let $((S_k, T_k, F_k, {M_0}_k, \ell_k), I_k, O_k)$ with $k \in \indexset$
  be components with interface such that
  $(S_k \cup T_k) \cap (S_l \cup T_l) = (I_k \cup O_k) \cap (I_l \cup O_l)$
  for all $k, l \in \indexset$ with $k\neq l$
  (components are disjoint except for interface places)
  and $I_k \cap I_l = \varnothing$ for all $k, l \in \indexset$ with $k\neq l$
  (mailboxes cannot be shared; any message has a unique recipient).

\noindent
  Then the \defitem{asynchronous parallel composition} of
  these components is defined by\vspace{-.5ex}
  \[
  \Big\|_{i \in \indexset} ((S_k, T_k, F_k, {M_0}_k, \ell_k), I_k, O_k) =
    ((S, T, F, {M_0}, \ell), I, O)\vspace{-.5ex}
  \]
  with
  $S \mathord= \bigcup_{k \in \indexset} S_k,~
  T \mathord= \bigcup_{k \in \indexset} \!T_k,~
  F \mathord= \bigcup_{k \in \indexset} F_k,~
  M_0 \mathord= \sum_{k \in \indexset} {M_0}_k,~
  \ell \mathord= \bigcup_{k \in \indexset} \ell_k$
  (componentwise union of all nets),
  $I \mathord= \bigcup_{k \in \indexset} I_k$
  (we accept additional inputs from outside), and
  $O \mathord= \bigcup_{k \in \indexset} O_k \setminus \bigcup_{k \in \indexset} I_k$
  (once fused with an input, $o\inp O_I$ is no longer an output).
\end{definition}

\begin{observation}{associativity}
  $\|$ is associative.
\end{observation}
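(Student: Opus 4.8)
The plan is to establish associativity in its ternary form and then note the general indexed statement follows: for three components $\mathcal{C}_1,\mathcal{C}_2,\mathcal{C}_3$ that are pairwise composable in the sense of \refdf{parcomp}, both $(\mathcal{C}_1\|\mathcal{C}_2)\|\mathcal{C}_3$ and $\mathcal{C}_1\|(\mathcal{C}_2\|\mathcal{C}_3)$ are defined and yield the same component with interface; the general claim then follows by regarding $\Big\|_{k\in\indexset}$ as built up from binary compositions over any partition of $\indexset$. Since the underlying net data combine by $\cup$ (for $S$, $T$, $F$, $\ell$) and by $+$ (for $M_0$), operations that are associative and commutative, the net part of the result is manifestly independent of the bracketing; the entire content of the observation therefore concerns the interface $(I,O)$ and the side conditions guaranteeing definedness.

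First I would record the key identity that the interface places of a composite are just the union of the component interface places: for composable $\mathcal{C}_k,\mathcal{C}_l$ with composite interface $(I_{kl},O_{kl})$ one has $I_{kl}\cup O_{kl}=I_k\cup I_l\cup O_k\cup O_l$, which is immediate from $A\cup(B\setminus A)=A\cup B$ applied to $O_{kl}=(O_k\cup O_l)\setminus(I_k\cup I_l)$. With this in hand I would check that the two definedness conditions of \refdf{parcomp} are preserved under regrouping. The mailbox condition is symmetric because $(I_k\cup I_l)\cap I_m=\varnothing$ iff $I_k\cap I_m=\varnothing$ and $I_l\cap I_m=\varnothing$. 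The disjointness-except-interface condition for $\mathcal{C}_1\|\mathcal{C}_2$ against $\mathcal{C}_3$ expands, using distributivity of $\cap$ over $\cup$ and the pairwise conditions for $(\mathcal{C}_1,\mathcal{C}_3)$ and $(\mathcal{C}_2,\mathcal{C}_3)$, to $(I_1\cup O_1\cup I_2\cup O_2)\cap(I_3\cup O_3)$, which by the interface identity is exactly $(I_{12}\cup O_{12})\cap(I_3\cup O_3)$; the symmetric computation handles the other bracketing, so both compositions are defined precisely when the three components are pairwise composable.

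It then remains to compare the resulting interfaces. The input interface is the union of the $I_k$ in either bracketing, so $I=I_1\cup I_2\cup I_3$ in both cases. For the output interface, writing $J:=I_1\cup I_2\cup I_3$, I would compute the left bracketing as $O=\big(((O_1\cup O_2)\setminus(I_1\cup I_2))\cup O_3\big)\setminus J$ and the right bracketing as $O=\big(O_1\cup((O_2\cup O_3)\setminus(I_2\cup I_3))\big)\setminus J$. The decisive point is that the outer subtraction of $J$ absorbs the inner subtractions: since $I_1\cup I_2\subseteq J$ and $I_2\cup I_3\subseteq J$, the identity $(X\setminus Y)\setminus J=X\setminus J$ for $Y\subseteq J$, together with distributivity of $\setminus J$ over $\cup$, collapses both expressions to $(O_1\cup O_2\cup O_3)\setminus J$. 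Hence the two bracketings agree on $O$ as well.

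I expect the only genuine bookkeeping — and thus the main obstacle — to be this output-place computation: one must track that a place which is ``consumed'' as a fused input at an inner composition is nonetheless removed from the final $O$ in every bracketing, which is exactly what the absorption identity $(X\setminus Y)\setminus J=X\setminus J$ guarantees. Everything else reduces to associativity, commutativity and distributivity of the basic set and multiset operations, so once the interface identity and this absorption step are in place the argument is routine.
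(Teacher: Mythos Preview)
Your argument is correct. The paper's own justification is a single line: ``This follows directly from the associativity of the (multi)set union operator.'' That covers $S$, $T$, $F$, $\ell$, $M_0$ and the input interface $I$, but it glosses over the output interface $O=(\bigcup O_k)\setminus(\bigcup I_k)$ and the definedness side conditions, which are precisely the points you work out. Your interface identity $I_{kl}\cup O_{kl}=I_k\cup I_l\cup O_k\cup O_l$ and the absorption step $(X\setminus Y)\setminus J=X\setminus J$ for $Y\subseteq J$ are exactly what is needed to see that the staged $O$ collapses to $(O_1\cup O_2\cup O_3)\setminus(I_1\cup I_2\cup I_3)$ regardless of bracketing; the paper implicitly relies on this but does not spell it out. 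So you take the same route as the paper, just with the details filled in --- in particular the one nontrivial detail the paper suppresses.
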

  This follows directly from the associativity of the (multi)set union
  operator.\hfill$\Box$

\noindent
We are now ready to define the class of nets representing systems 
of asynchronously communicating sequential components.

\begin{definition}{LSGA}
  $\!$A Petri net $N$ is an \defitem{LSGA net} (a \defitem{locally sequential
  globally asynchronous net}) iff there exists an index set
  $\indexset$ and sequential components with interface
  $\mathcal{C}_k,~ k \inp \indexset$, such that $(N, I, O) = \|_{k \in \indexset} \mathcal{C}_k$
  for some $I$ and $O$.
\end{definition}

\noindent
Up to $\approx^\Delta_{bSTb}$---or any reasonable equivalence
preserving causality and branching time but abstracting from internal
activity---the same class of LSGA systems would have been obtained if
we had imposed, in \refdf{component}, that $I$, $O$ and $Q$ form
a partition of $S$ and that $\precond{I}=\emptyset$.
However, it is essential that our definition allows multiple
transitions of a component to read from the same input place.

In the remainder of this section we give a more abstract
characterisation of Petri nets representing distributed systems,
namely as \emph{distributed} Petri nets, which we introduced in
\cite{glabbeek08syncasyncinteractionmfcs}. This will be useful in
\refsec{distributable}, where we investigate
distributability using this more semantic characterisation. We show
below that the concrete characterisation of distributed systems as
LSGA nets and this abstract characterisation agree.

Following \cite{BCD02}, to arrive at a class of nets representing distributed systems,
we associate \emph{localities} to the elements of a net $N=(S,T,F,M_0,\ell)$.
We model
this by a function \mbox{$D: S\cup T \rightarrow\Loc$}, with $\Loc$ a
set of possible locations.  We refer to such a
function as a \defitem{distribution} of $N$.  Since the identity of
the locations is irrelevant for our purposes, we can just as well
abstract from $\Loc$ and represent $D$ by the equivalence relation
$\equiv_D$ on $S\cup T$ given by $x \equiv_D y$ iff $D(x)=D(y)$.

Following \cite{glabbeek08syncasyncinteractionmfcs}, we impose a fundamental
restriction on distributions, namely that when two transitions
can occur in one step, they cannot be co-located. This reflects our
assumption that at a given location \visible actions can only occur
sequentially.

In \cite{glabbeek08syncasyncinteractionmfcs} we observed that
Petri nets incorporate a notion of synchronous interaction, in that a
transition can fire only by synchronously taking the tokens from all
of its preplaces. In general the behaviour of a net would change
radically if a transition would take its input tokens one by one---in
particular deadlocks may be introduced. Therefore we insist that in a
distributed Petri net, a transition and all its input places reside on
the same location. There is no reason to require the same for the
output places of a transition, for the behaviour of a net would not
change significantly if transitions were to deposit their output tokens
one by one \cite{glabbeek08syncasyncinteractionmfcs}.
\pagebreak[3]

This leads to the following definition of a distributed Petri net.

\begin{definition}{distributed}$\!$\cite{glabbeek08syncasyncinteractionmfcs}\,
    A Petri net $N = (S, T, F, M_0, \ell)$ is \defitem{distributed}
    iff there exists a distribution $D$ such that
  \begin{list}{{\bf --}}{\leftmargin 25pt
                        \labelwidth\leftmargini\advance\labelwidth-\labelsep
                        \topsep 0pt \itemsep 0pt \parsep 0pt}
    \item[(1)]
      $\forall s \in S, ~t \in T.~\hspace{1pt}s \in \precond{t}
      \implies t \equiv_D s$,
    \item[(2)]
     $\forall t,u\in T.~t \concurrent u \implies t\not\equiv_D u$.
  \end{list}
\end{definition}

\noindent
A typical example of a net which is not distributed is shown in
\reffig{fullM} on Page \pageref{fig-fullM}.
Transitions $t$ and $v$ are concurrently executable
and hence should be placed on different locations. However,
both have preplaces in common with $u$ which would enforce putting all
three transitions on the same location. In fact, distributed nets can
be characterised in the following semi-structural way.

\begin{observation}{distributed}
A Petri net is distributed iff there is no sequence $t_0,\ldots,t_n$ of
transitions with $t_0 \smile t_n$ and
$\precond{t_{i-1}}\cap\precond{t_{i}}\neq\emptyset$ for $i=1,\ldots,n$.\hfill$\Box$
\end{observation}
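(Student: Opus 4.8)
The statement is an \emph{iff} between the two-part semantic notion of \refdf{distributed} and a purely connectivity-based condition, so the plan is to prove the two implications separately, using as an organising device the binary relation on transitions that says ``$t$ and $u$ share a preplace'', i.e.\ $\precond{t}\cap\precond{u}\neq\emptyset$. Call this relation \emph{adjacency}, and call two transitions \emph{connected} if they are related by its reflexive-transitive closure; since adjacency is symmetric, connectedness is an equivalence relation on $T$. The point is that the sequences forbidden in the statement are exactly the adjacency-paths, so the stated condition reads: \emph{no two connected transitions are concurrent} (and, reading the degenerate path of length $0$, no transition is self-concurrent, i.e.\ $t\smile t$ never holds).

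For the ``only if'' direction I would assume a distribution $D$ witnessing that $N$ is distributed, and rule out forbidden sequences by contradiction. Suppose $t_0,\ldots,t_n$ satisfies $t_0\smile t_n$ and $\precond{t_{i-1}}\cap\precond{t_i}\neq\emptyset$ for all $i$. For each $i$ pick a place $s$ in that intersection; by condition~(1) of \refdf{distributed} both $t_{i-1}\equiv_D s$ and $t_i\equiv_D s$, so $t_{i-1}\equiv_D t_i$. Transitivity of $\equiv_D$ then yields $t_0\equiv_D t_n$, whereas $t_0\smile t_n$ together with condition~(2) gives $t_0\not\equiv_D t_n$, a contradiction. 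The length-$0$ case is covered automatically: a self-concurrent $t_0$ would force $t_0\not\equiv_D t_0$, contradicting reflexivity of $\equiv_D$.

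For the ``if'' direction I would construct a distribution explicitly from connectedness. Define $\equiv_D$ on transitions to be the connectedness relation, and extend it to places by co-locating every place $s$ that is a preplace of some transition with that transition; this is well-defined, because if $s\in\precond{t}\cap\precond{u}$ then $t$ and $u$ are adjacent, hence connected, so the representative chosen for $s$ is immaterial. Places that are preplaces of no transition may be sent to arbitrary, mutually distinct fresh locations. Condition~(1) then holds by construction. For condition~(2), suppose $t\smile u$ and $t\equiv_D u$; connectedness gives an adjacency-path $t=t_0,\ldots,t_n=u$, which is precisely a forbidden sequence (with the case $n=0$ encoding a self-concurrent transition), contradicting the hypothesis. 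Hence $t\smile u\implies t\not\equiv_D u$, as required.

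I expect the only genuinely delicate point to be the backward construction: one must check that the place-to-location assignment is consistent, which reduces exactly to the observation that a shared preplace makes two transitions adjacent and therefore connected, and one must verify that length-$0$ paths correctly encode the exclusion of self-concurrency so that reflexivity of $\equiv_D$ remains compatible with condition~(2). Everything else is routine, amounting to transporting membership of a shared place back and forth through conditions~(1) and~(2).
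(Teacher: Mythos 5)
Your proof is correct; the paper states this observation without proof, and your argument is exactly the intended one --- the forward direction transports co-location along shared preplaces via conditions (1) and (2), and your backward construction coincides with the canonical distribution of \refdf{canonical} (the smallest equivalence generated by condition (1)). You also correctly identify the one subtle point, namely that the degenerate length-$0$ sequence encodes the exclusion of self-concurrency, which is needed for compatibility with reflexivity of $\equiv_D$.
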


\noindent
We proceed to show that the classes of LSGA nets and distributable
nets essentially coincide. That every LSGA net is distributed follows
because we can place each sequential component on a
separate location. The following two lemmas constitute a formal argument.
Here we call a component with interface $(N,I,O)$ distributed iff $N$
is distributed.

\begin{lemma}{sequential component distributed}
Any sequential component with interface is distributed.
\end{lemma}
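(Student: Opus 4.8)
The plan is to exhibit a distribution $D$ witnessing \refdf{distributed}. The crucial observation is that sequentiality forces the concurrency relation to be empty, whence condition~(2) becomes vacuous and we may safely place the entire component on a single location. Concretely, let $(N,I,O)$ with $N=(S,T,F,M_0,\ell)$ be a sequential component, and let $Q\subseteq S\setminus(I\cup O)$ be the $S$-invariant supplied by \refdf{component}, so that $|\precond{t}\restrictedto Q|=|\postcond{t}\restrictedto Q|=1$ for every $t\in T$ and $|M_0\restrictedto Q|=1$.

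First I would show, by induction on the generation of $[M_0\rangle$, that every reachable marking $M$ satisfies $|M\restrictedto Q|=1$. The base case is immediate. For the step, suppose $|M\restrictedto Q|=1$ and $M[G\rangle M'$. Restricting the enabledness condition $\precond{G}\le M$ to $Q$ and using $|\precond{t}\restrictedto Q|=1$ for each $t$ (so that, by additivity of cardinality on genuine multisets, $|\precond{G}\restrictedto Q|=|G|$) gives $|G|=|\precond{G}\restrictedto Q|\le|M\restrictedto Q|=1$; since $G$ is non-empty this forces $G$ to be a singleton $\{t\}$. Moreover the unique $Q$-preplace of $t$ must coincide with the unique $Q$-place marked by $M$. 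Hence $M'\restrictedto Q=M\restrictedto Q-\precond{t}\restrictedto Q+\postcond{t}\restrictedto Q$ again has cardinality $1$, the one $Q$-token consumed by $t$ being exactly replaced by the one it produces. This computation does more than establish the invariant: it shows that no step of cardinality at least $2$ is ever enabled at a reachable marking. In particular $\{t\}+\{u\}$ is enabled at no $M\in[M_0\rangle$, so the concurrency relation $\concurrent$ of \refdf{onesafe} is empty.

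It then remains only to choose $D$. I would take $D$ to be constant, placing all of $S\cup T$ on one location, so that $\equiv_D$ is the total relation on $S\cup T$. Condition~(1) of \refdf{distributed} holds trivially, since $t\equiv_D s$ for all $s,t$; and condition~(2) holds vacuously, since $\concurrent=\varnothing$. Thus $N$ is distributed, and so is the component $(N,I,O)$. The only step needing any care is the invariant induction, and even there the sole piece of real content is the cardinality identity $|\precond{G}\restrictedto Q|=|G|$ together with its consequence $|G|=1$; the interface places $I,O$ and the labelling $\ell$ play no role at all. I therefore expect the main ``obstacle'' to be purely a matter of bookkeeping with the restriction operator rather than any genuine difficulty.
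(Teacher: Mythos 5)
Your proof is correct and takes essentially the same route as the paper, which simply notes that a sequential component displays no concurrency and hence can be entirely co-located; your induction on reachable markings showing $|M\restrictedto Q|=1$ and hence $\mathord{\concurrent}=\varnothing$ is just the detailed justification of that one-line observation.
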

\begin{proof}
As a sequential component displays no concurrency,
it suffices to co-locate all places and transitions.
\end{proof}

\noindent
\reflem{parcompdistributed} states that the class of distributed nets is
closed under asynchronous parallel composition.

\begin{lemma}{parcompdistributed}
  Let $\mathcal{C}_k=(N_k, I_k, O_k)$, $k \inp \indexset$, be components with
  interface, satisfying the requirements of \refdf{parcomp},
  which are all distributed.
  Then $\|_{k \in \indexset} \mathcal{C}_k$ is distributed.
\end{lemma}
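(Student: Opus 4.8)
The plan is to manufacture a single distribution $D$ for the composition $\|_{k\in\indexset}\mathcal{C}_k=((S,T,F,M_0,\ell),I,O)$ out of the distributions $D_k$ that witness distributedness of the individual $N_k$. First I would relabel locations so that the ranges of the $D_k$ become pairwise disjoint; this is harmless, since \refdf{distributed} refers only to the induced equivalence $\equiv_D$. The one real design choice concerns the shared (interface) places. Here I would first record two structural facts. Since transitions are never shared and the only arcs into a transition $t\in T_k$ come from $F_k$, the preset $\precond{t}$ computed in the composition coincides with the one computed in $N_k$; in particular $\precond{t}\subseteq S_k$. Moreover a place can be a preplace of transitions of \emph{at most one} component: a place shared between $N_k$ and $N_l$ is an interface place of both, and in the component where it is a preplace it cannot be an output place (outputs have empty postset by \refdf{component}), so it would have to be an input place of both $N_k$ and $N_l$, contradicting $I_k\cap I_l=\varnothing$. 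Hence every place that is a preplace of some transition lies in a unique component; I assign it the location $D_k$ gives it, and assign each remaining place arbitrarily.

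Condition (1) of \refdf{distributed} is then immediate: if $s\in\precond{t}$ with $t\in T_k$, then $s$ is a preplace in $N_k$, so $D(s)=D_k(s)=D_k(t)=D(t)$, where the middle equality holds because $D_k$ satisfies (1) for $N_k$. For condition (2), take $t\concurrent u$ in the composition. If $t\in T_k$ and $u\in T_l$ with $k\neq l$, then $D(t)$ and $D(u)$ lie in the disjoint location ranges of $D_k$ and $D_l$, so $t\not\equiv_D u$ and we are done in this case.

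The remaining, and main, case is $t,u\in T_k$ with $t\concurrent u$ in the composition; here I must derive $D_k(t)\neq D_k(u)$, and the natural route is to reduce the concurrency to $N_k$ itself, for then condition (2) of $D_k$ finishes the argument. I expect this reduction to be the hard part. The difficulty is that a step $\{t\}+\{u\}$ enabled at some reachable marking $M$ of the composition need not stem from a reachable marking of $N_k$ in isolation, because the tokens on $\precond{t}$ and $\precond{u}$ may have been delivered through the input places of $N_k$ by the other components. So the crux is a projection/localisation argument to the effect that whenever two transitions of one component fire concurrently in the composition they already do so in that component. Equivalently, via \refobs{distributed}, one may argue by contradiction: a bad sequence $t_0,\dots,t_n$ with $t_0\concurrent t_n$ and $\precond{t_{i-1}}\cap\precond{t_i}\neq\varnothing$ arising in the composition must, by the at-most-one-component fact above, have all its $t_i$ inside a single $N_k$, with the intersection conditions holding there; the only thing left to transfer into $N_k$ is the concurrency $t_0\concurrent t_n$, which would then contradict $N_k$ being distributed. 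Isolating and proving this concurrency-transfer statement — and pinning down exactly which feature of the components (beyond bare distributedness, e.g.\ the internal $S$-invariant that keeps intra-component steps from ever arising in the composition) makes it go through — is where I expect essentially all the work to lie.
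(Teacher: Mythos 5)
Your construction of $D$ is exactly the paper's: make the codomains of the $D_k$ disjoint, observe that a preplace of a transition in $T_k$ lies in $S_k\setminus O_k$ and hence belongs to no other component (a shared place is an output place of at least one of the two components sharing it, and $I_k\cap I_l=\emptyset$ excludes the remaining case), glue the $D_k$ into one function, and note that Condition (1) of \refdf{distributed} and the cross-component case of Condition (2) fall out immediately. The gap is that you stop at the one step that carries all the content: for $t,u\in T_k$ with $t\concurrent u$ \emph{in the composition} you must show $D_k(t)\neq D_k(u)$, and you only announce that some ``concurrency-transfer'' statement should reduce this to the distributedness of $N_k$, without proving one. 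As submitted this is a proof plan whose central lemma is missing.

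That said, your unease about this step is well founded; the paper's own proof dispatches it with the single clause that $D$ ``satisfies the requirements of \refdf{distributed} since the $D_k$'s do'', and in fact the transfer does \emph{not} follow from the stated hypotheses. Distributedness of $N_k$ only constrains pairs that are concurrent relative to markings reachable in $N_k$ \emph{in isolation}, while other components can feed tokens into $I_k$ and concurrently enable a pair of $T_k$-transitions that is dead in $N_k$ alone. Concretely: let $N_1$ have a marked place $p$, an input place $i$, and transitions $t,v,u$ with $\precond{t}=\{p\}$, $\precond{v}=\{p,i\}$, $\precond{u}=\{i\}$. In isolation $v$ and $u$ never fire, the concurrency relation is empty, and $N_1$ is distributed by co-locating everything; but composing with a component that deposits a token in $i$ enables the step $\{t\}+\{u\}$, while the chain $t,v,u$ forces co-location of $t$ and $u$ by \refobs{distributed}, so the composition is not distributed at all. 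Hence closing your argument requires strictly more than bare distributedness of the components --- for instance the internal $S$-invariant $Q$ of \emph{sequential} components (\refdf{component}), which is disjoint from the interface and therefore prevents two transitions of the same component from ever firing in one step in any composition; that is the only case in which the lemma is actually invoked (\refcor{LSGA distributed}). Identifying and proving that auxiliary fact is exactly the work your proposal defers.
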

\begin{proof}
  We need to find a distribution $D$ satisfying the requirements of
  \refdf{distributed}.

  Every component $\mathcal{C}_k$
  is distributed and hence comes with a distribution $D_k$.
  Without loss of generality the codomains of all $D_k$ can be assumed
  disjoint.

  Considering each $D_k$ as a function from net elements onto locations,
  a partial function $D_k'$ can be defined which does not map any places in
  $O_k$, denoting that the element may be located arbitrarily, and
  behaves as $D_k$ for all other elements.
  As an output place has no posttransitions within a component, any total
  function larger than (i.e. a superset of) $D_k'$ is still a valid
  distribution for $N_k$.

  Now $D' = \bigcup_{k \in \indexset} D_k'$ is a (partial) function, as every place
  shared between components is an input place of at most one.
  The required distribution $D$ can be chosen as any total function extending $D'$;
  it satisfies the requirements of \refdf{distributed} since
  the $D_k$'s do.
\end{proof}

\begin{corollary}{LSGA distributed}
Every LSGA net is distributed.\hfill$\Box$
\end{corollary}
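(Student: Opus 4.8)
The plan is to prove this directly by unfolding \refdf{LSGA} and then chaining the two preceding lemmas; no genuinely new argument is required. So let $N$ be an arbitrary LSGA net. By \refdf{LSGA} there are an index set $\indexset$ and sequential components with interface $\mathcal{C}_k$, $k \in \indexset$, meeting the disjointness requirements of \refdf{parcomp}, with $(N,I,O) = \|_{k \in \indexset} \mathcal{C}_k$ for some $I$ and $O$.

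First I would note that each $\mathcal{C}_k$ is distributed: this is precisely \reflem{sequential component distributed}, applied to the sequential component $\mathcal{C}_k$. The components $\mathcal{C}_k$ therefore form a family of distributed components with interface satisfying the hypotheses of \refdf{parcomp}, so \reflem{parcompdistributed} applies and shows that $\|_{k \in \indexset} \mathcal{C}_k$ is distributed. Since, by the convention fixed just before \reflem{sequential component distributed}, a component with interface $(N,I,O)$ is called distributed exactly when its underlying net $N$ is, this is the same as saying that $N$ itself is a distributed net, which is what we wanted.

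I do not expect any real obstacle: all of the substance has already been discharged in the two lemmas, and the corollary is merely their composition across the definition of an LSGA net. The only point deserving a moment's care is that the components produced by \refdf{LSGA} genuinely satisfy the disjointness side-conditions of \refdf{parcomp}; but this is automatic, since \refdf{LSGA} is phrased in terms of the very operator $\|$ of \refdf{parcomp}, whose applicability already presupposes those conditions. Hence the corollary follows immediately.
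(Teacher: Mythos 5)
Your proof is correct and follows exactly the paper's intended argument: the corollary is stated with no explicit proof precisely because it is the immediate composition of Lemma~\ref{lem-sequential component distributed} and Lemma~\ref{lem-parcompdistributed} across Definition~\ref{df-LSGA}, which is what you do.
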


\noindent
Conversely, any distributed net $N$ can be transformed in an LSGA net by
choosing co-located transitions with their pre- and postplaces as
sequential components and declaring any place that belongs to multiple
components to be an input place of component $N_k$ if it is a preplace
of a transition in $N_k$, and an output place of component $N_l$ if it
is a postplace of a transition in $N_l$ and not an input place of $N_l$.
Furthermore, in order to guarantee that the components are sequential
in the sense of \refdf{component}, an explicit control place is
added to each component---without changing behaviour---as explained
below \refdf{component}. It is straightforward to check that the
asynchronous parallel composition of all so-obtained components is an
LSGA net, and that it is equivalent to $N$ (using $ \approx_\mathscr{R}$,
$\approx^\Delta_{bSTb}$, or any other reasonable equivalence).

\begin{theorem}{bothdistributedequal}
  For any distributed net $N$ there is an LSGA net $N'$ with $N'\approx^\Delta_{bSTb} N$.
\end{theorem}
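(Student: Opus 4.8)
The plan is to realise $N'$ as $N$ equipped with one extra ``control place'' per location, and then to show that the obvious marking-to-marking correspondence is a divergence-respecting branching ST-bisimulation. Concretely, fix a distribution $D$ witnessing that $N=(S,T,F,M_0,\ell)$ is distributed (\refdf{distributed}), and take the locations to be the classes of $\equiv_D$. For each location $\ell$ let $T_\ell=\{t\in T\mid D(t)=\ell\}$, and let $N_\ell$ consist of $T_\ell$ together with all their pre- and postplaces and the connecting arcs, augmented by a fresh place $q_\ell$, marked with a single token and carrying a self-loop $F(q_\ell,t)=F(t,q_\ell)=1$ to every $t\in T_\ell$. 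I would declare a place $s$ of $N_\ell$ to be an input place when it is a preplace of some $t\in T_\ell$ (so $D(s)=\ell$ by condition~(1)) that also receives tokens from a transition at another location, and an output place when it is a postplace of some $t\in T_\ell$ but a preplace of no transition in $T_\ell$; all remaining places, including $q_\ell$, are internal. First I would check that each $\mathcal{C}_\ell=(N_\ell,I_\ell,O_\ell)$ is a sequential component in the sense of \refdf{component}, with $Q=\{q_\ell\}$, and that the family $\{\mathcal{C}_\ell\}$ meets the disjointness and unshared-mailbox requirements of \refdf{parcomp} (here condition~(1) is what forces an input place of $\ell$ to have location $\ell$, hence to be an input place of no other component). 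Then $N':=\|_\ell\mathcal{C}_\ell$ is an LSGA net by \refdf{LSGA}, and by construction it is exactly $N$ with the places $q_\ell$ and their self-loops added: $N'$ and $N$ share all transitions, labels, original places and arcs.

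The heart of the argument is a single invariant on $N$, and this is where condition~(2) enters and where I expect the main difficulty. Claim: at every reachable ST-marking $(M,U)$ of $N$, each location carries at most one currently-firing transition, and a transition $t$ can start (or, if $\tau$-labelled, fire) from $(M,U)$ only when no transition co-located with $t$ occurs in $U$. To prove this I would use the characterisation of reachable split markings noted after \refdf{split marking}, namely that $M+\precond{U}\in[M_0\rangle$, transported to ST-markings through \refobs{match}. Suppose $t$ is enabled at $M$, so $\precond{t}\leq M$, while some $u\in U$ has $D(u)=D(t)$. Writing $M^\ast:=M+\precond{U}\in[M_0\rangle$ and splitting off one occurrence of $u$ gives $M^\ast\geq M+\precond{u}\geq\precond{t}+\precond{u}$, whence $M^\ast[\{t\}+\{u\}\rangle$ and therefore $t\smile u$. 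Since $D(t)=D(u)$ means $t\equiv_D u$, this contradicts condition~(2); taking $u=t$ rules out self-overlap as well. Thus starting a transition is possible exactly when the local control token $q_{D(t)}$ is free, which is precisely the side-condition the self-loop imposes in $N'$.

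With the invariant in hand I would define the relation $\Rel$ pairing an ST-marking $(M,U)$ of $N$ with the ST-marking $(M',U)$ of $N'$ for which $M'\restrictedto S=M$ and $M'(q_\ell)=1-n_\ell$ for every $\ell$, where $n_\ell$ is the number of occurrences in $U$ of transitions located at $\ell$; and I would verify the three clauses of \refdf{branching LTS} directly, without $\tau$-stuttering. A start $a^+$ of $t$ in $N$ is matched by the same start in $N'$: the invariant guarantees $q_{D(t)}$ is marked, so $t$ is enabled in $N'$, and firing it decrements $q_{D(t)}$ exactly as the relation requires. A termination $a^{-n}$ returns the control token and is matched trivially, and an atomic $\tau$-move fires a $\tau$-transition, again enabled in $N'$ by the invariant (the same $M^\ast$-argument applies), consuming and immediately restoring its control token. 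Conversely, every move of $N'$ projects to a move of $N$ by forgetting the control places, since $N'$ adds no transitions. Because the matching is one-to-one and $\tau$-moves of $N$ and $N'$ correspond bijectively at related states, $\Rel$ is a branching ST-bisimulation under which infinite $\tau$-sequences lift in both directions; hence it respects divergence, giving $N\approx^\Delta_{bSTb}N'$. The only genuinely delicate point is the invariant of the second paragraph: once the ST-overlap of two transitions has been reduced to their step-concurrency $\smile$, condition~(2) does the rest and the bisimulation is essentially bookkeeping.
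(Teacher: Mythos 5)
Your proposal is correct and follows essentially the same route as the paper: one fresh self-loop control place per $\equiv_D$-class, components formed from co-located transitions with their pre- and postplaces, and the observation (via condition (2)) that the control places never block anything, yielding an ST-transition-preserving correspondence. The only difference is one of detail, not substance: you spell out the invariant that co-located transitions cannot overlap in time (which the paper asserts in a single sentence) and verify the bisimulation clauses explicitly rather than appealing to an isomorphism of ST-LTSs.
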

\begin{proof}
  Let $N = (S, T, F, M_0, \ell)$ be a distributed net with a distribution $D$.
  Then an equivalent LSGA net $N'$ can be constructed
  by composing sequential components with interfaces as follows.
  
  For each equivalence class $[x]$ of net elements according to $D$ a
  sequential component $(N_{[x]}, I_{[x]}, O_{[x]})$ is created.
  Each such component contains one new and initially marked place $p_{[x]}$
  which is connected via self-loops to all transitions in $[x]$.
  The interface of the component is formed by
  $I_{[x]} := (S \cap [x])$\footnote{Alternatively, we could take
  $I_{[x]} := \postcond{(T\backslash[x])}\cap [x]$.}
  and
  $O_{[x]} := \postcond{([x] \cap T)} \setminus [x]$.
  Formally,
  $N_{[x]} := (S_{[x]}, T_{[x]}, F_{[x]}, {M_0}_{[x]}, \ell_{[x]})$
  with
  \begin{itemize}
    \item $S_{[x]} = ((S \cap [x]) \cup O_{[x]} \cup \{p_{[x]}\}$,
    \item $T_{[x]} = T \cap [x]$,
    \item $F_{[x]} = F \restrictedto (S_{[x]} \cup T_{[x]})^2 \cup
      \{(p_{[x]}, t), (t, p_{[x]}) \mid t \in T_{[x]}\}$,
    \item ${M_0}_{[x]} = (M_0 \restrictedto [x]) \cup \{p_{[x]}\}$, and
    \item $\ell_{[x]} = \ell \restrictedto [x]$.
  \end{itemize}
  All components overlap at interfaces only, as the sole
  places not in an interface are the newly created $p_{[x]}$.
  The $I_{[x]}$ are disjoint as the equivalence classes $[x]$ are, so
  $(N',I',O') := \|_{[x] \in (S \cup T) / D} (N_{[x]}, O_{[x]}, I_{[x]})$ is well-defined.
  It remains to be shown that $N' \approx^\Delta_{bSTb} N$.
  The elements of $N'$ are exactly those of $N$ plus the new places $p_{[x]}$,
  which stay marked continuously except when a transition from $[x]$ is firing,
  and never connect two concurrently enabled transitions. Hence there exists a bijection
  between the ST-markings of $N'$ and $N$ that preserves the ST-transition relations
  between them, \ie the associated ST-LTSs are isomorphic. From this it follows that
  $N' \approx^\Delta_{bSTb} N$.
\end{proof}

\begin{observation}{distributed-structuralconflict}
  Every distributed Petri net is a \hyperlink{scn}{structural conflict net}.\hfill$\Box$
\end{observation}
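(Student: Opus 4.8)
The plan is to argue directly from the definitions, deriving a contradiction from the assumption that a concurrent pair of transitions shares a preplace. Let $N = (S,T,F,M_0,\ell)$ be distributed, witnessed by a distribution $D$ satisfying conditions (1) and (2) of \refdf{distributed}. To show that $N$ is a \hyperlink{scn}{structural conflict net} in the sense of \refdf{onesafe}, I would take arbitrary $t,u \in T$ with $t \smile u$ and establish $\precond{t} \cap \precond{u} = \emptyset$.

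First I would suppose, towards a contradiction, that there is some $s \in \precond{t} \cap \precond{u}$. Since $s \in \precond{t}$ and $s \in \precond{u}$, condition (1) of \refdf{distributed} yields $t \equiv_D s$ and $u \equiv_D s$. As $\equiv_D$ is an equivalence relation, transitivity then gives $t \equiv_D u$. On the other hand, from $t \smile u$ and condition (2) we obtain $t \not\equiv_D u$, the desired contradiction. Hence no common preplace can exist and $\precond{t} \cap \precond{u} = \emptyset$, which is exactly the structural conflict condition.

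The only point requiring a little care is the borderline case $t = u$: here $t \smile t$ would mean that $N$ admits self-concurrency of $t$, and condition (2) applied to the pair $(t,t)$ would demand $t \not\equiv_D t$, contradicting reflexivity of $\equiv_D$. Thus a distributed net has no self-concurrent transition, so the case $t = u$ with $t \smile u$ never arises and the structural conflict requirement is satisfied vacuously there. I expect this reflexivity observation to be the only genuine subtlety; the remainder is a one-line chase through conditions (1) and (2) together with transitivity of $\equiv_D$, so there is no substantial obstacle and no nontrivial calculation to carry out.
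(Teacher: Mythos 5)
Your proof is correct and is precisely the argument the paper leaves implicit (the observation is stated without proof): a shared preplace of $t$ and $u$ would force $t\equiv_D u$ via Condition (1) and transitivity, contradicting Condition (2) when $t\smile u$. Your side remark that Condition (2) with $t=u$ rules out self-concurrency, so the degenerate case is vacuous, is also accurate and consistent with the paper's later appeals to the absence of self-concurrency in structural conflict nets.
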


\begin{corollary}{LSGA-structuralconflict}
  Every LSGA net is a structural conflict net.\hfill$\Box$
\end{corollary}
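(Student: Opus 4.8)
The plan is to obtain this as an \emph{immediate} consequence of the two results that directly precede it, rather than to reason from \refdf{LSGA} afresh. First I would invoke \refcor{LSGA distributed}, which states that every LSGA net is distributed. Then I would invoke \refobs{distributed-structuralconflict}, which states that every distributed net is a structural conflict net. Composing these two implications yields exactly the claim: an arbitrary LSGA net is distributed, and therefore a structural conflict net. Since all the substantive work has already been packaged into those two statements, there is effectively nothing left to establish here, which is why the statement is flagged with $\Box$ directly rather than being accompanied by a separate proof.

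Should a self-contained argument be wanted instead, I would verify the defining condition of \refdf{onesafe} directly. Writing the LSGA net as $N = \|_{k\in\indexset}\mathcal{C}_k$ with sequential components $\mathcal{C}_k=(N_k,I_k,O_k)$, suppose $t\concurrent u$, so that some reachable marking enables the step $\{t\}+\{u\}$. Because each $\mathcal{C}_k$ is sequential, its $S$-invariant $Q_k$ keeps a single token circulating and forbids two of its transitions from firing in one step; hence $t$ and $u$ must lie in distinct components, say $t\in T_k$ and $u\in T_l$ with $k\neq l$. A place in $\precond{t}\cap\precond{u}$ would be shared by $\mathcal{C}_k$ and $\mathcal{C}_l$, hence lie in $(I_k\cup O_k)\cap(I_l\cup O_l)$; but a preplace of a transition of a component cannot be one of that component's output places, since $\postcond{O}=\varnothing$ in \refdf{component}, so such a place would have to be an input place of both components, contradicting $I_k\cap I_l=\varnothing$ from \refdf{parcomp}. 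Thus $\precond{t}\cap\precond{u}=\emptyset$.

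There is essentially no obstacle to overcome: the only point worth noting is that this direct verification merely re-derives, in the special case at hand, precisely the reasoning already encapsulated in \refobs{distributed} and \refcor{LSGA distributed}. For that reason the economical route is the two-line composition of the preceding corollary and observation, and I would present the proof in that form.
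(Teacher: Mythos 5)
Your primary argument is exactly the paper's: the corollary is stated with a $\Box$ because it follows immediately by composing \refcor{LSGA distributed} with \refobs{distributed-structuralconflict}, and your optional direct verification is also sound. Nothing to add.
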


\noindent
Further on, we use  a more liberal definition of a
distributed net, called \emph{essentially distributed}.
We will show that up to $\approx^\Delta_{bSTb}$ any essentially
distributed net can be converted into a distributed net.
In \cite{glabbeek08syncasyncinteractionmfcs} we employed an even more liberal definition of a
distributed net, which we call here \emph{externally distributed}.
Although we showed that up to step readiness equivalence any externally
distributed net can be converted into a distributed net, this does not hold for
$\approx^\Delta_{bSTb}$.

\begin{definition}{externally distributed}
  A net $N=(S,T,F,M_0,\ell)$ is \emph{essentially distributed} iff
  there exists a distribution $D$ satisfying (1) of
  \refdf{distributed} and
  \begin{list}{{\bf --}}{\leftmargin 25pt
                        \labelwidth\leftmargini\advance\labelwidth-\labelsep
                        \topsep 0pt \itemsep 0pt \parsep 0pt}
    \item[($2'$)] $\forall t,u\in T.~t \concurrent u \wedge \ell(t)\neq\tau \implies t\not\equiv_D u$.
  \end{list}
  It is \emph{externally distributed} iff there exists a distribution $D$ satisfying (1) and
  \begin{list}{{\bf --}}{\leftmargin 25pt
                        \labelwidth\leftmargini\advance\labelwidth-\labelsep
                        \topsep 0pt \itemsep 0pt \parsep 0pt}
    \item[($2''$)] $\forall t,u\in T.~t \concurrent u \wedge \ell(t),\ell(u)\neq\tau \implies t\not\equiv_D u$.
  \end{list}
\end{definition}
Instead of ruling out co-location of concurrent transitions in general,
essentially distributed nets permit concurrency of internal transitions---labelled $\tau$---at the same location.
Externally distributed nets even allow concurrency between external and internal transitions at the same location.
If the transitions $t$ and $v$ in the net of \reffig{fullM} would both be labelled $\tau$,
the net would be essentially distributed, although not distributed; in case only $v$ would
be labelled $\tau$ the net would be externally distributed but not essentially distributed.
Essentially distributed nets need not be structural conflict nets; in fact, \emph{any} net
without external transitions is essentially distributed.

The following proposition says that up to $\approx^\Delta_{bSTb}$ any essentially
distributed net can be converted into a distributed net.
\begin{proposition}{essentiallydistributedequal}
  For any essentially distributed net $N$ there is a distributed net $N'$ with $N'\approx^\Delta_{bSTb} N$.
\end{proposition}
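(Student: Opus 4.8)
The plan is to obtain $N'$ from $N$ by giving each location a fresh \emph{control place} that sequentialises the internal transitions residing there. This removes the only form of co-located concurrency an essentially distributed net can still exhibit, while leaving the entire ST-LTS unchanged. Concretely, I would fix a distribution $D$ witnessing that $N=(S,T,F,M_0,\ell)$ is essentially distributed, so that (1) and ($2'$) hold, and for each equivalence class $[x]$ of $\equiv_D$ add one new place $c_{[x]}$, initially marked, joined by a self-loop to every transition $t\in[x]$ with $\ell(t)=\tau$. Let $N'$ be the result, and extend $D$ to $D'$ by $D'(c_{[x]}):=D(x)$. Since a $\tau$-transition's new pre- and postplace $c_{[x]}$ sits on its own location, $D'$ still satisfies (1).

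The heart of the argument is that $N'$ is distributed, i.e.\ that $D'$ also satisfies (2). I would first observe that adjoining self-loops to an always-marked place can only shrink the concurrency relation: a step containing two $\tau$-transitions of the same class would now require two tokens in $c_{[x]}$, so co-located internal transitions are no longer concurrent in $N'$. For the remaining concurrent pairs $t\concurrent u$ in $N'$ I would check $t\not\equiv_{D'}u$ by cases. If at least one of $t,u$ is visible, ($2'$) already yields $t\not\equiv_D u$. If both are internal, then—because co-located internal transitions are now mutually exclusive—$t$ and $u$ must lie on different locations, so again $t\not\equiv_{D'}u$. (One may note that any concurrent pair surviving in $N'$ cannot share a preplace, since (1) would otherwise force co-location against what we just showed; this is consistent with \refobs{distributed-structuralconflict}.)

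It remains to establish $N'\approx^\Delta_{bSTb}N$, and here I would reuse the isomorphism argument of \refthm{bothdistributedequal}. The control places are invariant: each $c_{[x]}$ starts marked and is returned by its self-loop whenever one of its $\tau$-transitions fires, while visible transitions never touch it; hence every reachable marking of $N'$ carries exactly one token in each $c_{[x]}$. The key point is that internal transitions are \emph{atomic} in the ST-semantics—\refdf{ST-marking} gives $\tau$ no start/termination phases—so the self-loop on $c_{[x]}$ is consumed and restored within a single $\goesto[\tau]$ step and never interferes with the phases $a^+,a^{-n}$ of visible transitions. Consequently the map $(M,U)\mapsto(M+\sum_{[x]}\{c_{[x]}\},U)$ is a bijection between the ST-markings of $N$ and those of $N'$ that preserves every ST-transition relation, so the two ST-LTSs are isomorphic; isomorphic LTSs are branching ST-bisimilar and, carrying identical $\tau$-behaviour, equally (non-)divergent, giving $N'\approx^\Delta_{bSTb}N$.

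The step I expect to be the main obstacle is pinning down condition (2) for $D'$: one must be certain that sequentialising the internal transitions class by class eliminates \emph{every} co-located concurrent pair and creates none. This hinges on the observation that, in an essentially distributed net, the only concurrent pairs that can be co-located are pairs of internal transitions—precisely the pairs the control places render mutually exclusive—so that the comparatively easy equivalence proof is not undermined by a subtle failure of distributability.
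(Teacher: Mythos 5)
Your proposal is correct and follows essentially the same route as the paper: reuse the control-place construction of \refthm{bothdistributedequal} and argue that the reachable ST-LTS is unaffected because $\tau$-transitions are atomic in ST-semantics, so the only co-located concurrency an essentially distributed net can exhibit (between internal transitions) is harmlessly sequentialised. Your one deviation---attaching the self-loops only to the $\tau$-labelled transitions of each location rather than to all of them---is sound and if anything makes the isomorphism of ST-LTSs more immediate, since visible transitions then never touch the control places; the paper instead self-loops all transitions and appeals to the fact that reachable ST-markings carry only external transitions in progress.
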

\begin{proof}
The same construction as in the proof of \refthm{bothdistributedequal} applies: $N'$
differs from $N$ by the addition, for each location $[x]$, of a marked place $p_{[x]}$
that is connected through self-loops to all transitions at that location.  This time there
exists a bijection between the \emph{reachable} ST-markings of $N'$ and $N$ that preserves the
ST-transition relations between them. This bijection exists because a reachable ST-marking
is a pair $(M,U)$ with $U$ a sequence of \emph{external} transitions only; this follows by
a straightforward induction on reachability by ST-transitions. From this it follows that
$N' \approx^\Delta_{bSTb} N$.
\end{proof}
Likewise, up to $\approx_\mathscr{R}$ any externally distributed net can be converted into a distributed net.
\begin{proposition}{externallydistributedequal}\cite{glabbeek08syncasyncinteractionmfcs}\,\,
  For any externally distributed net $N$ there is a distributed net $N'$ with $N'\approx_\mathscr{R} N$.
\end{proposition}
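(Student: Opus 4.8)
The plan is to reuse, verbatim, the control-place construction from the proofs of \refthm{bothdistributedequal} and \refpr{essentiallydistributedequal}. Given a distribution $D$ witnessing that $N$ is externally distributed, I would build $N'$ from $N$ by adding, for each $D$-class (location) $[x]$, a fresh initially marked place $p_{[x]}$ joined by a self-loop to every transition in $[x]$. As in \refthm{bothdistributedequal}, each location thereby becomes a sequential component with $\{p_{[x]}\}$ as its singleton $S$-invariant, while condition~(1) of \refdf{distributed} guarantees that every transition is co-located with all its preplaces. Hence $N'$ is an LSGA net and so is distributed by \refcor{LSGA distributed}. Note that this part uses only condition~(1); thus $N'$ is distributed regardless of which of (2), $(2')$, $(2'')$ holds, and condition $(2'')$ of \refdf{externally distributed} will enter only when establishing the equivalence $N'\approx_\mathscr{R} N$.

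Next I would show that the added places act as $S$-invariants: at every reachable marking of $N'$ each $p_{[x]}$ carries exactly one token. This follows by induction on reachability, since any enabled step can contain at most one transition of $[x]$ (the place holds a single token), and such a transition consumes and immediately restores that token through its self-loop. Consequently the projection $\pi$ that deletes the control tokens is a bijection between the reachable markings of $N'$ and those of $N$, and---because a control place is always available at reachable markings---a single transition $t$ (visible or $\tau$) is enabled in $N'$ precisely when it is enabled in $N$, with $\pi$-matching successors. The two interleaving LTSs are therefore isomorphic under $\pi$; in particular $\pi$ preserves the weak transitions $\Production{\sigma}$ and the stability predicate (enabling no $\tau$-transition). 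This already reconciles the $\sigma$-components and the stability requirement of the ready pairs of \refdf{step readiness}.

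The heart of the argument---and the step I expect to be the main obstacle---is to show that the \emph{step menus} agree: for corresponding stable markings $M$ of $N$ and $M'$ of $N'$ (with $\pi(M')=M$) one has $\{A\inp\nat^\Act \mid M\goesto[A]\} = \{A\inp\nat^\Act \mid M'\goesto[A]\}$. Since $M$ is stable every step it enables consists of external transitions only, and since $N'$ is obtained from $N$ merely by adding places, every step of $N'$ is a step of $N$; so the nontrivial inclusion is that an external step $G$ enabled at $M$ stays enabled at $M'$. This could fail only if $G$ contained two transitions of a common location $[x]$, i.e.\ if the single token of $p_{[x]}$ were insufficient. But two such transitions are simultaneously enabled at $M$, so either they are distinct and concurrent, or $G$ fires some $t$ with multiplicity at least two, whence $t\concurrent t$. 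In both cases condition $(2'')$, applied to external transitions (using reflexivity of $\equiv_D$ in the self-concurrent case), forces them onto distinct locations---a contradiction. Hence the control places never block an external step, the menus coincide, and together with the previous paragraph this yields $\readyset(N')=\readyset(N)$, that is $N'\approx_\mathscr{R} N$.

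Finally I would remark why the same construction does not give $\approx^\Delta_{bSTb}$. Condition $(2'')$ still permits co-locating a concurrent external and internal transition, so in the ST-semantics the control place can block the internal transition while the external one is mid-firing, destroying an overlap present in $N$. This is invisible to step readiness---internal transitions contribute to no visible step, and the reachable stable markings together with their menus are unaffected---but it is visible to branching ST-bisimilarity, which is exactly why the stronger statement requires the sharper condition $(2')$ of \refpr{essentiallydistributedequal}.
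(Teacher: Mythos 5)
Your proof is correct and follows exactly the paper's (much terser) argument: the same control-place construction, a bijection between reachable markings, and the observation that steps in the associated LTS are either multisets of concurrently firing external transitions---which condition $(2'')$ forces onto distinct locations, so the single control tokens never block them---or single internal transitions. Your elaborations (the $S$-invariant argument, the treatment of self-concurrency via reflexivity of $\equiv_D$, and the closing remark on why $(2'')$ fails for $\approx^\Delta_{bSTb}$) are all sound and consistent with the paper.
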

\begin{proof}
  Again the same construction applies. This time there exists a bijection
  between the markings of $N'$ and $N$ that preserves the step transition relations
  between them, \ie the associated step transition systems are isomorphic.
  Here we use that the transitions in the associated LTS involve either a multiset of
  concurrently firing \emph{external} transitions, or a single internal one.
  From this, step readiness equivalence follows.
\end{proof}
The counterexample in \reffig{externally distributed} shows that up to
$N' \approx^\Delta_{bSTb} N$ not any externally distributed net can be
converted into a distributed net.
Sequentialising the component with actions $a$, $b$
and $\tau$ would disable the execution $\goesto[a^+]\Goesto\goesto[c^+]$.

\begin{figure}[h]
\begin{minipage}[b]{0.38\linewidth}
  \begin{center}
    \begin{petrinet}(6,3,4)
      \Q (2,3):p1:$p$;
      \Q (4,3):p2:$q$;
      \u (1,1):t1:$a$:$t$;
      \u (3,1):t2:$b$:$u$;
      \u (5,1):t3:$c$:$v$;

      \a p1->t1;
      \a p1->t2;
      \a p2->t2;
      \a p2->t3;
    \end{petrinet}
  \end{center}
  \vspace{-4ex}
  \caption{A fully marked \structuralM.}
  \label{fig-fullM}
\end{minipage}\hfill
\begin{minipage}[b]{0.55\linewidth}
  \begin{center}
    \begin{petrinet}(10,3,4)
      \Q (2,3):p1:$p$;
      \Q (4,3):p2:$q$;
      \u (1,1):t1:$a$:$t$;
      \u (3,1):t2:$b$:$u$;
      \ub (5,1):t3:$\tau$:$v$;
      \qb (7,1):p3:$r$;
      \ub (9,1):t4:$c$:$w$;

      \a p1->t1;
      \a p1->t2;
      \a p2->t2;
      \a p2->t3;
      \a t3->p3;
      \a p3->t4;
    \end{petrinet}
  \end{center}
  \vspace{-4ex}
  \caption{Externally distributed, but not distributable.}
  \label{fig-externally distributed}
\end{minipage}
  \vspace{-2ex}
\end{figure}

\begin{definition}{canonical}
Given any Petri net $N$, the \emph{canonical co-location relation} $\equiv_C$ on $N$ is the
equivalence relation on the places and transitions of $N$
\emph{generated} by Condition (1) of \refdf{distributed},
i.e.\ the smallest equivalence relation $\equiv_D$ satisfying (1).
The \emph{canonical distribution} of $N$ is the distribution $C$ that maps
each place or transition to its $\equiv_C$-equivalence class.
\end{definition}
\begin{observation}{canonical}
A Petri net that is distributed (resp.\ essentially or externally
distributed) w.r.t.\ any distribution $D$, is
distributed (resp.\ essentially or externally distributed)
w.r.t.\ its canonical distribution.
\end{observation}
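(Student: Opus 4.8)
The plan is to exploit that the canonical distribution $C$ of \refdf{canonical} is the \emph{finest} distribution satisfying Condition~(1) of \refdf{distributed}, and that each of the three notions of distributedness consists of this same Condition~(1) together with a side condition of the shape ``premise $\implies t\not\equiv_D u$'', whose premise (concurrency of $t$ and $u$, possibly restricted by the labels $\ell(t),\ell(u)$) refers only to $N$ and not to the chosen distribution.

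First I would record two facts about $\equiv_C$ straight from \refdf{canonical}. Since $\equiv_C$ is by definition the smallest equivalence relation satisfying~(1), it in particular \emph{is} an equivalence relation satisfying~(1), so the canonical distribution $C$ itself meets Condition~(1). Secondly, minimality gives that for \emph{any} distribution $D$ whose induced relation $\equiv_D$ satisfies~(1) we have $\equiv_C\,\subseteq\,\equiv_D$; equivalently, for all net elements $x,y$,
\[
x\not\equiv_D y \implies x\not\equiv_C y .
\]

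Next I would treat the three cases uniformly. Assume $N$ is distributed (resp.\ essentially, externally distributed) w.r.t.\ some $D$; then $\equiv_D$ satisfies~(1), so the displayed implication is available. To verify Condition~(2) of \refdf{distributed} (resp.\ ($2'$), ($2''$) of \refdf{externally distributed}) for $C$, I would take any $t,u$ meeting the relevant premise; since that premise speaks about $N$ alone, it holds verbatim for $D$, which yields $t\not\equiv_D u$, whence $t\not\equiv_C u$ by the implication above. Thus $C$ satisfies the corresponding second condition, and as it already satisfies~(1), the net is distributed (resp.\ essentially, externally distributed) w.r.t.\ $C$.

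The only substantive step is the minimality observation; everything else is a one-line contraposition applied three times. I expect the sole point deserving care is fixing the direction of the containment: it is precisely because $\equiv_C$ is the \emph{smallest} relation satisfying~(1) that it is contained in every other such $\equiv_D$, so that non-co-location under $D$ transfers to non-co-location under $C$ (and not the other way around, which would be useless here).
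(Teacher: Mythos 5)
Your proof is correct and is exactly the intended argument: the paper states this as an Observation without proof, and the evident justification is precisely your minimality point, namely that $\equiv_C$, being the smallest equivalence relation satisfying Condition~(1), is contained in every $\equiv_D$ satisfying~(1), so that $t\not\equiv_D u$ implies $t\not\equiv_C u$ while the premises of (2), ($2'$), ($2''$) depend only on the net. Nothing is missing.
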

Hence a net is distributed (resp.\ essentially or externally distributed) iff its canonical
distribution $D$ satisfies Condition (2) of \refdf{distributed}
(resp.\ Condition ($2'$) or ($2''$) of \refdf{externally distributed}).

\section{Distributable Systems}
\label{sec-distributable}

We now consider Petri nets as specifications of concurrent
systems and ask the question which of those specifications can be
implemented as distributed systems. This question can be formalised as
\begin{quote}\em
Which Petri nets are semantically equivalent to distributed nets?
\end{quote}
Of course the answer depends on the choice of a suitable
semantic equivalence. 
Here we will answer this question using the two equivalences discussed in the
introduction.
We will give a precise characterisation of those nets for which we can find semantically equivalent distributed nets. For the negative part of this characterisation, stating that certain nets are not distributable, we will use step readiness
equivalence, which is one of the simplest and least discriminating
equivalences imaginable that abstracts from internal actions, but
preserves branching time, concurrency and divergence to some small
degree. As explained in \cite{glabbeek08syncasyncinteractionmfcs}, giving
up on any of these latter three properties would make any Petri net
distributable, but in a rather trivial and
unsatisfactory way.
For the positive part, namely that all other nets are indeed distributable, 
we will use the most discriminating equivalence for which our implementation works, namely branching
ST-bisimilarity with explicit divergence, which is finer than
step readiness equivalence. Hence we will obtain the strongest possible results for both directions and it turns out that the concept of distributability is fairly robust w.r.t.\ the choice of
a suitable equivalence: any equivalence notion between step readiness
equivalence and branching ST-bisimilarity with explicit divergence will yield
the same characterisation.

\begin{definition}{distributable}
  A Petri net $N$ is \defitem{distributable} up to an equivalence
  $\approx$ iff there exists a distributed net $N'$ with $N' \approx N$.
\end{definition}

\noindent
Formally we give our characterisation of distributability by classifying which
finitary plain structural conflict nets can be implemented as distributed nets,
and hence as LSGA nets. In such implementations, we use invisible
transitions. We study the concept ``distributable'' for plain nets only, but in
order to get the largest class possible we allow non-plain implementations,
where a given transition may be split into multiple transitions carrying the
same label.

It is well known that sometimes a global protocol is necessary to implement
synchronous interaction present in system specifications. In
particular, this may be needed for deciding choices in a coherent
way, when these choices require agreement of multiple components.
The simple net in \reffig{fullM} shows a typical situation of this
kind. Independent decisions of the two choices might lead to a
deadlock. As remarked in \cite{glabbeek08syncasyncinteractionmfcs}, for
this particular net there exists no satisfactory distributed
implementation that fully respects the reactive behaviour of the
original system. Indeed such \structuralM-structures, representing
interference between concurrency and choice, turn out to play a crucial
r\^ole for characterising distributability.

\begin{definition}{fullM}
  Let $N = (S, T, F, M_0, \ell)$ be a Petri net.
  $N$ has a \hypertarget{M}{\defitem{fully reachable \visible pure \structuralM}} iff\\
  $\exists t,u,v \in T.
  \precond{t} \cap \precond{u} \ne \varnothing \wedge
  \precond{u} \cap \precond{v} \ne \varnothing \wedge
  \precond{t} \cap \precond{v} = \varnothing \wedge
  \exists M \in [M_0\rangle.
  \precond{t} \cup \precond{u} \cup \precond{v} \subseteq M$.
\end{definition}

\noindent
Note that \refdf{fullM} implies that $t\neq u$, $u\neq v$ and $t\neq v$.

\smallskip

We now give an upper bound on the class of distributable nets by adopting a
result from \cite{glabbeek08syncasyncinteractionmfcs}.

\begin{theorem}{trulysyngltfullm}
  Let $N$ be a plain structural conflict Petri net. 
  If $N$ has a fully reachable \visible pure {\structuralM}, then $N$ is
  not distributable up to step readiness equivalence.
\end{theorem}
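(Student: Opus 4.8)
The plan is to argue by contradiction, exploiting the fact that the fully reachable \visible pure \structuralM{} of \refdf{fullM} forces a whole pattern of enabledness and conflict that no distribution can realise. So I would suppose that $N$ \emph{is} distributable up to $\approx_\mathscr{R}$, fix a distributed net $N'$ with $N'\approx_\mathscr{R}N$ together with a distribution $D$ witnessing \refdf{distributed}, and reconstruct an \structuralM-shaped triple of transitions inside $N'$ that is simultaneously enabled at some reachable marking, thereby violating the distribution constraints. The bridge between $N$ and $N'$ is a single carefully chosen step ready pair.

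First I would extract that step ready pair from $N$. Write $a=\ell(t)$, $b=\ell(u)$, $c=\ell(v)$; since $N$ is plain these labels are distinct, and the remark after \refdf{fullM} gives $t\neq u$, $u\neq v$, $t\neq v$. Let $M\in[M_0\rangle$ be the marking with $\precond{t}\cup\precond{u}\cup\precond{v}\subseteq M$, and let $\sigma\in\Act^*$ be the labelling of a firing sequence reaching $M$, so that $M_0\Production{\sigma}M$ and, as $N$ has no $\tau$, $M$ is stable. Let $X$ be the step menu $\{A\in\nat^\Act\mid M\goesto[A]\}$, so that $\rpair{\sigma,X}\in\readyset(N)$ by \refdf{step readiness}. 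The decisive properties of $X$ are: $\{a\},\{b\},\{c\}\in X$ (each preset lies in $M$); $\{a,c\}\in X$, because $\precond{t}\cap\precond{v}=\varnothing$ yields $\precond{t}+\precond{v}\le M$ and hence $M\,[\{t\}+\{v\}\rangle$; but $\{a,b\}\notin X$ and $\{b,c\}\notin X$. For the last two I would use that injectivity of $\ell$ makes $\{t,u\}$ (resp.\ $\{u,v\}$) the only transition multiset with the relevant label content, while the structural conflict property of $N$ together with $\precond{t}\cap\precond{u}\neq\varnothing$ (resp.\ $\precond{u}\cap\precond{v}\neq\varnothing$) rules out $t\concurrent u$ (resp.\ $u\concurrent v$), so $M$ cannot enable those steps.

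I would then transfer this to $N'$. As $\readyset(N')=\readyset(N)$, there is a reachable stable marking $M'$ of $N'$ with step menu exactly $X$. Stability guarantees that every step enabled at $M'$ consists of visible transitions, so from $\{a,c\}\in X$ I obtain transitions $t',v'$ with $\ell(t')=a$, $\ell(v')=c$ and $M'\,[\{t'\}+\{v'\}\rangle$, and from $\{b\}\in X$ a transition $u'$ with $\ell(u')=b$ enabled at $M'$; these three are pairwise distinct since $a,b,c$ differ. Because $M'$ is reachable and enables $\{t'\}+\{v'\}$ we have $t'\concurrent v'$, so Condition~(2) of \refdf{distributed} gives $t'\not\equiv_D v'$. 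On the other hand $t'$ and $u'$ are both enabled at $M'$, yet $\{a,b\}\notin X$ forbids $M'\,[\{t'\}+\{u'\}\rangle$; a one-line multiset count ($M'\ge\precond{t'}$ and $M'\ge\precond{u'}$ but $M'\not\ge\precond{t'}+\precond{u'}$) then produces a place $s\in\precond{t'}\cap\precond{u'}$. Symmetrically, $\{b,c\}\notin X$ yields a place $s'\in\precond{u'}\cap\precond{v'}$. Now Condition~(1) of \refdf{distributed} co-locates a transition with each of its preplaces, so $t'\equiv_D s\equiv_D u'$ and $u'\equiv_D s'\equiv_D v'$, whence $t'\equiv_D v'$ by transitivity---contradicting $t'\not\equiv_D v'$.

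I expect the delicate step to be the faithful transfer of the \emph{negative} information $\{a,b\},\{b,c\}\notin X$ and, on the $N'$ side, pinning down one and the same $b$-labelled transition $u'$ that conflicts with both halves of the concurrent pair $t',v'$. This is exactly where plainness of $N$ (injectivity of $\ell$) and the use of a genuine \emph{step} menu rather than an interleaving one are indispensable: only a step semantics records that $a$ and $c$ may occur together while $b$ excludes each, which is the behavioural signature of the \structuralM{} that obstructs distribution. Everything else---sequentialising the firing sequence to obtain $\sigma$, the multiset counting for the shared preplaces, and the co-location bookkeeping---is routine.
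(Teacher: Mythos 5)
Your proof is correct and is essentially the argument that the paper imports by citation from \cite{glabbeek08syncasyncinteractionmfcs}: extract the step ready pair of the marking enabling the \structuralM, observe that $\{a\}+\{c\}$ lies in the menu while $\{a\}+\{b\}$ and $\{b\}+\{c\}$ do not, and in any step-readiness-equivalent distributed net derive both $t'\not\equiv_D v'$ (from Condition~(2) of \refdf{distributed} applied to the concurrent pair) and $t'\equiv_D u'\equiv_D v'$ (from Condition~(1) applied to the shared preplaces obtained by multiset counting). I see no gaps.
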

\begin{proof}
In \cite{glabbeek08syncasyncinteractionmfcs} this theorem was obtained for
plain one-safe nets.\footnote{In \cite{glabbeek08syncasyncinteractionmfcs} the
theorem was claimed and proven only for plain nets with a fully reachable
\emph{visible} pure {\structuralM}; however, for plain nets the
requirement of visibility is irrelevant.} The proof applies verbatim to plain
structural conflict nets as well.
\end{proof}

\noindent
Since $\approx^\Delta_{bSTb}$ is finer than $\approx_\mathscr{R}$,
this result holds also for distributability up to $\approx^\Delta_{bSTb}$
(and any equivalence between $\approx_\mathscr{R}$ and $\approx^\Delta_{bSTb}$).

In the following, we establish that this upper bound is tight, and hence a finitary plain structural conflict net is distributable iff it has no fully reachable \visible
pure \structuralM. For this, it is helpful to first introduce macros in Petri nets for reversibility of transitions.

\subsection{Petri nets with reversible transitions}

A \emph{Petri net with reversible transitions} generalises the notion
of a Petri net; its semantics is given by a translation to an
ordinary Petri net, thereby interpreting the reversible transitions
as syntactic sugar for certain net fragments.  It is defined as a tuple
$(S,T,\UI,\ui,F,M_0,\ell)$ with $S$ a set of places, $T$ a set of (reversible)
transitions, labelled by \plat{$\ell:T\rightarrow\Act\dcup\{\tau\}$}, $\UI$ a set of
\emph{undo interfaces} with the relation $\ui\subseteq \UI\times T$
linking interfaces to transitions, $M_0 \inp \nat^S$ an initial marking, and
$$F\!: (S\times T\times \{{\scriptstyle \it in,~early,~late,~out,~far}\} \rightarrow \nat)$$
the flow relation.
When $F(s,t,{\scriptstyle \it type})>0$ for ${\scriptstyle \it type} \in
\{{\scriptstyle \it in,~early,~late,~out,~far}\}$, this is depicted by drawing an arc from $s$ to
$t$, labelled with its arc weight $F(s,t,{\scriptstyle \it type})$, of the form\
\psscalebox{0.7}{\begin{pspicture}(1.5,0.2)
  \def\thenetimage{arrowexamples}
  \pnode(0,0.1){narrowexamples-a}
  \pnode(1.5,0.1){narrowexamples-b}
  \a a->b;
\end{pspicture}},
\psscalebox{0.7}{\begin{pspicture}(1.5,0.2)
  \def\thenetimage{arrowexamples}
  \pnode(0,0.1){narrowexamples-a}
  \pnode(1.5,0.1){narrowexamples-b}
  \aEarly a->b;
\end{pspicture}},
\psscalebox{0.7}{\begin{pspicture}(1.5,0.2)
  \def\thenetimage{arrowexamples}
  \pnode(0,0.1){narrowexamples-a}
  \pnode(1.5,0.1){narrowexamples-b}
  \aLate a->b;
\end{pspicture}},
\psscalebox{0.7}{\begin{pspicture}(1.5,0.2)
  \def\thenetimage{arrowexamples}
  \pnode(0,0.1){narrowexamples-a}
  \pnode(1.5,0.1){narrowexamples-b}
  \a b->a;
\end{pspicture}},
\psscalebox{0.7}{\begin{pspicture}(1.5,0.2)
  \def\thenetimage{arrowexamples}
  \pnode(0,0.1){narrowexamples-a}
  \pnode(1.5,0.1){narrowexamples-b}
  \aFar b->a;
\end{pspicture}},
respectively. 
For $t\inp T$ and ${\scriptstyle \it type} \in
\{{\scriptstyle \it in,~early,~late,~out,~far}\}$, the multiset of places
$t^{\it type}\inp\nat^S$ is given by $t^{\it type}(s) = F(s,t,{\scriptstyle \it type})$.
When $s\inp t^{\it type}$ for ${\scriptstyle \it type} \in
\{{\scriptstyle \it in,~early,~late}\}$, the place $s$ is called a
\emph{preplace} of $t$ of type {\scriptsize \it type\/};
when $s\inp t^{\it type}$ for ${\scriptstyle \it type} \in
\{{\scriptstyle \it out,~far}\}$, $s$ is called a
\emph{postplace} of $t$ of type {\scriptsize \it type}.
For each undo interface $\omega\inp \UI$ and transition $t$ with $\ui(\omega,t)$ there must be places
$\undo[\omega](t)$, $\reset[\omega](t)$ and $\ack[\omega](t)$ in $S$.
A transition with a nonempty set of interfaces is called \emph{reversible};
the other (\emph{standard}) transitions may have pre- and postplaces
of types {\scriptsize \it in} and {\scriptsize \it out} only---for these
transitions $t^{\it in}\mathbin=\precond{t}$ and $t^{\it out}\mathbin=\postcond{t}$.
In case $\UI=\emptyset$, the net is just a normal Petri net.

A global state of a Petri net with reversible transitions is given by a marking
$M\inp\nat^S$, together with the state of each reversible transition ``currently
in progress''.  Each transition in the net can fire as usual. A reversible transition can
moreover take back (some of) its output tokens, and be \emph{undone} and
\emph{reset}.  When a transition $t$ fires, it consumes $\sum_{{\scriptstyle \it
type}\in\{{\scriptstyle \it in,~early,~late}\}} F(s,t,{\scriptstyle \it type})$
tokens from each of its preplaces $s$ and produces $\sum_{{\scriptstyle \it
type}\in\{{\scriptstyle \it out,~far}\}} F(s,t,{\scriptstyle \it type})$
tokens in each of its postplaces $s$.  A reversible transition $t$ that has fired
can start its reversal by consuming a token from $\undo[\omega](t)$ for one of
its interfaces $\omega$.  Subsequently, it can
take back one by one a token from its postplaces of type {\scriptsize \it
far}. After it has retrieved all its output of type
{\scriptsize \it far}, the transition is undone, thereby returning
$F(s,t,{\scriptstyle \it early})$ tokens in each of its preplaces $s$ of type
{\scriptsize \it early}.
Afterwards, by
consuming a token from $\reset[\omega](t)$, for the same interface $\omega$ that
started the undo-process, the transition terminates its chain of activities by
returning $F(s,t,{\scriptstyle \it late})$ tokens in each of its {\scriptsize
\it late} preplaces $s$.  At that occasion it also produces a token in
$\ack[\omega](t)$. Alternatively, two tokens in $\undo[\omega](t)$ and
$\reset[\omega](t)$ can annihilate each other without involving the transition
$t$; this also produces a token in $\ack[\omega](t)$. The latter mechanism comes in action
when trying to undo a transition that has not yet fired.

\reffig{reversible} shows the translation of a reversible transition
$t$ with $\ell(t)\mathbin=a$ into an ordinary net fragment.
\begin{figure}[ht]
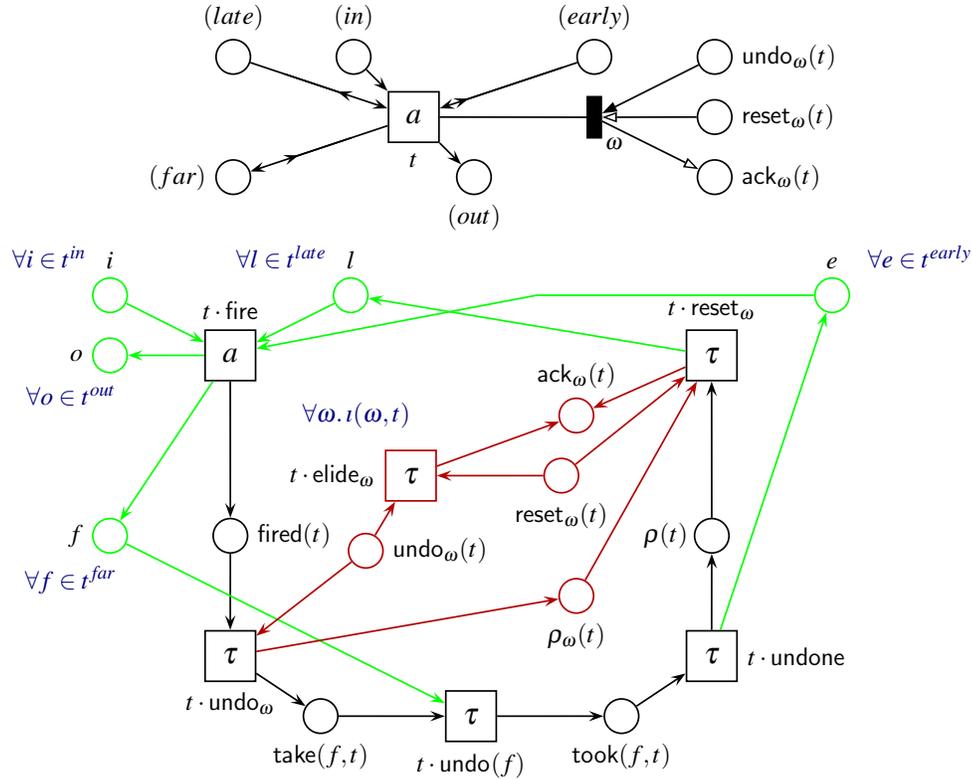

  \begin{center}
    \begin{petrinet}(10,4)
      \qt(3,3):in:$(in)$;
      \qt(1,3):late:$(late)$;
      \qt(7,3):early:$(early)$;
      \q(9,3):undo:$\undo[\omega](t)$;
      \q(9,2):reset:$\reset[\omega](t)$;
      \q(9,1):ack:$\ack[\omega](t)$;
      \ql(1,1):far:$(far)$;
      \qb(5,1):out:$(out)$;
      \ub(4,2):t:$a$:$t$;

      \interface(7,2):ti:t:$\omega$;

      \a in->t;
      \aLate late->t;
      \aEarly early->t;
      \a t->out;
      \aFar t->far;
      \aUndo undo->ti;
      \aReset reset->ti;
      \aAck ti->ack;
    \end{petrinet}
  \end{center}
  \vspace{1ex}
  \begin{center}
    \begin{petrinet}(14,8)
      \psset{linecolor=green}
      \ql(1,4):far:$f$;
      \ql(1,7):out:$o$;
      \qt(1,8):in:$i$;
      \qt(5,8):late:$l$;
      \qt(13,8):early:$e$;

      \psset{linecolor=black}
      \qb(4.5,1):take:$\take(f,t)$;
      \ub(7,1):tundop:$\tau$:$t \cdot \undo[](f)$;
      \qb(9.5,1):took:$\took(f,t)$;

      \ub(3,2):tundoa:$\tau$:$t \cdot \undo[\omega]$;
      \ur(11,2):tundone:$\tau$:$t \cdot \undone$;

      \qr(3,4):fired:$\Fired(t)$;
      \ql(11,4):p2:$\rho(t)\!$;

      \ut(3,7):tfire:$a$:$t \cdot \fire$;
      \ut(11,7):treseta:$\tau$:$t \cdot \reset[\omega]$;

      \psset{linecolor=darkred}
      \qr(5.25,3.75):undoa:$\undo[\omega](t)$;
      \qb(8.75,3):pa:$\keep[\omega](t)$;
      \ul(6,5):elidea:$\tau$:$t \cdot \elide[\omega]$;
      \qt(8.75,6):acka:$\ack[\omega](t)$;
      \qb(8.5,5):reseta:$\reset[\omega](t)$;

      {
        \darkblue
        \rput[tr](1,3.5){\large $\forall f \in t^{\,\it far}$}
        \rput[tr](1,6.5){\large $\forall o \in t^{out}$}
        \rput[rb](0.5,8.45){\large $\forall i \in t^{in}$}
        \rput[rb](4.5,8.45){\large $\forall l \in t^{late}$}
        \rput[lb](13.5,8.45){\large $\forall e \in t^{early}$}
        \rput(5,6){\large $\forall \omega.\, \ui(\omega, t)$}
      }

      \psset{linecolor=black}
      \a tfire->fired;
      \a fired->tundoa;
      \a tundoa->take;
      \a take->tundop;
      \a tundop->took;
      \a took->tundone;
      \a tundone->p2;
      \a p2->treseta;

      \psset{linecolor=green}
      \a in->tfire;
      \a tfire->out;
      \a tfire->far;
      \a far->tundop;
      \a tundone->early;
      \a treseta->late;
      \av early[180]-(8,8)->[15]tfire;
      \a late->tfire;

      \psset{linecolor=darkred}
      \a tundoa->pa;
      \a pa->treseta;
      \a undoa->tundoa;
      \a undoa->elidea;
      \a elidea->acka;
      \a treseta->acka;
      \a reseta->elidea;
      \a reseta->treseta;
    \end{petrinet}
  \end{center}
\vspace{-1ex}
\caption{A reversible transition and its macro expansion.}
\label{fig-reversible}
\vspace{-2pt}
\end{figure}
The arc weights on the green (or grey) arcs are inherited from the
untranslated net; the other arcs have weight~1.
Formally, a net $(S,T,\UI,\ui,F,M_0, \ell)$ with reversible
transitions translates into the Petri net containing all places $S$,
initially marked as indicated by $M_0$, all standard transitions in
$T$, labelled according to $\ell$, along with their pre- and
postplaces, and furthermore all net elements mentioned in
\reftab{reversible}.  Here \hypertarget{Tback}{}$T^\leftarrow$ denotes the set of
reversible transitions in $T$.

\begin{table}[ht]
\vspace*{-2.5ex}
\[
\begin{array}{@{}lclll@{}}
\textbf{Transition} & \textrm{label} & \textrm{Preplaces} & \textrm{Postplaces} & \textrm{for all} \\
\hline\rule[11pt]{0pt}{1pt}
t\cdot\fire   & \ell(t) & t^{in},~ t^{early},~ t^{late} &
                  \Fired(t),~ t^{out}, t^{\,\it far} & t \in T^\leftarrow \\
t\cdot\undo[\omega] & \tau & \undo[\omega](t),~\Fired(t) & \keep[\omega](t),~\take(f,t) &
  t \in T^\leftarrow,~\ui(\omega,t),~f \mathbin\in t^{\,\it far}\\
t\cdot\und    & \tau & \take(f,t),~ f & \took(f,t) & t\in T^\leftarrow,~f \in t^{\,\it far}\\
t\cdot\undone & \tau & \took(f,t) & \rho(t),~ t^{early} & t\in T^\leftarrow,~f \in t^{\,\it far}\\
t\cdot\reset[\omega] & \tau & \reset[\omega](t),~\keep[\omega](t),~\rho(t) & t^{late},~ \ack[\omega](t) &
 t \in T^\leftarrow,~\ui(\omega,t)\\
t\cdot\elide[\omega] & \tau & \undo[\omega](t),~\reset[\omega](t) & \ack[\omega](t) & t \in T^\leftarrow,~\ui(\omega,t)\\
\end{array}
\]
\vspace{-3ex}
\caption{Expansion of a Petri net with reversible transitions into a place/transition system.}
\label{tab-reversible}
\vspace{-3ex}
\end{table}

\subsection{The conflict replicating implementation}\label{sec-implementation}

Now we establish that a finitary plain structural conflict net that has no fully reachable
\visible pure \structuralM\ is distributable.  We do this by proposing the \emph{conflict
replicating implementation} of any such net, and show that this implementation is always
(a) essentially distributed, and (b) equivalent to the original net. In order to get the strongest
possible result, for (b) we use branching ST-bisimilarity with explicit divergence.

\begin{figure}
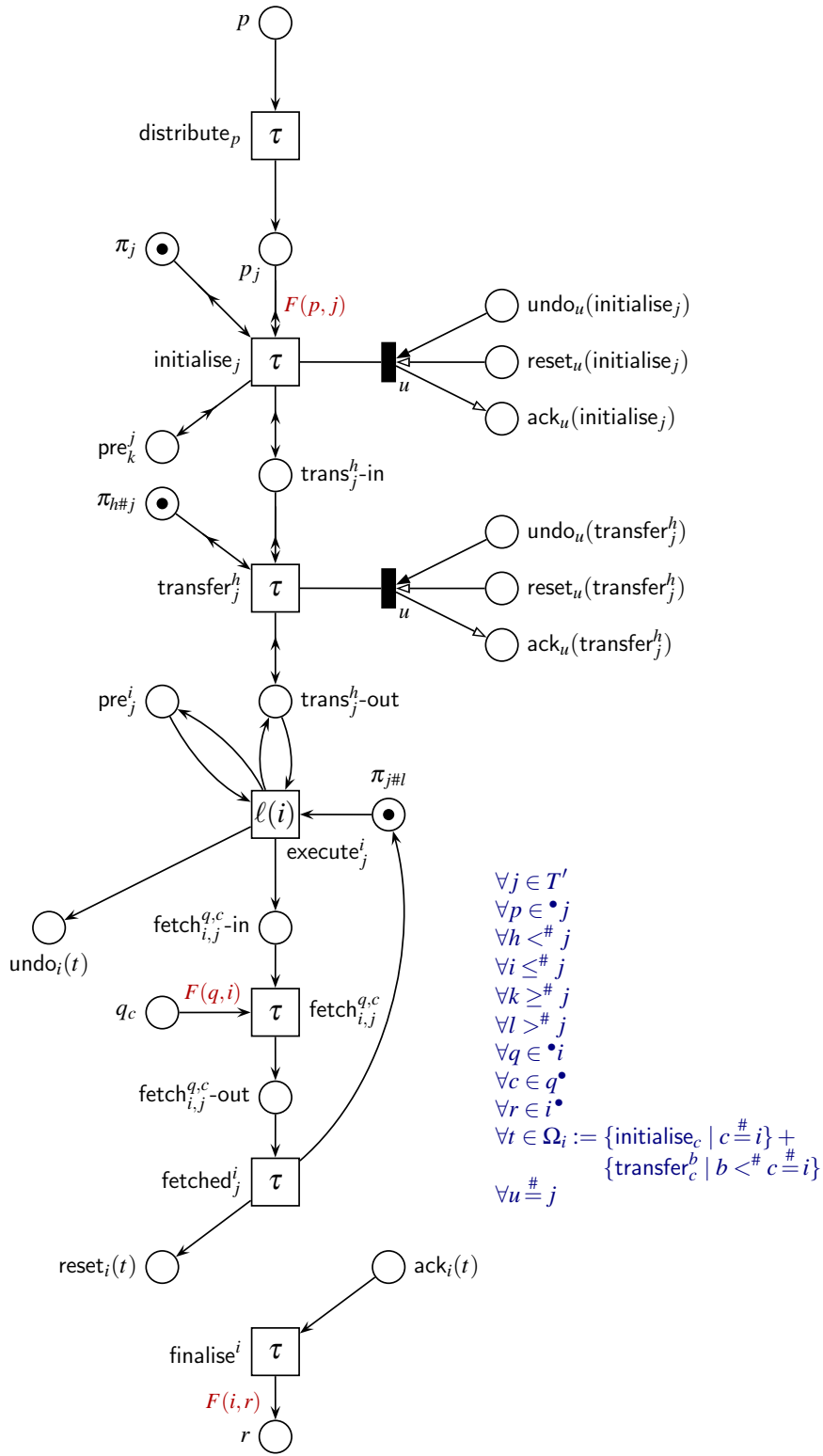

\vspace{-1em}
  \begin{center}
    \begin{petrinet}(17,26.8)
      {
        \darkblue

        \rput[lb](9,5){\large$\begin{array}{l}\displaystyle
          \forall j \in T'\\
          \forall p \in \precond{j}\\
          \forall h <^\# j\\
          \forall i \leq^\# j\\
          \forall k \geq^\# j\\
          \forall l >^\# j\\
          \forall q \in \precond{i}\\
          \forall c \in \postcond{q}\\
          \forall r \in \postcond{i\,}\\
          \forall t \in \UIij := \begin{array}[t]{@{}l@{}}
          \{\ini[c]\mid c\confeq i\} +\mbox{}\\ \{\trans[b]{c} \mid b <^\# c \confeq i\}
          \end{array}\\
          \forall u \confeq j\\
        \end{array}$}
      }

      {
        \darkred

        \rput[l](5.25,21){$F(p,j)$}
        \rput[r](4.85,1.6){$F(i,r)$}
        \rput[b](4,8.65){$F(q,i)$}
      }

      \ql(5,26):p:$p$;
      \ul(5,24):distributep:$\tau$:$\dist$;
      \qx(5,22):pj:$p_j$:(-0.45,-0.45);
      \ql(3,18.5):prejk:$\Pre^j_k$;
      \Ql(3,22):readyinitialisej:$\pi_j$;
      \ul(5,20):initialisej:$\tau$:$\ini$;
      \interface(7,20):initialiseji:initialisej:$u$;
      \qr(9,21):undoinij:$\undo[u](\ini)$;
      \qr(9,20):resetinij:$\reset[u](\ini)$;
      \qr(9,19):ackinij:$\ack[u](\ini)$;
      \qr(5,18):transin:$\transin{j}$;
      \Ql(3,17.5):pconh:$\pi_{h\#j}$;
      \ul(5,16):trans:$\tau$:$\trans{j}$;
      \interface(7,16):transi:trans:$u$;
      \qr(9,17):undotransj:$\undo[u](\trans{j})$;
      \qr(9,16):resettransj:$\reset[u](\trans{j})$;
      \qr(9,15):acktransj:$\ack[u](\trans{j})$;
      \qr(5,14):transout:$\transout{j}$;
      \ql(3,14):prehj:$\Pre^i_j$;
      \Qt(7,12):pconj:$\pi_{j\#l}$;
      \ul(5,12):executehj:\makebox[0pt]{$\ell(i)$}:;
      \rput[l](5.2,11.3){\large $\exec{j}$}
      \qb(1,10):undohjt:$\undo(t)$;
      \ql(5,10):fetchphjin:$\fetchin[q,c]$;
      \ql(3,8.5):pbackbottom:$q_c$;
      \ur(5,8.5):fetchphj:$\tau$:$\fetch[q,c]$;
      \ql(5,7):fetchphjout:$\fetchout[q,c]$;
      \ul(5,5.5):fetchedhj:$\tau$:$\fetched{j}$;
      \qr(7,4):ackhjt:$\ack(t)$;
      \ql(3,4):resethjt:$\reset(t)$;
      \ul(5,2.5):completehj:$\tau$:$\comp{j}$;
      \ql(5,1):r:$r$;

      \a p->distributep;
      \aBack pbacktop->distributep;
      \a distributep->pj;
      \aEarly pj->initialisej;
      \aLate readyinitialisej->initialisej;
      \aFar initialisej->prejk;
      \aFar initialisej->transin;
      \aEarly transin->trans;
      \aLate pconh->trans;
      \aFar trans->transout;
      \B prehj->executehj;
      \B executehj->prehj;
      \A transout->executehj;
      \A executehj->transout;
      \a pconj->executehj;
      \a executehj->undohjt;
      \a executehj->fetchphjin;
      \a fetchphjin->fetchphj;
      \a pbackbottom->fetchphj;
      \a fetchphj->fetchphjout;
      \a fetchphjout->fetchedhj;
      \a fetchedhj->resethjt;
      \a ackhjt->completehj;
      \a completehj->r;
      \BB fetchedhj->pconj;
      \aUndo undoinij->initialiseji;
      \aReset resetinij->initialiseji;
      \aAck initialiseji->ackinij;
      \aUndo undotransj->transi;
      \aReset resettransj->transi;
      \aAck transi->acktransj;
    \end{petrinet}
  \end{center}
  \vspace{-2ex}
  \caption{The conflict replicating implementation}
  \label{fig-conflictrepl}\vspace{-1pt}
\end{figure}

To define the conflict replicating implementation of a net $N=(S,T,F,M_0,\ell)$
we fix an arbitrary well-ordering $<$ on its transitions. We let
$b,c,g,h,i,j,k,l$ range over these ordered transitions, and write
\begin{list}{{\bf --}}{\leftmargin 18pt
                        \labelwidth\leftmargini\advance\labelwidth-\labelsep
                        \topsep 0pt \itemsep 0pt \parsep 0pt}
\item $i\mathbin\#j$ iff ~$i\neq j \wedge \precond{i} \cap \precond{j} \ne \varnothing$
  ~(transitions $i$ and $j$ are \emph{in conflict}),
  ~and $i \confeq j$ iff ~$i\mathbin\# j \vee i\mathbin=j$,
\item $i <^\#\! j$ iff ~$i<j \wedge i\mathbin{\#} j$,
  ~and $i \leqc j$ iff ~$i <^\#\! j \vee i=j$.
\end{list}
\reffig{conflictrepl} shows the conflict replicating implementation of $N$.  It is presented
as a Petri net $$\impl{N}=(S',T',F',\UI,\ui,M'_0,\ell')$$ with reversible transitions.  The set
$\UI$ of undo interfaces is $T$, and for $i\inp \UI$ we
have $\ui(i,t)$ iff $t\inp \UIij$, where the sets of transitions
\plat{$\UI_i\inp\nat^{T'}$} are specified in \reffig{conflictrepl}. The implementation $\impl{N}$
inherits the places of $N$ (\ie $S'\supseteq S$), and we postulate that
$M'_0\mathord\upharpoonright S = M_0$.  Given this, \reffig{conflictrepl} is not merely an
illustration of $\impl{N}$---it provides a complete and accurate description of it, thereby
defining the conflict replicating implementation of any net. In interpreting this figure
it is important to realise that net elements are completely determined by their name
(identity), and exist only once, even if they show up multiple times in the figure. For
instance, the place $\pi_{h\#j}$ with $h\mathord=2$ and $j\mathord=5$ (when using
natural numbers for the transitions in $T$) is the same as the place \plat{$\pi_{j\#l}$}
with $j\mathord=2$ and $l\mathord=5$; it is a standard preplace of \plat{$\exec{2}$} (for
all $i\leqc 2$), a
standard postplace of $\fetched{2}$, as well as a late preplace of \plat{$\trans[2]{5}$}.
A description of this net after expanding the macros for reversible transitions appears in
\reftab{conflictrepl} on Page~\pageref{tab-conflictrepl}.

The r\^ole of the transitions $\dist$ for $p\inp S$ is to distribute a
token in $p$ to copies $p_j$ of $p$ in the localities of all
transitions $j\inp T$ with $p\inp \precond{j}$.  In case $j$ is
enabled in $N$, the transition $\ini$ will become enabled in $\impl{N}$.
These transitions put tokens in the places \plat{$\Pre^j_k$}, which
are preconditions for all transitions \plat{$\exec[j]{k}$}, which
model the execution of $j$ at the location of $k$.  When two
conflicting transitions $h$ and $j$ are both enabled in $N$, the first
steps $\ini[h]$ and $\ini$ towards their execution in $\impl{N}$ can happen
in parallel. To prevent them from executing both, \plat{$\exec[j]{j}$}
(of $j$ at its own location) is only possible after \plat{$\trans{j}$},
which disables $\exec[h]{h}$.

The main idea behind the conflict replicating implementation is that
a transition $h\inp T$ is primarily executed by a sequential component of
its own, but when a conflicting transition $j$ gets enabled, the
sequential component implementing $j$ may ``steal'' the possibility to
execute $h$ from the home component of $h$, and keep the options to do
$h$ and $j$ open until one of them occurs. To prevent $h$ and $j$ from
stealing each other's initiative, which would result in deadlock, a
global asymmetry is built in by ordering the transitions.
Transition $j$ can steal the initiative from $h$ only when $h<j$.

In case $j$ is also in conflict with a transition $l$, with $j<l$,
the initiative to perform $j$ may subsequently be stolen by $l$.
In that case either $h$ and $l$ are in conflict too---then $l$ takes
responsibility for the execution of $h$ as well---or $h$ and $l$ are
concurrent---in that case $h$ will not be enabled, due to the absence
of fully reachable pure \structuralM s in $N$.
The absence of fully reachable pure \structuralM s also guarantees that it
cannot happen that two concurrent transitions $j$ and $k$ both
steal the initiative from an enabled transition $h$.

After the firing of $\exec{j}$ all tokens that were left behind in the process of
carefully orchestrating this firing will have to be cleaned up, in order to prepare the
net for the next activity in the same neighbourhood. This is the reason for
the reversibility of the transitions preparing the firing of \plat{$\exec{j}$}.
Hence there is an undo interface for each transition $i\in T'$, cleaning up the mess made in
preparation of firing $\exec{j}$ for some $j\geq^\# i$. $\UIij$ is the multiset of all transitions $t$ that could possibly have
contributed to this. For each of them the undo interface $i$ is activated, by
\plat{$\exec{j}$} depositing a token in $\undo(t)$. After all preparatory transitions that
have fired are undone, tokens appear in the places $p_c$ for all $p\inp\precond{i}$ and $c\inp\postcond{p}$.
These are collected by $\fetch$, after which all transitions in $\UIij$ get a reset signal.
Those that have fired and were undone are reset, and those that never fired perform $\elide(t)$.
In either case a token appears in $\ack(t)$. These are collected by $\comp{j}$, which
finishes the process of executing $i$ by depositing tokens in its postplaces.
\bigskip

\begin{figure}[hbt]
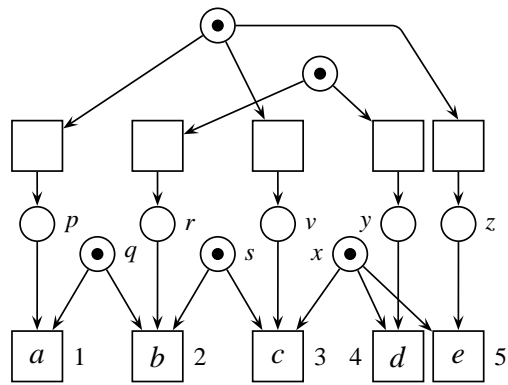

  \begin{center}
    \begin{petrinet}(10, 6)
      \P(4,6):inip1;
      \P(5.7,5.2):inip2;

      \t(1,4):init1:;
      \t(3,4):init2:;
      \t(5,4):init3:;
      \t(7,4):init4:;
      \t(8,4):init5:;

      \qr(1,2.7):p:$p$;
      \Qr(2,2.2):q:$q$;
      \qr(3,2.7):r:$r$;
      \Qr(4,2.2):s:$s$;
      \qr(5,2.7):v:$v$;
      \Ql(6.2,2.2):x:$x$;
      \ql(7,2.7):y:$y$;
      \qr(8,2.7):z:$z$;

      \ur(1,0.5):a:$a$:$1$;
      \ur(3,0.5):b:$b$:$2$;
      \ur(5,0.5):c:$c$:$3$;
      \ul(7,0.5):d:$d$:$4$;
      \ur(8,0.5):e:$e$:$5$;

      \a inip1->init1;
      \a inip1->init3;
      \av inip1[0]-(7,6)->[90]init5;
      \a inip2->init2;
      \a inip2->init4;

      \a init1->p;
      \a init2->r;
      \a init3->v;
      \a init4->y;
      \a init5->z;

      \a p->a;
      \a q->a;
      \a q->b;
      \a r->b;
      \a s->b;
      \a s->c;
      \a v->c;
      \a x->c;
      \a x->d;
      \a y->d;
      \a x->e;
      \a z->e;
    \end{petrinet}
  \end{center}
  \caption{An example net.}
  \label{fig-bigexampleoriginal}
\end{figure}

\begin{landfloat}{figure}{\rotateleft}
  \psscalebox{0.7}{
  \begin{petrinet}(38,22)
    {%
      \psset{linecolor=darkblue}
      \P(13,21):inip1;
      \P(21,21):inip2;

      \t(1,19):init1:;
      \t(9,19):init2:;
      \t(17,19):init3:;
      \t(25,19):init4:;
      \t(31,19):init5:;
    }%

    \ql(1,17):p:$p$;
    \Ql(5,17):q:$q$;
    \ql(9,17):r:$r$;
    \Ql(13,17):s:$s$;
    \ql(17,17):v:$v$;
    \Ql(21,17):x:$x$;
    \ql(25,17):y:$y$;
    \ql(31,17):z:$z$;

    \ur(1,15):distrp:$\tau$:$\dist[p]$;
    \ur(5,15):distrq:$\tau$:$\dist[q]$;
    \ur(9,15):distrr:$\tau$:$\dist[r]$;
    \ur(13,15):distrs:$\tau$:$\dist[s]$;
    \ur(17,15):distrv:$\tau$:$\dist[v]$;
    \ur(21,15):distrx:$\tau$:$\dist[x]$;
    \ur(25,15):distry:$\tau$:$\dist[y]$;
    \ur(31,15):distrz:$\tau$:$\dist[z]$;

    \ql(1,13):p1:$p_1$;
    \qr(3,13):q1:$q_1$;
    \ql(7,13):q2:$q_2$;
    \ql(9,13):r2:$r_2$;
    \qr(11,13):s2:$s_2$;
    \qr(15,13):s3:$s_3$;
    \qr(17,13):v3:$v_3$;
    \qr(19,13):x3:$x_3$;
    \qr(23,13):x4:$x_4$;
    \qr(25,13):y4:$y_4$;
    \ql(29,13):x5:$x_5$;
    \qr(31,13):z5:$z_5$;

    \ur(1,11):ini1:$\tau$:$\ini[1]$;
    \ur(9,11):ini2:$\tau$:$\ini[2]$;
    \ur(17,11):ini3:$\tau$:$\ini[3]$;
    \ur(25,11):ini4:$\tau$:$\ini[4]$;
    \ur(31,11):ini5:$\tau$:$\ini[5]$;

    {%
      \psset{linecolor=darkblue}
      \qr(7,9):trans12in:$\transin[1]{2}$;
      \qr(15,9):trans23in:$\transin[2]{3}$;
      \qr(23,9):trans34in:$\transin[3]{4}$;
      \qr(29,9):trans35in:$\transin[3]{5}$;
      \qr(33,9):trans45in:$\transin[4]{5}$;

      \ur(7,7):trans12:$\tau$:$\trans[1]{2}$;
      \ur(15,7):trans23:$\tau$:$\trans[2]{3}$;
      \ur(23,7):trans34:$\tau$:$\trans[3]{4}$;
      \ur(29,7):trans35:$\tau$:$\trans[3]{5}$;
      \ur(33,7):trans45:$\tau$:$\trans[4]{5}$;

      \ql(7,5):trans12out:$\transout[1]{2}$;
      \ql(15,5):trans23out:$\transout[2]{3}$;
      \ql(23,5):trans34out:$\transout[3]{4}$;
      \ql(32,5):trans35out:$\transout[3]{5}$;
      \qr(35,5):trans45out:$\transout[4]{5}$;
    }%

    \Ql(1,5):pi12:$\pi_{1\#2}$;
    \Ql(9,5):pi23:$\pi_{2\#3}$;
    \Ql(17,5):pi34:$\pi_{3\#4}$;
    \Ql(19,5):pi35:$\pi_{3\#5}$;
    \Ql(25,5):pi45:$\pi_{4\#5}$;

    \ur(1,3):exec11:$a$:$\exec[1]{1}$;
    \ul(7,3):exec12:$a$:$\exec[1]{2}$;
    \ur(9,3):exec22:$b$:$\exec[2]{2}$;
    \ul(15,3):exec23:$b$:$\exec[2]{3}$;
    \ur(17,3):exec33:$c$:$\exec[3]{3}$;
    \ul(23,3):exec34:$c$:$\exec[3]{4}$;
    \ur(25,3):exec44:$d$:$\exec[4]{4}$;
    \ul(31,3):exec35:$c$:$\exec[3]{5}$;
    \ur(34,3):exec45:$d$:$\exec[4]{5}$;
    \ur(37,3):exec55:$e$:$\exec[5]{5}$;

    \qr(1,1):pre11:$\Pre^1_1$;
    \ql(7,1):pre12:$\Pre^1_2$;
    \qr(9,1):pre22:$\Pre^2_2$;
    \ql(15,1):pre23:$\Pre^2_3$;
    \qr(17,1):pre33:$\Pre^3_3$;
    \ql(23,1):pre34:$\Pre^3_4$;
    \qr(25,1):pre44:$\Pre^4_4$;
    \ql(31,1):pre35:$\Pre^3_5$;
    \qr(34,1):pre45:$\Pre^4_5$;
    \qr(37,1):pre55:$\Pre^5_5$;

    {%
      \psset{linecolor=darkblue}
      \a inip1->init1;
      \a inip1->init3;
      \a inip1->init5;
      \a inip2->init2;
      \a inip2->init4;

      \a init1->p;
      \a init2->r;
      \a init3->v;
      \a init4->y;
      \a init5->z;
    }%

    \a p->distrp;
    \a q->distrq;
    \a r->distrr;
    \a s->distrs;
    \a v->distrv;
    \a x->distrx;
    \a y->distry;
    \a z->distrz;

    \a distrp->p1;
    \a distrq->q1;
    \a distrq->q2;
    \a distrr->r2;
    \a distrs->s2;
    \a distrs->s3;
    \a distrv->v3;
    \a distrx->x3;
    \a distrx->x4;
    \a distrx->x5;
    \a distry->y4;
    \a distrz->z5;

    \a p1->ini1;
    \a q1->ini1;
    \a q2->ini2;
    \a r2->ini2;
    \a s2->ini2;
    \a s3->ini3;
    \a v3->ini3;
    \a x3->ini3;
    \a x4->ini4;
    \a y4->ini4;
    \a x5->ini5;
    \a z5->ini5;

    \a ini2->trans12in;
    \a ini3->trans23in;
    \a ini4->trans34in;
    \a ini5->trans35in;
    \a ini5->trans45in;

    {%
      \psset{linecolor=darkblue}
      \a trans12in->trans12;
      \a trans23in->trans23;
      \a trans34in->trans34;
      \a trans35in->trans35;
      \a trans45in->trans45;

      \a trans12->trans12out;
      \a trans23->trans23out;
      \a trans34->trans34out;
      \a trans35->trans35out;
      \a trans45->trans45out;

      \a pi12->trans12;
      \a pi23->trans23;
      \a pi34->trans34;
      \a pi35->trans35;
      \a pi45->trans45;
    }%

    \a pi12->exec11;
    \a pi23->exec12;
    \a pi23->exec22;
    \a pi34->exec23;
    \a pi34->exec33;
    \a pi35->exec23;
    \a pi35->exec33;
    \a pi45->exec34;
    \a pi45->exec44;

    \a trans12out->exec12;
    \a trans12out->exec22;
    \a trans23out->exec23;
    \a trans23out->exec33;
    \a trans34out->exec34;
    \a trans34out->exec44;
    \a trans35out->exec35;
    \a trans35out->exec45;
    \a trans35out->exec55;
    \a trans45out->exec35;
    \a trans45out->exec45;
    \a trans45out->exec55;

    \a pre11->exec11;
    \a pre12->exec12;
    \a pre22->exec22;
    \a pre23->exec23;
    \a pre33->exec33;
    \a pre34->exec34;
    \a pre44->exec44;
    \a pre35->exec35;
    \a pre45->exec45;
    \a pre55->exec55;

    \av ini1[180]-(-1,9)(-1,0)(6,0)->[235]pre12;
    \av ini1[210]-(-0.8,9)(-0.8,0.2)(0,0.2)->[235]pre11;
    \av ini2[180]-(4,9.1)(4,-0.5)(14,-0.5)->[235]pre23;
    \av ini2[200]-(4.2,9)(4.2,-0.3)(8,-0.3)->[235]pre22;
    \av ini3[205]-(12.2,9)(12.2,0.2)(16,0.2)->[235]pre33;
    \av ini3[195]-(12.1,9.05)(12.1,0.1)(22,0.1)->[235]pre34;
    \av ini3[183]-(12,9.1)(12,0)(30,0)->[235]pre35;
    \av ini4[200]-(20.2,9)(20.2,-0.3)(24,-0.3)->[235]pre44;
    \av ini4[180]-(20,9.1)(20,-0.5)(33,-0.5)->[235]pre45;
    \av ini5[200]-(28,9)(28,-0.25)(36,-0.25)->[235]pre55;
  \end{petrinet}
  }
  \vspace{2ex}
  \caption{The (relevant parts of the) conflict replicating implementation of the net in \reffig{bigexampleoriginal}.}
  \label{fig-bigexample}
\end{landfloat}

The conflict replicating implementation is illustrated by means of the
finitary plain structural conflict net $N$ of \reffig{bigexampleoriginal}.
The places and transitions $a$-$q$-$b$-$s$-$c$-$x$-$d$ in this net
constitute a \emph{Long \structuralM}: for each pair $a$-$b$,
$b$-$c$ and $c$-$d$ of neighbouring transitions, as well as for the
pair $a$-$d$ of extremal transitions, there exists a reachable marking
enabling them both. Moreover, neighbouring transitions in the long
\structuralM\ are in conflict: $a\mathbin \# b$, $b\mathbin \# c$ and $c\mathbin \# d$,
whereas the extremal transitions are concurrent: $a \smile d$.
However, $N$ has no fully reachable pure \structuralM: no \structuralM-shaped triple of transitions
$a$-$b$-$c$, $b$-$c$-$d$ or $b$-$c$-$e$ is ever simultaneously enabled.

In \cite{glabbeek08syncasyncinteractionmfcs} we gave a simpler
implementation, the \emph{transition-controlled choice
  implementation}, that works for all finitary plain 1-safe Petri nets
without such a long \structuralM. Hence $N$ constitutes an example
where that implementation does not apply, yet the conflict replicating
implementation does. In fact, when leaving out the $z$-$e$-branch it
may be the simplest example with these properties. We have added this
branch to illustrate the situation where three transitions are pairwise in conflict.

\reffig{bigexample} presents relevant parts of the conflict
replicating implementation $\impl{N}$ of $N$. The ten places of $N$
return in $\impl{N}$, but the transitions of $N$ are replaced by more
complicated net fragments. In \reffig{bigexample} we have simplified
the rendering of $\impl{N}$ by simply just copying the five topmost
transitions of $N$, instead of displaying the net fragments replacing them.
This simplification is possible since the top half of $N$ is already
distributed. To remind the reader of this, we left those transitions
unlabelled. 

In order to fix a well-ordering $<$ on the remaining transitions, we
named them after the first five positive natural numbers. The ordered
conflicts between those transitions now are $1 \mathord{\leq^\#} 2$,
$2 \mathord{\leq^\#} 3$, $3 \mathord{\leq^\#} 4$, $3 \mathord{\leq^\#} 5$
and $4 \mathord{\leq^\#} 5$. In \reffig{bigexample} we have skipped
all places, transitions and arcs involved in the cleanup of tokens
after firing of a transition. In this example the cleanup is not
necessary, as no place of $N$ is visited twice. Thus, we displayed only the non-reversible part
of the transitions $\ini$ and \plat{$\trans{j}$}---i.e.\ $\ini\cdot\fire$ and
\plat{$\trans{j}\cdot\fire$}---as well as the transitions $\dist$ and \plat{$\exec{j}$}.
Likewise, we omitted the outgoing arcs of \plat{$\exec{j}$}, the
places $\pi_j$, and those places that have arcs only to omitted transitions.
We leave it to the reader to check this net against the definition in
\reffig{conflictrepl}, and to play the token game on this net, to see 
that it correctly implements $N$.
\bigskip

  In \refsec{correctness} we will show,
  for any finitary plain structural conflict net
  without a fully reachable \visible pure \structuralM,
  that $\impl{N} \approx^\Delta_{bSTb} N$,
  and that $\impl{N}$ is essentially distributed.
  Hence $\impl{N}$ is an essentially distributed implementation of $N$.
  By \refpr{essentiallydistributedequal} this implies that
  $N$ is distributable up to $\approx^\Delta_{bSTb}$.
Together with \refthm{trulysyngltfullm} it follows that, for any equivalence between
$\approx_\mathscr{R}$ and $\approx^\Delta_{bSTb}$,
a finitary plain structural conflict net is distributable
iff it has no fully reachable \visible pure~\structuralM.

Given the complexity of our construction, no techniques known to us were adequate for
performing the equivalence proof. We therefore had to develop an entirely new method for
rigorously proving the equivalence of two Petri nets up to $\approx^\Delta_{bSTb}$, one of
which known to be plain. This method is presented in \refsec{method}.

\section{Proving Implementations Correct}\label{sec-method}

This section presents a method for establishing the equivalence of two Petri nets, one of
which known to be \hyperlink{plain}{plain}, up to branching ST-bisimilarity with explicit divergence.
It appears as \refthm{3ST}. First approximations of this method are presented in
Lemmas~\ref{lem-1ST} and~\ref{lem-2ST}. The progression from \reflem{1ST} to \reflem{2ST}
and to \refthm{3ST} makes the method more specific (so less general) and more powerful.
By means of a simplification a similar method can be obtained, also in three steps, for
establishing the equivalence of two Petri nets up to interleaving branching bisimilarity
with explicit divergence. This is elaborated at the end of this section.

\begin{definition}{deterministic}
A labelled transition system $(\st,\tr,\inist)$ is called \emph{deterministic}
if for all reachable states $\mathfrak{M}\in [\inist\rangle$ we have
$\mathfrak{M}\arrownot\goesto[\tau]$ and if $\mathfrak{M}\goesto[a]\mathfrak{M}'$
and $\mathfrak{M}\goesto[a]\mathfrak{M}''$ for some $a\in\act$ then $\mathfrak{M}'=\mathfrak{M}''$.
\end{definition}
Deterministic systems may not have reachable $\tau$-transitions at all;
this way, if $\mathfrak{M}\Goesto[\sigma]\mathfrak{M}'$ and
$\mathfrak{M}\Goesto[\sigma]\mathfrak{M}''$ for some $\sigma\in\act^*$ then
$\mathfrak{M}'=\mathfrak{M}''$.
Note that the labelled transition system associated to a
\hyperlink{plain}{plain} Petri net is deterministic; the same applies
to the ST-LTS, the split LTS or the step LTS associated to such a net.

\begin{lemma}{plain branching bisimilarity}
Let $(\st_1,\tr_1,\inist_1)$ and $(\st_2,\tr_2,\inist_2)$ be two
labelled transition systems, the latter being deterministic.
Suppose there is a relation $\Rel \subseteq \st_1\times\st_2$ such that
\begin{enumerate}[~~~(a)\,]
\item $\inist_1\Rel \inist_2$,
\item if $\mathfrak{M}_1\Rel \mathfrak{M}_2$ and
  $\mathfrak{M}_1\goesto[\tau]\mathfrak{M}'_1$ then $\mathfrak{M}'_1\Rel \mathfrak{M}_2$,
\item if $\mathfrak{M}_1\Rel \mathfrak{M}_2$ and
  $\mathfrak{M}_1\goesto[a]\mathfrak{M}'_1$ for some $a\in\act$ then
  $\exists \mathfrak{M}'_2.~\mathfrak{M}_2\goesto[a]\mathfrak{M}'_2 \wedge \mathfrak{M}'_1\Rel \mathfrak{M}'_2$,
\item if $\mathfrak{M}_1\Rel \mathfrak{M}_2$ and
  $\mathfrak{M}_2\goesto[a]$ for some $a\in\act$ then either
  $\mathfrak{M}_1 \goesto[a]$ or $\mathfrak{M}_1 \goesto[\tau]$
\item and there is no infinite sequence $\mathfrak{M}_1\goesto[\tau] \mathfrak{M}'_1\goesto[\tau] \mathfrak{M}''_1\goesto[\tau] \cdots$
  with $\mathfrak{M}_1\Rel \mathfrak{M}_2$ for some $\mathfrak{M}_2$.
\end{enumerate}
Then $\Rel$ is a branching bisimulation, and the two LTSs are
branching bisimilar with explicit divergence.
\end{lemma}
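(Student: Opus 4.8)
The plan is to verify that $\Rel$ meets the three clauses of \refdf{branching LTS} and that no divergence arises on either side, leaning on the determinism of the second system throughout. Clause~1 is exactly hypothesis~(a). For clause~2, suppose $\mathfrak{M}_1\Rel\mathfrak{M}_2$ and $\mathfrak{M}_1\goesto[\alpha]\mathfrak{M}'_1$. If $\alpha=\tau$, then (b) gives $\mathfrak{M}'_1\Rel\mathfrak{M}_2$, so taking both $\mathfrak{M}^\dagger_2$ and $\mathfrak{M}'_2$ to be $\mathfrak{M}_2$ discharges the clause, using that a $\tau$-step is optional and $\Goesto$ is reflexive. If $\alpha=a\in\act$, then (c) yields $\mathfrak{M}'_2$ with $\mathfrak{M}_2\goesto[a]\mathfrak{M}'_2$ and $\mathfrak{M}'_1\Rel\mathfrak{M}'_2$, the witness $\mathfrak{M}^\dagger_2$ again being $\mathfrak{M}_2$. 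Thus (b) and (c) settle clause~2 at once.

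Before clause~3 I would record a reachability reduction. Restricting $\Rel$ to pairs whose second component lies in $[\inist_2\rangle$ preserves (a)--(e) as well as clauses~1 and~2, and the witnesses produced there land back in the restricted relation (successors of reachable states are reachable); since it still relates the initial states, it suffices to establish the clauses for such pairs. This matters because \refdf{deterministic} forbids $\tau$-transitions only out of \emph{reachable} states of the deterministic system; having made the reduction I may assume $\mathfrak{M}_2\in[\inist_2\rangle$, whence $\mathfrak{M}_2\arrownot\goesto[\tau]$ and the $a$-successors of $\mathfrak{M}_2$ are unique.

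The heart of the argument is clause~3. Given $\mathfrak{M}_1\Rel\mathfrak{M}_2$ and $\mathfrak{M}_2\goesto[\alpha]\mathfrak{M}'_2$, determinism forces $\alpha=a\in\act$. First I would use (e) to pass from $\mathfrak{M}_1$ to a $\tau$-stable witness: since (e) rules out an infinite $\tau$-path from any $\Rel$-related state, some $\mathfrak{M}^\dagger_1$ reachable from $\mathfrak{M}_1$ by $\tau$-steps satisfies $\mathfrak{M}^\dagger_1\arrownot\goesto[\tau]$, and applying (b) along this finite path gives $\mathfrak{M}_1\Goesto\mathfrak{M}^\dagger_1$ with $\mathfrak{M}^\dagger_1\Rel\mathfrak{M}_2$. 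Now (d), applied to $\mathfrak{M}^\dagger_1\Rel\mathfrak{M}_2$ and $\mathfrak{M}_2\goesto[a]$, offers either $\mathfrak{M}^\dagger_1\goesto[a]$ or $\mathfrak{M}^\dagger_1\goesto[\tau]$; the latter is excluded by $\tau$-stability, so $\mathfrak{M}^\dagger_1\goesto[a]\mathfrak{M}'_1$ for some $\mathfrak{M}'_1$. Feeding this into (c) yields $\mathfrak{M}''_2$ with $\mathfrak{M}_2\goesto[a]\mathfrak{M}''_2$ and $\mathfrak{M}'_1\Rel\mathfrak{M}''_2$, and determinism of the reachable $\mathfrak{M}_2$ gives $\mathfrak{M}''_2=\mathfrak{M}'_2$. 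Hence $\mathfrak{M}_1\Goesto\mathfrak{M}^\dagger_1\goesto[a]\mathfrak{M}'_1$ with $\mathfrak{M}^\dagger_1\Rel\mathfrak{M}_2$ and $\mathfrak{M}'_1\Rel\mathfrak{M}'_2$, which is clause~3 (note $\goesto[\opt{a}]=\goesto[a]$ since $a\neq\tau$). The main obstacle is exactly this step: combining the \emph{weak} match supplied by (d) with $\tau$-stability to \emph{force} a genuine $a$-move, and then using (c) with determinism to pin down its target. Extracting the $\tau$-stable $\mathfrak{M}^\dagger_1$ from (e) is the other delicate point, as it rests on the well-foundedness of $\goesto[\tau]$ on $\Rel$-related states.

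Finally, for explicit divergence I would argue that neither system diverges from related states. The deterministic system has no reachable $\tau$-transition whatsoever, hence no divergence. For the first system, every reachable state is $\Rel$-related to some state of the second (by induction on reachability, using clause~2), and (e) forbids an infinite $\tau$-path out of any such state, so the first system admits no divergence either. Since in the present setting ``with explicit divergence'' reduces, as noted after \refdf{branching LTS}, to the absence of divergence in whichever system has it, this completes the proof that $\Rel$ is a branching bisimulation and that the two LTSs are branching bisimilar with explicit divergence.
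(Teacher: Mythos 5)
Your proof is correct and follows essentially the same route as the paper's: clauses 1 and 2 come directly from (a)--(c), clause 3 uses (e) and (b) to descend to a $\tau$-stable state related to $\mathfrak{M}_2$, lets (d) force a genuine $a$-move there, and then pins down the target via (c) and determinism, with explicit divergence following from (e) and the absence of $\tau$-steps in the deterministic system. The only cosmetic differences are your explicit restriction to pairs with reachable second components (which the paper leaves implicit when invoking determinism) and your extracting the $\tau$-stable state up front rather than interleaving applications of (d) along the $\tau$-path.
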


\begin{proofNobox}
It suffices to show that $\Rel $ satisfies Conditions 1--3 of \refdf{branching LTS};
the condition on explicit divergence follows immediately from (e),
using that a deterministic LTS admits no divergence at all.
\begin{enumerate}
\item By (a).
\item In case $\alpha=\tau$ this follows directly from (b), and otherwise from (c).
  In both cases $\mathfrak{M}^\dagger_2:=\mathfrak{M}_2$ and when $\alpha=\tau$ also $\mathfrak{M}'_2:=\mathfrak{M}_2$.
\item Suppose $\mathfrak{M}_1\Rel \mathfrak{M}_2$ and $\mathfrak{M}_2\goesto[\alpha]\mathfrak{M}'_2$.
  Since $(\st_2,\tr_2,\inist_2)$ is deterministic, $\alpha=a\in\Act$. By (d) we have either 
  $\mathfrak{M}_1 \goesto[a] \mathfrak{M}^1_1$ or $\mathfrak{M}_1 \goesto[\tau] \mathfrak{M}^1_1$ for some $\mathfrak{M}^1_1\in\st_1$.
  In the latter case (b) yields $\mathfrak{M}^1_1\Rel \mathfrak{M}_2$, and using (d) again,
  either $\mathfrak{M}^1_1 \goesto[a] \mathfrak{M}^2_1$ or $\mathfrak{M}^1_1 \goesto[\tau] \mathfrak{M}^2_1$ for some $\mathfrak{M}^2_1\in\st_1$.
  Repeating this argument, if the choice between $a$ and $\tau$ is made $k$ times in
  favour of $\tau$ (with $k\geq 0$), we obtain $\mathfrak{M}^{k}_1\Rel
  \mathfrak{M}_2$ (where $\mathfrak{M}^0_1:=\mathfrak{M}_1$) and
  either $\mathfrak{M}^{k}_1 \goesto[a] \mathfrak{M}^{k+1}_1$ or $\mathfrak{M}^k_1 \goesto[\tau] \mathfrak{M}^{k+1}_1$.
  By (e), at some point the choice must be made in favour of $a$, say at $\mathfrak{M}^k_1$.
  Thus $\mathfrak{M}_1\Goesto \mathfrak{M}^k_1 \goesto[a] \mathfrak{M}^{k+1}_1$, with $\mathfrak{M}^k_1\Rel \mathfrak{M}_2$.
  We take \plat{$\mathfrak{M}^\dagger_1$} and $\mathfrak{M}'_1$ from
  \refdf{branching LTS} to be $\mathfrak{M}^k_1$ and $\mathfrak{M}^{k+1}_1$.
  It remains to show that $\mathfrak{M}^{k+1}_1\Rel \mathfrak{M}'_2$.
  By (c) there is an $\mathfrak{M}''_2\in\st_2$ with $\mathfrak{M}_2\goesto[a]\mathfrak{M}''_2$ and $\mathfrak{M}^{k+1}_1\Rel \mathfrak{M}''_2$.
  Since $(\st_2,\tr_2,\inist_2)$ is deterministic, $\mathfrak{M}'_2=\mathfrak{M}''_2$.
  \hfill $\Box$
\end{enumerate}
\end{proofNobox}

\begin{lemma}{1ST}
Let $N=(S,T,F,M_0,\ell)$ and $N'=(S',T',F',M'_0,\ell')$ be two nets, $N'$ being plain.
Suppose there is a relation
$\Rel  \subseteq (\nat^{S}\times\nat^{T})\times (\nat^{S'}\times\nat^{T'})$
such that
\begin{enumerate}[~~~(a)\,]
\item $(M_0,\emptyset)\Rel (M'_0,\emptyset)$,
\item if $(M_1,U_1)\Rel (M_1',U'_1)$ and
  $(M_1,U_1)\goesto[\tau](M_2,U_2)$ then $(M_2,U_2)\Rel (M_1',U'_1)$,
\item if $(M_1,U_1)\Rel (M_1',U'_1)$ and
  $(M_1,U_1)\goesto[\eta](M_2,U_2)$ for some $\eta\in\Act^\pm$\\\mbox{}\qquad then
  $\exists (M'_2,U'_2).~(M'_1,U'_1)\goesto[\eta](M'_2,U'_2) \wedge (M_2,U_2)\Rel (M_2',U'_2)$,
\item if $(M_1,U_1)\Rel (M_1',U'_1)$ and
  $(M'_1,U'_1)\goesto[\eta]$ with $\eta\in\Act^\pm$ then either
  $\mathord{(M_1,U_1) \goesto[\eta]}$ or $\mathord{(M_1,U_1) \goesto[\tau]}$
\item and there is no infinite sequence $(M,U)\goesto[\tau] (M_1,U_1)\goesto[\tau] (M_2,U_2)
  \goesto[\tau] \cdots$ with $(M,U)\Rel (M',U')$ for some $(M',U')$.
\end{enumerate}
Then $\Rel $ is a branching split bisimulation, and $N \approx^\Delta_{bSTb} N'$.
\end{lemma}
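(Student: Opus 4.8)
The plan is to reduce this statement to the already-established \reflem{plain branching bisimilarity} together with \refpr{split}, by reading the five hypotheses (a)--(e) as the hypotheses of \reflem{plain branching bisimilarity} instantiated with the split LTSs associated to $N$ and $N'$. Accordingly I would first fix the two LTSs: let $(\st_1,\tr_1,\inist_1)$ be the split LTS of $N$ and $(\st_2,\tr_2,\inist_2)$ the split LTS of $N'$, so that the states are the split markings, the visible labels are the split action phases in $\Act^\pm_{\rm split}$, the transition relations are those of \refdf{split marking}, and $\inist_1=(M_0,\emptyset)$, $\inist_2=(M'_0,\emptyset)$. Since $N'$ is plain, its split LTS is deterministic in the sense of \refdf{deterministic}, as recorded in the remark following that definition: a plain net has no $\tau$-labelled transitions, and its injective labelling forces each visible phase $a^+$ or $a^-$ to determine a unique transition, hence a unique successor split marking.

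Next I would check that (a)--(e) here transcribe literally onto (a)--(e) of \reflem{plain branching bisimilarity} for these two LTSs, with the $\mathfrak{M}$'s ranging over split markings $(M,U)\in\nat^S\times\nat^T$ and the generic visible action $a$ of that lemma replaced by a split phase $\eta$. Condition (a) is the initial-state clause; (b) is the $\tau$-stuttering clause; (c) is the forward simulation of a visible step; (d) lets a visible step of $N'$ be matched after at most one $\tau$-step of $N$; and (e) is exactly the no-divergence-from-related-states clause. The one point needing a word is notational: the split transition relations of \refdf{split marking} carry labels in $\Act^\pm_{\rm split}\dcup\{\tau\}$, so the symbol $\eta\in\Act^\pm$ appearing in (c) and (d) is to be read as a split phase; no other reading is available, since a split marking admits no phase $a^{-n}$ with $n>1$.

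Having matched the hypotheses, I would invoke \reflem{plain branching bisimilarity} to conclude that $\Rel$ is a branching bisimulation between the split LTSs and that these two LTSs are branching bisimilar with explicit divergence. By the definition of branching split bisimilarity (the relation of \refdf{branching LTS} with the $\mathfrak{M}$'s denoting split markings), this is precisely the assertion that $\Rel$ is a branching split bisimulation and that $N$ and $N'$ are branching split bisimilar with explicit divergence. Finally, taking $N_1:=N$ and $N_2:=N'$ (which is plain), \refpr{split} converts branching split bisimilarity with explicit divergence into branching ST-bisimilarity with explicit divergence, yielding $N\approx^\Delta_{bSTb}N'$.

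I do not expect a genuine obstacle here: the substantive work---the inductive matching of $\tau$-closures and the divergence bookkeeping---already lives inside \reflem{plain branching bisimilarity}, and the passage between split and ST-semantics already lives inside \refpr{split}. The only care required is the routine verification that the split LTS of the plain net $N'$ is deterministic and that each of (a)--(e) is the corresponding generic clause; the mild notational slack between $\Act^\pm$ and $\Act^\pm_{\rm split}$ is the single place where a convention must be stated explicitly.
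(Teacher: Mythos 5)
Your proposal is exactly the paper's proof: the paper likewise obtains branching split bisimilarity with explicit divergence by instantiating \reflem{plain branching bisimilarity} with the split LTSs of $N$ and $N'$, noting that the split LTS of a plain net is deterministic, and then concludes $N\approx^\Delta_{bSTb}N'$ via \refpr{split}. Your additional remark about reading $\eta\in\Act^\pm$ as a split phase is a harmless clarification of the same convention the paper uses implicitly.
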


\begin{proof}
That $N$ and $N'$ are branching split bisimilar with explicit
divergence follows directly from  Lemma \ref{lem-plain branching bisimilarity}
by taking $(\st_1,\tr_1,\inist_1)$ and $(\st_2,\tr_2,\inist_2)$ to be
the split LTSs associated to $N$ and $N'$ respectively.
Here we use that the split LTS associated to a plain net is deterministic.
The final conclusion follows by \refpr{split}.
\end{proof}
\reflem{1ST} provides a method for proving $N \approx^\Delta_{bSTb} N'$ that can
be more efficient than directly checking the definition. In particular, the
intermediate states $\mathfrak{M}^\dagger$ and the sequence of
$\tau$-transitions $\Goesto$ from \refdf{branching LTS} do not occur in
\reflem{plain branching bisimilarity}, and hence not in \reflem{1ST}. Moreover,
in Condition (d) one no longer has the match the targets of corresponding transitions.
\reflem{2ST} below, when applicable, provides an even more efficient method:
it is no longer needed to specify the branching split bisimulation $\Rel$,
and the targets have disappeared from the transitions in Condition~\ref{2cST} as well.
Instead, we have acquired Condition~\ref{clause1ST}, but this is structural
property, which is relatively easy to check.

\begin{lemma}{2ST}
Let $N=(S,T,F,M_0,\ell)$ be a net and $N'=(S',T',F',M'_0,\ell')$ be a plain
net with $S'\subseteq S$ and $M'_0=M_0\upharpoonright S'$.
Suppose:
\begin{enumerate}
\item $\forall t\inp T,~\ell(t)\neq\tau.~ \exists t'\inp T',~\ell(t')=\ell(t).~
       \exists G\fin \nat^T,~\ell(G)\equiv\emptyset.~ \marking{t'}=\marking{t+G}$.
      \label{clause1ST}
\item For any $G\fin \Int^T$ with $\ell(G)\equiv\emptyset$, ~\plat{$M'\inp\nat^{S'}$},
      ~\plat{$U'\inp\nat^{T'}$} and ~$U\inp\nat^T$ with ~$\ell'(U')\mathbin=\ell(U)$,
      ~$M'+\precond{U'}\in [M'_0\rangle_{N'}$
      and ~$M:=M'+\precond{U'}+(M_0-M'_0)+\marking{G}-\precond{U}\in\nat^S$ with $M+\precond{U}\in[M_0\rangle_N$,
      it holds that:\label{clause2ST}
\begin{enumerate}
\item there is no infinite sequence $M\goesto[\tau] M_1\goesto[\tau] M_2\goesto[\tau] \cdots$\label{2aST}
\item if $M' \goesto[a]$ with $a\in \Act$ then $M \goesto[a]$ or $M \goesto[\tau]$\label{2bST}
\item and if $M\goesto[a]$ with $a\inp \Act$ then $M'\goesto[a]$.\label{2cST}
\end{enumerate}
\end{enumerate}
Then $N \approx^\Delta_{bSTb} N'$.
\end{lemma}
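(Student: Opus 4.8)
The plan is to reduce the statement to \reflem{1ST}: since $N'$ is plain, it suffices to produce a relation $\Rel$ between the split markings of $N$ and those of $N'$ meeting clauses (a)--(e) of that lemma, after which \reflem{1ST} yields $N\approx^\Delta_{bSTb}N'$ directly. I would take $\Rel$ to relate a reachable split marking $(M,U)$ of $N$ to a reachable split marking $(M',U')$ of $N'$ exactly when $\ell(U)=\ell'(U')$ and there is a finite signed multiset $G\fin\Int^T$ with $\ell(G)\equiv\emptyset$ satisfying $M=M'+\precond{U'}+(M_0-M'_0)+\marking{G}-\precond{U}$. The point of this definition is that a related pair, together with its witness $G$, is \emph{precisely} an instance of the premises of Condition~2 of the lemma; hence Conditions~2(a)--(c) become available at every related pair. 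Equivalently, the defining equation says that the full markings $M+\precond{U}$ and $M'+\precond{U'}$ agree up to the fixed offset $M_0-M'_0$ and the net effect $\marking{G}$ of a pending bundle of invisible transitions.

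Clauses (a), (b), (e) and the $a^-$ half of (d) are then bookkeeping with the witness $G$. For (a), $(M_0,\emptyset)\Rel(M'_0,\emptyset)$ holds with $G=\emptyset$, using $M'_0+(M_0-M'_0)=M_0$. A $\tau$-step $(M_1,U_1)\goesto[\tau](M_2,U_2)$ fired by some $t$ with $\ell(t)=\tau$ leaves $U$ fixed and sets $M_2=M_1+\marking{t}$, so replacing $G$ by $G+\{t\}$ (still labelled $\equiv\emptyset$) re-establishes the defining equation, giving (b). Clause (e) is immediate from 2(a): a split $\tau$-sequence from $(M,U)$ leaves $U$ fixed and projects to an ordinary $\tau$-sequence $M\goesto[\tau]\cdots$ in $N$, which 2(a) forbids. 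For (d), a step $(M'_1,U'_1)\goesto[a^-]$ means $a\in\ell'(U'_1)=\ell(U_1)$, so $(M_1,U_1)\goesto[a^-]$ already, while $(M'_1,U'_1)\goesto[a^+]$ means $M'_1\goesto[a]$ in $N'$, and 2(b) yields $M_1\goesto[a]$ or $M_1\goesto[\tau]$ in $N$, i.e.\ $(M_1,U_1)\goesto[a^+]$ or $(M_1,U_1)\goesto[\tau]$.

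The heart of the argument is clause (c). The start case $\eta=a^+$, fired in $N$ via some $t$ with $\ell(t)=a$, forces $M_1\goesto[a]$ in $N$, whence 2(c) supplies a transition $t'\in T'$ with $\ell'(t')=a$ enabled at $M'_1$; firing it matches $\eta$, and since both $+$-phases leave the full markings unchanged, the same witness $G$ still works. The termination case $\eta=a^{-}$, fired in $N$ by removing some $t\in U_1$ with $\ell(t)=a$, is where the hypotheses genuinely combine: from $\ell(U_1)=\ell'(U'_1)$ we get $a\in\ell'(U'_1)$, so some $t'\in U'_1$ with $\ell'(t')=a$ may be removed to match. Because $N'$ is plain, $\ell'$ is injective, so $t'$ is \emph{the} transition of $N'$ labelled $a$, which by Condition~1 satisfies $\marking{t'}=\marking{t+G_0}$ for some $G_0\fin\nat^T$ with $\ell(G_0)\equiv\emptyset$. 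Recomputing the defining equation for the targets (the full marking of $N$ grows by $\marking{t}$, that of $N'$ by $\marking{t'}=\marking{t}+\marking{G_0}$) shows the required new witness is $G-G_0$, whose label is again $\equiv\emptyset$, so the targets are related. In every case reachability of the targets is inherited from that of the sources, as reachable split markings are closed under split transitions.

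I expect the main obstacle to be organisational rather than conceptual: making the defining equation of $\Rel$ line up exactly with the premises of Condition~2 (so that 2(a)--(c) may be invoked verbatim), and tracking how the witness $G$ must be adjusted --- by $+\{t\}$ on a $\tau$-step and by $-G_0$ on an $a^-$-step --- through the signed-multiset arithmetic, using linearity of $\marking{-}$ on finite signed multisets. The one substantive point is the termination case, where plainness of $N'$ is exactly what licenses identifying the fired $t'$ with the witness transition of Condition~1; without injectivity of $\ell'$ one could not conclude $\marking{t'}=\marking{t+G_0}$, and the bisimulation would break. Once all five clauses are verified, \reflem{1ST} gives that $\Rel$ is a branching split bisimulation and that $N\approx^\Delta_{bSTb}N'$, as required.
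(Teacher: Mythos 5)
Your proposal is correct and follows essentially the same route as the paper: the relation you define (rearranged as $M+\precond{U}=M'+\precond{U'}+(M_0-M'_0)+\marking{G}$, with reachability of both sides folded in) is exactly the paper's branching split bisimulation, and the witness adjustments --- $G+\{t\}$ on a $\tau$-step, $G$ unchanged on $a^+$, $G-G_0$ on $a^-$ via plainness of $N'$ identifying the fired transition with the one from Condition~1 --- match the paper's proof step for step.
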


\begin{proofNobox}
Define $\Rel \subseteq (\nat^{S}\times\nat^{T})\times (\nat^{S'}\times\nat^{T'})$ by
$(M,U) \Rel  (M',U') :\Leftrightarrow
\ell'(U')\mathbin=\ell(U) \wedge M'+\!\precond{U'}\inp [M'_0\rangle_{N'}\linebreak[3]
\wedge \exists G\fin\Int^T.~ \ell(G)\equiv\emptyset \wedge
M+\precond{U}=M'+\precond{U'}+(M_0-M'_0)+\marking{G}\in[M_0\rangle_N$.
It suffices to show that $\Rel $ satisfies Conditions (a)--(e) of \reflem{1ST}.
\begin{enumerate}[~~~(a)\,]
\item Take $G=\emptyset$.
\item Suppose $(M_1,U_1)\Rel (M_1',U'_1)$ and $(M_1,U_1)\goesto[\tau](M_2,U_2)$.
  Then $\ell'(U'_1)\mathbin=\ell(U_1) \wedge M'_1+\!\precond{U'_1}\inp [M'_0\rangle_{N'}\linebreak[2] \wedge
  \exists G\fin\Int^T.~ \ell(G)\mathbin\equiv\emptyset \wedge
  M_1=M'_1+\!\precond{U'_1}+(M_0-M'_0)+\marking{G}-\!\precond{U_1} \wedge M_1+\precond{U}\in[M_0\rangle_N$
  and moreover $M_1\goesto[\tau]M_2 \wedge U_2=U_1$.
  So $M_1 [t\rangle M_2$ for some $t\mathbin\in T$ with $\ell(t)\mathbin=\tau$. Hence
  $M_2=M_1+\marking{t}=M'_1+\!\precond{U'_1}+(M_0\mathord-M'_0)+\marking{G+t}\linebreak[2]-\!\precond{U_1}$.
  Since $(M_1+\precond{U_1}) [t\rangle (M_2+\precond{U_1})$, we have $M_2+\precond{U_1}\in[M_0\rangle_N$.
  Since also $\ell(G+t)\equiv\emptyset$ it follows that $(M_2,U_1)\Rel (M_1',U'_1)$.
\item Suppose $(M_1,U_1)\Rel (M_1',U'_1)$ and $(M_1,U_1)\goesto[\eta](M_2,U_2)$,
  with $\eta\in\Act^\pm$.
  Then $\ell'(U'_1)\mathbin=\ell(U_1)$, ~$M'_1+\!\precond{U'_1}\inp [M'_0\rangle_{N'}$ and
  \begin{equation}\label{G}
  \exists G\fin\Int^T.~ \ell(G)\mathbin\equiv\emptyset \wedge
  M_1+\!\precond{U_1}=M'_1+\!\precond{U'_1}+(M_0-M'_0)+\marking{G}\in[M_0\rangle_N.
  \end{equation}
  First suppose $\eta=a^+$. Then $\exists t\inp T.~ \ell(t)\mathbin=a \wedge M_1[t\rangle
  \wedge M_2=M_1-\precond{t} \wedge U_2=U_1+\{t\}$.
  Using that $M_1\goesto[a]$ with $a\in \Act$, by Condition~\ref{2cST}
  we have $M'_1\goesto[a]$, \ie $M'_1[t'\rangle$ for some
  $t'\in T$ with $\ell'(t')=a$.  Let $M'_2:=M'_1-\precond{t}$ and $U'_2:=U'_1+\{t'\}$.
  Then $(M'_1,U'_1)\goesto[a^+](M'_2,U'_2)$.
  Moreover, $\ell(U_2)=\ell(U'_2)$,
  ~$M'_2+\precond{U'_2} = M'_1+\precond{U'_1}\inp [M'_0\rangle_{N'}$ and
  $M_2+\precond{U_2} = M_1+\precond{U_1}$. In combination with (\ref{G}) this yields
  $$M_2+\!\precond{U_2}=M_1+\precond{U_1} =M'_1+\!\precond{U'_1}+(M_0-M'_0)+\marking{G}
   =M'_2+\!\precond{U'_2}+(M_0-M'_0)+\marking{G},$$
  so $(M_2,U_2)\Rel (M_2',U'_2)$.

  Now suppose $\eta=a^-$. Then $\exists t\inp U_1.\linebreak[3]\ \ell(t)\mathbin=a \wedge
  U_2\mathbin=U_1\mathord-\{t\} \wedge M_2=M_1+\postcond{t}$.  Since
  $\ell'(U'_1)\mathbin=\ell(U_1)$ there is a $t'\inp U'_1$ with
  $\ell(t')\mathbin=a$.  Let $M'_2:=M'_1+\postcond{t'}$ and
  $U'_2:=U'_1-\{t'\}$. Then $(M'_1,U'_1)\goesto[a^-](M'_2,U'_2)$.
  By construction, $\ell(U_2)=\ell(U'_2)$.
  Moreover, $M_2+\precond{U_2} = M_1+\postcond{t}+\precond{U_1}-\precond{t}=(
  M_1+\precond{U_1})+\marking{t}$, and likewise\vspace{-1ex}
  \begin{equation}\label{M'}
  M'_2+\precond{U'_2} = (M'_1+\precond{U'_1})+\marking{t'}
  \end{equation}
  so $(M'_1+\precond{U'_1})[t'\rangle (M'_2+\precond{U'_2})$.
  Since $M'_1+\precond{U'_1}\inp [M'_0\rangle_{N'}$, this yields 
  $M'_2+\precond{U'_2}\inp [M'_0\rangle_{N'}$.
  Moreover, $M_2+\!\precond{U_2} = M_1+\postcond{t}+\!\precond{U_1}-\!\precond{t} =
  M_1+\!\precond{U_1}+\marking{t} \in[M_0\rangle_N$.
  Furthermore, combining (\ref{G}) and (\ref{M'}) gives
  \begin{equation}\label{G2}
  \exists G\fin\Int^T.~ \ell(G)\mathbin\equiv\emptyset \wedge
  M_2+\!\precond{U_2}-\marking{t}=M'_2+\!\precond{U'_2}-\marking{t'}+(M_0-M'_0)+\marking{G}.
  \end{equation}
  By Condition~\ref{clause1ST} of \reflem{2ST}, $\exists t''\inp T',~\ell(t'')=\ell(t).~
       \exists G_t\fin \nat^T,~\ell(G_t)\equiv\emptyset.~ \marking{t}=\marking{t''-G_t}$.
  Since $N'$ is a \hyperlink{plain}{plain} net, it has only one
  transition $t^\dagger$ with $\ell(t^\dagger)\mathbin=a$, so $t''\mathbin=t'$.
  Substitution of $\marking{t'-G_t}$ for $t$\linebreak[3] in (\ref{G2}) yields\vspace{-2ex}
  \[\qquad
  \exists G\fin\Int^T.~ \ell(G)\mathbin\equiv\emptyset \wedge
  M_2+\!\precond{U_2}=M'_2+\!\precond{U'_2}+(M_0-M'_0)+\marking{G-G_t}.
  \]
  Since $\ell(G-G_t)\equiv\emptyset$ we obtain $(M_2,U_2)\Rel (M_2',U'_2)$.
\item Follows directly from Condition~\ref{2bST} and \refdf{split marking}.
\item Follows directly from Condition~\ref{2aST} and \refdf{split marking}.
  \hfill $\Box$
\end{enumerate}
\end{proofNobox}
In \reflem{2ST} a relation is explored between markings $M$ and $M+\marking{H}$
(where $M$ is $M'+\!\precond{U'}+(M_0-M'_0)$ of \reflem{2ST}, $H:=G$, and
$M+\marking{H}$ is $M+\!\precond{U}$ of \reflem{2ST}).
In such a case, we can think of $M$ as an ``original marking'', and of
$M+\marking{H}$ as a modification of this marking by the token replacement
$\marking{H}$. The next lemma provides a method to trace certain places $s$
marked by $M+\marking{H}$ (or transitions $t$ that are enabled under $M+\marking{H}$)
back to places that must have been marked by $M$ before taking into account the
token replacement $\marking{H}$. Such places are called \emph{faithful origins}
of $s$ (or $t$). In tracking the faithful origins of places and transitions, we
assume that the places marked by $M$ are taken from a set $S^+$ and the
transitions in $H$ from a set $T^+$. In \reflem{origin} we furthermore assume that the flow
relation restricted to $S\cup T^+$ is acyclic. We will need this lemma in proving the
correctness of our final method of proving $N \approx^\Delta_{bSTb} N'$.

\begin{definition}{faithful}
Let $N=(S,T,F,M_0,\ell)$ be a Petri net, $T^+\subseteq T$ a set of
transitions and $S^+\subseteq S$ a set of places.
\begin{itemize}
\item
A \emph{path} in $N$ is an alternating sequence $\pi=x_0 x_1 x_2 \cdots x_n \in (S\cup T)^*$ of
places and transitions, such that $F(x_i,x_{i+1})>0$ for $0\mathbin\leq i \mathbin< n$.
The \emph{arc weight} $F(\pi)$ of such a path is the product $\Pi_0^{n-1}F(x_i,x_{i+1})$.
\item
A place $s\in S$ is called \emph{faithful} w.r.t.\ $T^+$ and $S^+$
iff $|\{s\}\cap S^+| + \sum_{t\in T^+}F(t,s)=1$.
\item
A path $x_0 x_1 x_2 \cdots x_n \in (S\cup T)^*$ from $x_0$ to $x_n$ is \emph{faithful}
w.r.t.\ $T^+$ and $S^+$ iff all intermediate nodes $x_i$ for $0\leq i < n$ are either
transitions in $T^+$ or faithful places w.r.t.\ $T^+$ and $S^+$.
\item
For $x\in S\cup T$, the \emph{infinitary multiset} $^*x\in(\nat\cup\{\infty\})^{S^+}$
of \emph{faithful origins} of $x$ is given by\\
$^*x(s)=\sup\{F(\pi)\mid \pi \mbox{ is a faithful path from $s\in S^+$ to $x$}\}$. (So
$^*x(s)=0$ if no such path exists.)
\end{itemize}
\end{definition}
Suppose a marking $M_2$ is reachable from a marking $M_1\in \nat^{S^+}$
by firing transitions from $T^+$ only. Then, if a faithful place $s$
bears a token under $M_2$---i.e.\ $M_2(s)>0$---this token has a unique source:
if $s\in S^+$ it must stem from $M_1$ and otherwise it must be produced by the unique
transition $t\mathbin\in T^+$ with $F(t,s)\mathbin=1$.

In a net without arc weights, $^*x$ is always a set, namely the set of 
places $s$ in $S^+$ from which the flow relation of the net admits a path to $x$ that passes only
through faithful places and transitions from $T^+$ (with the possible exception of $x$ itself).
For nets with arc weights, the underlying set of $^*x$ is
the same, and the multiplicity of $s \in \mbox{}^*x$ is obtained by multiplying all
arc weights on the qualifying path from $s$ to $x$; in case of multiple such paths, we
take the upper bound over all such paths (which could yield the value $\infty$).

\begin{observation}{origin}
Let $(S,T,F,M_0,\ell)$ be a Petri net, $T^+\subseteq T$ a set of
transitions and ${S^+}\subseteq S$ a set of places.
For faithful places $s$ and transitions $t\in T$ we have\vspace{-1.5pt}
$$^*s=\left\{\begin{array}{@{}ll@{}}\{s\} & \mbox{if}~s\in {S^+}\\
                                   ^*t   & \mbox{if}~t\in T^+ \wedge F(t,s)=1
            \end{array}\right.
\qquad\qquad\vspace{-1.5pt}
^*t=\bigcup\{F(s,t)\cdot\mbox{}^*s \mid s\in \precond{t} \wedge s {\rm ~faithful}\}.$$
\end{observation}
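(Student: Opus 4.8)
The plan is to prove both identities by decomposing each faithful path according to its \emph{final arc}, exploiting that in a Petri net every path strictly alternates places and transitions, so the node preceding a place is always a transition and the node preceding a transition is always a place. The only degenerate case to keep in mind is the length-$0$ path $x_0=x$, for which the faithfulness condition in \refdf{faithful} is vacuous (empty index range) and $F(\pi)=1$ (empty product).

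For the identity for ${}^*t$, I would argue that any faithful path $\pi=x_0\cdots x_n$ from a place $s'\inp S^+$ to the transition $t$ has length $n\geq 1$ (as $s'$ and $t$ differ in type) and ends with an arc $s\to t$ for some $s\in\precond{t}$. Since this $s=x_{n-1}$ sits at an intermediate index $i<n$, \refdf{faithful} forces $s$ to be a faithful place. Deleting the final node $t$ yields a path $\pi'$ from $s'$ to $s$, and the faithfulness requirement imposed on $\pi$ over $0\leq i<n$ is exactly the requirement on $\pi'$ over $0\leq i<n-1$ together with the requirement (already discharged) that $s$ be faithful; hence $\pi'$ is a faithful path from $s'$ to $s$ with $F(\pi)=F(\pi')\cdot F(s,t)$. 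Conversely, appending $t$ to any faithful path ending in a faithful $s\in\precond{t}$ gives a faithful path to $t$, since $t$ then occupies an intermediate index and $t\in T$ imposes no new constraint beyond being a transition. Taking suprema over all faithful paths, and using that the supremum of a union equals the supremum of the individual suprema while the nonnegative constant $F(s,t)\geq 1$ factors through, gives ${}^*t(s')=\sup_{s}F(s,t)\cdot{}^*s(s')$ over faithful $s\in\precond{t}$, which is precisely the componentwise $\bigcup$ claimed (empty supremum read as $0$).

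For the identity for a faithful place $s$, I would split on the two mutually exclusive alternatives permitted by $|\{s\}\cap S^+|+\sum_{t\in T^+}F(t,s)=1$. If $s\in S^+$, then no transition of $T^+$ produces into $s$, so a faithful path to $s$ of positive length is impossible (its penultimate node would be a $T^+$-transition producing into $s$); the only faithful path to $s$ is therefore the length-$0$ path from $s$ itself, giving ${}^*s(s)=1$ and ${}^*s(s')=0$ otherwise, i.e.\ ${}^*s=\{s\}$. If instead $s\notin S^+$, then there is a unique $t\in T^+$ with $F(t,s)=1$ and no other producer, so every faithful path to $s$ ends with the arc $t\to s$; peeling this arc sets up a weight-preserving bijection (the factor $F(t,s)=1$) with the faithful paths to $t$, whence ${}^*s={}^*t$.

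The routine part is this alternation-and-last-arc bookkeeping; the one point demanding care, and the main obstacle, is matching the faithfulness condition exactly across the peeling operation at the boundary index. I must verify that the index range $0\leq i<n$ in \refdf{faithful} makes the penultimate node a genuine constraint, so that the removed endpoint's faithfulness is precisely what is gained or lost, and that appending a node lying in $T^+$ (respectively a faithful place) never violates faithfulness---so that the path correspondence is an exact bijection rather than a mere inclusion. Handling the suprema uniformly, including infinite index sets and possible values $\infty$, then completes the argument.
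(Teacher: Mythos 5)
Your proof is correct and supplies exactly the last-arc path decomposition that the paper treats as immediate: the observation is stated there without any proof, and your case analysis (length-$0$ paths, peeling the final arc, the mutually exclusive alternatives for a faithful place) is the intended reasoning. One caveat on your justification of the converse inclusion for ${}^*t$: the appended transition $t$ becomes the endpoint $x_n$, which lies \emph{outside} the constrained range $0\leq i<n$ --- it does not ``occupy an intermediate index'', and if it did, being a mere transition would not suffice, since intermediate transitions must lie in $T^+$ while the identity is claimed for arbitrary $t\in T$; your conclusion nevertheless stands because endpoints are unconstrained, which is the point your closing paragraph correctly isolates.
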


\begin{lemma}{origin}\label{lem-faithful origins}
Let $(S,T,F,M_0,\ell)$ be a Petri net, $T^+\subseteq T$ a set of transitions
such that $F\upharpoonright(S\cup T^+)$ is acyclic, and ${S^+}\subseteq S$ a set of places. 
Let $M\in \nat^{S^+}$ and $H\fin \nat^{T^+}$, such that $M+\marking{H}\in\nat^S$. Then
\begin{enumerate}[(a)]
\item
for any faithful place $s$ w.r.t.\ $T^+$ and ${S^+}$ we have
$(M+\marking{H})(s)\cdot\mbox{}^*s \leq M$;
\item
for any $k\in\nat$, and any transition $t$ with $(M+\marking{H}) [k\cdot\{t\}\rangle$,
we have $k\cdot\mbox{}^*t\leq M$.
\end{enumerate}
\end{lemma}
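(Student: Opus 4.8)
The plan is to derive statement (b) from statement (a) almost immediately, and to concentrate the real work on (a). For (b) I would expand $\mbox{}^*t$ using \refobs{origin} as $\mbox{}^*t=\bigcup\{F(p,t)\cdot\mbox{}^*p \mid p\in\precond{t}\text{ faithful}\}$; since multiset union is componentwise maximum, it suffices to bound each $k\cdot F(p,t)\cdot\mbox{}^*p$ by $M$. The enabledness hypothesis $(M+\marking{H})[k\cdot\{t\}\rangle$ gives $k\cdot F(p,t)\le (M+\marking{H})(p)$ for every preplace $p$, and part (a) applied to the faithful place $p$ gives $(M+\marking{H})(p)\cdot\mbox{}^*p\le M$; chaining these yields $k\cdot F(p,t)\cdot\mbox{}^*p\le M$, and taking the union over faithful preplaces gives $k\cdot\mbox{}^*t\le M$. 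So (b) is routine once (a) is in hand.

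The difficulty in (a) is that $(M+\marking{H})(s)$ measures the tokens \emph{left} on $s$, which may be far smaller than what actually flowed through $s$, so a naive induction along the acyclic order loses the information stored in $M$. I would therefore track the \emph{throughput} $\theta(s):=M(s)+\sum_{u}H(u)F(u,s)$, the total number of tokens ever deposited on $s$ (the sums run over $T^+$, as $H$ is supported there). Two elementary facts hold for every place $s$: first $(M+\marking{H})(s)=\theta(s)-\sum_u H(u)F(s,u)\le\theta(s)$; and second, since $M+\marking{H}\ge 0$ forces $\sum_u H(u)F(s,u)\le\theta(s)$, any fixed $t\in T^+$ may consume at most $H(t)F(s,t)\le\theta(s)$ from $s$. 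The key is then to prove, by induction over the acyclic fragment, the strengthened pair of claims: $(\star)$ $\theta(s)\cdot\mbox{}^*s\le M$ for every faithful place $s$, and $(\star\star)$ $H(t)\cdot\mbox{}^*t\le M$ for every $t\in T^+$. Since $(M+\marking{H})(s)\le\theta(s)$, claim $(\star)$ immediately yields (a).

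For the induction I would use acyclicity of $F\restrictedto(S\cup T^+)$ to induct on the length of faithful paths from $S^+$, proving both claims path-by-path and then passing to the supremum defining the faithful origins. For $(\star)$ with $s\in S^+$ one has $\mbox{}^*s=\{s\}$ and, since no $T^+$-transition deposits on a faithful place of $S^+$, $\theta(s)=M(s)$, so the bound is immediate. For $(\star)$ with $s\notin S^+$, faithfulness gives a \emph{unique} producer $t_s\in T^+$ with $F(t_s,s)=1$, whence $\theta(s)=H(t_s)$ and $\mbox{}^*s=\mbox{}^*{t_s}$, and the claim reduces to $(\star\star)$ for the strictly earlier $t_s$. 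For $(\star\star)$ I expand $\mbox{}^*t$ via \refobs{origin}; for each faithful preplace $p$ of $t$ the consumption bound $H(t)F(p,t)\le\theta(p)$ combines with the inductive instance $(\star)$ for the strictly earlier $p$, namely $\theta(p)\cdot\mbox{}^*p\le M$, to give $H(t)F(p,t)\cdot\mbox{}^*p\le M$, and the union over faithful preplaces closes the step.

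I expect the main obstacle to be exactly the choice of the right inductive quantity: the theorem speaks of the residual marking $M+\marking{H}$, but the induction only closes for the throughput $\theta$, so one must strengthen to $(\star)$ and $(\star\star)$ before inducting and only recover (a) at the very end via $(M+\marking{H})(s)\le\theta(s)$. A secondary technical point is well-foundedness together with the possibility of infinite arc-weight suprema: phrasing the induction on path length, rather than on the ancestor order $\prec$ itself, both guarantees well-foundedness for non-finite nets and shows that $\mbox{}^*s(s')$ must in fact be finite wherever $\theta(s)>0$, since every faithful path $\pi$ from $s'$ to $s$ satisfies $\theta(s)\cdot F(\pi)\le M(s')$.
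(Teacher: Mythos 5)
Your proposal is correct, but it proves the lemma by a genuinely different induction than the paper. The paper runs a mutual induction on $|H|$: for a faithful $s\notin S^+$ with unique producer $t$ it bounds $(M+\marking{H})(s)\leq H(t)$, then forms the restriction $H\upharpoonright U$ to those transitions of $H$ that are $F^+$-ancestors of $t$, verifies (this is the delicate part) that $M+\marking{H\upharpoonright U}$ is again a marking and that it enables $H(t)\cdot\{t\}$, and invokes part (b) for the strictly smaller multiset $H\upharpoonright U$; acyclicity is what guarantees $t\notin U$, so that $|H\upharpoonright U|<|H|$. You instead strengthen the invariant to the throughput $\theta(s)=M(s)+\sum_u H(u)F(u,s)$ and induct on the length of faithful paths, pulling the per-transition consumption bound $H(t)F(s,t)\leq\theta(s)$ directly out of nonnegativity of $M+\marking{H}$ at $s$. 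Both derivations of (b) from (a) coincide with the paper's. What the two approaches buy: the paper's argument works with the original quantities and needs no auxiliary invariant, at the price of the ``replay only the ancestors'' construction and its fireability check; your argument is more local per step, makes the treatment of the supremum over paths (and hence of infinite arc-weight multiplicities and infinite nets) transparent, and—as your own path-length formulation shows—never actually needs the acyclicity hypothesis, since path length is a well-founded measure regardless and the base case is forced to lie in $S^+$ because faithful paths start there. So your proof is, if anything, marginally more general; it is a valid proof of the stated lemma.
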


\begin{proof}
We apply induction on $|H|$.\\[.5ex]
(a).
When $(M+\marking{H})(s)=0$ it trivially follows that $(M+\marking{H})(s)\cdot\mbox{}^*s \leq M$.
So suppose $(M+\marking{H})(s)>0$. Then either $s\in {S^+}$ or there is a
unique $t\in T^+$ with $H(t)>0$ and $F(t,s)=1$.
In the first case, using that $s\in\postcond{u}$ for no $u\in T^+$, we
have $(M+\marking{H})(s)\leq M(s)$, so $(M+\marking{H})(s)\cdot\mbox{}^*s
  \leq M(s)\cdot \{s\} \leq M$.

In the latter case, $(M+\marking{H})(s) \leq M(s)+\sum_{u\in T^+}H(u)\cdot F(u,s) =
H(t)$ and $^*s = \mbox{}^*t$.

Let $U:=\{u\in T^+\mid H(u)>0 \wedge u F^+ t\}$ be the set of
transitions occurring in $H$ from which the flow relation of the net
offers a non-empty path to $t$. As $F\upharpoonright(S\cup T^+)$ is acyclic,
$t\notin U$, so $H\!\upharpoonright\! U < H$.
Let $s'$ be any place with $s'\in\precond{u}$ for some transition
$u\in U$. Then, by construction of $U$, it cannot happen
that $s'\in \postcond{v}$ for some transition $v\notin U$ with $H(v)>0$.
Hence $(M+\marking{H\!\upharpoonright\! U})(s')\geq (M+\marking{H})(s')\geq 0$.\linebreak
Moreover, for any other place $s''$ we have
$\precond{(H\!\upharpoonright\! U)}(s'')=0$ and thus
$(M+\marking{H\!\upharpoonright\! U})(s'')\geq M(s'')\geq 0$.
It follows that $M+\marking{H\!\upharpoonright\! U}\in\nat^S$.

For each $s'''\in \precond{t}$ we have
$\postcond{(H-H\!\upharpoonright\! U)}(s''')=0$ and
$\precond{(H-H\!\upharpoonright\! U)}(s''') \ge H(t)\cdot \precond{t} (s''')$ and therefore
$0 \leq (M+\marking{H})(s''') \leq (M+\marking{H\!\upharpoonright\! U})(s''') - H(t)\cdot \precond{t}(s''')$,
and hence $H(t)\cdot\precond{t}\linebreak[3] \leq M+\marking{H\!\upharpoonright\! U}$.
It follows that $(M+\marking{H\!\upharpoonright\! U})[H(t)\cdot \{t\}\rangle$.
Thus, by induction,
$(M+\marking{H})(s)\cdot\mbox{}^*s \leq H(t)\cdot \mbox{}^*t \leq M$.\vspace{1ex}

\noindent
(b).
Let $(M+\marking{H}) [k\cdot\{t\}\rangle$.
For any faithful $s\in \precond{t}$ we have $(M+\marking{H})(s)\geq k\cdot F(s,t)$,
and thus, using (a),\vspace{-1.5pt}
 $$ k\cdot F(s,t)\cdot\mbox{}^*s
        \leq  (M+\marking{H})(s)\cdot\mbox{}^*s
        \leq M\;.\vspace{-1.5pt}$$
Therefore, by \refobs{origin}, $k\cdot\mbox{}^*t =
\bigcup\{k\cdot F(s,t)\cdot\mbox{}^*s \mid s\in \precond{t} \wedge s {\rm ~faithful}\}\leq M$.
\end{proof}
The following theorem is the main result of this section. It presents a method for proving
$N \approx^\Delta_{bSTb} N'$ for $N$ a net and $N'$ a plain net. Its main advantage w.r.t.\ 
directly using the definition, or w.r.t.\ application of \reflem{1ST} or \ref{lem-2ST}, is
the replacement of requirements on the dynamic behaviour of nets by structural requirements.
Such requirements are typically easier to check.
Replacing the requirement ``$M+\precond{U}\in[M_0\rangle_N$'' in Condition~\ref{lastST} by
``$M+\precond{U}\in\nat^S$'' would have yielded an even more structural version of this
theorem; however, that version turned out not to be strong enough for the verification
task performed in \refsec{correctness}.

\begin{theorem}{3ST}
Let $N=(S,T,F,M_0,\ell)$ be a net and $N'=(S',T',F',M'_0,\ell')$ be a plain
net with $S'\subseteq S$ and $M'_0=M_0\upharpoonright S'$.
Suppose there exist sets $T^+ \subseteq T$ and $T^-\subseteq T$
and a class $\NF\subseteq \Int^T$, such that
\begin{enumerate}
\item $F\upharpoonright(S\cup T^+)$ is acyclic.\label{acyclic+ST}
\item $F\upharpoonright(S\cup T^-)$ is acyclic.\label{acyclic-ST}
\item $\forall t\inp T,~\ell(t)\neq\tau.~ \exists t'\inp T',~\ell(t')=\ell(t).~
       \left(\precond{t'} \leq \mbox{}^*t \wedge
       \exists G\fin \nat^T,~\ell(G)\equiv\emptyset.~ \marking{t'}=\marking{t+G}\right)$.\\
  Here $\mbox{}^*t$ is the multiset of faithful origins of $t$ w.r.t.\ $T^+$ and
  $S'\cup\{s\in S \mid M_0(s)>0\}$.
  \label{matchingST}
\item There exists a function $f:T\rightarrow\nat$ with $f(t)>0$ for all $t\inp T$,
  extended to $\Int^T$ as in \refdf{multiset}, such that
  for each $G\fin \Int^T$ with $\ell(G)\equiv\emptyset$ there is an $H\fin \NF$
  with $\ell(H)\equiv\emptyset$, $\marking{H}=\marking{G}$ and $f(H)=f(G)$.
  \label{normalformST}
\item For every $M'\in\nat^{S'}$, $U'\in\nat^{T'}$ and $U\in\nat^{T}$ with
  $\ell(U)=\ell'(U')$ and $M'+\precond{U'}\in[M'_0\rangle_{N'}$, there is an \plat{$H_{M',U}\fin\nat^{T^+}$}
  with $\ell(H_{M',U})\equiv\emptyset$, such that\label{lastST}
  for each $H\mathbin{\fin} \NF$ with $M:=M'+\precond{U'}+(M_0-M'_0)+\marking{H}-\precond{U}\in\nat^S$
  and $M+\precond{U}\in[M_0\rangle_N$:
  \begin{enumerate}
  \item $M_{M',U}:=M'+\precond{U'}+(M_0-M'_0)+\marking{H_{M',U}}-\precond{U}\in\nat^S$,\label{markingST}
  \item if $M'\goesto[a]$ with $a\in\Act$ then \plat{$M_{M',U}\goesto[a]$},\label{matchST}
    \item $H\leq H_{M',U}$.\label{upperboundST}
    \item if $H(u)<0$ then $u\in T^-$,\label{T-ST}
    \item if $H(u)<0$ and $H(t)>0$ then $\precond{u} \cap \precond{t} = \emptyset$,
          \label{disjoint preplacesST}
    \item if $H(u)<0$ and $(M+\!\precond{U})[t\rangle$ with $\ell(t)\neq\tau$
          then $\precond{u} \cap \precond{t} = \emptyset$,
          \label{disjoint preplaces 2ST}
    \item if $(M+\!\precond{U})[\{t\}\mathord+\{u\}\rangle$ and
          and $t',u'\in T'$ with $\ell'(t')=\ell(t)$ and $\ell'(u')=\ell(u)$, then
          $\precond{t'}\cap\precond{u'}=\emptyset$.\label{concurrent}
  \end{enumerate}
\end{enumerate}
Then $N \approx^\Delta_{bSTb} N'$.
\end{theorem}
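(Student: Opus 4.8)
The plan is to derive the hypotheses of \reflem{2ST} and invoke it; since \reflem{2ST} already reduces to \reflem{1ST}, \reflem{plain branching bisimilarity} and \refpr{split}, this yields $N \approx^\Delta_{bSTb} N'$. Clause~\ref{clause1ST} of \reflem{2ST} is immediate: it is the conclusion of Condition~\ref{matchingST} with the conjunct $\precond{t'}\leq\mbox{}^*t$ discarded. It remains to establish Clause~\ref{clause2ST}, i.e.\ \ref{2aST}--\ref{2cST}, for an arbitrary $G\fin\Int^T$ with $\ell(G)\equiv\emptyset$ and $M',U',U$ as specified, where $M=M'+\precond{U'}+(M_0-M'_0)+\marking{G}-\precond{U}$. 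First I would use Condition~\ref{normalformST} to replace $G$ by a normal form $H\fin\NF$ with $\ell(H)\equiv\emptyset$, $\marking{H}=\marking{G}$ and $f(H)=f(G)$. As $M$ and $M+\precond{U}$ depend on $G$ only through $\marking{G}$, they are unchanged, so it suffices to prove \ref{2aST}--\ref{2cST} with $H$ in the role of $G$; crucially, $H\fin\NF$ now lets me apply Condition~\ref{lastST}, obtaining the witness $H_{M',U}\fin\nat^{T^+}$, the marking $M_{M',U}$, and properties \ref{markingST}--\ref{concurrent}. Throughout I write $\hat M:=M'+\precond{U'}+(M_0-M'_0)$; since $M'_0=M_0\restrictedto S'$ one checks $\hat M\in\nat^{S'\cup\{s\mid M_0(s)>0\}}$, $\hat M\restrictedto S'=M'+\precond{U'}\in[M'_0\rangle_{N'}$, and $\hat M+\marking{H}=M+\precond{U}$.

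For divergence-freeness \ref{2aST} I would argue that every $\tau$-step out of $M$ fires a transition $t$ with $\ell(t)=\tau$ and, in the relation induced on \reflem{1ST}, replaces $H$ by a normal form of $H+\{t\}$, so the integer $f(H)$ strictly increases (by $f(t)\geq 1$) at each step. By Condition~\ref{upperboundST} every such normal form satisfies $H\leq H_{M',U}$, and since $f(t)>0$ for all $t$ this monotone bound gives $f(H)\leq f(H_{M',U})$, a fixed finite value; an infinite $\tau$-sequence would force $f(H)$ to grow without bound, a contradiction.

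For \ref{2cST} suppose $M\goesto[a]$, say $M[t\rangle$ with $\ell(t)=a\inp\Act$. Condition~\ref{matchingST} supplies $t'\inp T'$ with $\ell'(t')=a$ and $\precond{t'}\leq\mbox{}^*t$, the faithful origins taken w.r.t.\ $T^+$ and $S^+:=S'\cup\{s\mid M_0(s)>0\}$. Splitting $H=H_+-H_-$ into non-negative and negative parts, Condition~\ref{T-ST} places $H_-$ on $T^-$, while $H\leq H_{M',U}\fin\nat^{T^+}$ forces $H_+\fin\nat^{T^+}$. Condition~\ref{disjoint preplaces 2ST} shows the undone transitions carry preplaces disjoint from $\precond{t}$, so re-inserting them leaves $t$ enabled at $\hat M+\marking{H_+}$; applying \reflem{origin}(b) yields $\mbox{}^*t\leq\hat M$, whence $\precond{t'}\leq\mbox{}^*t\restrictedto S'\leq\hat M\restrictedto S'=M'+\precond{U'}$. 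Finally I would use Condition~\ref{concurrent}: for each transition occurring in $U'$, pairing it (via $\ell(U)=\ell'(U')$ and plainness of $N'$) with a matching transition $u\in U$, for which $(M+\precond{U})[\{t\}+\{u\}\rangle$ holds, gives $\precond{t'}\cap\precond{U'}=\emptyset$, so the bound $\precond{t'}\leq M'+\precond{U'}$ sharpens to $\precond{t'}\leq M'$, i.e.\ $M'\goesto[a]$.

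For \ref{2bST}, from $M'\goesto[a]$ Condition~\ref{matchST} gives $M_{M',U}\goesto[a]$. Since $M=M_{M',U}+\marking{H-H_{M',U}}$ with $H\leq H_{M',U}$, I would show that either the preparation deficit $H_{M',U}-H$ vanishes, so $M=M_{M',U}\goesto[a]$, or a preparatory $\tau$-transition is enabled at $M$, giving $M\goesto[\tau]$: taking a source transition of the deficit w.r.t.\ the acyclic orders on $T^+$ and $T^-$ (Conditions~\ref{acyclic+ST},~\ref{acyclic-ST}) and using the disjointness Conditions~\ref{disjoint preplacesST},~\ref{disjoint preplaces 2ST} to see its preplaces are already marked, I obtain an enabled $\tau$. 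The main obstacle throughout is exactly this token-tracing: relating the actual, partially-prepared and partly-undone marking $M$ to the idealized markings $\hat M$ and $M_{M',U}$ through the $T^+$/$T^-$ decomposition of $H$, and in particular verifying that the intermediate markings (such as $\hat M+\marking{H_+}$) lie in $\nat^S$ so that \reflem{origin} applies. The acyclicity and disjoint-preplace conditions are engineered precisely to license these transfers, and confirming that they always suffice is the delicate, case-heavy core of the argument.
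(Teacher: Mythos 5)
Your overall strategy is the paper's: reduce to \reflem{2ST} (Clause~\ref{clause1ST} of that lemma being the second conjunct of Condition~\ref{matchingST}), normalise $G$ into some $H\fin\NF$ via Condition~\ref{normalformST}, rule out divergence by the strictly increasing but bounded quantity $f(H)$ (the bound $f(H)\le f(H_{M',U})$ coming from Conditions~\ref{upperboundST} and~\ref{normalformST}), and, for Clause~\ref{2bST}, either conclude $M=M_{M',U}\goesto[a]$ or exhibit an enabled $\tau$-transition by taking a minimal element of $H^-$ (resp.\ of the support of $H_{M',U}-H$) with respect to the partial order induced by the acyclic $F\restrictedto(S\cup T^-)$ (resp.\ $F\restrictedto(S\cup T^+)$) and checking via the token-count equation and Conditions~\ref{disjoint preplacesST}, \ref{disjoint preplaces 2ST} that its preplaces are marked. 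Parts (a) and (b) of your sketch are essentially the proof in the paper.

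The step that does not go through as written is your route to Clause~\ref{2cST}. Writing $\hat M:=M'+\precond{U'}+(M_0-M'_0)$ and $H=H_+-H_-$ as you do, you propose to ``re-insert'' all of $H_-$ at once and apply \reflem{origin}(b) at $\hat M+\marking{H_+}$. But \reflem{origin} requires its modified marking to lie in $\nat^S$, and $\hat M+\marking{H_+}=(M+\precond{U})+\marking{H_-}$ need not: Condition~\ref{disjoint preplaces 2ST} only keeps $\precond{H_-}$ away from $\precond{t}$, whereas at the \emph{other} preplaces of the transitions counted negatively in $H$ the summand $\marking{H_-}$ is negative and $M+\precond{U}$ may hold too few tokens; nothing in Conditions~\ref{acyclic+ST}--\ref{lastST} gives non-negativity (let alone reachability) of this marking, and reachability is also needed to keep invoking the clauses of Condition~\ref{lastST}. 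The paper instead performs an induction on $f(H_{M',U}-H)$: while $H^-\neq\emptyset$ it shows, by the same token count as in (b), that some $<^-$-minimal $u\in H^-$ is genuinely \emph{enabled}, fires it as an actual $\tau$-step (so every intermediate marking stays in $[M_0\rangle_N$ and, by Condition~\ref{disjoint preplaces 2ST}, still enables $t$), renormalises, and only once $H^-=\emptyset$, so that $H\fin\nat^{T^+}$, applies \reflem{origin}(b) to get $^*t\le\hat M$ and hence, by Condition~\ref{matchingST} and $M'_0=M_0\restrictedto S'$, some $t'$ with $\precond{t'}\le M'+\precond{U'}$. Your final sharpening of $\precond{t'}\le M'+\precond{U'}$ to $M'[t'\rangle$ via Condition~\ref{concurrent} is exactly the paper's. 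So the skeleton is right, but the one-shot use of \reflem{origin} must be replaced by this stepwise firing argument.
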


\begin{proofNobox}
It suffices to show that Condition~\ref{clause2ST} of \reflem{2ST} holds
(for Condition~\ref{clause1ST} of \reflem{2ST} is part of Condition~\ref{matchingST} above).
So let $G\fin \Int^T$ with $\ell(G)\equiv\emptyset$, ~\plat{$M'\inp\nat^{S'}$},
\plat{$U'\inp\nat^{T'}\!$} and $U\inp\nat^T\!$ with $\ell'(U')\mathbin=\ell(U)$,
      ~$M'\mathord+\!\precond{U'}\in [M'_0\rangle_{N'}$,
      ~$M:=M'\mathord+\!\precond{U'}\mathord+(M_0\mathord-M'_0)\mathord+\marking{G}
      \mathord-\!\precond{U}\inp\nat^S$ and $M+\precond{U}\in[M_0\rangle_N$.
\begin{enumerate}[(a)]
\item
Suppose $M\goesto[\tau]M_1\goesto[\tau]M_2\goesto[\tau] \cdots$.
Then there are transitions $t_i\in T$ with $\ell(t_i)=\tau$, for all $i\mathbin\geq 1$, such that
$M[t_1\rangle M_1[t_2\rangle M_2[t_3\rangle \cdots$.
As also $(M+\!\precond{U})[t_1\rangle (M_1+\!\precond{U})[t_2\rangle (M_2+\!\precond{U})[t_3\rangle \cdots$,
it follows that $(M_i+\!\precond{U})\inp[M_0\rangle_N$ for all $i\geq 1$.
Let $G_0:=G$ and for all $i\geq 1$ let $G_{i+1}:=G_i+\{t_i\}$.
Then $\ell(G_i)\equiv\emptyset$ and $M_i=M'+\!\precond{U'}+(M_0-M'_0)+\marking{G_i}-\!\precond{U}$.
Moreover, $f(G_{i+1})=f(G_i)+f(t_i) > f(G_i)$.
For all $i\geq 1$, using Condition~\ref{normalformST},
let $H_i\mathbin{\fin}\NF$ be so that $\marking{H_i}\mathbin=\marking{G_i}$ and $f(H_i)\mathbin=f(G_i)$.
Then $M_i=M'+\!\precond{U'}+(M_0-M'_0)+\marking{H_i}-\!\precond{U}$ and $f(H_0)<f(H_1)<f(H_2)<\cdots$.
However, from Condition~\ref{upperboundST} we get $f(H_i)\leq f(H_{M'})$ for all $i\geq 1$.
The sequence $M\goesto[\tau]M_1\goesto[\tau]M_2\goesto[\tau] \cdots$ therefore must be finite.
\item[(b)]
Now suppose $M' \goesto[a]$ with $a\in\Act$.
By Condition~\ref{normalformST} above there exists an $H\fin\NF$
such that $\ell(H)\equiv\emptyset$ and $\marking{H}=\marking{G}$, and hence
$M=M'+\!\precond{U'}+(M_0-M'_0)+\marking{H}-\!\precond{U}$.
Let $H^-:=\{u\in T\mid H(u)<0\}$.
\begin{itemize}
\item First suppose $H^-\neq\emptyset$.
By Condition~\ref{T-ST}, $H^-\subseteq T^-$.
  By Condition~\ref{acyclic-ST}, $<^-:=(F\upharpoonright(S\cup T^-))^+$
  is a partial order on $S\cup T^-$, and hence on $H^-$.
  Let $u$ be a minimal transition in $H^-$ w.r.t.\ $<^-$.
  By definition, for all $s\in S$,
  \vspace{-3ex}

  {\small
  \begin{equation}\label{token countST}
  M(s)=M'(s)+\!\precond{U'}(s)+(M_0-M'_0)(s)+\!\sum_{t\in T}H(t)\cdot F(t,s)+\!\sum_{t\in T}\!-H(t)\cdot F(s,t)
  +\!\sum_{t\in U}\!-U(t)\cdot F(t,s).\vspace{-2ex}
  \end{equation}}%
  As $M'_0=M_0\upharpoonright S'$, we have $M'_0\leq M_0$.
  Hence the first three summands in this equation are always positive (or $0$).
  Now assume $s\in\precond{u}$. Since $u$ is minimal w.r.t.\ $<^-$,
  there is no $t\in T$ with $H(t)<0$ and $F(t,s)\neq 0$.
  Hence also all summands $H(t)\cdot F(t,s)$ are positive.
  By Condition~\ref{disjoint preplacesST}, there is no
  $t\in T$ with $H(t)>0$ and $F(s,t)\neq 0$,
  so all summands $-H(t)\cdot F(s,t)$ are positive as well.
  By Condition~\ref{disjoint preplaces 2ST}, there is no
  $t\in T$ with $U(t)>0$ and $F(s,t)\neq 0$,
  for this would imply that $\ell(t)\neq\tau$ and $(M+\!\precond{U})[t\rangle$, so
  no summands in (\ref{token countST}) are negative.
  Thus $0\leq -H(u)\cdot F(s,u) \leq M(s)$.
  Since $H(u)\leq -1$, this implies $M(s)\geq F(s,u)$.
  Hence $u$ is enabled in $M$. As $\ell(u)=\tau$, we have $M\goesto[\tau]$.
\item Next suppose $H^-\!=\emptyset$ but $H\neq H_{M',U}$.
  Let $H^\smile:=\{u\in T\mid H_{M',U}(u)-H(u)>0\}$.
  Then $H^\smile\neq\emptyset$ by Condition~\ref{upperboundST}.
  Since \plat{$H_{M',U}\fin\nat^{T^+}\!\!$}, $H^\smile\subseteq T^+$.
  By Condition~\ref{acyclic+ST}, $<^+:=(F\upharpoonright(S\cup T^+))^+$
  is a partial order on $S\cup T^+$, and hence on $H^\smile$.
  Let $u$ be a minimal transition in $H^\smile$ w.r.t.\ $<^+$.
  We have $M=M'+\!\precond{U'}+(M_0-M'_0)+\marking{H_{M',U}+(H-H_{M',U})}-\!\precond{U}=M_{M',U}+\marking{H-H_{M',U}}$.
  Hence, for all $s\in S$,
  \begin{equation}\label{token count 2ST}
  M(s)=M_{M',U}(s)+\sum_{t\in T}(H-H_{M',U})(t)\cdot F(t,s)+\sum_{t\in T}-(H-H_{M',U})(t)\cdot F(s,t)\;.
  \end{equation}
  By Condition~\ref{markingST}, $M_{M',U}\in\nat^S$.
  By Condition~\ref{upperboundST}, $H- H_{M',U}\leq 0$.
  For $s\in \precond{u}$ there is moreover no $t\in H^\smile$ with $s\in\postcond{t}$,
  so no $t\in T$ with $(H-H_{M',U})(t)<0$ and $F(t,s)\neq 0$.
  Hence no summands in (\ref{token count 2ST}) are negative.
  It follows that $0\leq -(H\mathord-M_{M',U})(u)\cdot F(s,t) \leq M(s)$.
  Since $(H\mathord-H_{M',U})(u)\leq -1$, this implies $M(s)\geq F(s,u)$.
  Hence $u$ is enabled in $M$. As $\ell(u)=\tau$, we have $M\goesto[\tau]$.
\item Finally suppose $H= H_{M',U}$. Then $M=M_{M',U}$ and
  $M\goesto[a]$ follows by Condition~\ref{matchST}.
\end{itemize}
\item[(c)]
  Next suppose $M\goesto[a]$ with $a\in\Act$.
  Then there is a $t\in T$ with $\ell(t)=a\neq\tau$ and $M[t\rangle$.
  So $(M+\!\precond{U})[t\rangle$.
  We will first show that $(M'+\!\precond{U'})\goesto[a]$.
  By Condition~\ref{normalformST} there exists an $H_0\fin\NF\subseteq \nat^T$
  such that $\ell(H_0)\equiv\emptyset$ and $\marking{H_0}=\marking{G}$, and hence
  $M+\!\precond{U}=M'+\!\precond{U'}+(M_0-M'_0)+\marking{H_0}\in[M_0\rangle_N$.
  For our first step, it suffices to show that whenever $H \mathbin{\fin} \NF$ with
  $M_H := M'+\!\precond{U'}+(M_0-M'_0)+\marking{H}\inp[M_0\rangle$\linebreak
  and $M_H [t\rangle$, then $(M'+\!\precond{U'})\goesto[a]$.
  We show this by induction on $f(H_{M',U}-H)$, observing that
  $f(H_{M',U}-H)\in\nat$ by Conditions~\ref{upperboundST} (with empty $U$) and~\ref{normalformST}.

  We consider two cases, depending on the emptiness of $H^-:=\{u\in T\mid H(u)<0\}$.

  First assume $H^-\!\mathbin=\emptyset$. Then $H\mathbin{\fin}\nat^T\!$.
  By Condition~\ref{upperboundST} (with empty $U$) we even have $H\mathbin{\fin}\nat^{T^+}\!\!$. 
  Let $\mbox{}^*t$ denote the multiset of faithful origins of $t$ w.r.t.\ $T^+$ and
  $S^+ := S'\cup\{s\in S \mid M_0(s)>0\}$.
  By \reflem{faithful origins}(b), taking $k\mathbin=1$, substituting $M'+\!\precond{U'}+(M_0-M'_0)$ for
  the ``$M$'' of that lemma, and using Condition~\ref{acyclic+ST} of \refthm{3ST},
  $^*t \leq M'+\!\precond{U'}+(M_0-M'_0)$. So by Condition~\ref{matchingST} of \refthm{3ST}
  there is a $t'\in T'$ with $\ell(t')=\ell(t)$ and $\precond{t'} \leq M'+\!\precond{U'}+(M_0-M'_0)$.
  Since $\precond{t'} \in \nat^{S'}$ and $M'_0=M_0\!\upharpoonright\! S'$, this implies
  $\precond{t'} \leq M'+\!\precond{U'}$.
  It follows that $(M'+\!\precond{U'})[t'\rangle_{N'}$ and hence $(M'+\!\precond{U'})\goesto[a]$.

  Now assume $H^- \neq \emptyset$.
  By the same proof as for (b) above, case $H^- \neq \emptyset$,
  there is a transition $u\in H^-$ that is enabled in $M_H$.
  So $M_H[u\rangle M_1$ for some $M_1\in[M_0\rangle_N$, and $M_1=M'+\!\precond{U'}+(M_0-M'_0)+\marking{H+u}$.
  By Condition~\ref{disjoint preplaces 2ST} of \refthm{3ST} (still with empty $U$),
  $\precond{u}\cap\precond{t}=\emptyset$, and thus $M_1[t\rangle$.
  By Condition~\ref{normalformST} of \refthm{3ST} there exists an $H_1\mathbin{\fin}\NF$
  such that $\ell(H_1)\mathbin{\equiv}\emptyset$, $\marking{H_1}\mathbin=\marking{H+u}$, and
  $f(H_1)\mathbin=f(H+u)\mathbin>f(H)$. Thus $M_1=M_{H_1}$ and $f(H_{M',U}-H_1)<f(H_{M',U}-H)$.
  By induction we obtain $(M'+\!\precond{U'})\goesto[a]$.

  By the above reasoning, there is a $t'\in T'$ such that
  $\ell'(t')=\ell(t)$ and $(M'+\!\precond{U'})[t'\rangle$.
  Now take any $u'\in U'$. Then there must be an $u\in U$ with
  $\ell'(u')=\ell(u)$. Since $M[t\rangle$, we have $(M+\!\precond{U})[\{t\}\mathord+\{u\}\rangle$ and
  by Condition~\ref{concurrent} we obtain $\precond{t'}\cap\precond{u'}=\emptyset$.
  It follows that $M'[t'\rangle$, and hence $M'\goesto[a]$.
  \hfill $\Box$
\end{enumerate}
\end{proofNobox}

\subsection*{Digression: Interleaving semantics}

Above, a method is presented for establishing the equivalence of two Petri nets, one of
which known to be \hyperlink{plain}{plain}, up to branching ST-bisimilarity with explicit divergence.
Here, we simplify this result into a method for establishing the equivalence of the two
nets up interleaving branching bisimilarity with explicit divergence.
This result is not applied in the current paper.

\begin{lemma}{1}
Let $N=(S,T,F,M_0,\ell)$ and $N'=(S',T',F',M'_0,\ell')$ be two nets, $N'$ being plain.
Suppose there is a relation $\Rel  \subseteq \nat^{S}\times\nat^{S'}$ such that
\begin{enumerate}[~~~(a)\,]
\item $M_0\Rel M'_0$,
\item if $M_1\Rel M_1'$ and
  $M_1\goesto[\tau]M_2$ then $M_2\Rel M_1'$,
\item if $M_1\Rel M_1'$ and
  $M_1\goesto[a]M_2$ for some $a\in\Act$ then
  $\exists M'_2.~M'_1\goesto[a]M'_2 \wedge M_2\Rel  M'_2$,
\item if $M_1\Rel M_1'$ and
  $M'_1\goesto[a]$ for some $a\in\Act$ then either
  $\mathord{M_1 \goesto[a]}$ or $\mathord{M_1 \goesto[\tau]}$
\item and there is no infinite sequence $M\goesto[\tau] M_1\goesto[\tau] M_2\goesto[\tau] \cdots$
  with $M\Rel M'$ for some $M'$.
\end{enumerate}
Then $N$ and $N'$ are interleaving branching bisimilar with explicit divergence. 
\end{lemma}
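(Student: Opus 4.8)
The plan is to proceed exactly as in the proof of \reflem{1ST}, but more directly, since no passage between different net semantics is required at the end. I would instantiate \reflem{plain branching bisimilarity} by taking $(\st_1,\tr_1,\inist_1)$ and $(\st_2,\tr_2,\inist_2)$ to be the \emph{interleaving} LTSs associated to $N$ and $N'$, respectively. Thus the states are the markings of the two nets, $\inist_1:=M_0$ and $\inist_2:=M'_0$, the alphabet is $\act:=\Act$, and the transition relations are the interleaving transition relations, so that $M\goesto[a]M'$ means $\exists t\inp T.~\ell(t)=a\wedge M[t\rangle M'$ (and similarly for $N'$).

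The first thing to check is the hypothesis of \reflem{plain branching bisimilarity} that the second LTS be deterministic: this holds because $N'$ is plain and, as observed just after \refdf{deterministic}, the interleaving LTS associated to a plain net is deterministic. With this in place, Conditions (a)--(e) of the present lemma are, when read with the $\mm$'s ranging over markings and $\act=\Act$, literally Conditions (a)--(e) of \reflem{plain branching bisimilarity}; no translation is needed, since the relevant states and transitions coincide. Hence \reflem{plain branching bisimilarity} applies verbatim and yields that $\Rel$ is a branching bisimulation and that the two interleaving LTSs are branching bisimilar with explicit divergence.

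It then remains only to observe that interleaving branching bisimilarity with explicit divergence of $N$ and $N'$ is, by definition, nothing but branching bisimilarity with explicit divergence (\refdf{branching LTS} together with its divergence clause) of the interleaving LTSs associated to $N$ and $N'$, so the conclusion is immediate. The main point---indeed the only real difference from \reflem{1ST}---is that here no analogue of the concluding appeal to \refpr{split} is needed: in \reflem{1ST} one worked with the split LTSs and then converted branching split bisimilarity into branching ST-bisimilarity, whereas here the interleaving LTS is already the object of interest. Accordingly there is no genuine obstacle beyond verifying determinism of the target LTS and confirming that the five conditions match.
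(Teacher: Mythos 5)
Your proof is correct and follows exactly the paper's own argument: instantiate \reflem{plain branching bisimilarity} with the interleaving LTSs of $N$ and $N'$, using that the LTS of a plain net is deterministic. The additional remark that no analogue of the appeal to \refpr{split} is needed here is accurate but not essential.
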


\begin{proof}
This follows directly from \reflem{plain branching bisimilarity}
by taking $(\st_1,\tr_1,\inist_1)$ and $(\st_2,\tr_2,\inist_2)$ to be
the interleaving LTSs associated to $N$ and $N'$ respectively.
Here we use that the LTS associated to a plain net is deterministic.
\end{proof}

\begin{lemma}{2}
Let $N=(S,T,F,M_0,\ell)$ be a net and $N'=(S',T',F',M'_0,\ell')$ be a plain
net with $S'\subseteq S$ and $M'_0=M_0\upharpoonright S'$.
Suppose:
\begin{enumerate}
\item $\forall t\inp T,~\ell(t)\neq\tau.~ \exists t'\inp T',~\ell(t')=\ell(t).~
       \exists G\fin \nat^T,~\ell(G)\equiv\emptyset.~ \marking{t'}=\marking{t+G}$.
      \label{clause1}
\item For any $G\mathbin{\fin} \Int^T$ with $\ell(G)\mathbin\equiv\emptyset$, $M'\inp[M'_0\rangle_{N'}$
      and $M:=M'\mathord+(M_0\mathord-M'_0)\mathord+\marking{G}\inp[M_0\rangle_N$, it holds that:\label{clause2}
\begin{enumerate}
\item there is no infinite sequence $M\goesto[\tau] M_1\goesto[\tau] M_2\goesto[\tau] \cdots$,\label{2a}
\item if $M' \goesto[a]$ with $a\in \Act$
  then $\mathord{M \goesto[a]}$ or $\mathord{M \goesto[\tau]}$\label{2b}
\item and if $M \goesto[a]$ with $a\in \Act$ then $M' \goesto[a]$.\label{2c}
\end{enumerate}
\end{enumerate}
Then $N$ and $N'$ are interleaving branching bisimilar with explicit divergence.
\end{lemma}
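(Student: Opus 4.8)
The plan is to mirror the proof of \reflem{2ST}, simply erasing the bookkeeping of currently-firing transitions, since interleaving semantics acts directly on markings. Concretely, I would define a relation $\Rel \subseteq \nat^S\times\nat^{S'}$ by
$$M \Rel M' \;:\Leftrightarrow\; M'\inp[M'_0\rangle_{N'} \;\wedge\; \exists G\fin\Int^T.~ \ell(G)\equiv\emptyset \;\wedge\; M=M'+(M_0-M'_0)+\marking{G}\in[M_0\rangle_N$$
and then verify that $\Rel$ satisfies Conditions (a)--(e) of \reflem{1}; the conclusion is immediate from that lemma, using that the interleaving LTS of the plain net $N'$ is deterministic.

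Conditions (a), (b), (d) and (e) are routine. For (a) take $G=\emptyset$. For (b), a silent step $M_1[t\rangle M_2$ with $\ell(t)=\tau$ is absorbed into the witness by replacing $G$ with $G+\{t\}$: then $\marking{G+t}=\marking{G}+\marking{t}$, $\ell(G+t)\equiv\emptyset$, and $M_2\in[M_0\rangle_N$ by closure of reachability, so $M_2\Rel M'_1$. Conditions (d) and (e) follow \emph{verbatim} from the hypotheses \ref{2b} and \ref{2a}, once one observes that any pair $M_1\Rel M'_1$ is precisely an instance of the situation quantified over in Condition~\ref{clause2} (take $M:=M_1$, $M':=M'_1$, and the witnessing $G$).

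The substance is Condition (c). Here $M_1\Rel M'_1$ and $M_1[t\rangle M_2$ with $\ell(t)=a\in\Act$, so $M_1\goesto[a]$; Condition~\ref{2c} then supplies $M'_1\goesto[a]$, say $M'_1[t'\rangle M'_2$ with $\ell'(t')=a$. It remains to re-establish the witness equation for $M_2\Rel M'_2$. From $M_2=M_1+\marking{t}$ and $M'_1=M'_2-\marking{t'}$, substituting the witness for $M_1$, one gets $M_2=M'_2+(M_0-M'_0)+\marking{G}+\marking{t}-\marking{t'}$. To cancel the stray $\marking{t}-\marking{t'}$ I would invoke Condition~\ref{clause1}, which yields a transition $t''\inp T'$ with $\ell(t'')=a$ and a multiset $G_t\fin\nat^T$ with $\ell(G_t)\equiv\emptyset$ and $\marking{t''}=\marking{t+G_t}$. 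Since $N'$ is plain it has a unique $a$-labelled transition, so $t''=t'$, whence $\marking{t}-\marking{t'}=-\marking{G_t}$ and $M_2=M'_2+(M_0-M'_0)+\marking{G-G_t}$. As $\ell(G-G_t)\equiv\emptyset$, $M_2\in[M_0\rangle_N$ and $M'_2\in[M'_0\rangle_{N'}$, we conclude $M_2\Rel M'_2$.

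The only real obstacle is this use of plainness to identify $t''$ with $t'$ in Condition (c): it is what lets the token-replacement correction $G_t$ from \ref{clause1} absorb the difference of firing effects, and it is the step that genuinely relies on $N'$ being plain (for an arbitrary $N'$ with several $a$-labelled transitions carrying different token replacements the substitution would fail). Everything else is a transcription of the \reflem{2ST} argument with the currently-firing components $U,U'$ and the phase splitting $a^+/a^-$ deleted.
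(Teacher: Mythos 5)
Your proposal is correct and follows essentially the same route as the paper: the same relation $\Rel$, the same absorption of silent steps into $G$, and the same use of Condition~\ref{clause1} together with plainness of $N'$ to identify the matching transition $t'$ and rewrite the witness as $G-G_t$ in clause (c). No gaps.
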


\begin{proofNobox}
Define $\Rel \subseteq \nat^{S}\times \nat^{S'}$ by
$M \Rel  M' :\Leftrightarrow M'\inp[M'_0\rangle_{N'} \wedge \exists
G\mathbin{\fin}\Int^T.~M=M'\mathord+(M_0\mathord-M'_0)\mathord+\marking{G}\inp[M_0\rangle_N\linebreak[3]
\wedge \ell(G)\equiv\emptyset$.
It suffices to show that $\Rel $ satisfies Conditions (a)--(e) of \reflem{1}.
\begin{enumerate}[~~~(a)\,]
\item Take $G=\emptyset$.
\item Suppose $M_1\Rel M_1'$ and $M_1\goesto[\tau]M_2$.
  Then $\exists G\fin\Int^T\!\!.~M_1=M'_1+(M_0-M'_0)+\marking{G} \wedge \ell(G)\equiv\emptyset$
  and $\exists t\mathbin\in T.~\ell(t)=\tau \wedge M_2=M_1+\marking{t}=M'_1+(M_0-M'_0)+\marking{G+t}$.
  Moreover, $M_1\in[M_0\rangle_N$ and hence $M_2\in[M_0\rangle_N$.
  Furthermore, $M_1'\in[M'_0\rangle_{N'}$ and $\ell(G+t)\equiv\emptyset$, so $M_2\Rel M'_1$.
\item Suppose $M_1\Rel M_1'$ and $M_1\goesto[a]M_2$.
  Then $\exists G\fin\Int^T\!\!.~M_1=M'_1+(M_0-M'_0)+\marking{G} \wedge \ell(G)\equiv\emptyset$
  and $\exists t\mathbin\in T.~\ell(t)=a\neq\tau \wedge M_2=M_1+\marking{t}=M'_1+(M_0-M'_0)+\marking{G+t}$.
  Moreover, $M_1\in[M_0\rangle_N$ and hence $M_2\in[M_0\rangle_N$.
  Furthermore, $M_1'\in[M'_0\rangle_{N'}$.
  By Condition~\ref{clause1} of \reflem{2}, $\exists t'\inp T',~\ell(t')\mathbin=\ell(t).\linebreak[3]\
       \exists G_t\mathbin{\fin} \nat^T,~\ell(G_t)\equiv\emptyset.~ \marking{t}=\marking{t'-G_t}$.
  Substitution of $\marking{t'-G_t}$ for $t$ yields $M_2=M'_1+\marking{t'}+(M_0\mathord-M'_0)+\marking{G-G_t}$.
  By Condition~\ref{2c}, $M'_1\goesto[a]$, so $M'_1\goesto[a] M'_2$ for some $M'_2\in[M'_0\rangle_{N'}$.
  As $t'$ is the only transition in $T'$ with $\ell'(t')=a$, we must have $M'_1[t'\rangle M'_2$.
  So $M'_1+\marking{t'}=M'_2$. Since $\ell(G-G_t)\equiv\emptyset$ it follows that $M_2\Rel M'_2$.
\item Follows directly from Condition~\ref{2b}.
\item Follows directly from Condition~\ref{2a}.
  \hfill $\Box$
\end{enumerate}
\end{proofNobox}
The above is a variant of this \reflem{2ST} that requires Condition~\ref{clause2ST} only
for $U=U'=\emptyset$, and allows to conclude that $N$ and $N'$ are interleaving branching
bisimilar (instead of branching ST-bisimilar) with explicit divergence.
Likewise, the below is a variant of \refthm{3ST} that requires Condition~\ref{lastST} only
for $U=U'=\emptyset$, and misses Condition~\ref{concurrent}.

\begin{theorem}{3}
Let $N=(S,T,F,M_0,\ell)$ be a net and $N'=(S',T',F',M'_0,\ell')$ be a plain
net with $S'\subseteq S$ and $M'_0=M_0\upharpoonright S'$.
Suppose there exist sets $T^+ \subseteq T$ and $T^-\subseteq T$
and a class $\NF\subseteq \Int^T$, such that
\begin{enumerate}
\item[1--4.] Conditions~\ref{acyclic+ST}--\ref{normalformST} from \refthm{3ST} hold, and
\item[5.] For every reachable marking $M'\in [M'_0\rangle_{N'}$ there is an $H_{M'}\fin\nat^{T^+}$
  with $\ell(H_{M'})\equiv\emptyset$, such that for each $H\fin \NF$ with
  $M:=M'+(M_0-M'_0)+\marking{H}\in[M_0\rangle_N$ one has:
  \begin{enumerate}
  \item $M_{M'}:=M'+(M_0-M'_0)+\marking{H_{M'}}\in\nat^S$,\label{marking}
  \item if $M'\goesto[a]$ with $a\in\Act$ then $M_{M'}\goesto[a]$,\label{match}
    \item $H\leq H_{M'}$,\label{upperbound}
    \item if $H(u)<0$ then $u\in T^-$,\label{T-}
    \item if $H(u)<0$ and $H(t)>0$ then $\precond{u} \cap \precond{t} = \emptyset$,
          \label{disjoint preplaces}
    \item if $H(u)<0$ and $M[t\rangle$ with $\ell(t)\neq\tau$ then $\precond{u} \cap \precond{t} = \emptyset$.
          \label{disjoint preplaces 2}
  \end{enumerate}
\end{enumerate}
Then $N$ and $N'$ are interleaving branching bisimilar with explicit divergence.
\end{theorem}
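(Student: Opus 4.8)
The plan is to deduce the statement from \reflem{2}, following the proof of \refthm{3ST} almost verbatim but specialised to the case $U=U'=\emptyset$ and invoking \reflem{2} in place of \reflem{2ST}. Since Condition~\ref{clause1} of \reflem{2} is exactly the matching requirement already contained in Condition~\ref{matchingST}, it suffices to establish Condition~\ref{clause2}. So I would fix $G\fin\Int^T$ with $\ell(G)\equiv\emptyset$, a reachable $M'\in[M'_0\rangle_{N'}$, and $M:=M'+(M_0-M'_0)+\marking{G}\in[M_0\rangle_N$, and verify \ref{2a}--\ref{2c}. For \ref{2a}, an infinite chain $M\goesto[\tau]M_1\goesto[\tau]\cdots$ fired by silent $t_i$ yields markings $M_i=M'+(M_0-M'_0)+\marking{G_i}$ with $G_i:=G+\{t_1,\dots,t_{i-1}\}$ and strictly increasing $f(G_i)$; passing to normal forms $H_i\fin\NF$ with $\marking{H_i}=\marking{G_i}$ and $f(H_i)=f(G_i)$ (Condition~\ref{normalformST}) and applying $H_i\le H_{M'}$ (Condition~\ref{upperbound}) caps $f(H_i)$ by $f(H_{M'})$, a contradiction; hence no such chain exists.

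For \ref{2b} I would assume $M'\goesto[a]$, choose a normal form $H\fin\NF$ with $\marking{H}=\marking{G}$, and branch on $H^-:=\{u\mid H(u)<0\}$ precisely as in part (b) of \refthm{3ST}. If $H^-\neq\emptyset$, Conditions~\ref{acyclic-ST} and~\ref{T-} make $<^-:=(F\upharpoonright(S\cup T^-))^+$ a partial order on $H^-$; taking a $<^-$-minimal $u$ and inspecting the token balance at each $s\in\precond{u}$---now free of any $\precond{U}$ contribution---Conditions~\ref{disjoint preplaces} and~\ref{disjoint preplaces 2} eliminate all negative summands, so $u$ is enabled in $M$ and $M\goesto[\tau]$. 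If $H^-=\emptyset$ but $H\neq H_{M'}$, I would instead take a minimal transition of $H^\smile:=\{u\mid H_{M'}(u)-H(u)>0\}\subseteq T^+$ under $<^+:=(F\upharpoonright(S\cup T^+))^+$ (Conditions~\ref{acyclic+ST}, \ref{upperbound}) and run the same token count using $M=M_{M'}+\marking{H-H_{M'}}$ and Condition~\ref{marking}, again obtaining $M\goesto[\tau]$. If $H=H_{M'}$ then $M=M_{M'}$ and Condition~\ref{match} gives $M\goesto[a]$ outright.

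For \ref{2c}, suppose $M[t\rangle$ with $\ell(t)=a\neq\tau$; I would show $M'\goesto[a]$ by induction on $f(H_{M'}-H)$ over normal forms $H$ with $M_H:=M'+(M_0-M'_0)+\marking{H}\in[M_0\rangle_N$ and $M_H[t\rangle$. In the base case $H^-=\emptyset$ we have $H\fin\nat^{T^+}$, so \reflem{origin}(b) with $k=1$---substituting $M'+(M_0-M'_0)$ for its ``$M$'' and using Condition~\ref{acyclic+ST}---gives ${}^*t\le M'+(M_0-M'_0)$; Condition~\ref{matchingST} then provides $t'\in T'$ with $\ell(t')=\ell(t)$ and $\precond{t'}\le{}^*t$, and since $\precond{t'}\in\nat^{S'}$ and $M'_0=M_0\upharpoonright S'$ this forces $\precond{t'}\le M'$, whence $M'[t'\rangle$ and $M'\goesto[a]$. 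In the inductive case $H^-\neq\emptyset$ I extract a minimal enabled $u\in H^-$ as in \ref{2b}, fire it (Condition~\ref{disjoint preplaces 2} keeps $t$ enabled, as $\precond{u}\cap\precond{t}=\emptyset$), renormalise with Condition~\ref{normalformST} to strictly raise $f$, and close by the induction hypothesis.

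The step I expect to require the most care is checking that the restriction to $U=U'=\emptyset$ is genuinely sound: in \refthm{3ST} the closing move of part (c) invokes Condition~\ref{concurrent} to separate the firing transition $t'$ from every $u'\in U'$, whereas here $U'=\emptyset$ makes that move vacuous, which is precisely why Condition~\ref{concurrent} may be omitted. Beyond this, the argument is a faithful specialisation of \refthm{3ST}; its only substantive content is the faithful-origin tracing of \reflem{origin} in the base case of \ref{2c}, which is what lets an enabled transition of $M$ be matched by a transition of the plain net $N'$ that is already enabled at $M'$.
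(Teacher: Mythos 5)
Your proposal is correct and matches the paper's approach exactly: the paper's entire proof of this theorem is the single sentence ``A straightforward simplification of the proof of \refthm{3ST}'', and what you have written is precisely that simplification carried out, with the right observations about why the restriction to $U=U'=\emptyset$ is sound and why Condition~\ref{concurrent} becomes vacuous. The only cosmetic quibble is that in part~\ref{2b} the appeal to Condition~\ref{disjoint preplaces 2} is superfluous once $U=\emptyset$, since that condition only served to eliminate the $\precond{U}$ summands; this does not affect correctness.
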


\begin{proof}
A straightforward simplification of the proof of \refthm{3ST}.
\end{proof}

\section{The Correctness Proof}\label{sec-correctness}

We now apply the preceding theory to prove the correctness of the conflict replicating
implementation.

\begin{theorem}{correctness}
Let $N$ be a finitary plain structural conflict net without a
fully reachable pure $\structuralM$.\\ Then $\impl{N} \approx^\Delta_{bSTb} N$.
\end{theorem}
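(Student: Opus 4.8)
The plan is to derive the statement from \refthm{3ST}, instantiating its \emph{plain} net $N'$ by the net $N$ of the present theorem and its other net by the implementation $\impl{N}$. This orientation is legitimate: $N$ is plain by hypothesis, and $\impl{N}$ inherits all places of $N$, so the inclusion and initial-marking requirements of \refthm{3ST} become $S\subseteq S'$ and $M_0 = M'_0\upharpoonright S$, both of which hold by construction of $\impl{N}$ (recall $S'\supseteq S$ and the postulate $M'_0\upharpoonright S = M_0$). Since $\approx^\Delta_{bSTb}$ is symmetric, the conclusion of \refthm{3ST} is exactly $\impl{N}\approx^\Delta_{bSTb}N$.

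First I would fix the three parameters demanded by \refthm{3ST}. I would let $T^+$ consist of the preparatory and cleanup-\emph{forward} internal transitions of $\impl{N}$ after expanding the macros of \reftab{reversible}---the $\dist$, the fire-parts of $\ini$ and $\trans{j}$, the $\fetch$, $\fetched{j}$ and $\comp{j}$, together with the forward pieces of the reversal macros that carry tokens downstream---and let $T^-$ consist of the genuine reversal transitions ($t\cdot\undo[\omega]$, $t\cdot\und$, $t\cdot\undone$, $t\cdot\reset[\omega]$, $t\cdot\elide[\omega]$) that return tokens to earlier places. The class $\NF$ I would take to be canonical signed multisets of internal transitions that realise a given auxiliary-place token replacement $\marking{H}$ by the smallest consistent combination of preparatory firings, with each reversal cancelled against the fire it undoes. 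For the progress measure $f$ I would assign every internal transition a positive weight reflecting how far downstream it moves tokens, so that each forward $\tau$-step strictly increases $f$; this is what will drive the divergence-freeness argument.

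With these in place I would discharge Conditions~\ref{acyclic+ST}--\ref{normalformST} largely by inspection of \reffig{conflictrepl} and \reftab{reversible}: threaded through the auxiliary places, the forward transitions form a DAG, and so do the reversals, giving Conditions~\ref{acyclic+ST} and~\ref{acyclic-ST}. For Condition~\ref{matchingST} I would match each visible transition of $\impl{N}$---some $\exec{j}$ carrying label $\ell(i)$---with the original transition $i\in T$: tracing the faithful origins ${}^*\exec{j}$ backwards through $\trans{j}$, $\ini$ and $\dist$ (using \reflem{origin}) yields $\precond{i}\leq{}^*\exec{j}$, and choosing $G$ to cancel all auxiliary-place effects of the firing-and-cleanup chain gives $\marking{i}=\marking{\exec{j}+G}$. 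Condition~\ref{normalformST} is the $f$- and $\marking{\cdot}$-preserving normalisation of an arbitrary internal $G$ into $\NF$, with $f$ bounding the rewriting.

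The heart of the argument, and the step I expect to be by far the hardest, is Condition~\ref{lastST}. For each consistent configuration $(M',U)$ I would define the canonical maximal in-progress multiset $H_{M',U}$---intuitively the state in which every preparatory transition that \emph{can} legally have fired \emph{has} fired---and then verify its seven sub-conditions. Conditions \ref{markingST}, \ref{upperboundST}, \ref{T-ST} and~\ref{disjoint preplacesST} are bookkeeping about $H_{M',U}$ and the reversals. The genuinely difficult ones are \ref{matchST} (the canonical marking enables exactly the visible actions the original marking does), \ref{disjoint preplaces 2ST} and \ref{concurrent}: here one must analyse the conflict-resolution protocol---the ``initiative stealing'' regulated by the well-ordering and the relations $<^\#$ and $\leq^\#$---and it is precisely here that the hypotheses enter. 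The structural-conflict-net property and, above all, the absence of a fully reachable pure \structuralM\ (\refdf{fullM}) are what guarantee that two concurrent transitions can never both steal the initiative from a common enabled transition, that a stolen initiative is always passed coherently along a conflict chain, and hence that the preplaces of the relevant original transitions are disjoint exactly when the sub-conditions require. Establishing these invariants about the reachable markings of $\impl{N}$, by induction on reachability, is the crux on which the whole correctness proof turns.
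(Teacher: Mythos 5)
Your overall route is the paper's own: apply \refthm{3ST} with the original net as the plain net and $\impl{N}$ as the other net, choose $T^+$, $T^-$, $\NF$ and the weight function $f$, discharge Conditions~\ref{acyclic+ST}--\ref{normalformST} structurally (with faithful origins for Condition~\ref{matchingST}), and let Condition~\ref{lastST} carry the real content, with $H_{M',U}$ the ``everything that can legally have fired in preparation has fired'' multiset and the absence of a fully reachable pure \structuralM\ entering through sub-conditions \ref{matchST}, \ref{disjoint preplaces 2ST} and \ref{concurrent}. That skeleton, and your identification of where the hypotheses bite, are correct.

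The one concrete commitment you make, however, would break the verification: you read the $T^+$/$T^-$ split as ``forward'' versus ``backward'' transitions and accordingly place $\fetch$, $\fetched{j}$ and $\comp{j}$ in $T^+$ while restricting $T^-$ to the undo machinery. In \refthm{3ST}, $T^-$ must contain every transition that can occur with \emph{negative} multiplicity in a normal-form $H$ (Condition~\ref{T-ST}), and the cleanup transitions do: once some $\exec{j}$ has fired and the corresponding firing of $i$ has been absorbed into the plain net's marking, the residual $H$ --- which satisfies $\ell(H)\equiv\emptyset$ and hence $H(\exec{j})=0$ --- still \emph{owes} one occurrence each of $\fetch$, $\fetched{j}$ and $\comp{j}$ until the cleanup actually completes; this is exactly the subtraction of the right-hand side of (\ref{mimic}) in the paper's bookkeeping. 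With your $T^-$, Condition~\ref{T-ST} fails at every such intermediate marking, and the step in part (b) of the proof of \refthm{3ST} that picks a $<^-$-minimal owed transition and shows it enabled no longer covers these transitions. The correct choice is $T^+=\{\dist,\ \ini\cdot\fire,\ \trans{j}\cdot\fire\}$ only --- the transitions allowed to run \emph{ahead} of the specification --- with everything else except the $\exec{j}$ in $T^-$. Also beware of ``the forward pieces of the reversal macros'': putting, say, $\ini\cdot\undone$ into $T^+$ alongside $\ini\cdot\fire$ creates the cycle $p_j\to\ini\cdot\fire\to\Fired(\ini)\to\cdots\to\ini\cdot\undone\to p_j$ and destroys Condition~\ref{acyclic+ST}. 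With $T^+$ and $T^-$ corrected, the rest of your plan coincides with the paper's proof.
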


\begin{proofNobox}
In this proof the given finitary plain structural conflict net without a fully reachable
pure $\structuralM$ will be $N'=(S',T',F',M'_0,\ell')$, and its conflict replicated
implementation $\impl{N'}$ is called $N=(S,T,F,M_0,\ell)$. This convention matches
the one of \refsec{method}, but is the reverse of the one used in \refsec{distributable};
it pays off in terms of a significant reduction in the number of primes in this paper.

For future reference, \reftab{conflictrepl} provides a place-oriented
representation of the conflict replicating implementation of a given net
$N'=(S',T',F',M'_0,\ell')$, with the macros for reversible transitions expanded.
\hypertarget{far}{Here $\mbox{\hyperlink{Tback}{$T^\leftarrow$}}=\{\ini \mid j\inp T'\} \cup
\{\trans{j} \mid h <^\# j\inp T'\}$, whereas \plat{$(\trans{j})^{\,\it far}=\{\transout{j}\}$}
and \plat{$(\ini)^{\,\it far}=\{\Pre^j_k \mid k\geq^\# j\} \cup \{\transin{j}\mid h<^\# j\}$}.}

We will obtain \refthm{correctness} as an application of \refthm{3ST}.
Following the construction of $N$ described in \refsec{implementation},
we indeed have $S'\subseteq S$ and $M'_0=M_0\upharpoonright S'$.
Let $T^+\subseteq T$ be the set of transitions
\begin{equation}\label{positive transitions}
  \dist      \qquad
  \ini\cdot\fire       \qquad
  \trans{j}\cdot\fire  \qquad
\end{equation}
for any applicable values of $p\inp S'$ and $h,j\inp T'\!$.
Furthermore, $T^-:=(T\setminus (T^+ \cup \{\exec{j}\mid i\leqc j \in T'\}))$.
We start with checking Conditions \ref{acyclic+ST}, \ref{acyclic-ST} and \ref{matchingST} of \refthm{3ST}.
\begin{enumerate}
\item[\ref{acyclic+ST}.] Let $<^+$ be the partial order on $T^+$ given by the order of
  listing in (\ref{positive transitions})---so $\ini[i]\cdot\fire <^+ \trans{j}\!\cdot\fire$, for any
  $i\in T'$ and $h<^\#j\in T'$, but
  the transitions $\trans{j}\cdot\fire$ and $\trans[k]{l}\cdot\fire$ for $(i,j)\neq(k,l)$ are unordered.
  By examining  \reftab{conflictrepl}
  we see that for any place with a pretransition $t$ in $T^+$, all its posttransitions $u$ in $T^+$
  appear higher in the $<^+$-ordering: $t<^+ u$. From this it follows that
  $F\upharpoonright(S\cup T^+)$ is acyclic.
\item[\ref{acyclic-ST}.] Let $<^-\!$ be the partial order on $T^-\!$ given by the row-wise order of
  the following enumeration of $T^-\!$:
\[\begin{array}{l@{\quad}l@{\quad}l@{\quad}ll}
                       t\cdot\undo &

  \trans{j}\cdot\und  & \trans{j}\cdot\undone &
  \ini\cdot\und        &  \ini\cdot\undone  \\
  \fetch               &  \fetched{j}         &
  t\cdot\reset[i] \qquad & t\cdot\elide[i] &  \comp{j}
\end{array}\]
  for any $t\in\{\ini,~ \trans{j}\}$ and any applicable
  values of $f\inp S$, $p\inp S'$, and $h,i,j,c\inp T'\!$.
  By examining \reftab{conflictrepl}
  we see that for any place with a pretransition $t$ in $T^-$, all its posttransitions $u$ in $T^-$
  appear higher in the $<^-$-ordering: $t<^- u$. From this it follows that
  $F\upharpoonright(S\cup T^-)$ is acyclic.
\end{enumerate}
\[
\begin{array}{@{}llll@{}}
~\\[-7ex]
\textbf{Place} &
\textrm{Pretransitions}\hfill\scriptstyle\rm{arc~weights} & \textrm{Posttransitions}\hfill\scriptstyle\rm{arc~weights} & \textrm{for all} \\
\hline
p                & \comp{j}\weight{i,p} & \dist ~~~\mbox{\scriptsize (if $\postcond{p}\mathbin{\neq}\emptyset$)}
                     & p\inp S',~ i\in \precond{p} \\
p_c            & \left\{
                          \begin{array}{@{}l@{}}\dist\\\ini[c]\cdot\undone~~\weight{p,c}\!\!\end{array}\right. &
                          \begin{array}{@{}l@{}}\ini[c]\cdot\fire~~~~~~~~~~~\weight{p,c}\\\fetch\weight{p,i}\end{array} &
                            \begin{array}{@{}l@{}}p\inp S',~c\in\postcond{p}\\j\geq^\# i \in \postcond{p}\end{array} \\
\pi_c ~\hfill\mbox{(marked)} & \ini[c]\cdot\reset & \ini[c]\cdot\fire & i\confeq c \in T'\\
\Pre^i_j           & \left\{
                          \begin{array}{@{}l@{}}\ini[i]\cdot\fire\\\exec{j}\end{array}\right. &
                          \begin{array}{@{}l@{}}\exec{j}\\\ini[i]\cdot\und[\Pre^i_j]\end{array} &
                          \begin{array}{@{}l@{}}j\geq^\# i\in T'\end{array} \\
\transin{j}          & \left\{
                          \begin{array}{@{}l@{}}\ini\cdot\fire\\\trans{j}\cdot\undone\end{array}\right. &
                          \begin{array}{@{}l@{}}\trans{j}\cdot\fire\\\ini\cdot\und[\transin{j}]\end{array} &
                                    h <^\# j\in T' \\
\transout{j}         & \left\{
                          \begin{array}{@{}l@{}}\trans{j}\cdot\fire\\\exec{j}\end{array}\right. &
                          \begin{array}{@{}l@{}}\exec{j}\\\trans{j}\cdot\und[\transout{j}]\!\!\!\end{array} &
                                          h<^\# j\in T',~ i\leqc j \\
\pi_{j\#l} ~\hfill\mbox{(marked)} &  \left\{
                          \begin{array}{@{}l@{}}\fetched{j}\\\trans[j]{l}\cdot\reset[c]\end{array}\right. &
                          \begin{array}{@{}l@{}}\exec{j}\\\trans[j]{l}\cdot\fire\end{array} &
                          \begin{array}{@{}l@{}}i\leqc j <^\# l \in T',~ c\confeq l \end{array} \\
\fetchin           & \exec{j} & \fetch & j\geq^\# i\inp T',~p\inp\precond{i},~c\inp\postcond{p} \\
\fetchout          & \fetch & \fetched{j} &        j\geq^\# i\inp T',~p\inp\precond{i},~c\inp\postcond{p} \\
\hline
\rule{0pt}{12pt}
\undo(t)      & \exec[i]{j}\cdot\fire & t\cdot\undo,\quad t\cdot\elide & j\geq^\# i\in T',~ t\in \UIij \\
\reset(t)      & \fetched[i]{j}      & t\cdot\reset,\quad t\cdot\elide & j\geq^\# i\in T',~ t\in \UIij \\
\ack(t)      & t\cdot\reset,\quad t\cdot\elide & \comp[i]{j} & i\in T',~ t\in \UIij \\
\Fired(t)       & t\cdot\fire & t\cdot\undo & t\in T^\leftarrow,~ \UIij\ni t \\
\keep(t)    & t\cdot\undo & t\cdot\reset & t\in T^\leftarrow,~ \UIij\ni t \\
\take(f,t)  & t\cdot\undo & t\cdot\und  & t\in T^\leftarrow,~ \UIij\ni t,~ f\inp t^{\,\it far} \\
\took(f,t)  & t\cdot\und & t\cdot\undone  & t\in T^\leftarrow,~ f\in t^{\,\it far} \\
\rho(t)       & t\cdot\undone & t\cdot\reset & t\in T^\leftarrow,~ \UIij\ni t \\
\end{array}
\]
\begin{center}
\vspace{1ex}
\refstepcounter{table}
Table \thetable: The conflict replicating implementation.
\label{tab-conflictrepl}
\vspace{3ex}
\end{center}
\begin{enumerate}
\item[\ref{matchingST}.]
  The only transitions $t\in T$ with $\ell(t)\neq\tau$ are $\exec{j}$, with
  $i\leqc j\in T'$. So take $i\leqc j\in T'$. Then the only transition $t'\inp T'$
  with $\ell'(t')\mathbin=\ell(\exec{j})$ is $i$. Now two statements regarding $i$ and $\exec{j}$
  need to be proven. For the first, note that, for any $p\in\precond{i}$, the places
  $p$, $p_i$ and $\Pre^i_j$ are faithful w.r.t.\ $T^+$ and $S'\cup\{s\in S \mid M_0(s)>0\}$.
  Hence $~ p ~~ \dist ~~ p_i ~~ \ini[i]\cdot\fire ~~ \Pre^i_j ~~ \exec{j} ~$
  is a faithful path from $p$ to $\exec{j}$. The arc weight of this path is
  $F'(p,i)$. Thus $\precond{i} \leq \mbox{}^*\exec{j}$.

  The second statement holds because, for all $i\leqc j\in T'$,
  \begin{equation}\label{mimic}
  \marking{i} = \llbracket
     \exec{j} + \!\!\sum_{p\in\precond{i}}\big(F'(p,i)\cdot\dist +
     \!\!\sum_{c\in\postcond{p}}\fetch\big) + \fetched{j} + \comp{j} +
     \sum_{t\in\UIij} t\cdot\elide
     \rrbracket.
  \end{equation}
  To check that these equations hold, note that
  $$\begin{array}{@{}l@{~}c@{~}l@{}}
    \marking{\dist}&=&-\{p\} + \{p_c \mid c\in \postcond{p}\},\\
    \marking{\exec{j}}&=&-\{\pi_{j\#l}\mid l\geq^\# j\}+\{\fetchin \mid p\inp\precond{i},~
    c\inp\postcond{p}\}+\{\undo(t) \mid t\in\UIij\},\\
    \marking{\fetch}&=&-\{\fetchin\} - F'(p,i)\cdot\{p_c\}+\{\fetchout\},\\
    \marking{\fetched{j}}&=&-\{\fetchout\mid p\inp\precond{i},~ c\inp\postcond{p}\}
    +\{\pi_{j\#l}\mid l\geq^\# j\}+\{\reset(t)\mid t\in \UIij\},\\
    \marking{t\cdot\elide}&=&-\{\undo(t),~\reset(t)\mid t\in \UIij\}+\{\ack(t) \mid t\in\UIij\},\\
    \marking{\comp{j}} &=&-\{\ack(t) \mid t\in\UIij\}+
                           \plat{$\displaystyle\sum_{r\in\postcond{i}}F'(i,r)\cdot\{r\}$}.\\[1ex]
  \end{array}$$
\end{enumerate}
Before we define the class $\NF\subseteq \Int^T$ of signed multisets of transitions in
normal form, and verify conditions \ref{normalformST} and \ref{lastST}, we derive
some properties of the conflict replicating implementation $N=\impl{N'}$.

\begin{claim}{G-properties}
  For any $M'\in\Int^{S'}$ and $G\fin \Int^T$ such that
  $M:=M'+(M_0-M'_0)+\marking{G}\in\nat^S$ we have
  \begin{eqnarray}
  G(t\cdot\elide)+G(t\cdot\undo) &\!\!\!\!\leq\!\!\!\!& \sum_{j\geq^\# i}G(\exec[i]{j})\label{undo}\\
  \hspace{-2em}
  G(\comp[i]{j}) \leq G(t\cdot\elide)+G(t\cdot\reset) &\!\!\!\!\leq\!\!\!\!& \sum_{j\geq^\# i}G(\fetched[i]{j})\label{reset}\\
  G(t\cdot\reset) &\!\!\!\!\leq\!\!\!\!& G(t\cdot\undo)\label{interfacecount}
  \end{eqnarray}
  for each $i\in T'$ and $t\inp \UIij$.
  Moreover, for each \hyperlink{far}{$t\in T^\leftarrow$ and $f\in t^{\,\it far}$},
  \begin{equation}\label{took}\label{undocount}
  \sum_{\{\omega\mid t\in \UI_\omega\}}\!\!\!\!\!\!\!\! G(t\cdot\reset[\omega])
  \leq G(t\cdot\undone) \leq G(t\cdot\und)
  \leq \!\!\!\!\!\!\!\!\sum_{\{\omega\mid t\in \UI_\omega\}}\!\!\!\!\!\!\!\! G(t\cdot\undo[\omega]) \leq G(t\cdot\fire)
  \end{equation}
  and for each appropriate $c,h,i,j,l\in T'$ and $p\in S'$:
  \begin{eqnarray}
  G(\fetched[i]{j}) \leq G(\fetch) &\!\!\!\!\leq\!\!\!\!& G(\exec{j})\label{fetch}
  \\[2pt]\label{pi-j}
  G(\ini\cdot\fire) &\!\!\!\!\leq\!\!\!\!& 1+\sum_{\omega }G(\ini\cdot\reset[\omega])
  \\[-6pt]\label{transin}\hspace{-2em}
  G(\trans{j}\cdot\fire) - G(\trans{j}\cdot\undone) &\!\!\!\!\leq\!\!\!\!&
  G(\ini\cdot\fire) - G(\ini\cdot\und[\transin{j}])
  \\[2pt]\label{pi}
  G(\trans[j]{l}\!\cdot\fire) + \sum_{i\leqc j}G(\exec{j}) &\!\!\!\!\leq\!\!\!\!&
  1+\sum_{\omega}G(\trans[j]{l}\!\cdot\reset[\omega])+ \sum_{i\leqc j}G(\fetched{j})
  \\[-5pt]\label{pre}
  \mbox{if ~$M[\exec{j}\rangle$~ then}\quad
  1 &\!\!\!\!\leq\!\!\!\!& G(\ini[i]\cdot\fire) - G(\ini[i]\cdot\und[\Pre^i_j])
  \\[2pt]\label{transout}
  \mbox{if ~$\exists i.~M[\exec{j}\rangle$~ then}\quad
  1 &\!\!\!\!\leq\!\!\!\!& G(\trans{j}\cdot\fire) - G(\trans{j}\cdot\und[\transout{j}])
  \end{eqnarray}\vspace{-15pt}%
    \begin{equation}\label{p_j}
      F'(p,c)\mathord\cdot \big(G(\ini[c]\!\cdot\fire) \mathord- G(\ini[c]\!\cdot\undone)\big)
      +  \hspace{-.5em}\sum_{j\geq^\# i\in \postcond{p}}\hspace{-.5em}
      F'(p,i) \cdot G(\fetch) \leq G(\dist)
     \vspace{-15pt}
    \end{equation}
  \begin{eqnarray}\label{p}\hspace{-1.67em}
  G(\dist) &\!\!\!\!\leq\!\!\!\!&
  M'(p)+ \hspace{-1em}\sum_{\{i\in T'\mid p\in\postcond{i}\}}\hspace{-1em} G(\comp{j}).
  \end{eqnarray}
\end{claim}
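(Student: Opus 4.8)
The plan is to obtain every one of the inequalities \refitem{undo}--\refitem{p} from a single, purely local principle: since $M:=M'+(M_0-M'_0)+\marking{G}\in\nat^S$, we have $M(s)\geq 0$ for \emph{every} place $s$ of $N=\impl{N'}$. Expanding $M(s)$ and reading the neighbourhood of $s$ off the place-oriented description in \reftab{conflictrepl} gives
\[M(s)=M'(s)+(M_0-M'_0)(s)+\sum_{t\in T}G(t)\bigl(F(t,s)-F(s,t)\bigr),\]
so that each pretransition $t$ of $s$ contributes $+G(t)F(t,s)$ and each posttransition $-G(t)F(s,t)$. The inhomogeneous term is settled by three cases: if $s\notin S'$ is unmarked then $M'(s)=0$ and $(M_0-M'_0)(s)=M_0(s)=0$, leaving a pure inequality among the $G(t)$; if $s\notin S'$ is \emph{marked} then the surviving term $M_0(s)=1$ becomes the ``$+1$'' on the right-hand side; and if $s\in S'$ then $(M_0-M'_0)(s)=0$ while $M'(s)$ survives as a leading summand.

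Each non-conditional inequality is then just $M(s)\geq 0$ for one identifiable place $s$, after collecting the negative contributions on the other side. Thus \refitem{undo} is $M(\undo(t))\geq 0$; the two halves of \refitem{reset} are $M(\ack(t))\geq 0$ and $M(\reset(t))\geq 0$; \refitem{interfacecount} is $M(\keep(t))\geq 0$; the chain \refitem{undocount} reads, from left to right, $M(\rho(t))\geq 0$, $M(\took(f,t))\geq 0$, $M(\take(f,t))\geq 0$ and $M(\Fired(t))\geq 0$; the two halves of \refitem{fetch} are $M(\fetchout)\geq 0$ and $M(\fetchin)\geq 0$; \refitem{transin} is $M(\transin{j})\geq 0$; and \refitem{p_j} is $M(p_c)\geq 0$, now with the arc weights $F'(p,c)$ and $F'(p,i)$ appearing as coefficients. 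The two marked places provide the ``$+1$'' versions: \refitem{pi-j} is $M(\pi_i)\geq 0$ and \refitem{pi} is $M(\pi_{j\#l})\geq 0$. Finally \refitem{p} is $M(p)\geq 0$ for the inherited place $p\in S'$, with $M'(p)$ as the leading term and the finalise transitions $\comp{j}$ contributing on the right.

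The two conditional inequalities \refitem{pre} and \refitem{transout} require the additional hypothesis $M[\exec{j}\rangle$. The crucial local fact here is that $\exec{j}$ lies on a self-loop at both $\Pre^i_j$ and $\transout{j}$: each of these is simultaneously a pre- and a postplace of $\exec{j}$ with arc weight $1$ in each direction, so the $\exec$-contributions cancel in $\marking{G}$ and, these being unmarked internal places, $M(\Pre^i_j)=G(\ini[i]\cdot\fire)-G(\ini[i]\cdot\und[\Pre^i_j])$ and $M(\transout{j})=G(\trans{j}\cdot\fire)-G(\trans{j}\cdot\und[\transout{j}])$. Enabledness of $\exec{j}$ at $M$ forces a token on each of its preplaces, \ie $M(\Pre^i_j)\geq 1$ and $M(\transout{j})\geq 1$, which are exactly \refitem{pre} and \refitem{transout}.

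The genuinely error-prone part, and the main obstacle, is the bookkeeping rather than any mathematical difficulty: one must correctly determine, for each place, its complete set of pre- and posttransitions together with their arc weights \emph{after} the macro expansion of the reversible transitions, keep track of which interface index $\omega$ ranges where (so that sums such as $\sum_{\omega}G(t\cdot\undo[\omega])$ come out with the right range), and not overlook the self-loops of $\exec{j}$ at $\Pre^i_j$ and $\transout{j}$. Once \reftab{conflictrepl} is trusted, every line is an independent one-place computation: no induction, no reachability, and no global argument is needed beyond $M\in\nat^S$ (supplemented, for \refitem{pre} and \refitem{transout}, by the stated enabledness).
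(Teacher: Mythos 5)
Your proposal is correct and takes essentially the same approach as the paper: the paper's proof likewise derives each inequality as the non-negativity of $M$ at a single identifiable place of $\impl{N'}$, and your assignment of places to inequalities ($\undo(t)$ for \refitem{undo}; $\ack(t)$ and $\reset(t)$ for \refitem{reset}; $\keep(t)$ for \refitem{interfacecount}; $\rho(t)$, $\took(f,t)$, $\take(f,t)$, $\Fired(t)$ for \refitem{took}; $\fetchout$ and $\fetchin$ for \refitem{fetch}; $\pi_c$, $\transin{j}$, $\pi_{j\#l}$, $\Pre^i_j$, $\transout{j}$, $p_c$ and $p$ for the rest) coincides exactly with the paper's. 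Your handling of the two conditional inequalities via the self-loop of $\exec{j}$ at $\Pre^i_j$ and $\transout{j}$ together with the enabledness bound $M(s)\geq 1$ makes explicit what the paper leaves implicit, but it is the same argument.
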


\begin{proofclaim}
  For any $i\in T'$ and $t\in\UIij$, we have
  $$M(\undo(t))=\big(\sum_{j\geq^\# i}G(\exec[i]{j})\big)-G(t\cdot\elide)-G(t\cdot\undo)\geq 0,$$ given that
  $M'(\undo(t))=(M_0-M'_0)(\undo(t))=\emptyset$.
  In this way, the place $\undo(t)$ gives rise to the inequation (\ref{undo}) about $G$.
  Likewise, the places $\ack(t)$, $\reset(t)$ and $\keep(t)$,
  respectively, contribute (\ref{reset}) and (\ref{interfacecount}), whereas
  $\rho(t)$, $\took(t)$, $\take(t)$ and $\Fired(t)$ yield (\ref{took}).
  The remaining inequations arise from $\fetchout$, $\fetchin$, $\pi_j$,
  $\transin{j}$, $\pi_{j\#l}$, $\Pre^i_j$, $\transout{j}$, $p_c$ and $p$, respectively.
\end{proofclaim}
  (\ref{pi}) can be rewritten as $T^j_l+\sum_{i\leqc j} E^i_j \leq 1$, where
  $T^j_l:=G(\trans[j]{l}\cdot\fire) - \sum_{\omega}G(\trans[j]{l}\cdot\reset[\omega])$ and
  $E^i_j:=G(\exec[i]{j}) - G(\fetched[i]{j})$.
  By (\ref{undocount})
  $\sum_\omega G(\trans[j]{l}\cdot\reset)\leq G(\trans[j]{l}\cdot\fire)$, so $T^j_l\geq 0$,
  and likewise, by (\ref{fetch}), $E^i_j\geq 0$ for all $i\leqc j$.
  Hence, for all $i\leqc j <^\# l\in T'$,\vspace{-1ex}
  \begin{equation}\label{pi-execute}
  0\leq T^j_l\leq 1 \qquad 0\leq E^i_j\leq 1\qquad T^j_l+\sum_{i\leqc j} E^i_j \leq 1.
  \vspace{-2ex}
  \end{equation}

\newcommand{\follow}{\textit{next}}
\noindent
In our next claim we study triples $(M, M', G)$ with
\begin{enumerate}[(A)]
\item $M\in[M_0\rangle_N$, $M'\in[M'_0\rangle_{N'}$ and $G \fin \Int^T$,\label{r0}
\item $M=M'+(M_0-M'_0)+\marking{G}$,\label{r1}
\item $G(\comp{j}) = 0$ for all $i\in T'$,\label{r2}
\item $G(\dist) \leq M'(p)$ for all $p\in S'$,\label{r3}
\item $G(\fetched[k]{l})\geq 0$ for all $k\leqc l\in T'$,\label{rFp}
\item \plat{$\displaystyle G(\dist) \geq F'(p,i)\cdot G(\exec{j})$}
  for all $i\leqc j\in T'$ and $p\in\precond{i}$,\label{r4}
\item \plat{$0\leq G(\exec{j})\leq 1$} for all $i\leqc j\in T'$,\label{rExec}
\item \plat{$\displaystyle G(\dist) \geq F'(p,j)\cdot G(\exec{j})$}
  for all $i\leqc j\in T'$ and $p\in\precond{j}$,\label{r5}
\item (in the notation of (\ref{pi-execute}))
  if $E^i_j=1$ with $i\leqc j\in T'$ then \plat{$T^h_j=1$} for all $h <^\# j$,\label{rtrans}
\item there are no $j\geq^\# i \confeq k \leq^\# l \in T'$ with $(i,j)\neq (k,\ell)$,
  \plat{$G(\exec[i]{j})>0$} and \plat{$G(\exec[k]{l})>0$},\label{rExec2}
\item there are no $i\leq^\# j \confeq k \leq^\# l \in T'$ with $(i,j)\neq (k,\ell)$,
  \plat{$G(\exec[i]{j})>0$} and \plat{$G(\exec[k]{l})>0$}.\label{rExec3}
\label{rLast}
\end{enumerate}
Given such a triple $(M_1,M'_1,G_1)$ and a transition $t\in T$,
we define $\follow(M_1, M'_1, G_1, t) =: (M, M', G)$ as follows:
Let $G_2:=G_1+\{t\}$.
Take $M:=M_1+\marking{t} = M'_1+(M_0-M'_0)+\marking{G_2}$.
In case $t$ is not of the form $\comp{j}$ we take $M':=M'_1\in[M'_0\rangle_{N'}$ and $G:=G_2\fin\Int^T$.
In case $t\mathbin=\comp{j}$ for some $i\in T'$ we have
\plat{$1=G_2(\comp{j}) \leq \sum_{j\geq^\# i} G_2(\exec{j})
  =\sum_{j\geq^\# i} G_1(\exec{j})$} by (\ref{r2}), (\ref{reset}) and (\ref{fetch}),
so by (\ref{rExec}) and (\ref{rExec2}) there is a unique $j\geq^\# i$ with
\plat{$G_1(\exec{j})=1$}. We take $M':=M'_1+\marking{i}$ and
\plat{$G:=G_2-G^i_{\!\!j}$}, where $G^i_{\!\!j}$ is the right-hand side of (\ref{mimic}).

\begin{claim}{reachable}
\begin{enumerate}[(1)]
\item If $M_1[t\rangle$ and $(M_1, M'_1, G_1)$ satisfies (\ref{r0})-(\ref{rExec3}),
  then so does $\follow(M_1, M'_1, G_1, t)$.\label{follow1}
\item For any $M\in[M_0\rangle_{N}$ there exist $M'$ and $G$ such that
  (\ref{r0})-(\ref{rExec3}) hold.\label{follow2}
\end{enumerate}
\end{claim}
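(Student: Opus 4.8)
The plan is to derive part~(\ref{follow2}) from part~(\ref{follow1}) by a routine induction on reachability, and to concentrate all the work on part~(\ref{follow1}), the invariance of (\ref{r0})--(\ref{rExec3}) under $\follow$. For part~(\ref{follow2}) I would induct on the length of a firing sequence of $N$ from $M_0$ to $M$. The base triple $(M_0,M'_0,\emptyset)$ satisfies (\ref{r0})--(\ref{rExec3}): every $G$-count is $0$, so (\ref{r2})--(\ref{rExec3}) hold trivially or vacuously, while (\ref{r1}) reduces to $M'_0+(M_0-M'_0)=M_0$, which holds because $M'_0=M_0\!\upharpoonright\! S'$. For the step, if $M_1\in[M_0\rangle_N$ carries a triple $(M_1,M'_1,G_1)$ meeting the conditions and $M_1[t\rangle M$, then by construction the first component of $\follow(M_1,M'_1,G_1,t)$ equals $M_1+\marking{t}=M$, so part~(\ref{follow1}) hands back a witnessing triple for $M$; every reachable $M\neq M_0$ arises this way.

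The substance is part~(\ref{follow1}): check (\ref{r0})--(\ref{rExec3}) for $(M,M',G):=\follow(M_1,M'_1,G_1,t)$, splitting on whether $t=\comp{j}$. In the non-$\comp{j}$ case $G=G_1+\{t\}$ and $M'=M'_1$, so (\ref{r0}) and (\ref{r1}) are immediate from $M_1[t\rangle M$ and additivity of $\marking{G}$. In the $\comp{j}$ case I would first show that the new $M'=M'_1+\marking{i}$ is reachable in $N'$ by proving $i$ enabled at $M'_1$: for $p\in\precond{i}$, since the selected $j$ has $G_1(\exec{j})=1$ and $i\leqc j$, condition (\ref{r4}) gives $\precond{i}(p)=F'(p,i)\cdot G_1(\exec{j})\le G_1(\dist)$ and (\ref{r3}) gives $G_1(\dist)\le M'_1(p)$, whence $\precond{i}\le M'_1$. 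Relation (\ref{r1}) is then preserved exactly because $G:=G_2-G^i_j$ with $G^i_j$ the right-hand side of (\ref{mimic}): the mimic identity yields $\marking{G^i_j}=\marking{i}$, so $M'+(M_0-M'_0)+\marking{G}=M_1+\marking{\comp{j}}=M$, and the same cancellation makes (\ref{r2}) hold.

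The bound conditions (\ref{r2})--(\ref{r5}) and (\ref{rExec}) I would settle using \refcl{G-properties}, whose inequalities hold for any triple with $M\in\nat^S$ (and $M\in\nat^S$ since $M_1[t\rangle M$). In the non-$\comp{j}$ case one checks, per transition, which count $G(t)$ affects---only $t=\dist$ endangers (\ref{r3}), only $t=\exec{j}$ endangers (\ref{rExec}), where $G_1(\exec{j})=0$ forces the new value to be $1$---and in the $\comp{j}$ case the subtraction of $G^i_j$ lowers precisely the counts of $\dist$, $\fetch$, $\fetched{j}$, $\exec{j}$, $\comp{j}$ and the elide transitions, each decrement being balanced against $M'\mapsto M'_1+\marking{i}$; e.g.\ (\ref{r3}) becomes $G(\dist)=G_1(\dist)-F'(p,i)\le M'_1(p)-F'(p,i)=M'(p)$. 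The one delicate point here is the lower bound (\ref{rFp}), which in the $\comp{j}$ case requires the selected $j$ to also satisfy $G_1(\fetched{j})=1$; I would extract this from the enabledness of $\comp{j}$ at $M_1$ together with (\ref{reset}) and (\ref{fetch}).

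The hard part, and the only place the global hypothesis on $N'$ enters, is the mutual-exclusion and coherence conditions (\ref{rtrans}), (\ref{rExec2}) and (\ref{rExec3}). Only firing some $\exec{j}$ can raise a count $G(\exec{j})$ from $0$ to $1$ and thus newly create a forbidden pair, and only firing the reset of a transfer transition or a $\fetched{j}$ can threaten (\ref{rtrans}); for every other $t$ these conditions are untouched or have their hypotheses weakened, and so are immediate. For $t=\exec{j}$ I would use that its enabledness in $M_1$ forces tokens in $\Pre^i_j$, in the places $\transout{j}$ for all $h<^\# j$ (which pins $T^h_j=1$ and hence (\ref{rtrans})), and in the relevant $\pi_{j\#l}$; tracing these tokens back through faithful origins (\reflem{origin}, \refobs{origin}) to a reachable marking of $N'$, the simultaneous presence of a second in-progress $\exec[k]{l}$ matching the index patterns forbidden by (\ref{rExec2})/(\ref{rExec3}) would exhibit three transitions of $N'$ realising the conflict--conflict--concurrency pattern of \refdf{fullM} and enabled at a common reachable marking, i.e.\ a fully reachable pure $\structuralM$, contradicting the hypothesis; the structural-conflict-net property of $N'$ is what converts ``shared preplace'' into the conflict/concurrency relation demanded by \refdf{fullM}. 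This structural, hypothesis-driven argument carries the real content of the claim and is where I expect the main difficulty, the remaining conditions being mechanical once \refcl{G-properties} and the mimic identity (\ref{mimic}) are in hand.
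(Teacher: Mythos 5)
Your overall architecture coincides with the paper's: part~(\ref{follow2}) by induction on reachability from the base triple $(M_0,M'_0,\emptyset)$; part~(\ref{follow1}) by a case split on whether $t$ has the form $\comp{j}$, with enabledness of $i$ at $M'_1$ extracted from (\ref{r3}) and (\ref{r4}), preservation of (\ref{r1})--(\ref{r2}) from the identity $\marking{i}=\marking{G^i_{\!\!j}}$, and the remaining counting conditions handled via \refcl{G-properties}. Up to and including your treatment of (\ref{rFp}), this matches the paper's proof.

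The gap is in your plan for the conditions carrying the real content, (\ref{rtrans})--(\ref{rExec3}). First, \reflem{origin} is not available here: it requires the modifying multiset to lie in $\nat^{T^+}$ with $F\upharpoonright(S\cup T^+)$ acyclic, whereas the $G$ of this claim is a \emph{signed} multiset over all of $T$ --- it acquires negative entries, e.g.\ the summand $-G^i_{\!\!j}$ after a $\comp{j}$ step. What actually traces tokens back to $M'$ is the inductively maintained invariants (\ref{r3})--(\ref{r5}) themselves, yielding $M'(p)\geq G(\dist)\geq F'(p,i)$ directly. Second, it is not true that every forbidden configuration in (\ref{rExec2})/(\ref{rExec3}) exhibits a fully reachable pure \structuralM: the instance $i=k$, $j\neq l$ of (\ref{rExec2}) involves no \structuralM{} at all. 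The paper's proof of (\ref{rExec2}) splits on $G(\fetched{j})\geq 1$ versus $\leq 0$; in the first case the contradiction is $M'[\{i\}\mathord+\{k\}\rangle$ with $i\confeq k$, impossible in a structural conflict net (this is also the only reason $G(\exec{j})\leq 1$ holds in (\ref{rExec}) --- your remark that $G_1(\exec{j})=0$ ``forces'' the new value to be $1$ presupposes what must be proved), while in the second case the no-\structuralM{} hypothesis is used only in the positive direction, to force $j\confeq k$ and $j\confeq l$, after which the contradiction comes from the counting inequality (\ref{pi-execute}) combined with the previously established invariant (\ref{rtrans}). So the hypothesis on $N'$ enters through a three-way interplay of no-self-concurrency, disjointness of preplaces of concurrent transitions, and token counting --- not by directly exhibiting an \structuralM{} from the offending pair, and your sketch as stated would not close these cases.
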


\begin{proofclaimNobox}
(\ref{follow2}) follows from (\ref{follow1}) via induction on the reachability of $M$.
In case $M=M_0$ we take $M':=M'_0$ and $G:=\emptyset$. Clearly,
(\ref{r0})--(\ref{rLast}) are satisfied.

Hence we now show (\ref{follow1}). Let $(M, M', G) := \follow(M_1, M'_1, G_1, t)$.
We check that $(M,M',G)$ satisfies the requirements (\ref{r0})--(\ref{rLast}).
\begin{enumerate}[(A)]
\item By construction, $M\in[M_0\rangle_N$ and $G \fin \Int^T$.
  If $t$ is not of the form $\comp{j}$ we have \mbox{$M'\mathbin=M_1\inp[M'_0\rangle_{N'}$}.
  Otherwise, by (\ref{r3}) and (\ref{r4}) we have $M'_1(p)\geq G_1(\dist)\geq F'(p,i)$
  for all $p\in \precond{i}$, and hence $M'_1[i\rangle$. 
  This in turn implies that $M'=M'_1+\marking{i}\in[M'_0\rangle_{N'}$.
\item  In case $t$ is not of the form $\comp{j}$ we have
  $$M=M_1+\marking{t}=M'_1+(M_0-M'_0)+\marking{G_1+t}= M'+(M_0-M'_0)+\marking{G}.$$
  In case $t=\comp{j}$ we have $M=M'_1+(M_0-M'_0)+\marking{G_2}=
  M'+(M_0-M'_0)+\marking{G}$, using that $\marking{i}=\marking{G^i_{\!\!j}}$.
\item In case $t=\comp{j}$ we have $G(\comp{j}) = G_1(\comp{j})+1-G^i_{\!\!j}(\comp{j}) =
  0+1-1=0$.\\ Otherwise $G(\comp{j}) = G_1(\comp{j})+0=0+0=0$.
\item This follows immediately from (\ref{r2}) and (\ref{p}).
\item  
  The only time that this invariant is in danger is when $t=\comp{j}$.
  Then $G=G_1+\{\comp{j}\}-G^i_{\!\!j}$ for a certain $j\geq^\# i$ with $G_1(\exec{j})=1$.
  By (\ref{rExec2})\footnote{We use (\ref{rExec2}) and (\ref{rFp}) for $G_1$ only, making
  use of the induction hypothesis.}\let\fnote\thefootnote\
  $G_1(\exec{l})\leq 0$ for all $l\geq^\# i$ with $l\neq j$.
  Hence by (\ref{fetch}) $G_1(\fetched{l})\leq 0$ for all such $l$.
  By (\ref{r2}) $G_2(\comp{j}) = G_1(\comp{j})+1 = 1$, so by (\ref{reset})
  $\sum_{l\geq^\# i} G_1(\fetched{l}) = \sum_{l\geq^\# i} G_2(\fetched{l})>0$;
  hence it must be that $G_1(\fetched{j})>0$. By (\ref{rFp})$^\fnote$ $G_1(\fetched[k]{l})\geq 0$
  for all $k\leqc l\in T'$.
  Given that $G^i_{\!\!j}(\fetched{j})=1$ and $G^i_{\!\!j}(\fetched[k]{l})=0$ for all $(k,l)\neq(i,j)$,
  we obtain $G(\fetched[k]{l})\geq 0$ for all $k\leqc l\in T'$.
\item Take $i\mathbin{\leqc} j \inp T'$ and $p\inp\precond{i}$.
  There are two occasions where the invariant is in danger: when $t=\exec{j}$
  and when $t=\comp[k]{l}$ with $k\in T'$.  First let $t=\exec{j}$.
  Then $M_1[\exec{j}\rangle$.  Thus,\vspace{-1ex}
  \hypertarget{proofr4}{$$\begin{array}[b]{@{}r@{~\geq~}ll@{}}
  \multicolumn{2}{@{}l}{G(\dist)}\\
  \mbox{}& \displaystyle
    F'(p,i)\cdot\big(G(\ini[i]\cdot\fire) - G(\ini[i]\cdot\undone)\big)
    + \hspace{-.5em}\sum_{h\geq^\# g\in\postcond{p}}\hspace{-.5em}F'(p,g)\cdot G(\textsf{fetch}_{g,h}^{p,i})
  & \mbox{(by (\ref{p_j}))} \\
  & \displaystyle
    F'(p,i)\cdot\big(G(\ini[i]\cdot\fire) - G(\ini[i]\cdot\undone)\big)
    + \hspace{-.5em}\sum_{h\geq^\# g\in\postcond{p}}\hspace{-.5em}F'(p,g)\cdot G(\fetched[g]{h})
  & \mbox{(by (\ref{fetch}))} \\
  & F'(p,i)\cdot\big(G(\ini[i]\cdot\fire) - G(\ini[i]\cdot\undone)\big) + F'(p,i)\cdot G(\fetched{j})
  & \mbox{(by (\ref{rFp}))} \\
  & F'(p,i)\cdot\left(\big(G(\ini[i]\cdot\fire) - G(\ini[i]\cdot\und[\Pre^i_j])\big) + G(\fetched{j})\right)
  & \mbox{(by (\ref{took}))} \\
  & F'(p,i)\cdot\big(1 + G(\fetched{j})\big)
  & \mbox{(by (\ref{pre}))} \\
  & F'(p,i)\cdot G(\exec{j})
  & \mbox{(by (\ref{pi-execute}))}.
  \end{array}$$}
  Now let $t=\comp[k]{l}$ with $k\in T'$. 
  By (\ref{undocount}) $G(\ini[i]\cdot\fire) - G(\ini[i]\cdot\undone)\geq 0$.
  So by (\ref{p_j}), (\ref{rFp}), and (\ref{fetch}) $G(\dist)\geq 0$.
  For this reason we may assume, w.l.o.g., that $G(\exec{j}) \geq 1$.

  We have $G=G_1+\{\comp[k]{l}\}-G^k_l$ for certain $l\geq^\# k$ with $G_1(\exec[k]{l})\mathbin=1$.
  Since \plat{$G^i_{\!\!j}(\exec{j})\mathbin\geq 0$}, we also have $G_1(\exec{j}) \geq 1$.
  By (\ref{rExec2}) this implies that $\neg(i\confeq k)$ or $(i,j)=(k,l)$.
  In the latter case we have \plat{$G(\exec{j})=\mbox{}$}
  \plat{$G_1(\exec{j})-G^i_{\!\!j}(\exec{j})=1-1=0$},
  contradicting our assumption.
  In the former case $p\notin\precond{k}$, so $G^k_l(\dist)=0$ and hence
  $G(\dist)=G_1(\dist)\geq F'(p,i)\cdot G_1(\exec{j})=F'(p,i)\cdot G(\exec{j})$.
\item That $G(\exec{j})\geq 0$ follows from (\ref{rFp}) and (\ref{fetch}).
  If $G(\exec{j})\geq 2$ for some $i\leqc j\in T'$ then
  $M'(p)\geq G(\dist) \geq 2\cdot F'(p,i)$ for all $p\in \precond{i}$, using (\ref{r3}) and
  (\ref{r4}), so $M'[2\cdot\{i\}\rangle_{N'}$. Since $N'$ is
  a \hyperlink{finitary}{finitary} \hyperlink{scn}{structural conflict net}, it has no
  self-concurrency, so this is impossible.
\item Take $i\mathbin{\leqc} j \inp T'$ and $p\inp\precond{j}$. The case $i=j$ follows
  from (\ref{r4}), so assume $i<^\# j$.
  By (\ref{undocount}) we have $G(\ini[i]\cdot\fire) - G(\ini[i]\cdot\undone)\geq 0$.
  So by (\ref{p_j}), (\ref{rFp}), and (\ref{fetch}) $G(\dist)\geq 0$.
  Hence, using (\ref{rExec}), we may assume, w.l.o.g., that \plat{$G(\exec{j})=1$}.
  We need to investigate the same two cases as in the proof of (\ref{r4}) above.
  First let $t=\exec{j}$.  Then $M_1[\exec{j}\rangle$.  Thus,\vspace{-1ex}
  \hypertarget{proofr5}{$$\begin{array}[b]{@{}r@{~\geq~}lr@{}}
  \multicolumn{2}{@{}l}{G(\dist)}\\
  \mbox{}& \displaystyle
    F'(p,j)\cdot \big(G(\ini\cdot\fire) - G(\ini\cdot\undone)\big)
    + \hspace{-.5em}\sum_{h\geq^\# g\in\postcond{p}}\hspace{-.5em}F'(p,g)\cdot G(\textsf{fetch}_{g,h}^{p,j})
  & \mbox{(by (\ref{p_j}))} \\[-10pt]
  & F'(p,j)\cdot \big(G(\ini\cdot\fire) - G(\ini\cdot\undone)\big)
  & \hspace{-4em}\mbox{(by (\ref{rFp}) and (\ref{fetch}))} \\
  & F'(p,j)\cdot \big(G(\ini\cdot\fire) - G(\ini\cdot\und[\mbox{$\transin[i]{j}$}])\big)
  & \mbox{(by (\ref{took}))} \\
  & F'(p,j)\cdot \big(G(\trans[i]{j}\cdot\fire) - G(\trans[i]{j}\cdot\undone)
  & \mbox{(by (\ref{transin}))} \\
  & F'(p,j)\cdot \big(G(\trans[i]{j}\cdot\fire) - G(\trans[i]{j}\cdot\und[\mbox{$\transout[i]{j}$}])\big)
  & \mbox{(by (\ref{took}))} \\
  & F'(p,j)
  & \mbox{(by (\ref{transout}))}\makebox[0pt]{\,.}
  \end{array}$$}
  Now let $t=\comp[k]{l}$ with $k\in T'$. 
  We have $G=G_1+\{\comp[k]{l}\}-G^k_l$ for certain $l\geq^\# k$ with $G_1(\exec[k]{l})=1$.
  Since $G^i_{\!\!j}(\exec{j})\mathbin\geq 0$, we also have $G_1(\exec{j}) \geq 1$.
  By (\ref{rExec3}) this implies that $\neg(j\confeq k)$ or $(i,j)=(k,l)$.
  In the latter case \plat{$G(\exec{j})=\mbox{}$}
  $G_1(\exec{j})-G^i_{\!\!j}(\exec{j})=1-1=0$, contradicting our assumption.
  In the former case $p\notin\precond{k}$, so \plat{$G^k_l(\dist)=0$} and hence
  $G(\dist)=G_1(\dist)\geq F'(p,j)\cdot G_1(\exec{j})=F'(p,j)\cdot G(\exec{j})$.

\item Let $i\mathbin{\leqc} j \inp T'$ and $h<^\# j$.
  Since, for all $k\mathbin{\leqc} l\inp T'$,
  $G^k_l(\trans{j}\!\cdot\fire)\mathbin=\sum_\omega G^k_l(\trans{j}\!\cdot\reset[\omega])\mathbin=0$ and
  \plat{$G^k_l(\exec{j})=G^k_l(\fetched{j})$},
  the invariant is preserved when $t$ has the form \plat{$\comp[b]{c}\!$}. Using
  (\ref{pi-execute}), it is in danger only when $t=\exec{j}$ or $t=\trans{j}\!\cdot\reset[\omega]$
  for some $\omega$ with $\trans{j}\inp\UI_\omega$.

  First assume $M_1[\exec{j}\rangle$ and $T^h_j=G_1(\trans{j}\cdot\fire)-\sum_\omega G_1(\trans{j}\cdot\reset[\omega])=0$.
  Then
  $$\begin{array}[b]{r@{~\leq~}ll}
  \multicolumn{1}{r@{~\leq~}}{1}
  & G_1(\trans{j}\cdot\fire) - G_1(\trans{j}\cdot\und[\mbox{$\transout{j}$}])
  & \mbox{(by (\ref{transout}))} \\
  & G_1(\trans{j}\cdot\fire) - \sum_{\omega} G_1(\trans{j}\cdot\reset[\omega]) =0
  & \mbox{(by (\ref{took}))}, \\
  \end{array}$$
  which is a contradiction.

  Next assume \plat{$t=\trans{j}\!\cdot\reset[k]$} with $k \confeq j$, and $E^i_j=1$.
  By (\ref{rFp}) and (\ref{rExec}) the latter implies that \plat{$G_1(\exec{j})=1$} and \plat{$G_1(\fetched{j})=0$}\textsl{}.
  Then  
  $$\begin{array}[b]{r@{~\leq~}ll}
  \multicolumn{1}{r@{~=~}}{0}
  & G_1(\comp[k]{l})
  & \mbox{(by (\ref{r2}))} \\
  & G_1(\trans{j}\cdot\elide[k])+G_1(\trans{j}\cdot\reset[k])
  & \mbox{(by (\ref{reset}))} \\
  \multicolumn{1}{r@{~<~}}{} & G(\trans{j}\cdot\elide[k])+G(\trans{j}\cdot\reset[k]) \\
  & \sum_{l\geq^\#k}G(\fetched[k]{l})
  & \mbox{(by (\ref{reset}))}.
  \end{array}$$
  Hence $G_1(\fetched[k]{l})=G(\fetched[k]{l})>0$ for some $l\geq^\#k$, and by
  (\ref{fetch}) also $G_1(\exec[k]{l})>0$.
  Using (\ref{rExec3}) we obtain $(i,\!j)\mathop=(k,l)$, thereby obtaining a contradiction
  (\plat{$0\mathbin=G_1(\fetched{j})\mathbin=G_1(\fetched[k]{l})\mathbin>0$}).

\item 
Let $j\geq^\# i \confeq k \leq^\# l \in T'$ with $(i,j)\neq (k,\ell)$.
  The invariant is in danger only when \plat{$t\mathbin=\exec{j}$} or $t\mathbin=\exec[k]{l}$.
  W.l.o.g.\ let $t\mathbin=\exec[k]{l}$, with
  $G_1(\exec[k]{l})\mathbin=0$ and $G_1(\exec{j})\mathbin\geq 1$.

  Making a case distinction, first assume \plat{$G(\fetched{j})\mathbin\geq 1$}.
  Using (\ref{r3}), (\ref{r4}) and that $G(\exec[k]{l})=1$, $M'(p) \geq G(\dist)\geq
  F'(p,k)$ for all $p\in\precond{k}$.   Likewise, $M'(p) \geq G(\dist)\geq F'(p,i)$ for all $p\in\precond{i}$.
  Moreover, just as in \hyperlink{proofr4}{the proof of} (\ref{r4}), we derive, for all
  $p\in\precond{i}\cap\precond{k}$,
  $$\begin{array}[b]{@{}r@{~\geq~}ll@{}}
  \multicolumn{2}{@{}l}{M'(p)\geq G(\dist)} & \mbox{(by (\ref{r3}))} \\
  \mbox{}& \displaystyle
    F'(p,k)\cdot\big(G(\ini[k]\cdot\fire) - G(\ini[k]\cdot\undone)\big)
    + \hspace{-.7em}\sum_{h\geq^\# g\in\postcond{p}}\hspace{-.5em}F'(p,g)\cdot G(\textsf{fetch}_{g,h}^{p,k})
  & \mbox{(by (\ref{p_j}))} \\
  & \displaystyle
    F'(p,k)\cdot\big(G(\ini[k]\cdot\fire) - G(\ini[k]\cdot\undone)\big)
    + \hspace{-.7em}\sum_{h\geq^\# g\in\postcond{p}}\hspace{-.5em}F'(p,g)\cdot G(\fetched[g]{h})
  & \mbox{(by (\ref{fetch}))} \\
  & F'(p,k)\cdot\big(G(\ini[k]\cdot\fire) - G(\ini[k]\cdot\undone)\big) + F'(p,i)\cdot G(\fetched{j})
  & \mbox{(by (\ref{rFp}))} \\
  & F'(p,k)\cdot\big(G(\ini[k]\cdot\fire) - G(\ini[k]\cdot\und[\Pre^k_l])\big) + F'(p,i)\cdot G(\fetched{j})
  & \mbox{(by (\ref{took}))} \\
  & F'(p,k) + F'(p,i)
  & \mbox{(by (\ref{pre}))}.
  \end{array}$$
  It follows that $M'[\{k\}\mathord+\{i\}\rangle$. As $i\confeq k$ and $N'$ is a \hyperlink{finitary}{finitary}
  \hyperlink{scn}{structural conflict net}, this is impossible. (Note that this argument
  holds regardless whether $i=k$.)

  Now assume \plat{$G(\fetched{j})\leq 0$}.
  Then, in the notation of (\ref{pi-execute}), \plat{$E^i_j=1$}.
  Since $G_1(\exec[k]{l})=0$, (\ref{rFp}) and (\ref{fetch}) yield $G_1(\fetched[k]{l})=0$.
  Hence $G(\exec[k]{l})=1$ and $G(\fetched[k]{l})= 0$, so $E^k_l=1$.
  We will conclude the proof by deriving a contradiction from \plat{$E^i_j=E^k_l=1$}.
  In case $j=l$ this contradiction emerges immediately from (\ref{pi-execute}).
  By symmetry it hence suffices to consider the case $j< l$.

  By (\ref{r3}) and (\ref{r5}) we have $M'(p)\geq G(\dist)\geq F'(p,j)$ for all
  $p\in\precond{j}$, so $M'[j\rangle$. Likewise $M'[l\rangle$ and, using (\ref{r4}),
  $M'[i\rangle$ and $M'[k\rangle$. Since $j \confeq i \confeq k$ and
  $N'$ has no \hyperlink{M}{fully reachable \visible pure \structuralM},
  $j \confeq k$. Since $j \confeq k \confeq l$ and
  $N'$ has no \hyperlink{M}{fully reachable \visible pure \structuralM},
  $j \confeq l$. So $j<^\# l$.
  By (\ref{pi-execute}), using that \plat{$E^i_j=1$}, \plat{$T^j_l=0$}. This is in contradiction with
  $E^k_l=1$ and (\ref{rtrans}).

\item Suppose that $G(\exec{j})>0$ and $G(\exec[k]{l})>0$, with $i\leq^\# j \confeq k \leq^\# l \in T'$.
  By (\ref{r3}) and (\ref{r5}) we have $M'(p)\mathbin\geq G(\dist)\mathbin\geq F'(p,j)$ for all
  $p\inp\precond{j}$, so $M'[j\rangle$. Likewise, using (\ref{r4}),
  $M'[i\rangle$ and $M'[k\rangle$. Since $i\confeq j \confeq k$ and
  $N'$ has no \hyperlink{M}{fully reachable \visible pure \structuralM},
  $i \confeq k$. Using this, the result follows from (\ref{rExec2}).
  \hfill\filledbox
\end{enumerate}
\end{proofclaimNobox}

\begin{claim}{extra}
For any $M\in[M_0\rangle_N$ there exist $M'\in[M'_0\rangle_{N'}$ and
$G\fin \Int^T$ satisfying (\ref{r0})--(\ref{rLast}) from \refcl{reachable}, and
\begin{enumerate}[(A)]
\setcounter{enumi}{11}
\item there are no $j\geq^\# i \confeq k \leq^\# l \in T'$ with
  \plat{$M[\exec[i]{j}\rangle$} and \plat{$G(\exec[k]{l})>0$},\label{rExec4}
\item there are no $i\leq^\# j \confeq k \leq^\# l \in T'$ with
  \plat{$M[\exec[i]{j}\rangle$} and \plat{$G(\exec[k]{l})>0$},\label{rExec5}
\item if \plat{$M[\exec[i]{j}\rangle$} for $i\leqc j \in T'$ then $M'[j\rangle$.\label{rVeryLast}
\end{enumerate}
\end{claim}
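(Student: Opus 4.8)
The plan is to reuse the marking $M'\in[M'_0\rangle_{N'}$ and the signed multiset $G\fin\Int^T$ that \refcl{reachable}(\ref{follow2}) already produces for the given reachable $M$, so that (\ref{r0})--(\ref{rExec3}) hold, and to verify the three new properties (\ref{rExec4})--(\ref{rVeryLast}) for this very pair. The whole argument rests on the observation that enabledness of $\exec[i]{j}$ in $M$ supplies exactly the inequalities that the proofs of \refcl{reachable} previously extracted from a transition $\exec[i]{j}$ having just fired: the places $\Pre^i_j$, $\transout[h]{j}$ and $\pi_{j\#l}$ are preplaces of $\exec[i]{j}$, so $M[\exec[i]{j}\rangle$ activates the strong forms of (\ref{pre}) and (\ref{transout}) and, through the preplaces $\pi_{j\#l'}$, forces $E^{i'}_j=0$ and $T^j_{l'}=0$. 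Hence the first step is the auxiliary statement $(\star)$: \emph{if $M[\exec[i]{j}\rangle$ for some $i\leqc j\in T'$ then $M'[i\rangle$ and $M'[j\rangle$}. I would prove $(\star)$ by replaying verbatim the inequality chains from the proofs of (\ref{r4}) and (\ref{r5}) in the case $t=\exec[i]{j}$ — they use only $M[\exec[i]{j}\rangle$ together with the $G$-properties of \refcl{G-properties} — stopping the chain for (\ref{r4}) one line early to read $G(\dist[p])\geq F'(p,i)\cdot(1+G(\fetched[i]{j}))\geq F'(p,i)$; then (\ref{r3}) delivers $M'[i\rangle$ and $M'[j\rangle$. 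Property (\ref{rVeryLast}) is precisely the half $M'[j\rangle$ of $(\star)$.

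For (\ref{rExec4}) I would argue by contradiction from $j\geq^\# i\confeq k\leq^\# l$ with $M[\exec[i]{j}\rangle$ and $G(\exec[k]{l})>0$, following the structure of the proof of (\ref{rExec2}) but with the enabled transition $\exec[i]{j}$ now in the role played there by the just-fired transition, splitting on $G(\fetched[k]{l})$. If $G(\fetched[k]{l})\geq 1$, then taking $c=i$ in (\ref{p_j}) the first summand contributes $F'(p,i)$ (since $M[\exec[i]{j}\rangle$ yields, via (\ref{took}) and (\ref{pre}), $G(\ini[i]\cdot\fire)-G(\ini[i]\cdot\undone)\geq 1$) while the $g=k$ term of the fetch-sum contributes $F'(p,k)$; with (\ref{r3}) this gives $M'(p)\geq F'(p,i)+F'(p,k)$ on $\precond{i}\cap\precond{k}$, which combined with $M'[i\rangle$ from $(\star)$ and $M'[k\rangle$ from (\ref{r4}) produces a step $M'[\{i\}+\{k\}\rangle$, impossible as $i\confeq k$ and $N'$ is a \hyperlink{scn}{structural conflict net}. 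The degenerate instance $(i,j)=(k,l)$ falls here: enabledness forces $E^i_j=0$, so $G(\fetched[i]{j})=G(\exec[i]{j})\geq 1$, and the two $F'(p,i)$-summands of (\ref{p_j}) combine to $M'[2\cdot\{i\}\rangle$, excluded by absence of self-concurrency. The remaining case $G(\fetched[k]{l})\leq 0$, \ie $E^k_l=1$, is closed by the $(T,E)$-bookkeeping of (\ref{pi-execute}) together with (\ref{rtrans}).

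Property (\ref{rExec5}) then follows from (\ref{rExec4}) by the reduction already used to obtain (\ref{rExec3}) from (\ref{rExec2}). Given $i\leq^\# j\confeq k\leq^\# l$ with $M[\exec[i]{j}\rangle$ and $G(\exec[k]{l})>0$, statement $(\star)$ gives $M'[i\rangle$ and $M'[j\rangle$, while $G(\exec[k]{l})>0$ with (\ref{r4}), (\ref{r5}) and (\ref{r3}) gives $M'[k\rangle$. Since $i\confeq j\confeq k$, were $i$ and $k$ concurrent the set $\{i,j,k\}$ — all enabled under the reachable $M'$, with disjoint presets for the concurrent pair because $N'$ is a structural conflict net — would constitute a fully reachable \visible pure \structuralM{} of $N'$, contrary to hypothesis; hence $i\confeq k$. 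But then $j\geq^\# i\confeq k\leq^\# l$ with $M[\exec[i]{j}\rangle$ and $G(\exec[k]{l})>0$ is exactly a configuration forbidden by (\ref{rExec4}), the contradiction sought.

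I expect the main obstacle to be the second case of (\ref{rExec4}), where $G(\fetched[k]{l})\leq 0$. In the proof of (\ref{rExec2}) this case was driven by the just-fired transition forcing its own index to have $E=1$; here, by contrast, enabledness of $\exec[i]{j}$ forces $E^i_j=0$ rather than $1$, so the $(T,E)$-contradiction via (\ref{pi-execute}) and (\ref{rtrans}) must be re-derived under the correct sign pattern, with separate attention to whether some $l'>^\# j$ exists (so that $\pi_{j\#l'}$ is genuinely a preplace of $\exec[i]{j}$) or $j$ is maximal in its conflict order. Reconstructing the single accumulating inequality chain that yields the \emph{sum} $F'(p,i)+F'(p,k)$ at shared preplaces (rather than the maximum) in the first case is the other point demanding care; it is there, and in the \structuralM-detection step of (\ref{rExec5}), that the hypotheses on $N'$ — no self-concurrency, disjoint presets of concurrent transitions, and absence of a fully reachable \visible pure \structuralM — are actually consumed.
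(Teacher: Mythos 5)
Your overall plan --- reuse the pair $(M',G)$ supplied by \refcl{reachable}(2) and verify the three new items for that very pair --- coincides with the paper's starting point, and your statement $(\star)$, item (\ref{rVeryLast}), and the first case of your argument for (\ref{rExec4}) are sound. But there is a genuine gap exactly where you yourself place ``the main obstacle'': the case $G(\fetched[k]{l})\leq 0$ of (\ref{rExec4}) is never actually closed. You assert it ``is closed by the $(T,E)$-bookkeeping of (\ref{pi-execute}) together with (\ref{rtrans})'' and then concede that the contradiction must be re-derived under a reversed sign pattern, with separate attention to whether some $l'>^\#j$ exists. That re-derivation is not routine: enabledness of $\exec[i]{j}$ forces $E^i_j=0$ only through a preplace $\pi_{j\#l'}$, which exists only when $j$ is not $<^\#$-maximal; when $j$ is maximal (notably in the sub-case $j=l$ with $k<^\# j$) no single place inequality bounds $\sum_{i'\leqc j}E^{i'}_j$, and the facts you would need there --- $0\leq G(\exec{j})\leq 1$ and ``at most one positive $\exec$-count per conflict neighbourhood'' --- are inductive invariants established in \refcl{reachable}, not static consequences of \refcl{G-properties}. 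So the purely static route stalls precisely where you predicted, and the proposal as written does not prove (\ref{rExec4}).

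The missing idea is that you need not argue from enabledness at all: since $M\in[M_0\rangle_N$ and $M[\exec[i]{j}\rangle$, you may simply \emph{fire} the transition. Put $M_1:=M+\marking{\exec[i]{j}}$ and $G_1:=G+\{\exec[i]{j}\}$; then \refcl{reachable}(1) guarantees that $(M_1,M',G_1)$ again satisfies (\ref{r0})--(\ref{rLast}), and $G_1(\exec[i]{j})>0$ because $G(\exec[i]{j})\geq 0$ by (\ref{rExec}). Now (\ref{rExec4}) and (\ref{rExec5}) are immediate: a pair $(k,l)\neq(i,j)$ with $G(\exec[k]{l})=G_1(\exec[k]{l})>0$ would make $G_1$ violate (\ref{rExec2}) resp.\ (\ref{rExec3}), while $(k,l)=(i,j)$ would give $G_1(\exec[i]{j})\geq 2$, violating (\ref{rExec}). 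Item (\ref{rVeryLast}) follows by applying (\ref{r3}) and (\ref{r5}) to $G_1$: for all $p\in\precond{j}$ one gets $M'(p)\geq G_1(\dist)\geq F'(p,j)\cdot G_1(\exec[i]{j})\geq F'(p,j)$, so $M'[j\rangle$. This is the paper's entire proof; all the inequality chains you propose to replay are already packaged, once and for all, inside the invariant-preservation argument of \refcl{reachable}(1).
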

\begin{proofclaimNobox}
Given $M$, by \refcl{reachable}(2) there are $M'$ and $G$ so that
the triple $(M,M',G)$ satisfies (\ref{r0})--(\ref{rLast}).
Assume \plat{$M[\exec[i]{j}\rangle$} for some $i\leqc j \in T'$.
Let $M_1:=M+\marking{\exec[i]{j}}$ and $G_1:=G+\{\exec{j}\}$.
By (\ref{rExec}) $G(\exec{j})\geq 0$, so $G_1(\exec{j})>0$.
By \refcl{reachable}(1) the triple ($M_1,M',G_1$) satisfies (\ref{r0})--(\ref{rLast}).
\begin{enumerate}[(A)]
\setcounter{enumi}{11}
\item
  Suppose \plat{$G(\exec[k]{l})>0$} for certain $l\geq^\# k \confeq i$.\
  In case $(i,j)=(k,\ell)$ we have $G_1(\exec{j})\geq 2$, contradicting (\ref{rExec}).
  In case $(i,j)\neq (k,\ell)$, $G_1$ fails (\ref{rExec2}), also a contradiction.
\item
  Suppose \plat{$G(\exec[k]{l})>0$} for certain $l\geq^\# k \confeq j$.
  Then $G_1$ fails (\ref{rExec}) or (\ref{rExec3}), a contradiction.
\item
  By (\ref{r3}) and (\ref{r5}) $M'(p)\geq G_1(\dist)\geq F(p,j)$ for all $p\in\precond{j}$, so $M'[j\rangle$.
\hfill\filledbox
\end{enumerate}
\end{proofclaimNobox}

\begin{claim}{concurrency}
If \plat{$M[\{\exec[i]{j}\}\mathord+\{\exec[k]{l}\}\rangle$} for some $M\in[M_0\rangle_N$
then $\neg (i \confeq k)$.
\end{claim}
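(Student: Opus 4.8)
The plan is to argue by contradiction: assume $i \confeq k$ and derive a contradiction from the hypothesis $M[\{\exec[i]{j}\}\mathord+\{\exec[k]{l}\}\rangle$. The key observation is that this is not a fresh token‑counting computation but a short corollary of the mutual‑exclusion invariants already packaged in \refcl{reachable}; in particular (\ref{rExec}) and (\ref{rExec2}) do all the work once both execute transitions have been fired.

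First I would fix, by \refcl{reachable}(2) applied to the reachable marking $M\in[M_0\rangle_N$, a pair $M'\in[M'_0\rangle_{N'}$ and $G\fin\Int^T$ such that $(M,M',G)$ satisfies all of (\ref{r0})--(\ref{rExec3}). Next I exploit that the concurrent step can be linearised: from $\precond{\exec[i]{j}}+\precond{\exec[k]{l}}\leq M$ we get $M[\exec[i]{j}\rangle M_1$ with $M_1=M+\marking{\exec[i]{j}}$, and since $\precond{\exec[k]{l}}\leq M-\precond{\exec[i]{j}}\leq M_1$, also $M_1[\exec[k]{l}\rangle$. Because neither $\exec[i]{j}$ nor $\exec[k]{l}$ is of the form $\comp{j}$, applying the $\follow$-operation of \refcl{reachable}(1) twice leaves $M'$ untouched and yields $G_{12}:=G+\{\exec[i]{j}\}+\{\exec[k]{l}\}$ with $(M_{12},M',G_{12})$ again satisfying (\ref{r0})--(\ref{rExec3}); moreover $G_{12}(\exec[i]{j})\geq 1$ and $G_{12}(\exec[k]{l})\geq 1$.

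The contradiction is then a two‑line case split. If $(i,j)=(k,l)$ then $G_{12}(\exec[i]{j})\geq 2$, contradicting the bound $G_{12}(\exec[i]{j})\leq 1$ of (\ref{rExec}). If instead $(i,j)\neq(k,l)$, then, since $\exec[i]{j}$ and $\exec[k]{l}$ exist, $i\leqc j$ and $k\leqc l$, i.e. $j\geq^\# i$ and $k\leq^\# l$; together with the assumption $i\confeq k$ and $G_{12}(\exec[i]{j}),G_{12}(\exec[k]{l})>0$, this is precisely the configuration forbidden by (\ref{rExec2}). Either way we obtain a contradiction, so $\neg(i\confeq k)$, which is the assertion of \refcl{concurrency} (and supplies Condition~\ref{concurrent} of \refthm{3ST}, with $t'=i$ and $u'=k$).

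I do not expect a genuine obstacle here; the heavy lifting was already discharged in establishing \refcl{reachable}. The only points needing care are justifying the linearisation of the step so that the second execute transition remains enabled after firing the first, and matching the index chain $j\geq^\# i\confeq k\leq^\# l$ required by (\ref{rExec2}) — both routine. The one conceptual move is to notice that the \emph{concurrency} hypothesis lets us fire \emph{both} executes, thereby making $G$ positive on two conflicting execute transitions simultaneously and reducing dynamic concurrency of $\exec[i]{j}$ and $\exec[k]{l}$ to the static incompatibility recorded in (\ref{rExec}) and (\ref{rExec2}).
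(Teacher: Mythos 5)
Your argument is correct and is essentially identical to the paper's own proof: both invoke \refcl{reachable}(2) to obtain $(M,M',G)$ satisfying the invariants, apply \refcl{reachable}(1) twice to add both execute transitions to $G$ (the order of linearisation is immaterial), and then derive the contradiction from (\ref{rExec}) when $(i,j)=(k,l)$ and from (\ref{rExec2}) otherwise. The points you flag as needing care (enabledness after linearisation, and that $\follow$ leaves $M'$ fixed since neither transition is a $\comp{j}$) are handled exactly as you describe.
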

\begin{proofclaim}
Suppose \plat{$M[\{\exec[i]{j}\}\mathord+\{\exec[k]{l}\}\rangle$} for some $M\in[M_0\rangle_N$.
By \refcl{reachable}(2) there exist $M'\in[M'_0\rangle_{N'}$ and
$G\fin \Int^T$ satisfying (\ref{r0})--(\ref{rLast}).
Let $M_1:=M+\marking{\exec[k]{l}}$ and \plat{$G_1:=G\mathbin+\{\exec[k]{l}\}$}.
By \refcl{reachable}(1) the triple $(M_1,M',G_1)$ satisfies (\ref{r0})--(\ref{rLast}).
Let $M_2:=M_1+\marking{\exec{j}}$ and \plat{$G_2:=G_1\mathbin+\{\exec{j}\}$}.
Again by \refcl{reachable}(1), also the triple $(M_2,M',G_2)$ satisfies (\ref{r0})--(\ref{rLast}).
By (\ref{rExec}) \plat{$G(\exec{j})\mathbin\geq 0$}, so in case $(i,j)\mathbin=(k,l)$ we obtain
\plat{$G_2(\exec{j})\mathbin\geq 2$}, contradicting
(\ref{rExec}). Hence $(i,j)\mathbin{\neq}(k,l)$. Moreover, $G_2(\exec[k]{l})>0$ and $G_2(\exec{j})>0$.
Now (\ref{rExec2}) implies $\neg (i \confeq k)$.
\end{proofclaim}
For any $t\in\{\ini,~ \trans{j}\}$ with
$h,j\inp T'$, and any \mbox{$\omega\inp \UI$} with $t\in\UI_\omega$, we write
$$t(\omega) := t\cdot\fire + t\cdot\undo[\omega] +
               \big(\sum_{f\in t^{\,\it far}} t\cdot\und\big) +
               t\cdot\undone+t\cdot\reset[\omega]\;.\vspace{-1ex}$$
The transition $t$ has no preplaces of type {\scriptsize \it in}, nor postplaces of type {\scriptsize \it out}.
By checking in \reftab{reversible} or \reffig{reversible} that each other place
occurs as often in $\precond{u(\omega)}+\postcond{(u\cdot\elide[\omega])}$ as in
$\postcond{u(\omega)}+\precond{(u\cdot\elide[\omega])}$, one verifies, for any $\omega\in
\UI$ with $t\in \UI_\omega$, that\vspace{-1ex}
\begin{equation}\label{elide}
\marking{t(\omega)} = \marking{t\cdot\elide[\omega]}.
\end{equation}
Let $\equiv$ be the congruence relation on finite signed multisets of transitions
generated by
\begin{eqnarray}\label{elideNF}
t(\omega) &\equiv& t\cdot\elide[\omega]
\end{eqnarray}
for all $t \in \{\ini,~ \trans{j} \mid h,j\inp T'\}$ and $\omega\in \UI$ with
$\UI_\omega\ni t$.
Here \emph{congruence} means that $G_1\mathbin\equiv G_2$ implies $k\cdot G_1 \mathbin\equiv k\cdot G_2$ and
$G_1+H \mathbin\equiv G_2+H$ for all $k\inp\Int$ and $H\fin \Int^T$.
Using (\ref{elide}) $G_1\equiv G_2$ implies $\marking{G_1}=\marking{G_2}$.

\begin{claim}{0}
If $M'=\marking{G}$ for $M'\in \Int^{S'}$ and $G\fin \Int^T$ such that for all $i\in T'$
we have $G(\comp{j})=0$ and either $\forall j\geq^\# i.~G(\exec{j})\geq 0$ or $\forall
j\geq^\# i.~G(\exec{j})\leq 0$, then $G \equiv \emptyset$.
\end{claim}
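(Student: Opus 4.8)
The plan is to prove $G\equiv\emptyset$ by exhibiting $G$ as an integer combination of the generators $t(\omega)-t\cdot\elide[\omega]$ of the congruence $\equiv$, since this is precisely what $G\equiv\emptyset$ means. Reading off coefficients, such a representation is equivalent to the three families of identities: \textbf{(i)} $G$ vanishes on every standard transition, i.e.\ $G(\dist)=G(\exec{j})=G(\fetch)=G(\fetched{j})=G(\comp{j})=0$; \textbf{(ii)} for each reversible $t$ and interface $\omega$ the values $G(t\cdot\undo[\omega])$, $G(t\cdot\reset[\omega])$ and $-G(t\cdot\elide[\omega])$ coincide, with common value $\lambda_{t,\omega}$; and \textbf{(iii)} $G(t\cdot\fire)$, $G(t\cdot\undone)$ and each $G(t\cdot\und)$ all equal $\sum_\omega\lambda_{t,\omega}$. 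Granted these, one verifies termwise that $G=\sum_{t,\omega}\lambda_{t,\omega}\bigl(t(\omega)-t\cdot\elide[\omega]\bigr)$, hence $G\equiv\emptyset$.

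The driving observation is that $M'=\marking{G}$ with $M'\in\Int^{S'}$ forces $\marking{G}(s)=0$ at every auxiliary place $s\in S\setminus S'$ of $\impl{N'}$, so the token balance $\sum_t G(t)\cdot\marking{t}(s)=0$ holds there. Consequently the place-by-place derivations of \refcl{G-properties} now yield the relations (\ref{undo}), (\ref{reset}), (\ref{fetch}), (\ref{undocount}) and (\ref{p_j}) \emph{as equalities}. First I would read the equality at $\ack[\omega](t)$ together with the hypothesis $G(\comp{j})=0$ to obtain $G(t\cdot\reset[\omega])+G(t\cdot\elide[\omega])=0$; substituting this into the equality at $\reset[\omega](t)$ gives $\sum_{j\geq^\# i}G(\fetched{j})=0$. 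The equalities at $\fetchin$ and $\fetchout$ (the equality form of (\ref{fetch})) identify $G(\exec{j})=G(\fetch)=G(\fetched{j})$, so that $\sum_{j\geq^\# i}G(\exec{j})=0$ for every $i\in T'$.

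Here lies the one genuinely non-mechanical step, and the main obstacle: deploying the sign hypothesis. For fixed $i$ all summands $G(\exec{j})$ with $j\geq^\# i$ share a common sign by assumption, yet their sum vanishes, so each $G(\exec{j})=0$; this eliminates every $\exec$-coefficient and with it $G(\fetch)=G(\fetched{j})=0$. The remainder is propagation. The equality at $\undo[\omega](t)$ now reads $G(t\cdot\undo[\omega])+G(t\cdot\elide[\omega])=0$, which with the earlier $\ack$-relation yields \textbf{(ii)} with $\lambda_{t,\omega}:=G(t\cdot\undo[\omega])$. The equality form of the chain (\ref{undocount}), obtained from $\Fired(t)$, $\take(f,t)$, $\took(f,t)$ and $\rho(t)$, forces $G(t\cdot\fire)=G(t\cdot\und)=G(t\cdot\undone)=\sum_\omega G(t\cdot\undo[\omega])=\sum_\omega\lambda_{t,\omega}$, which is \textbf{(iii)}. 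Finally the equality form of (\ref{p_j}) at a place $p_c$, using $G(\fetch)=0$ and $G(\ini[c]\cdot\fire)=G(\ini[c]\cdot\undone)$ from \textbf{(iii)}, gives $G(\dist)=0$, completing \textbf{(i)}. With \textbf{(i)}--\textbf{(iii)} in place, $G$ is the asserted combination of generators, so $G\equiv\emptyset$. I expect the only delicate bookkeeping to be the correct handling of the far-place index $f$ in the $t\cdot\und$ transitions and of the interface index $\omega$ throughout, everything else being routine once the inequalities of \refcl{G-properties} are sharpened to equalities.
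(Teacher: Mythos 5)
Your proof is correct and follows essentially the same route as the paper's: both exploit that $M'=\marking{G}$ forces a zero token balance at every auxiliary place in $S\setminus S'$, turning the inequations of Claim 1 into equations, and both use the sign hypothesis to conclude $G(\exec{j})=0$ from $\sum_{j\geq^{\#}i}G(\exec{j})=0$ before propagating through the remaining equations. The only (cosmetic) difference is that the paper first normalises $G$ modulo $\equiv$ so that all $\elide$-coefficients vanish and then shows $G$ is literally empty, whereas you directly exhibit $G$ as an integer combination of the generators $t(\omega)-t\cdot\elide[\omega]$; the two bookkeeping schemes are interchangeable.
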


\begin{proofclaim}
  Let $M'$ and $G$ be as above. 
  W.l.o.g.\ we assume $G(t\cdot\elide[\omega])=0$ for all $t\in\{\ini,~
  \trans{j}\}$ and all $\omega\in\UI$ with $t\in \UI_\omega$, for any $G$ can be brought
  into that form by applying (\ref{elideNF}).
  For each $s\in S\setminus S'$ we have $M'(s)=0$, and using this the inequations
  (\ref{undo})--(\ref{fetch}) and (\ref{p_j}) of \refcl{G-properties} turn into equations.
  For each $i\in T'$ we have $G(\sum_{j\geq^\# i}\exec{j})=0$, using (the equational form
  of) (\ref{undo})--(\ref{interfacecount}), and that $G(\comp{j})=0$. Since
  \plat{$G(\exec{j})\geq 0$} (or $\mbox{}\leq 0$) for all $j\geq^\# i$,
  this implies that \plat{$G(\exec{j})= 0$} for each $i\leqc j\in T'$.
  With (\ref{fetch}) we obtain $G(\fetched{j})=G(\fetch)=0$ for each applicable $p,c,i,j$.
  Using that $G(t\cdot\elide[\omega])=0$ for each applicable $t$ and $\omega$, with
  (\ref{reset})--(\ref{took}) and (\ref{p_j}) we find $G(t)=0$ for all $t\in T$.
\end{proofclaim}

\begin{claim}{D}
Let $M:=M'+(M_0\mathord-M'_0)+\marking{H}\in[M_0\rangle_N$ for $M'\inp[M'_0\rangle_{N'}$ and
$H\fin \Int^T$ with \plat{$H(\exec{j})\mathbin= 0$} for all $i\leqc j\in T'$.
\begin{enumerate}[(a)]
\item If \plat{$H(\comp[i]{j})<0$} and \plat{$H(\comp[k]{l})<0$} for certain
  $i,k \in T'$ then $\neg(i \mathrel{\#} k)$.\label{Hcomp}
\item If \plat{$M[\exec{j}\rangle$} and \plat{$H(\comp[k]{l})<0$} for certain
  $i,k \in T'$ then $\neg(i \confeq k)$ and $\neg(j \confeq k)$.\label{Hexec}
\item $H(\dist)\geq 0$ for all $p\in S'$ (with $\postcond{p}\neq\emptyset$).\label{dist-positive-H}
\item Let $c\confeq i \in T'$.
  If $H(\dist)\geq F'(p,c)$ for all $p\in\precond{c}$, then $H(\comp{j})=0$.\label{dist-final}
\item If $M[\exec{j}\rangle$ with $i\leqc j\in T'$ then $M'[j\rangle$.\label{Hexecj}
\end{enumerate}
\end{claim}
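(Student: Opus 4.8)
The plan is to prove all five items together against a single canonical witness for the reachable marking $M$. First I would invoke \refcl{reachable}(2) (and, where an execute transition is enabled, \refcl{extra}) to obtain $\hat M'\in[M'_0\rangle_{N'}$ and $\hat G\fin\Int^T$ with $M=\hat M'+(M_0-M'_0)+\marking{\hat G}$ satisfying all invariants (\ref{r0})--(\ref{rExec3}). Setting $D:=H-\hat G$, one has $\marking{D}=\hat M'-M'$, a signed multiset supported on $S'$, together with $D(\exec{j})=-\hat G(\exec{j})\in\{-1,0\}$ (by (\ref{rExec})) and $D(\comp{j})=H(\comp{j})$ (by (\ref{r2})). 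The point of this bookkeeping is that \refcl{G-properties} applies verbatim to $H$, to $\hat G$, and to $D$: for $D$ one takes the reference $S'$-marking to be $M'-\hat M'$, so that the associated full marking is $M_0-M'_0\in\nat^S$. I would also record the elementary consequence of (\ref{reset}) and (\ref{fetch}) that, since $H(\exec{j})=0$, one always has $H(\comp{j})\le\sum_{l\geq^\# i}H(\fetched{l})\le 0$; thus all finalise counts are nonpositive and (a),(b) really concern their strictly negative case.

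For the quantitative items (c)--(e) I would argue by token balance. For (c), the canonical count satisfies $\hat G(\dist)\ge 0$ (this is the computation already used for invariant (\ref{r5}): combine (\ref{p_j}) with (\ref{fetch}) and $\hat G(\fetched{j})\ge 0$ from (\ref{rFp})). It then remains to control $D(\dist)$; here I would use the identity $\marking{i}=\marking{G^i_{j}}$ from (\ref{mimic}) to absorb the net $N'$-executions carried by the finalise transitions of $D$, reducing $D$ to a combination of elide-moves via \refcl{0}, which fixes $D(\dist)$ and yields $H(\dist)=\hat G(\dist)+D(\dist)\ge 0$. For (e), the enabledness $M[\exec{j}\rangle$ feeds directly into (\ref{rVeryLast}) of \refcl{extra}, giving $\hat M'[j\rangle$; I then transfer this to the given $M'$ through $M'-\hat M'=\marking{\hat G-H}\restrictedto S'$ and the inequality (\ref{p}), using (c) to keep the distribute/finalise correction nonnegative on $\precond{j}$. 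Item (d) is the converse balance: the hypothesis $H(\dist)\ge F'(p,c)$ on all $p\in\precond{c}$ leaves, via (\ref{p}) and the nonpositivity of finalise counts noted above, no slack for $\textsf{finalise}^i$ to be nonzero, forcing $H(\comp{j})=0$.

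The heart of the claim is (a) and (b), and this is where the hypothesis that $N'$ has no fully reachable visible pure \structuralM\ is consumed. The strategy is to convert the algebraic data---a strictly negative finalise count $H(\comp[k]{l})<0$, respectively the enabledness $M[\exec{j}\rangle$---into \emph{positivity of a canonical execute count}, i.e.\ $\hat G(\exec[k]{l'})>0$ for some $l'\geq^\# k$ (for a negative finalise of $k$ this comes from the matched firing of finalise with execute: since $\hat G(\comp{j})=0$, the deficit recorded by $D$ must sit on an execute/fetched of $k$). Once both ``active'' executions are expressed as positive execute counts in a common triple obtained from \refcl{reachable}(1), invariants (\ref{rExec2}) and (\ref{rExec3}) (for (a)) and their enabled versions (\ref{rExec4}),(\ref{rExec5}) of \refcl{extra}, together with \refcl{concurrency}, rule out $i\conflict k$ (for (a)) and $i\confeq k$, $j\confeq k$ (for (b)); the final elimination of the remaining concurrent case is exactly the step contradicting the absence of a fully reachable pure \structuralM\ in $N'$, in the same fashion as inside the proof of (\ref{rExec4}) in \refcl{extra}.

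The main obstacle I anticipate is precisely this bridge in (a)/(b): showing that the sign of a finalise entry of the essentially arbitrary signed multiset $H$ (which need not obey the reachability invariants) genuinely reflects a simultaneously enabled execution of the conflicting $N'$-transition, rather than a spurious algebraic cancellation. Making that rigorous is what forces the comparison with the canonical $\hat G$ and the reduction of the inert part of $D$ via \refcl{0}; once the signs have been turned into honest enabledness of $i$, $j$ and $k$ in $N'$, the conflict/no-\structuralM\ bookkeeping is routine and mirrors the arguments already carried out for (\ref{rExec2})--(\ref{rExec5}).
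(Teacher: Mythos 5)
Your proposal follows essentially the same route as the paper's proof: it obtains a canonical witness $(M'_1,G_1)$ via \refcl{extra}, absorbs the (nonpositive) finalise counts of $H$ using $\marking{i}=\marking{G^i_{\!\!j}}$, identifies the remainder with $G_1$ through \refcl{0}, and then reads off (a)/(b) from the invariants (\ref{rExec2})--(\ref{rExec5}) and (c)--(e) by token balance. The details you gloss over are repairable within this framework but not automatic: the absorption must use, for each $i$, the particular $j=f(i)$ with $G_1(\exec{{f(i)}})>0$ in order for the sign condition of \refcl{0} to hold, item (c) needs (a) to bound the negative finalise contributions at any given place by one, and item (d) is settled not by counting via (\ref{p}) alone but by showing that $H(\comp{j})<0$ would let the reachable marking concurrently enable the conflicting transitions $c$ and $i$, contradicting the structural-conflict-net property.
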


\begin{proofclaimNobox}
By \refcl{extra} there exist $M'_1\in[M'_0\rangle_{N'}$ and $G_1\fin \Int^T$
satisfying (\ref{r1})--(\ref{rVeryLast}) (with $M$, $M'_1$ and $G_1$ playing the r\^oles of $M$, $M'$ and $G$).
In particular, $M=M'_1+(M_0-M'_0)+\marking{G_1}$, $G_1(\comp{j}) = 0$ for all $i\in T'$,
and $G_1(\exec{j})\geq 0$ for all $i\leqc j\in T'$. Using (\ref{rExec2}), for each $i\in
T'$ there is at most one $j\geq^\#i$ with \plat{$G_1(\exec{j})>0$}; we denote this $j$ by $f(i)$,
and let $f(i):=i$ when there is no such $j$. This makes $f:T'\rightarrow T'$ a function,
satisfying $G_1(\exec{j})=0$ for all $j\geq^\# i$ with $j\neq f(i)$.

Given that \plat{$H(\exec{j})\mathbin=0$} for all $i\leqc j\in T'$,
(\ref{undo})--(\ref{interfacecount}) (or (\ref{reset}) and (\ref{fetch})) imply
$H(\comp{j})\leq 0$ for all $i\in T'$.
Let $M'_2:=M'+\sum_{i \inp T'} H(\comp{j})\cdot \marking{i}$
and $G_2:=H-\sum_{i \inp T'}H(\comp{j})\cdot G^i_{\!f(i)}$, where $G^i_{\!\!j}$ is the
right-hand side of (\ref{mimic}).
Then $M = M'+(M_0-M'_0)+\marking{H} = M'_2+(M_0-M'_0)+\marking{G_2}$,
using that \plat{$\marking{i}=\marking{G^i_{\!f(i)}}$}.
Moreover, $G_2(\comp{j})=0$ for all $i\inp T'$, using that \plat{$G^i_{\!f(i)}(\comp{j})=1$}.

It follows that $M'_1-M'_2 = \marking {G_2-G_1}$. Moreover, we have $(G_2-G_1)(\comp{j})=0$ for
all $i\in T'$. We proceed to show that $G_2-G_1$ satisfies the remaining precondition of \refcl{0}.
So let $i \in T'$. In case \plat{$H(\comp{j})=0$}, for all $j\geq^\# i$ we have
\plat{$G_2(\exec{j})= 0$}, and $G_1(\exec{j})\geq 0$ by (\ref{rExec}). Hence \plat{$(G_2-G_1)(\exec{j})\leq 0$}.
In case \plat{$H(\comp{j})<0$}, we have \plat{$G_2(\exec{{f(i)}})\geq 1$}, and hence, using (\ref{rExec}),
\plat{$(G_2-G_1)(\exec{{f(i)}})\geq 0$}. Furthermore, for all $j\neq f(i)$,
\plat{$G_2(\exec{j})\geq 0$} and \plat{$G_1(\exec{j})=0$}, so again $(G_2-G_1)(\exec{j})\geq 0$.

Thus we may apply \refcl{0}, which yields $G_2 \equiv G_1$. It follows that $M'_2=M'_1\in[M'_0\rangle_{N'}$.
\begin{enumerate}[(a)]
\item
Suppose that \plat{$H(\comp[i]{j})<0$} and \plat{$H(\comp[k]{l})<0$} for certain $i\mathrel\#k \in T'$.
Then \plat{$G_2(\exec{{f(i)}})>0$} and $G_2(\exec[k]{{f(k)}})>0$, so
$G_1(\exec{{f(i)}})>0$ and $G_1(\exec[k]{{f(k)}})>0$, contradicting (\ref{rExec2}).
\item
Suppose that \plat{$M[\exec{j}\rangle$} and \plat{$H(\comp[k]{l})<0$} for certain
  $k\confeq i$ or $k\confeq j$.
Then \plat{$G_1(\exec[k]{{f(k)}})=\mbox{}$}\linebreak[1]\plat{$G_2(\exec[k]{{f(k)}})>0$},
contradicting (\ref{rExec4}) or (\ref{rExec5}).
\item
By (\ref{Hcomp}), for any given $p\in S'$ there is at most one $i\in\postcond{p}$ with $H(\comp{j})<0$.
For all $i\in T'$ with $i\notin\postcond{p}$ we have $G^i_{\!f(i)}(\dist)=0$.
First suppose $k\in\postcond{p}$ satisfies $H(\comp[k]{{f(k)}})<0$.
Then 
$$G_1(\exec[k]{{f(k)}})\begin{array}[t]{@{~=~}l}G_2(\exec[k]{{f(k)}})\\
   H(\exec[k]{{f(k)}})-\sum_{i \in T'}H(\comp{j})\cdot G^i_{\!f(i)}(\exec[k]{{f(k)}})\\
   0-H(\comp[k]{{f(k)}}),\end{array}$$
so by (\ref{r4}) $G_1(\dist)\geq -F'(p,k)\cdot H(\comp[k]{{f(k)}})$.
Hence
$$H(\dist)~\begin{array}[t]{@{}l}=~G_2(\dist)+\sum_{i\in T'}H(\comp{j})\cdot G^i_{\!f(i)}(\dist)\\
   =~ G_1(\dist)+H(\comp[k]{{f(k)}})\cdot G^{k}_{f(k)}(\dist)\\
   \geq~ -F'(p,k)\cdot H(\comp[k]{{f(k)}}) + H(\comp[k]{{f(k)}})\cdot F'(p,k)=0.
  \end{array}$$
In case there is no $i\in\postcond{p}$ with $H(\comp{j})<0$ we have
$$H(\dist)=G_2(\dist)+\sum_{i\in T'}H(\comp{j})\cdot G^i_{\!f(i)}(\dist)=G_1(\dist)\geq 0\vspace{-2ex}$$
by (\ref{r4}) and (\ref{rExec}).

\item
Since \plat{$H(\comp{j})\leq 0$} and \plat{$G^i_{\!f(i)} (\dist)\geq 0$} for all $i \inp T'$,
also using (\ref{dist-positive-H}),
all summands in \plat{$H(\dist)+\sum_{i \in T'}-H(\comp{{f{i}}})\cdot G^i_{\!f(i)}(\dist)$} are positive.
Now suppose \plat{$H(\comp{j})<0$} for certain $i\inp T'$.
Then, using (\ref{r3}), for all $p\in\precond{i}$,
$$M'_1(p)\geq G_1(\dist) = G_2(\dist) \geq G^i_{\!f(i)}(\dist)=F'(p,i).$$
Furthermore, let $c \confeq i$ and suppose $H(\dist)\geq F'(p,c)$ for all $p\in\precond{c}$.
Then, using (\ref{r3}),
$$M'_1(p)\geq G_1(\dist) = G_2(\dist) \geq H(\dist)\geq F'(p,c)$$
for all $p\in\precond{c}$.
Moreover, if $p\in \precond{c}\cap\precond{i}$ then
$$M'_1(p)\geq G_2(\dist) \geq H(\dist)+G^i_{\!f(i)}(\dist) \geq F'(p,c)+F'(p,i).$$
Hence $M'_2 [\{c\}\mathord+\{i\}\rangle$. However, since $c \confeq i$ and $N'$ is a
\hyperlink{scn}{structural conflict net}, this is impossible.

\item Suppose $M[\exec{j}\rangle$ with $i\leqc j\in T'$.
Then $M'_1[j\rangle$ by (\ref{rVeryLast}).
Now $M'=M'_1+\sum_{k \inp T'} -H(\comp[k]{{f(k)}})\cdot \marking{k}$, with $-H(\comp[k]{j})\geq 0$
for all $k\in T'$. Whenever $-H(\comp[k]{j})>0$ then $\neg(j \confeq k)$ by (\ref{Hexec}).
Hence $M'[j\rangle$.
\hfill\filledbox
\end{enumerate}
\end{proofclaimNobox}
\bigskip
We now define the class $\NF\subseteq \Int^T$ of signed multisets of transitions in
\emph{normal form} by $H\in\NF$ iff $\ell(H)\equiv\emptyset$ and, for all $t\in\{\ini,~
\trans{j} \mid h,j\inp T'\}$:
\begin{enumerate}[(NF-1)]
\item $H (t\cdot\elide[\omega]) \leq 0$ for each  $\omega\inp \UI$,\label{NF1}
\item $H (t\cdot\undo[\omega]) \geq 0$ for each  $\omega\inp \UI$, or $H (t\cdot\fire) \geq 0$,\label{NF2}
\item and if $H (t\cdot\elide[\omega]) < 0$ for any  $\omega\inp \UI$,
  then $H (t\cdot\undo[\omega]) \leq 0$ and $H (t\cdot\fire) \leq 0$.\label{NF3}
\end{enumerate}
We proceed verifying the remaining conditions of \refthm{3ST}.
\begin{enumerate}
\item[\ref{normalformST}.]
By applying (\ref{elideNF}), each signed multiset $G\fin\Int^T$ with $\ell(G)\equiv\emptyset$
can be converted into a signed multiset \mbox{$H\fin\NF$} with $\ell(H)\equiv\emptyset$, such that
$\marking{H}=\marking{G}$. Namely, for any $t\in\{\ini,~
\trans{j} \mid h,j\inp T'\}$, first of all perform the following
three transformations, until none is applicable:
\begin{enumerate}[(i)]
\item correct a positive count of a transition $t\cdot\elide[\omega]$ in $G$ by adding
  $t(\omega)-t\cdot\elide[\omega]$ to $G$;
\item if both $H(t\cdot\undo[\omega])<0$ for some $\omega$ and $H(t\cdot\fire)<0$,
  correct this in the same way;
\item and if, for some $\omega$, $t\mathord\cdot\elide[\omega]$ has a negative and
  $t\mathord\cdot\undo[\omega]$ a positive count, add $t\cdot\elide[\omega]-t(\omega)$.
\end{enumerate}
Note that transformation (iii) will never be applied to the same $\omega$ as (i) or (ii),
so termination is ensured. Properties (NF-\ref{NF1}) and (NF-\ref{NF2}) then hold for $t$.
After termination of (i)--(iii), perform
\begin{enumerate}[(i)]
\item[(iv)] if, for some $\omega$, $H(t\cdot\elide[\omega])<0$ and
  $H(t\cdot\fire)>0$, add $t\cdot\elide[\omega]-t(\omega)$.
\end{enumerate}
This will ensure that also (NF-\ref{NF3}) is satisfied, while preserving (NF-\ref{NF1}) and (NF-\ref{NF2}).

Define the function $f:T\rightarrow\nat$ by $f(u):=1$ for all $u\in T$ not of the form
$u=t\cdot\elide[\omega]$, and $f(t\cdot\elide[\omega]):=f(t(\omega))$ (applying the last
item of \refdf{multiset}). Then surely $f(G)=f(H)$.

\item[\ref{lastST}.] 
  Let $M'\in\nat^{S'}$, $U'\in\nat^{T'}$ and $U\in\nat^{T}$ with $\ell(U)=\ell'(U')$
  and $M'+\!\precond{U'}\in[M'_0\rangle_{N'}$.
  Since $N'$ is a \hyperlink{finitary}{finitary} \hyperlink{scn}{structural conflict net}, it admits no
  self-concurrency, so, as $\precond{U'}\leq M'+\!\precond{U'}\in[M'_0\rangle_{N'}$,
  the multiset $U'$ must be a set. As $N'$ is \hyperlink{plain}{plain}, this implies that the multiset $\ell'(U')$ is a set.
  Since $\ell(U)=\ell'(U')$, also $\ell(U)$, and hence $U$, must be a set.
  All its elements have the form $\exec{j}$ for  $i\leqc j\in T'$,
  since these are the only transitions in $T$ with visible labels.
  Note that $U'$ is completely determined by $U$, namely by
  \plat{$U'=\{i\mid \exists j.~\exec{j}\in U\}$}.
  We take\vspace{-1ex}
  $$H_{M',U}:= \sum_{p\in S'} (M'\mathord+\!\precond{U'})(p)\cdot\{\dist\} +
  \!\!\!\!\!\!\!\! \sum_{(M'+\!\precond{U'})[j\rangle}\!\!\!\!\left(\{\ini\cdot\fire\} + 
  \hspace{-2.7em} \sum_{h<^\#j,~\nexists\exec[g]{h}\in U}\hspace{-2.6em} \{\trans[h]{j}\cdot\fire\}\right)
  $$
  Since $N'$ is \hyperlink{finitary}{finitary}, \plat{$H_{M',U}\fin\nat^{T^+}$}. Moreover,
  $\ell(H_{M',U})\equiv\emptyset$.

  Let $H\mathbin{\fin} \NF$ with $M:=M'+\!\precond{U'}+(M_0\mathord-M'_0)+\marking{H}-\precond{U}\in\nat^S$
  and $M+\precond{U}\in[M_0\rangle_N$.
  Since $H\inp\NF$, and thus $\ell(H)\equiv\emptyset$, $H(\exec[i]{j})=0$.
  From here on we apply \refcl{G-properties} and \refcl{D} with $M+\precond{U}$ and
  $M'+\precond{U'}$ playing the r\^oles of $M$ and $M'$.
  Note that the preconditions of these claims are met.

  That $H(\exec[i]{j})=0$ for all $i\leqc j\inp T'$, together with (\ref{undo}) and the
  requirements (NF-\ref{NF1}) and \mbox{(NF-\ref{NF3})} for normal forms, yields
  $H(t\cdot\elide)\leq 0$ as well as $H(t\cdot\undo)\leq 0$.
  Using this, (\ref{reset})--(\ref{fetch}) imply that
  \begin{equation}\label{T-negative}
  H(u)\leq 0 ~\mbox{ for each }~ u\in T^-.
  \vspace{-4ex}
  \end{equation}
  \begin{claim}{C} Let $c \inp T'$ and $p\in\precond{c}$. Then
  \begin{itemize}
  \item if $H(\ini[c]\cdot\fire)>0$ then $H(\fetch)=0$ for all $i\in\postcond{p}$ and $j\geq^\# i$, and
  \item if $H(\trans[b]{c}\cdot\fire)>0$ for some $b<^\#c$ then $H(\fetch)=0$ for all $i\in\postcond{p}$ and $j\geq^\# i$.
  \end{itemize}
  \end{claim}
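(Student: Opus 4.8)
The aim is to prove $H(\fetch)=0$ (with $\fetch=\textsf{fetch}_{i,j}^{p,c}$) for every $i\in\postcond{p}$ and $j\geq^\# i$; note that $p\in\precond{c}\cap\precond{i}$ forces $i\confeq c$ throughout. Since each such fetch lies in $T^-$, inequality (\ref{T-negative}) already gives $H(\fetch)\leq 0$, so only $H(\fetch)\geq 0$ is at stake. The plan is first to reduce both items to the single inequality $H(\ini[c]\cdot\fire)-H(\ini[c]\cdot\undone)\geq 1$. For the first item this is immediate: $H(\ini[c]\cdot\fire)\geq 1$ by hypothesis and $H(\ini[c]\cdot\undone)\leq 0$ by (\ref{T-negative}). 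For the second I would push $H(\trans[b]{c}\cdot\fire)\geq 1$ through the place invariant (\ref{transin}) together with the chain (\ref{undocount}) applied to $\ini[c]$ with $f=\transin[b]{c}$ (which gives $H(\ini[c]\cdot\undone)\leq H(\ini[c]\cdot\und[\transin[b]{c}])$), using $H(\trans[b]{c}\cdot\undone)\leq 0$. A further consequence of the hypothesis $H(\ini[c]\cdot\fire)>0$, and the reason the normal form is essential here, is that condition (NF-\ref{NF3}) together with (NF-\ref{NF1}) yields $H(\ini[c]\cdot\elide[\omega])=0$ for every interface $\omega$.

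Next I would pass to the canonical representation of the reachable marking $M+\precond{U}$ supplied by \refcl{extra}, picking $M'_1\in[M'_0\rangle_{N'}$ and $G_1$ with the stated properties, and relate $H$ to it exactly as in the proof of \refcl{D}: absorbing the non-positive finalise-counts via (\ref{mimic}) produces a $G_2$ with $G_2(\comp[i]{j})=0$, and \refcl{0} gives $G_2\equiv G_1$. Because the congruence generator contains no fetch, fetched, execute, finalise or distribute transition and carries $\fire$ and $\undone$ with equal sign, all these counts, and the combination $\ini[c]\cdot\fire-\ini[c]\cdot\undone$, are congruence-invariant; thus $G_1(\ini[c]\cdot\fire)-G_1(\ini[c]\cdot\undone)\geq 1$, and for every $i$ one gets $G_1(\exec[i]{f(i)})=-H(\comp[i]{j})$, where $f(i)$ is the unique location with $G_1(\exec[i]{f(i)})>0$ (or $i$). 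Feeding $G_1(\ini[c]\cdot\fire)-G_1(\ini[c]\cdot\undone)\geq 1$ and $G_1(\fetch)\geq G_1(\fetched[i]{j})\geq 0$ (from (\ref{fetch}) and (\ref{rFp})) into the $p_c$-invariant (\ref{p_j}) for $G_1$, and then using (\ref{p}) with $G_1(\comp[i]{j})=0$, yields $M'_1(p)\geq G_1(\dist)\geq F'(p,c)$ for all $p\in\precond{c}$, i.e.\ $M'_1[c\rangle$ in $N'$.

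The claim then reduces to showing that no $i\confeq c$ is executed in $G_1$, that is $G_1(\exec[i]{f(i)})=0$, equivalently $H(\comp[i]{j})=0$. Given this, (\ref{reset}) gives $0=H(\comp[i]{j})\leq\sum_{j\geq^\# i}H(\fetched[i]{j})$ while each summand is $\leq 0$ by (\ref{fetch}), so every $H(\fetched[i]{j})=0$, and a second use of (\ref{fetch}) squeezes $H(\fetch)$ between $0$ and $H(\exec[i]{j})=0$. Ruling out an executed $i\confeq c$ is the main obstacle. For $i\neq c$ (so $i\mathrel{\#}c$) I would argue that $G_1(\exec[i]{f(i)})=1$ forces, via (\ref{r4}) and (\ref{r5}), $M'_1[i\rangle$ and $M'_1[f(i)\rangle$, and then reconstruct either a simultaneous enabling $M'_1[\{c\}\mathord+\{i\}\rangle$ or a fully reachable pure \structuralM\ among $c,i,f(i)$, contradicting structural conflict-freeness or the absence of a fully reachable \structuralM---the same token-counting that settles (\ref{rExec2}) and (\ref{rExec3}) in \refcl{reachable}. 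The self-case $i=c$ is the delicate one: there I expect to have to exploit $H(\ini[c]\cdot\elide[\omega])=0$ (from (NF-\ref{NF3})) to show that an execution of $c$ in $G_1$ would force a negative $\elide$-count in the normal form, in conflict with $H(\ini[c]\cdot\fire)>0$. Making this last bookkeeping precise is where I anticipate the real work to lie.
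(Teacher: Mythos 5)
Your strategy does not match the paper's and, more importantly, it does not close. The paper's own proof is a short inequational argument carried out entirely on $H$: the hypothesis $H(t\cdot\fire)>0$ (for $t=\ini[c]$ or $t=\trans[b]{c}$), combined with (\ref{pi-j}) resp.\ (\ref{pi-execute}) and with (\ref{T-negative}) (every $t\cdot\reset[\omega]$ lies in $T^-$), forces $H(t\cdot\reset[\omega])=0$ for \emph{every} interface $\omega$ of $t$, hence in particular $H(t\cdot\reset[i])=0$ for every $i\in\postcond{p}$; requirement (NF-\ref{NF3}) gives $H(t\cdot\elide[i])\geq 0$; so the middle term of the sandwich (\ref{reset}) is nonnegative, whence $\sum_{j\geq^\# i}H(\fetched{j})\geq 0$, and (\ref{T-negative}) collapses each summand to $0$; finally (\ref{fetch}) together with (\ref{T-negative}) squeezes $H(\fetch)$ to $0$. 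The linchpin you never use is the first step: pinning $H(t\cdot\reset[\omega])$ to $0$ via the invariants of the marked places $\pi_c$ resp.\ $\pi_{b\#c}$. You do extract $H(\ini[c]\cdot\elide[\omega])=0$ from the normal-form conditions, but you never feed it, together with the vanishing reset counts, into (\ref{reset}), which is where the claim actually falls out.

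Instead you reduce the claim to ``$H(\comp{j})=0$ for every $i\confeq c$'' and route the argument through \refcl{extra}, \refcl{0} and the machinery of \refcl{D}. This reduction is both unnecessary and unfinished: you explicitly defer the case $i=c$ (``where I anticipate the real work to lie''), and your sketch for $i\mathrel{\#}c$ also has a hole --- when $f(i)=i$ the configuration you reach is merely two conflicting transitions $c$ and $i$ each enabled under $M'_1$, which by itself yields neither a self-concurrency violation nor a fully reachable pure \structuralM, so no contradiction; to get simultaneous enabling $M'_1[\{c\}\mathord+\{i\}\rangle$ you would still need the kind of additive token count on $\dist$ that the paper performs elsewhere, and you do not supply it. The preliminary reductions you do carry out --- $H(\fetch)\leq 0$ from (\ref{T-negative}), and the reduction of the second bullet to $H(\ini[c]\cdot\fire)-H(\ini[c]\cdot\undone)\geq 1$ via (\ref{transin}) and (\ref{took}) --- are correct, but they are not where the difficulty lies.
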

  \begin{proofclaim} Suppose that $H(t\cdot\fire)>0$, for $t=\ini[c]$ or \plat{$t=\trans[b]{c}$}.
  Then (\ref{pi-j}) resp.\ (\ref{pi-execute}) together with (\ref{T-negative}) implies that
  $H(t\cdot\reset[\omega])=0$ for each $\omega$ with $t\in\UI_\omega$.
  In order words, $H(t\cdot\reset)=0$ for each $i\confeq c$, so in particular for each $i\in\postcond{p}$.
  Furthermore, $H(t\cdot\elide)\geq 0$, by requirement (NF-\ref{NF3}) of normal forms.
  With (\ref{reset}), this yields $\sum_{j\geq^\#i}H(\fetched{j})\geq 0$, and
  (\ref{T-negative}) implies $H(\fetched{j})= 0$ for each $j\geq^\#i$.
  Now (\ref{fetch},\,\ref{T-negative}) gives $H(\fetch)= 0$ for each $j\geq^\#i\in\postcond{p}$.
  \end{proofclaim}
  We proceed to verify the requirements (\ref{markingST})--(\ref{concurrent}) of \refthm{3ST}.

  \begin{enumerate}[(a)]
  \item[(\ref{markingST})] To show that $M_{M',U}\in\nat^S$, it suffices to apply it to the preplaces of
    transitions in $H_{M',U}+U$:\vspace{-1.5ex}
    $$\begin{array}{@{}l@{~=~}ll}
    M_{M',U}(p) & 0 & \mbox{for all }p\in S'\;;\\
    M_{M',U}(p_j) & \left\{\begin{array}{@{}ll@{}}
     (M'+\!\precond{U'})(p)-F'(p,j) & \mbox{if } (M'+\!\precond{U'})[j\rangle  \\
     (M'+\!\precond{U'})(p)           & \mbox{otherwise}
     \end{array}\right. & \mbox{for }p\inp S',~j\inp\postcond{p};\\
    M_{M',U}(\pi_j) & \left\{\begin{array}{@{}l@{\qquad\;\quad}l@{}}
     \phantom{-}0 & \mbox{if } (M'+\!\precond{U'})[j\rangle  \\
     \phantom{-}1 & \mbox{otherwise}
     \end{array}\right. & \mbox{for }j\in T';\\
    M_{M',U}(\Pre^j_k) & \hspace{-1.6pt}\left\{\begin{array}{@{}l@{\qquad\;\quad}l@{}}
     \phantom{-}1 & \mbox{if } (M'+\!\precond{U'})[j\rangle \wedge \exec[j]{k}\notin U \\
     -1 & \mbox{if } \neg(M'+\!\precond{U'})[j\rangle \wedge \exec[j]{k}\in U \\
     \phantom{-}0 & \mbox{otherwise}
     \end{array}\right. & \mbox{for }j\leqc k\in T';\\
    M_{M',U}(\pi_{h\#j}) &  \left\{\begin{array}{@{}l@{\qquad\;\quad}l@{}}
     \phantom{-}0 & \mbox{if } \exists\exec[g]{h}\in U \vee (M'+\!\precond{U'})[j\rangle\\
     \phantom{-}1 & \mbox{otherwise}
     \end{array}\right. & \mbox{for }h<^\# j\in T'\\
    M_{M',U}(\transin{j}) &  \left\{\begin{array}{@{}l@{\qquad\;\quad}l@{}}
     \phantom{-}1 & \mbox{if } (M'+\!\precond{U'})[j\rangle \wedge \exists\exec[g]{h}\in U \\
     \phantom{-}0 & \mbox{otherwise}
     \end{array}\right. & \mbox{for }h<^\#j\in T';\\
    M_{M',U}(\transout{j}) & \left\{\begin{array}{@{}l@{\qquad\;\quad}l@{}}
     \phantom{-}1 & \makebox[0pt][l]{if $(M'+\!\precond{U'})[j\rangle \wedge \nexists\exec[g]{h}\in U 
                                                   \wedge \nexists\exec{j}\in U$} \\
     -1 & \makebox[0pt][l]{if $\big(\neg(M'+\!\precond{U'})[j\rangle \vee \exists\exec[g]{h}\in U\big) 
                                                   \wedge \exists\exec{j}\in U$} \\
     \phantom{-}0 & \mbox{otherwise}
     \end{array}\right. & \begin{array}{@{}l@{}}\mbox{}\\\mbox{}\\\mbox{for }h<^\#j\in T'.\end{array}\\
    \end{array}$$
    For all these places $s$ we indeed have that $M_{M',U}(s)\geq 0$,
    for the circumstances yielding the two exceptions above cannot occur:
    \begin{itemize}
    \item Suppose \plat{$\exec[j]{k}\in U$} with $j\leqc k\in T'$. Then $j\in U'$, so
      $\precond{j} \subseteq M'+\!\precond{U'}$ and $(M'+\!\precond{U'})[j\rangle$.
      Consequently, $M_{M',U}(\Pre^j_k) \neq -1$ for all $j\leqc k \in T'$.
    \item Suppose $\exec{j}\in U$ with $i\leqc j\in T'$. Then $\precond{\exec{j}}\leq
      \precond{U}$, so $(M+\!\precond{U})[\exec{j}\rangle$.
      \refcl{D}(\ref{Hexecj}) with $M+\!\precond{U}$ and $M'+\!\precond{U'}$ in the
      r\^oles of $M$ and $M'$ yields $(M'+\precond{U'})[j\rangle$.

      If moreover $\exec[g]{h}\inp U$ with $g\mathbin{\leqc} h \mathbin{<^\#}\! j$, then
      $\{g\}\mathord+\{i\}\subseteq U'$, so $\precond\{g\}\mathord+\!\precond\{i\}\subseteq
      M'\mathord+\!\precond{U'}$ and
      $(M'+\precond{U'})[\{g\}\mathord+\{i\}\rangle$. In particular, $g\concurrent i$, and since $N'$
      is a \hyperlink{scn}{structural conflict net}, $\precond{g}\cap\precond{i}=\emptyset$.
      By \refcl{D}(\ref{Hexecj})---as above---$(M'\mathord+\!\precond{U'})[h\rangle$, so
      $\precond{g}\cup\precond{h}\cup \precond{j}\cup\precond{i}\subseteq
        M'\mathord+\!\precond{U'} \inp [M'_0\rangle_{N'}$.
      Moreover, since $g \leqc h <^\# j \geq^\# i$, we have
      $\precond{g}\cap\precond{h}\neq\emptyset$,
      $\precond{h}\cap\precond{i}\neq\emptyset$ and $\precond{i}\cap\precond{j}\neq\emptyset$.
      Now in case also $\precond{h}\cap\precond{i}\neq\emptyset$, the transitions $g$, $h$
      and $i$ constitute a \hyperlink{M}{fully reachable pure $\structuralM$};
      otherwise $h\concurrent i$ and  $h$, $j$
      and $i$ constitute a \hyperlink{M}{fully reachable pure $\structuralM$}.
      Either way, we obtain a contradiction.
      Consequently, $M_{M',U}(\transout{j}) \neq -1$ for all $h<^\# j \in T'$.
    \end{itemize}
  \item[(\ref{matchST})] Suppose $M'\goesto[a]$; say $M'[i\rangle$ with $\ell'(i)=a$.
    Let $j$ be the largest transition in $T'$ w.r.t.\ the well-ordering $<$ on $T$
    such that $i\leqc j$ and $(M'+\!\precond{U'})[j\rangle$.
    It suffices to show that \plat{$M_{M',U} [\exec{j}\rangle$}, \ie that
    $M_{M',U}(\Pre^i_j)\mathord=1$, $M_{M',U}(\transout{j})\mathord=1$ for all $h\mathbin{<^\#}\!j$,
    and $M_{M',U}(\pi_{j\#l})\mathord=1$ for all $l\mathbin{>^\#}\!j$.
 
    If \plat{$\exec{j}\in U$} we would have $i\in U'$
    and hence $(M'+\!\precond{U'})[2\cdot\{i\}\rangle$.
    Since $N'$ is a \hyperlink{finitary}{finitary} \hyperlink{scn}{structural conflict net}, this is impossible.
    Therefore \plat{$\exec{j}\not\in U$} and, using the calculations from (a) above,
    $M_{M',U}(\Pre^i_j)=1$.
   
    Let $h<^\#j$. To establish that $M_{M',U}(\transout{j})=1$ we need to show that
    there is no $k\leqc j$ with \plat{$\exec[k]{j}\in U$} and no $g\leqc h$ with
    \plat{$\exec[g]{h}\in U$}. First suppose \plat{$\exec[k]{j}\in U$} for some $k\leqc j$.
    Then $k\in U'$ and hence $(M'+\!\precond{U'})[\{i\}\mathord+\{k\}\rangle$.
    This implies $i \smile k$, and, as $N'$ is a structural conflict net, $\precond{i}\cap \precond{k}=\emptyset$.	
    Hence the transitions $i$, $j$ and $k$ are all different, with $\precond{i}\cap \precond{j}\neq\emptyset$ and
    $\precond{j}\cap \precond{k}\neq\emptyset$ but $\precond{i}\cap \precond{k}=\emptyset$.
    Moreover, the reachable marking $M'+\!\precond{U'}$
    enables all three of them. Hence $N'$ contains a \hyperlink{M}{fully reachable pure $\structuralM$},
    which contradicts the assumptions of \refthm{correctness}.

    Next suppose \plat{$\exec[g]{h}\in U$} for some $g\leqc h$.
    Then $(M+\!\precond{U})[\exec[g]{h}\rangle$, so $(M'+\!\precond{U'})[h\rangle$
    by \refcl{D}(\ref{Hexecj}). Moreover, $g \in U'$, so $(M'+\!\precond{U'})[\{i\}\mathord+\{g\}\rangle$.
    This implies $g \smile i$, and $\precond{g}\cap \precond{i}=\emptyset$.
    Moreover, $\precond{g}\cap \precond{h}\neq\emptyset$, $\precond{h}\cap \precond{j}\neq\emptyset$ and
    $\precond{j}\cap \precond{i}\neq\emptyset$, while the reachable marking $M'+\!\precond{U'}$
    enables all these transitions. Depending on whether $\precond{h}\cap \precond{i}=\emptyset$,
    either $h$, $j$ and $i$, or $g$, $h$ and $i$ constitute a 
    \hyperlink{M}{fully reachable pure $\structuralM$},
    contradicting the assumptions of \refthm{correctness}.

    Let $l>^\#j$. To establish that \plat{$M_{M',U}(\pi_{j\#l})=1$} we need to show that
    there is no $k\leqc j$ with \plat{$\exec[k]{j}\in U$}---already done above---and that
    $\neg(M'+\!\precond{U'})[l\rangle$.
    Suppose $(M'+\!\precond{U'})[l\rangle$.
    Considering that $j$ was the largest transition with $i\leqc j$ and
    $(M'+\!\precond{U'})[j\rangle$, we cannot have $i<^\# l$.
    Hence the transitions $i$, $j$ and $l$ are all different, with $\precond{i}\cap \precond{j}\neq\emptyset$ and
    $\precond{j}\cap \precond{l}\neq\emptyset$ but $\precond{i}\cap \precond{l}=\emptyset$.
    Moreover, the reachable marking $M'+\!\precond{U'}$
    enables all three of them. Hence $N'$ contains a \hyperlink{M}{fully reachable pure $\structuralM$},
    which contradicts the assumptions of \refthm{correctness}.
  \item [(\ref{upperboundST})]
  We have to show that $H(t)\leq H_{M',U}(t)$ for each $t\in T$.
  \begin{enumerate}[i.]
  \item[$\bullet$]
    In case $t\in T^-$ this follows from (\ref{T-negative}) and \plat{$H_{M',U}\in\nat^{T^+}\!\!$}.
  \item[$\bullet$]
    In case $t=\exec{j}$ it follows since $\ell(H)\equiv\emptyset$.
  \item[$\bullet$]
    In case $t=\dist$ it follows from (\ref{p}) and (\ref{T-negative}).
  \item[$\bullet$]
    Next let $t=\ini[c]\cdot\fire$ for some $c\in T'$.
    In case $H(\ini[c]\cdot\fire)\leq 0$ surely we have $H(\ini[c]\cdot\fire)\leq H_{M',U}(\ini[c]\cdot\fire)$.
    So without limitation of generality we may assume that $H(\ini[c]\cdot\fire)>0$.
    By (\ref{pi-j},\,\ref{T-negative}) we have $H(\ini[c]\!\cdot\fire)=1$.
    Using (\ref{p_j}), \refcl{C}, (\ref{T-negative}) and (\ref{p}) we obtain, for all $p\in\precond{c}$,
    $$F'(p,c)\cdot H(\ini[c]\cdot\fire) \leq H(\dist) \leq (M'+\!\precond{U'})(p).$$
    Hence $c$ is enabled under  $M'+\!\precond{U'}$, which implies $H_{M',U}(\ini[c]\cdot\fire)=1$.
  \item[$\bullet$]
    Let $t\mathbin=\trans[b]{c}\cdot\fire$ for some $b\mathbin{<^\#}\! c\inp T'\!$.
    As above, we may assume $H(\trans[b]{c}\!\cdot\fire)\mathbin>0$.
    By (\ref{pi-execute},\,\ref{T-negative}) we have $H(\trans[b]{c}\!\cdot\fire)=1$.
    Using (\ref{T-negative}) and that $H(\exec[g]{b})=0$ for all $g\leqc b$, it follows
    that $(M+\!\precond{U})(\pi_{b\#c})=0$. Hence $\neg(M+\!\precond{U})[\exec[g]{b}\rangle$ for
    all $g\leqc b$, and thus $\nexists \exec[g]{b} \in U$.
    For all $p\in\precond{c}$ we derive
    $$\begin{array}{@{}r@{~\leq~}ll}
      \multicolumn{2}{@{}l@{}}{F'(p,c)\cdot H(\trans[b]{c}\cdot\fire)}\\
    \mbox{} & F'(p,c)\cdot\big(H(\trans[b]{c}\cdot\fire)-H(\trans[b]{c}\cdot\undone)\big)&
      (\ref{T-negative})\\
    & F'(p,c)\cdot\big(H(\ini[c]\cdot\fire)-H(\ini[c]\cdot\und[\mbox{$\transin[b]{c}$}])\big)&
      (\ref{transin})\\
    & F'(p,c)\cdot\big(H(\ini[c]\cdot\fire)-H(\ini[c]\cdot\undone)\big)&
      (\ref{took})\\
    \multicolumn{1}{r@{~=~}}{\mbox{}}
    & \displaystyle
      \mbox{[the same as above]}
      + \hspace{-.5em}\sum_{j\geq^\# i\in \postcond{p}}\hspace{-.5em}
      F'(p,i) \cdot H(\fetch) &
      (\mbox{\refcl{C}})\\[-10pt]
    & H(\dist) & (\ref{p_j}) \\
    & \displaystyle
      (M'+\!\precond{U'})(p)+ \hspace{-.5em}\sum_{\{i\in T'\mid
        p\in\postcond{i}\}}\hspace{-.5em} H(\comp{j}) & (\ref{p})\\[-10pt]
    & (M'+\!\precond{U'})(p) & (\ref{T-negative}).
    \end{array}$$
    Hence $(M'+\!\precond{U'})[c\rangle$, and thus $H_{M',U}(\trans[b]{c})=1$.
  \end{enumerate}
  \item[(\ref{T-ST})]
    If $u\notin T^-$, yet $H(u)\neq 0$, then $u$ is either $\dist$, $\ini\cdot\fire$
    or $\trans{j}\cdot\fire$ for suitable $p\in S'$ or $h,j\in T'$.
    For $u=\dist$ the requirement follows from \refcl{D}(\ref{dist-positive-H});
    otherwise Property (NF-\ref{NF2}), together with (\ref{took}), guarantees that
    $H(u)\geq 0$.
  \item[(\ref{disjoint preplacesST})]
    If $H(t)\mathbin>0$ and $H(u)\mathbin<0$, then $t\inp T^+$ and $u\inp T^-$.
    The only candidates for $\precond{t}\cap\precond{u}\neq\emptyset$ are
    \begin{itemize}
    \item \plat{$p_c \in \precond{(\ini[c]\cdot\fire)} \cap \precond{(\fetch)}$}
      for $p\in S'$, $c,i\in\postcond{p}$ and $j\geq^\# i$,
    \item $\transin[b]{c} \in \precond{(\trans[b]{c}\cdot\fire)}
                    \cap \precond{(\ini[c]\cdot\und[{\transin[b]{c}}])}$ for $b\leqc c\in T'$.
    \end{itemize}
    We investigate these possibilities one by one.
    \begin{itemize}
    \item $H(\ini[c]\cdot\fire)>0 \wedge H(\fetch)<0$ cannot occur by \refcl{C}.
    \item Suppose $H(\trans[b]{c}\cdot\fire)>0$.
      By (\ref{pi-execute},\,\ref{T-negative}) we have $H(\trans[b]{c}\!\cdot\fire)=1$.
      Through the derivation above, in the
      proof of requirement (c), using (\ref{T-negative},\,\ref{transin},\,\ref{took}),
      \refcl{C} and (\ref{p_j}), we obtain $H(\dist)\geq F'(p,c)$ for all $p\in\precond{c}$.
      Now \refcl{D}(\ref{dist-final}) yields $H(\comp{j})=0$ for all $i\confeq c$. By (\ref{reset}) and
      (\ref{T-negative}) we obtain $H(\ini[c]\!\cdot\reset)\mathbin=0$ for each such $i$.
      Hence \plat{$\sum_{i\confeqscript c} H(\ini[c]\!\cdot\reset)\mathbin=0$}, and thus
      $H(\ini[c]\cdot\und[{\transin[b]{c}}])=0$ by (\ref{took},\,\ref{T-negative}).
    \end{itemize}
  \item[(\ref{disjoint preplaces 2ST})]
    If $H(u)<0$ and $(M+\!\precond{U})[t\rangle$ with $\ell(t)\neq\tau$,
    then $t=\exec{j}$ for some $i\leqc j\in T'$ and $u\inp T^-$.
    The only candidates for $\precond{t}\cap\precond{u}\neq\emptyset$ are
    \begin{itemize}
    \item $\Pre^i_j \in \precond{(\exec{j})}
                    \cap \precond{(\ini\cdot\und[\Pre^i_j])}$ and
    \item $\transout{j} \in \precond{(\exec{j})}
                    \cap \precond{(\trans{j}\cdot\und[\transout{j}])}$ for $h<^\# j$.
    \end{itemize}
    We investigate these possibilities one by one.
    \begin{itemize}
    \item Suppose $(M+\!\precond{U})[\exec{j}\rangle$.
      By \refcl{D}(\ref{Hexec}), \plat{$H(\comp[k]{j})\geq 0$} for each $k\confeq i$. By (\ref{reset}) and
      (\ref{T-negative}) we obtain $H(\ini[i]\!\cdot\reset[k])\mathbin=0$ for each such $k$.
      Hence \hspace{-.6pt}\plat{$\displaystyle\sum_{k\confeqscript i} H(\ini[i]\!\cdot\reset[k])\mathbin=0$}, and thus
      $H(\ini[i]\cdot\und[\Pre^i_j])=0$ by (\ref{took},\,\ref{T-negative}).
    \item Suppose $(M+\!\precond{U})[\exec{j}\rangle$ and $h<^\#j$.
      By \refcl{D}(\ref{Hexec}), \plat{$H(\comp[k]{j})\geq 0$} for each $k\confeq j$. By (\ref{reset}) and
      (\ref{T-negative}) \plat{$H(\trans{j}\!\cdot\reset[k])\mathbin=0$} for each such $k$.
      So \plat{$\displaystyle\sum_{k\confeqscript j} H(\trans{j}\!\cdot\reset[k])\mathbin=0$}, and
      $H(\trans{j}\cdot\und[\transout{j}])=0$ by (\ref{took},\,\ref{T-negative}).
    \end{itemize}
  \item[(\ref{concurrent})]
    Suppose $(M+\!\precond{U})[\{t\}\mathord+\{u\}\rangle_N$, and $i,k\in T'$
    with $\ell'(i)=\ell(t)$ and $\ell'(k)=\ell(u)$.
    Since the net $N'$ is \hyperlink{plain}{plain}, $t$ and $u$ must have the form $\exec{j}$ and $\exec[k]{j}$
    for some $j>^\#i$ and $l>^\#k$. \refcl{concurrency} yields $\neg(i\confeq k)$ and hence
    $\precond{i}\cap\precond{k}=\emptyset$.
    \hfill$\Box$
  \end{enumerate}
\end{enumerate}
\end{proofNobox}
Thus, we have established that the conflict replicating implementation $\impl{N'}$ of a finitary plain
structural conflict net $N'$ without a fully reachable pure $\structuralM$ is branching
ST-bisimilar with explicit divergence to $N'$. It remains to be shown that $\impl{N'}$ is
essentially distributed.

\begin{lemma}{S-invariant}
Let $N$ be the conflict replicating implementation of a finitary net $N'=(S',T',F',M'_0,\ell')$;
let $j,l\in T'\!$, with $l\mathbin{>^\#} j$.
Then no two transitions from the set \plat{$\{\exec{j}\mid i\leqc j\}
  \cup\{\trans[j]{l}\cdot\fire\} \cup \mbox{}$} \plat{$\{\trans[j]{l}\cdot\und[\mbox{$\transout[j]{l}$}]\}
  \cup\{\exec[k]{l}\mid k\leqc l\}$} can fire concurrently.
\end{lemma}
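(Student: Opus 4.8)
The plan is to reduce the whole statement to a single place‑invariant bound: that every reachable marking $M$ of $N$ satisfies $M(\pi_{j\#l}) + M(\transout[j]{l}) \le 1$; call this $(\star)$. First I would read off, from \reftab{conflictrepl}, the preplaces of the transitions in the four groups. Each $\exec[i]{j}$ with $i\leqc j$ and the transition $\trans[j]{l}\cdot\fire$ has $\pi_{j\#l}$ as a preplace; $\trans[j]{l}\cdot\und[\mbox{$\transout[j]{l}$}]$ has $\transout[j]{l}$ as a preplace; and each $\exec[k]{l}$ with $k\leqc l$ has $\transout[j]{l}$ as a preplace through a self‑loop. Since in the step semantics of \refdf{firing} a self‑loop place still contributes to $\precond{\cdot}$, every transition $t$ in the set satisfies $\precond{t}(\pi_{j\#l}) + \precond{t}(\transout[j]{l}) \ge 1$. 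Hence any step $G$ containing two (not necessarily distinct) transitions from the set needs $\precond{G}(\pi_{j\#l}) + \precond{G}(\transout[j]{l}) \ge 2$, and so, by $(\star)$, cannot be enabled at any reachable marking.

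The core of the argument is then $(\star)$, which I would establish with an S‑invariant. Abbreviate $\theta:=\trans[j]{l}$ and let $Q_{j,l}$ be the set of places containing $\pi_{j\#l}$, $\transout[j]{l}$, $\took(\transout[j]{l},\theta)$ and $\rho(\theta)$, together with, for each $i\leqc j$, one representative pair consisting of a fetch‑in place $\fetchin$ and its matching fetch‑out place $\fetchout$ on the cleanup path of $\exec[i]{j}$ (such a pair exists because $\precond{i}\neq\varnothing$, as $N'$ is finitary). I claim that $\sum_{s\in Q_{j,l}} M(s)$ is preserved by every transition of $N$ and equals $1$ initially, since only $\pi_{j\#l}$ carries a token in $M_0$. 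Verifying preservation is a transition‑by‑transition check, localised to those transitions incident with $Q_{j,l}$, all of whose relevant arc weights are $1$: $\theta\cdot\fire$ moves the single token from $\pi_{j\#l}$ to $\transout[j]{l}$; the undo chain carries it $\transout[j]{l}\to\took(\transout[j]{l},\theta)\to\rho(\theta)$ and $\theta\cdot\reset[c]$ returns it to $\pi_{j\#l}$; while $\exec[i]{j}$ moves it from $\pi_{j\#l}$ into the chosen fetch‑in place, $\fetch$ forwards it to the fetch‑out place, and $\fetched{j}$ returns it to $\pi_{j\#l}$. The group‑4 transitions $\exec[k]{l}$ only read $\transout[j]{l}$, and the remaining gadget transitions $\theta\cdot\undo[\omega]$, $\theta\cdot\undone$ and $\theta\cdot\elide[\omega]$ shuffle only uncounted control places. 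This yields $\sum_{s\in Q_{j,l}} M(s)=1$ for all reachable $M$, whence $(\star)$.

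I expect the main obstacle to be the treatment of the execute transitions $\exec[i]{j}$: they consume the $\pi_{j\#l}$ token but restore it only after an elaborate detour through the $\fetch$/$\fetched{j}$ sub‑net, and $\fetched{j}$ regenerates all of $j$'s conflict tokens $\{\pi_{j\#m}\mid m\geq^\# j\}$ at once. The device that keeps $Q_{j,l}$ a genuine unit‑count invariant is to include only a single representative fetch‑in/fetch‑out pair per $i$ (rather than all of them, which would inflate the count), so that the conserved token has a unique location at every moment. A secondary point to discharge is that no transition outside those listed above deposits into or withdraws from $\pi_{j\#l}$ or $\transout[j]{l}$; this is immediate from the place‑oriented description in \reftab{conflictrepl}, where the pre‑ and post‑transitions of each of these places are listed explicitly.
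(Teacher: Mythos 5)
Your proposal is correct and follows essentially the same route as the paper: the set $Q_{j,l}$ you construct (with one representative $\textsf{fetch}^{q_i,i}_{i,j}\textsf{-in}$/$\textsf{fetch}^{q_i,i}_{i,j}\textsf{-out}$ pair per $i\leqc j$, together with $\pi_{j\#l}$, $\transout[j]{l}$, $\took(\transout[j]{l},\trans[j]{l})$ and $\rho(\trans[j]{l})$) is exactly the S-invariant used in the paper, carrying a single token that every listed transition must consult. The only cosmetic difference is that you first distil the bound $M(\pi_{j\#l})+M(\transout[j]{l})\le 1$ before concluding, whereas the paper argues directly that each of the four groups of transitions has a preplace in the invariant set.
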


\begin{proof}
  For each \plat{$i\mathbin{\leqc} j$} pick an arbitrary preplace $q_i$ of $i$.
  The set \plat{$
  \{\textsf{fetch}^{q_i,i}_{i,j}\textsf{-in},~\textsf{fetch}^{q_i,i}_{i,j}\textsf{-out}\mid i\leqc j\}
  \cup \mbox{}$}
  \plat{$\{\pi_{j\#l},~\transout[j]{l},~\took(\transout[j]{l},\trans[j]{l}),~\rho(\trans[j]{l}\}$}
  is an \emph{S-invariant}: there is always exactly one token in this set. This is the case because
  each transition from $N$ has as many preplaces as postplaces in this set.
  The transitions from $\{\exec{j}\mid i\leqc j\}
  \cup\{\trans[j]{l}\cdot\fire\} \cup\{\trans[j]{l}\cdot\und[\mbox{$\transout[j]{l}$}]\}\vspace{-2pt}
  \cup\{\exec[k]{l}\mid k\leqc l\}$ each have a preplace in this set.
  Hence no two of them can fire concurrently.
\end{proof}

\begin{lemma}{essentially distributed}
Let $N$ be the conflict replicating implementation $\impl{N'}$ of a finitary plain
structural conflict net $N'=(S',T',F',M'_0,\ell')$ without a fully reachable pure $\structuralM$.
Then for any $i\leqc j \confeq c\in T'$ and \hyperlink{far}{$f\in (\ini[c])^{\,\it far}$},
the transitions \plat{$\exec{j}$} and $\ini[c]\cdot\und[f]$ cannot fire concurrently.
\end{lemma}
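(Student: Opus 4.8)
The plan is to assume, toward a contradiction, that some reachable marking $M$ satisfies $M[\{\exec{j}\}\mathord+\{\ini[c]\cdot\und[f]\}\rangle$, and then to show that the token enabling $\ini[c]\cdot\und[f]$ records a reversal of $\ini[c]$ that is incompatible with $\exec{j}$ being simultaneously enabled. From \reftab{conflictrepl} the relevant presets are $\precond{(\ini[c]\cdot\und[f])}=\{\take(f,\ini[c]),\,f\}$, whereas $\precond{\exec{j}}$ contains $\Pre^i_j$, all $\transout[h]{j}$ with $h<^\#j$, and all $\pi_{j\#l}$ with $l>^\#j$. Since $\ini[c]\cdot\und[f]$ is enabled, $\take(f,\ini[c])$ is marked. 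This place gains tokens only through $\ini[c]\cdot\undo[\omega]$ for interfaces $\omega$ with $\ini[c]\in\UI_\omega$, i.e.\ $\omega\confeq c$, and $\ini[c]\cdot\undo[\omega]$ in turn consumes from $\Fired(\ini[c])$ and $\undo[\omega](\ini[c])$, the latter being filled only by a firing of $\exec[\omega]{m}$ with $m\geq^\#\omega$. A token-counting argument over these component places of $\ini[c]$ --- a single-token S-invariant in the spirit of \reflem{S-invariant}, adapted to the reversal macro of \reffig{reversible} --- then certifies that the reversal of $\ini[c]$ now in progress was triggered by an execution $\exec[\omega]{m}$ with $\omega\confeq c$ and $m\geq^\#\omega$, and that it has not yet been finalised (the step $\fetched[\omega]{m}$ that restores the control tokens consumed by $\exec[\omega]{m}$ has not fired).

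Being mid-reversal pins down the current marking of the control places emptied by $\exec[\omega]{m}$: that firing permanently removes the tokens from $\Pre^\omega_m$ and from every $\pi_{m\#l'}$ with $l'>^\#m$, and these are replenished only by $\fetched[\omega]{m}$, which lies strictly after $\ini[c]\cdot\undo[\omega]$ in the cleanup; hence all of them are unmarked at $M$. I would then distinguish the two shapes of $f$. If $f=\Pre^c_k$ with $k\geq^\#c$ and moreover $(c,k)=(i,j)$, then $f=\Pre^i_j$ is itself a preplace of $\exec{j}$ which is $1$-bounded, so $\exec{j}$ and $\ini[c]\cdot\und[f]$ cannot both draw a token from it --- contradiction. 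If $f=\transin[h']{c}$ with $h'<^\#c$, I would chase the chain $\transin[h']{c}\to\trans[h']{c}\cdot\fire\to\transout[h']{c}$ into the executions at location $c$ and invoke \reflem{S-invariant}. In the remaining cases the contradiction is not purely structural, and both reduce to showing that $\exec{j}$ and $\exec[\omega]{m}$ must contend for one of the single-token S-invariants above, which is impossible once one of its places is known to be empty while $\exec{j}$ requires it.

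The key --- and the main obstacle --- is to convert the chain $i\leqc j\confeq c\confeq\omega$ into the single conflict $j\confeq\omega$ that makes $\exec{j}$ and $\exec[\omega]{m}$ compete. Conflict is not transitive, so this step must use the hypothesis that $N'$ has no fully reachable pure $\structuralM$. Using a marking correspondence as in the proof of \refthm{correctness}, the enabledness of $\exec{j}$ at $M$ yields $M'[j\rangle$ in $N'$ for the associated abstract marking $M'$, while the in-progress reversal of $\ini[c]$ and the firing of $\exec[\omega]{m}$ are recovered to give $M'[c\rangle$ and $M'[\omega\rangle$ at the very same reachable $M'$. Since $j\confeq c$, $\omega\confeq c$ and $j,c,\omega$ are then simultaneously enabled, were $\precond{j}\cap\precond{\omega}=\varnothing$ the transitions $j$, $c$, $\omega$ would form a fully reachable pure $\structuralM$; hence $\precond{j}\cap\precond{\omega}\neq\varnothing$, i.e.\ $j\confeq\omega$. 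This is exactly the conflict that places $\exec{j}$ and $\exec[\omega]{m}$ on a common single-token S-invariant, so the token required by $\exec{j}$ must be the one already absent at $M$, the desired contradiction.

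I expect two steps to absorb most of the effort. First, making the token-counting precise despite the fan-out of the reversal macro: several far-tokens of $\ini[c]$ are reclaimed in parallel, so the bookkeeping must rest on the per-interface invariants rather than on one sequential control loop, and one must be careful about exactly which control places are restored before versus after the token in $\take(f,\ini[c])$ appears. Second, producing a single reachable marking $M'$ of $N'$ that enables $j$, $c$ and $\omega$ at once: since $\exec[\omega]{m}$ has only fired in the past, its witnessing enabledness of $\omega$ must be extracted from the still-incomplete cleanup rather than read off directly. Once $j\confeq\omega$ is in hand, the S-invariant bookkeeping closes the argument immediately.
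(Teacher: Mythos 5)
Your opening moves coincide with the paper's: from the enabledness of $\ini[c]\cdot\und[f]$ you infer that an undo of $\ini[c]$ is in progress on some interface $\omega\confeq c$, hence that some $\exec[\omega]{m}$ with $m\geq^\#\omega$ has fired and not yet been cleaned up, and you lift everything to $N'$ to obtain one reachable marking $M'$ enabling $j$, $c$ and $\omega$ simultaneously. This is exactly what the paper does via \refcl{G-properties} and \refcl{extra}: inequalities (\ref{took}), (\ref{reset}) and (\ref{undo}) produce the interface $k\confeq c$ with $G(\exec[k]{l})>0$, and (\ref{r3}), (\ref{r4}), (\ref{rVeryLast}) together with the chain ending in (\ref{p_j}) give $M'[j\rangle$, $M'[k\rangle$ and $M'[c\rangle$. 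Your case distinction on the shape of $f$ is unnecessary (the argument is uniform in $f$) but harmless.

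The gap is in your final step. From the simultaneous enabledness you use the absence of fully reachable pure $\structuralM$s to conclude $j\confeq\omega$, and then assert that this conflict puts $\exec{j}$ and $\exec[\omega]{m}$ on a common single-token S-invariant, closing the proof "immediately". It does not. The invariants of \reflem{S-invariant} are indexed by a pair of $<^\#$-related \emph{locations}: they link the transitions $\exec[\cdot]{j}$ at location $j$ to the transitions $\exec[\cdot]{l}$ at location $l$ only when $j<^\# l$, through the places $\pi_{j\#l}$ and $\transout[j]{l}$. Your derived conflict $j\confeq\omega$ relates the \emph{executed} transitions, whereas the two competing firings live at locations $j$ and $m$, and nothing forces $j\confeq m$: take $\omega\mathrel{\#}j$ and $\omega<^\# m$ with $m$ unrelated to $j$, and $\exec{j}$ and $\exec[\omega]{m}$ share no preplace and no such invariant (the only places $\exec[\omega]{m}$ leaves unreplenished until $\fetched[\omega]{m}$ fires are the $\pi_{m\#l'}$, which meet the preset of $\exec{j}$ only if $j=m$). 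The fact you actually need---that $M[\exec{j}\rangle$ cannot coexist with an uncleaned past firing of $\exec[\omega]{m}$ when $j\confeq\omega$---is precisely invariant (\ref{rExec5}) of \refcl{extra}, whose proof is a genuine induction through (\ref{rExec2}), (\ref{rExec3}) and the $T^j_l$/$E^i_j$ bookkeeping, not an immediate token count. And once you are willing to invoke that invariant, it directly yields $\neg(j\confeq\omega)$, making your detour pointless: the paper concludes $\neg(j\confeq k)$ from (\ref{rExec5}) and then exhibits $j$, $c$, $k$ as a fully reachable pure $\structuralM$ in $N'$, contradicting the hypothesis. So the correct use of the no-$\structuralM$ assumption here is as the target of the contradiction, not as a tool to manufacture a conflict $j\confeq\omega$.
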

\begin{proof}
Suppose these transitions can fire concurrently, say from the marking $M\in[M_0\rangle_N$.
By \refcl{extra}, there are $M'\in[M'_0\rangle_{N'}$ and $G\fin\Int^T$ such that
(\ref{r1})--(\ref{rVeryLast}) hold. Let $t:=\ini[c]$, $G_1:=G+\{t\cdot\und[f]\}$ and
$M_1\mathbin{:=}M+\marking{t\mathord\cdot\und[f]}$.
Then (\ref{took}), applied to the triples $(M,M',G)$ and $(M_1,M',G_1)$, yields
\[
  \sum_{\makebox[1em][l]{$\scriptstyle\{\omega\mid t\in \UI_\omega\}$}} G(t\cdot\reset[\omega])
  \leq G(t\cdot\und) < G_1(t\cdot\und)
  \leq \sum_{\makebox[1em]{$\scriptstyle\{\omega\mid t\in \UI_\omega\}$}} G_1(t\cdot\undo[\omega])
  = \sum_{\makebox[1em]{$\scriptstyle\{\omega\mid t\in \UI_\omega\}$}} G(t\cdot\undo[\omega]).
\]
Hence, there is an $\omega$ with $t\in \UI_\omega$ and $G(t\cdot\reset[\omega])< G(t\cdot\undo[\omega])$.
This $\omega$ must have the form $k\in T'$ with $k\confeq c$. We now obtain
  $$\begin{array}[b]{r@{~\leq~}ll}
  \multicolumn{1}{r@{~=~}}{0}
  & G(\comp[k]{l})
  & \mbox{(by (\ref{r2}))} \\
  & G(t\cdot\elide[k])+G(t\cdot\reset[k])
  & \mbox{(by (\ref{reset}))} \\
  \multicolumn{1}{r@{~<~}}{} & G(t\cdot\elide[k])+G(t\cdot\undo[k]) \\
  & \sum_{l\geq^\#k}G(\exec[k]{l})
  & \mbox{(by (\ref{undo}))}.
  \end{array}$$
Hence, there is an $l\geq^\# k \confeq c$ with $G(\exec[k]{l})>0$.
By (\ref{rExec5}) we obtain $\neg(j\confeq k)$, so $\precond{j}\cap\precond{k}=\emptyset$.
Additionally, we have $\precond{j}\cap\precond{c}\neq\emptyset$ and
$\precond{c}\cap\precond{k}\neq\emptyset$.
By (\ref{rVeryLast}) we obtain $M'[j\rangle$, and
by (\ref{r3}) and (\ref{r4}) $M'[k\rangle$.
Furthermore, by (\ref{took}), $G(t\cdot\und)<G_1(t\cdot\und)\leq G_1(t\cdot\fire) =
G(t\cdot\fire)$, so, for all $p\inp\precond{c}$,
  $$\begin{array}[b]{r@{~\leq~}ll}
  F'(p,c)
  & F'(p,c)\cdot \big(G(t\cdot\fire)-G(t\cdot\und)\big)\\
  & F'(p,c)\cdot \big(G(t\cdot\fire)-G(t\cdot\undone)\big)
  & \mbox{(by (\ref{took}))} \\
  & G(\dist) - \sum_{j\geq^\# i\in \postcond{p}} F'(p,i) \cdot G(\fetch)
  & \mbox{(by (\ref{p_j}))} \\
  & G(\dist)
  & \mbox{(by (\ref{rFp}) and (\ref{fetch}))} \\
  & M'(p)
  & \mbox{(by (\ref{r3}).} \\
  \end{array}$$
It follows that $M'[c\rangle$.
Thus $N'$ contains a \hyperlink{M}{fully reachable pure $\structuralM$},
which contradicts the assumptions of \reflem{essentially distributed}.
\end{proof}

\begin{theorem}{cri-distributed}
  Let $N$ be the conflict replicating implementation $\impl{N'}$ of a finitary plain
  structural conflict net $N'$ without a fully reachable pure $\structuralM$.
  Then $N$ is essentially distributed.
\end{theorem}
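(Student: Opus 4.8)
The plan is to exhibit a distribution of $N=\impl{N'}$ witnessing essential distributedness, and the natural candidate is the \emph{canonical} distribution $C$ of \refdf{canonical}. By construction $C$ already satisfies Condition~(1) of \refdf{distributed}, so by \refobs{canonical} it suffices to verify Condition~($2'$) of \refdf{externally distributed} for $C$: whenever two transitions $t,u$ can fire concurrently and $\ell(t)\neq\tau$, they must lie in different $\equiv_C$-classes. Since $N'$ is plain, every transition $\exec{j}$ (with $i\leqc j\in T'$) carries the visible label $\ell'(i)\neq\tau$, while all remaining transitions of the macro expansion are labelled $\tau$; hence the $\exec{j}$ are exactly the visible transitions. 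It is therefore enough to show, for each $\exec{j}$ and each $u$ with $u\equiv_C\exec{j}$, that $\exec{j}$ and $u$ cannot fire concurrently.

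First I would pin down the $\equiv_C$-class (``the location of $j$'') of a visible transition $\exec{j}$. Recall $\equiv_C$ is generated by co-locating every transition with each of its \emph{preplaces}, so two transitions are $\equiv_C$-related exactly when joined by a chain in which consecutive ones share a preplace. Reading the preplaces off \reftab{conflictrepl} (with the macro expansion of \reffig{reversible}), those of $\exec{j}$ are $\Pre^i_j$, the places $\transout{j}$ for $h<^\# j$, and the places $\pi_{j\#l}$ for $l>^\# j$. Tracing the other posttransitions of these places and closing off, I expect the class of $\exec{j}$ (the same for every $i\leqc j$) to consist precisely of:
\begin{itemize}
\item the executions $\exec[i']{j}$ for all $i'\leqc j$ (sharing $\transout{j}$ or $\pi_{j\#l}$);
\item the cleanups $\ini[i']\cdot\und[\Pre^{i'}_j]$ for $i'\leqc j$ (via $\Pre^{i'}_j$);
\item the cleanups $\trans{j}\cdot\und[\transout{j}]$ for $h<^\# j$ (via $\transout{j}$);
\item the transfers $\trans[j]{l}\cdot\fire$ for $l>^\# j$ (via $\pi_{j\#l}$);
\item the cleanups $\ini[l]\cdot\und[\mbox{$\transin[j]{l}$}]$ for $l>^\# j$ (via $\transin[j]{l}$, which is a preplace of $\trans[j]{l}\cdot\fire$).
\end{itemize}
The only remaining preplaces are $\take(\cdots)$ places, whose sole posttransitions are already listed, so the class is closed; in particular it contains executions with second index $j$ only, so distinct locations never meet and it suffices to examine this single class for each $j$.

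It then remains to exclude concurrency of $\exec{j}$ with each member. Self-concurrency is ruled out by \refcl{concurrency} with $(k,l)=(i,j)$, since $i\confeq i$ always holds and so $M[2\cdot\{\exec{j}\}\rangle$ is impossible. For two distinct executions $\exec{j},\exec[i']{j}$, for $\trans{j}\cdot\und[\transout{j}]$ ($h<^\# j$), and for $\trans[j]{l}\cdot\fire$ ($l>^\# j$), I would invoke \reflem{S-invariant}: applied to an edge $j<^\# l$ its set contains all $\exec[\cdot]{j}$ together with $\trans[j]{l}\cdot\fire$, and applied to an edge $h<^\# j$ it contains all $\exec[\cdot]{j}$ together with $\trans{j}\cdot\und[\transout{j}]$ — and whenever two distinct executions exist, $j$ has at least one conflict, so a suitable edge is available. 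Finally, the two cleanup families $\ini[i']\cdot\und[\Pre^{i'}_j]$ and $\ini[l]\cdot\und[\mbox{$\transin[j]{l}$}]$ both have the shape $\ini[c]\cdot\und[f]$ with $c\confeq j$ and $f\in(\ini[c])^{\,\it far}$ (indeed $\Pre^{i'}_j\in(\ini[i'])^{\,\it far}$ since $j\geq^\# i'$, and $\transin[j]{l}\in(\ini[l])^{\,\it far}$ since $j<^\# l$), so \reflem{essentially distributed} delivers exactly the required non-concurrency. Assembling these cases verifies ($2'$) for $C$.

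The main obstacle is the second step: correctly computing the $\equiv_C$-class of $\exec{j}$ from \reftab{conflictrepl}. The bookkeeping is delicate because the index conventions for transfers and their transfer-in places are easy to misread (e.g.\ $\transin[j]{l}$ is produced by $\ini[l]$, not $\ini[j]$, so its associated cleanup sits in location $[j]$), and one must be certain the class is closed so that no co-located transition is overlooked. Once the class is established, matching each of its five kinds of members to \refcl{concurrency}, \reflem{S-invariant}, or \reflem{essentially distributed} is routine.
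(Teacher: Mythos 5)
Your proposal is correct and follows essentially the same route as the paper's own proof: take the canonical distribution, trace the flow relation to compute the co-location class of each $\exec{j}$ (your five families coincide exactly with the paper's diagram), and then discharge the concurrency obligations via \reflem{S-invariant} for the execution/transfer members and \reflem{essentially distributed} for the two $\ini[c]\cdot\und[f]$ cleanup families. Your explicit appeal to \refcl{concurrency} for self-concurrency is a small extra care the paper leaves implicit, but it does not change the argument.
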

\begin{proof}
  We take the canonical distribution $D$ of $N$, in which $\equiv_D$ is the
  equivalence relation on places and transitions generated by Condition (1) of \refdf{distributed}.
  We need to show that this distribution satisfies Condition ($2'$) of \refdf{externally distributed}.
  A given transition $t$ with $\ell(t)\neq\tau$ must have the form \plat{$\exec{j}$} for some $i\leqc j\in T'$.
  By following the flow relation of $N$ one finds the places and transitions that, under
  the canonical distribution, are co-located with \plat{$\exec{j}$}:
  $$\begin{array}{@{}l@{}}
  \pi_{j\#l} \rightarrow \trans[j]{l}\cdot\fire \leftarrow \transin[j]{l} \rightarrow
  \ini[l]\cdot\und[\mbox{$\transin[j]{l}$}] \leftarrow \take(\transin[j]{l},\ini[l]) \\
  ~~\downarrow\\
  \hspace*{-1em}\exec{j} \\
  ~~\uparrow\\
  \transout{j}  \rightarrow \trans{j}\cdot\und[\transout{j}] \leftarrow \take(\transout{j},\trans{j}) \\
  ~~\downarrow\\
  \exec[g]{j} \\
  ~~\uparrow\\
  \Pre^g_j  \rightarrow \ini[g]\cdot\und[\Pre^g_j] \leftarrow \take(\Pre^g_j,\ini[g])
  \end{array}$$
  for all $l\mathbin{>^\#} j$, $h\mathbin{<^\#} j$ and $g\leqc j$.
  We need to show that none of these transitions can happen concurrently with \plat{$\exec{j}$}.
  For transitions $\trans[j]{l}\cdot\fire$ and $\exec[g]{j}$ this follows directly from \reflem{S-invariant}.
  For \plat{$\trans{j}\cdot\und[\transout{j}]$} this also follows from \reflem{S-invariant}, in
  which $j$, $k$ and $l$ play the r\^ole of the current $h$, $i$ and $j$.
  For the transitions \plat{$\ini[l]\cdot\und[\mbox{$\transin[j]{l}$}]$}
  and \plat{$\ini[g]\cdot\und[\Pre^g_j]$} this has been established in \reflem{essentially distributed}.
\end{proof}
Our main result follows by combining Theorems~\ref{thm-correctness}
and~\ref{thm-cri-distributed} and \refpr{essentiallydistributedequal}:

\begin{theorem}{fullmgttrulysync}
  Let $N$ be a finitary plain structural conflict net
  without a fully reachable \visible pure \structuralM.
  Then $N$ is distributable up to $\approx^\Delta_{bSTb}$.
\end{theorem}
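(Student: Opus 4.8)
The plan is to produce, for the given net $N$, an explicitly distributed net that is branching ST-bisimilar with explicit divergence to $N$; by \refdf{distributable} this is exactly what ``distributable up to $\approx^\Delta_{bSTb}$'' demands. The natural candidate is obtained by routing through the conflict replicating implementation $\impl{N}$, since all the technical work has already been invested in establishing its two defining properties, and it only remains to assemble them.

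First I would invoke \refthm{correctness}: because $N$ is a finitary plain structural conflict net without a fully reachable \visible pure \structuralM, its conflict replicating implementation satisfies $\impl{N} \approx^\Delta_{bSTb} N$. Next I would invoke \refthm{cri-distributed}, which under the very same hypotheses guarantees that $\impl{N}$ is essentially distributed. At this point $\impl{N}$ is only \emph{essentially} distributed rather than distributed, so a final upgrade is needed before \refdf{distributable} can be applied directly.

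That upgrade is supplied by \refpr{essentiallydistributedequal}: for the essentially distributed net $\impl{N}$ there exists a genuinely distributed net $N'$ with $N' \approx^\Delta_{bSTb} \impl{N}$. Combining the two bisimilarities by transitivity of $\approx^\Delta_{bSTb}$ yields $N' \approx^\Delta_{bSTb} N$ with $N'$ distributed, which is precisely the witness required by \refdf{distributable}. Hence $N$ is distributable up to $\approx^\Delta_{bSTb}$.

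I do not expect a genuine obstacle at this level: the proof is a short assembly of results already proved. The only points meriting a moment's care are that $\approx^\Delta_{bSTb}$ is genuinely transitive (it is an equivalence, being defined through bisimulations, whose relational composition is again a branching ST-bisimulation respecting explicit divergence) and that the hypotheses of \refthm{correctness} and \refthm{cri-distributed} coincide exactly with those assumed here, so that both may be applied to the same $\impl{N}$. The real substance of the theorem lives entirely in those two correctness results together with \refpr{essentiallydistributedequal}, all of which precede this statement.
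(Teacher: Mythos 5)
Your proposal is correct and follows exactly the paper's own argument: the theorem is stated as an immediate consequence of \refthm{correctness}, \refthm{cri-distributed} and \refpr{essentiallydistributedequal}, combined by transitivity of $\approx^\Delta_{bSTb}$, just as you describe. The hypotheses of the two theorems do indeed coincide with those of the statement, so the assembly goes through without further work.
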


\begin{corollary}{fullmeqtrulysync}
  Let $N$ be a finitary  plain structural conflict net.
  Then $N$ is distributable iff it has no fully reachable
  \visible pure~\structuralM.
\end{corollary}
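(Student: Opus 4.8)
The plan is to obtain this corollary as an immediate consequence of the two bounds already established, namely \refthm{trulysyngltfullm} for the negative direction and \refthm{fullmgttrulysync} for the positive one, taking care to align them across the whole spectrum of equivalences lying between $\approx_\mathscr{R}$ and $\approx^\Delta_{bSTb}$.

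For the ``only if'' direction I would argue by contraposition: suppose $N$ does have a fully reachable \visible pure \structuralM. Then \refthm{trulysyngltfullm} tells us that $N$ is not distributable up to $\approx_\mathscr{R}$. The key observation is that $\approx_\mathscr{R}$ is the \emph{coarsest} of the equivalences under consideration: by \refpr{step ready ST}, $\approx^\Delta_{bSTb}$ is finer than $\approx_\mathscr{R}$, and hence so is any equivalence $\approx$ lying between them. Since a finer equivalence makes distributability harder to achieve---any distributed witness $N'$ with $N'\approx N$ would also satisfy $N'\approx_\mathscr{R}N$---non-distributability up to $\approx_\mathscr{R}$ propagates upward to non-distributability up to $\approx$ for every such $\approx$, in particular up to $\approx^\Delta_{bSTb}$.

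For the ``if'' direction I would invoke \refthm{fullmgttrulysync} directly: if $N$ has no fully reachable \visible pure \structuralM, it is distributable up to $\approx^\Delta_{bSTb}$, \ie there is a distributed net $N'$ with $N'\approx^\Delta_{bSTb}N$. Because $\approx^\Delta_{bSTb}$ is the \emph{finest} equivalence in the range, the same witness $N'$ certifies distributability up to every coarser equivalence, again by \refpr{step ready ST}; in particular $N$ is distributable up to $\approx_\mathscr{R}$. Combining the two directions, for any fixed equivalence between $\approx_\mathscr{R}$ and $\approx^\Delta_{bSTb}$ the two-sided characterisation holds verbatim, which is exactly the robustness claimed in the corollary.

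There is essentially no technical obstacle remaining here; the substantive content was discharged in \refthm{correctness} and \refthm{cri-distributed} (through the conflict replicating implementation) together with the adopted bound of \refthm{trulysyngltfullm}. The only point requiring care is the bookkeeping about the \emph{direction} in which implications travel along the fineness order: the negative result must be taken at the coarse end and the positive result at the fine end, so that both meet on the common interval of equivalences and yield a single uniform characterisation.
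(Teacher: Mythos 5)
Your proposal is correct and follows exactly the paper's own route: the corollary is obtained by combining \refthm{trulysyngltfullm} (the negative bound, stated for the coarse end $\approx_\mathscr{R}$ and propagated to finer equivalences) with \refthm{fullmgttrulysync} (the positive bound, established at the fine end $\approx^\Delta_{bSTb}$ and propagated downward), using \refpr{step ready ST} to order the two equivalences. Your remark about the direction in which distributability travels along the fineness order is precisely the bookkeeping the paper relies on.
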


\section{Conclusion}

In this paper, we have given a precise characterisation of
distributable Petri nets in terms of a semi-structural property. Moreover, we
have shown that our notion of distributability corresponds to an
intuitive notion of a distributed system by establishing that any
distributable net may be implemented as a network of asynchronously
communicating components.

In order to formalise what qualifies as a valid implementation, we needed a suitable
equivalence relation. We have chosen step readiness equivalence for showing the
impossibility part of our characterisation, since it is one of the simplest and least
discriminating semantic equivalences imaginable that abstracts from internal actions but
preserves branching time, concurrency and divergence to some small degree. For the
positive part, stating that all other nets are implementable, we have introduced a
combination of several well known rather discriminating equivalences, namely a divergence
sensitive version of branching bisimulation adapted to ST-semantics.  Hence our
characterisation is rather robust against the chosen equivalence; it holds in fact for all
equivalences between these two notions.  However, ST-equivalence (and our version of it)
preserves the causal structure between action occurrences only as far as it can be expressed in terms
of the possibility of durational actions to overlap in time. Hence a natural question is
whether we could have chosen an even stronger causality sensitive equivalence for our
implementability result, respecting e.g.\ pomset equivalence or history preserving
bisimulation.  Our conflict replicating implementation does not fully preserve the causal
behaviour of nets; we are convinced that we have chosen the strongest possible equivalence
for which our implementation works.  It is an open problem to find a class of nets that
can be implemented distributedly while preserving divergence, branching time and causality
in full.
Another line of research is to investigate which Petri nets can be
implemented as distributed nets when relaxing the requirement of
preserving the branching structure.
If we allow linear time correct implementations (using a step trace
equivalence), we conjecture that all Petri nets become distributable.
However, also in this case it is problematic, in fact even impossible in our setting, to
preserve the causal structure, as has been shown in \cite{EPTCS64.9}.
A similar impossibility result has been obtained in the world of the $\pi$-calculus in
\cite{EPTCS64.7}.

The interplay between choice and synchronous communication has already been investigated
in quite a number of approaches in different frameworks. We refer to
\cite{glabbeek08syncasyncinteractionmfcs} for a rather comprehensive overview and
concentrate here on recent and closely related work.

The idea of modelling asynchronously communicating sequential components by sequential
Petri nets interacting though buffer places has already been considered in
\cite{reisig82buffersync}. There Wolfgang Reisig introduces a class of systems, represented
as Petri nets, where the relative speeds of different components are guaranteed to be
irrelevant.  His class is a strict subset of our LSGA nets, requiring additionally,
amongst others, that all choices in sequential components are free, \ie do not depend upon
the existence of buffer tokens, and that places are output buffers of only one component.
Another quite similar approach was taken in \cite{EHH10}, where
transition labels are classified as being either input or output.
There, asynchrony is introduced by adding new buffer places during
net composition.
This framework does not allow multiple senders for a single receiver.

Other notions of distributed and distributable Petri nets are proposed in
\cite{hopkins91distnets,BCD02,BD11}. In these works, given a distribution of the
transitions of a net, the net is distributable iff it can be implemented by a net that is
distributed w.r.t.\ that distribution. The requirement that concurrent transitions may not
be co-located is absent; given the fixed distribution, there is no need for such a
requirement. These papers differ from each other, and from ours, in what counts as a valid
implementation. A comparison of our criterion with that of Hopkins
\cite{hopkins91distnets} is provided in \cite{glabbeek08syncasyncinteractionmfcs}.

In \cite{glabbeek08syncasyncinteractionmfcs} we have obtained a characterisation similar to
Corollary~\ref{cor-fullmeqtrulysync}, but for a much more restricted notion of distributed
implementation (\emph{plain distributability}), disallowing nontrivial transition
labellings in distributed implementations.  We also proved that fully reachable pure
\structuralM s are not implementable in a distributed way, even when using transition
labels (\refthm{trulysyngltfullm}). However, we were not able to show that this upper
bound on the class of distributable systems was tight.  Our current work implies the
validity of Conjecture 1 of \cite{glabbeek08syncasyncinteractionmfcs}.
While in \cite{glabbeek08syncasyncinteractionmfcs} we considered only one-safe
place/transition systems, the present paper employs a more general class of
place/transition systems, namely structural conflict nets. This enables us to give a
concrete characterisation of distributed nets as systems of sequential components
interacting via non-safe buffer places.

\bibliographystyle{eptcs}

\begin{thebibliography}{10}
\providecommand{\bibitemdeclare}[2]{}
\providecommand{\urlprefix}{Available at }
\providecommand{\url}[1]{\texttt{#1}}
\providecommand{\href}[2]{\texttt{#2}}
\providecommand{\urlalt}[2]{\href{#1}{#2}}
\providecommand{\doi}[1]{doi:\urlalt{http://dx.doi.org/#1}{#1}}
\providecommand{\bibinfo}[2]{#2}

\bibitemdeclare{article}{BCD02}
\bibitem{BCD02}
\bibinfo{author}{E.~Badouel}, \bibinfo{author}{B.~Caillaud} \&
  \bibinfo{author}{P.~Darondeau} (\bibinfo{year}{2002}):
  \emph{\bibinfo{title}{Distributing Finite Automata Through Petri Net
  Synthesis}}.
\newblock {\sl \bibinfo{journal}{Formal Aspects of Computing}}
  \bibinfo{volume}{13}(\bibinfo{number}{6}), pp. \bibinfo{pages}{447--470},
  \doi{10.1007/s001650200022}.

\bibitemdeclare{inproceedings}{BD11}
\bibitem{BD11}
\bibinfo{author}{E.~Best} \& \bibinfo{author}{Ph. Darondeau}
  (\bibinfo{year}{2012}): \emph{\bibinfo{title}{Petri Net Distributability}}.
\newblock In \bibinfo{editor}{E.M. Clarke}, \bibinfo{editor}{I.~Virbitskaite}
  \& \bibinfo{editor}{A.~Voronkov}, editors: {\sl
  \bibinfo{booktitle}{Perspectives of Systems Informatics - {\rm Revised
  Selected Papers presented at the 8th International} Andrei Ershov Memorial
  Conference, {\rm PSI 2011, Novosibirsk}}}, {\sl \bibinfo{series}{\rm LNCS}}
  \bibinfo{volume}{7162}, \bibinfo{publisher}{Springer}, pp.
  \bibinfo{pages}{1--18}, \doi{10.1007/978-3-642-29709-0_1}.

\bibitemdeclare{inproceedings}{EHH10}
\bibitem{EHH10}
\bibinfo{author}{D.~El{~}Hog{-}Benzina}, \bibinfo{author}{S.~Haddad} \&
  \bibinfo{author}{R.~Hennicker} (\bibinfo{year}{2010}):
  \emph{\bibinfo{title}{Process Refinement and Asynchronous Composition with
  Modalities}}.
\newblock In \bibinfo{editor}{N.~Sidorova} \& \bibinfo{editor}{A.~Serebrenik},
  editors: {\sl \bibinfo{booktitle}{{\rm Proceedings of the 2nd
  International Workshop on} Abstractions for Petri Nets and Other
  Models of Concurrency {\rm ({APNOC}'10)}}}, \bibinfo{address}{Braga, Portugal}.
\newblock
  \urlprefix\url{http://www.lsv.ens-cachan.fr/Publis/PAPERS/PDF/EHH-apnoc10.pdf}.

\bibitemdeclare{inproceedings}{vanglabbeek93linear}
\bibitem{vanglabbeek93linear}
\bibinfo{author}{R.J.~van Glabbeek} (\bibinfo{year}{1993}):
  \emph{\bibinfo{title}{The Linear Time - Branching Time Spectrum {II}}}.
\newblock In: {\sl \bibinfo{booktitle}{{\rm Proceedings of the 4th International
  Conference on} Concurrency Theory \rm (CONCUR'93)}},
  \bibinfo{publisher}{Springer}, \bibinfo{address}{London, UK}, pp.
  \bibinfo{pages}{66--81}, \doi{10.1007/3-540-57208-2\_6}.

\bibitemdeclare{article}{vanglabbeek01refinement}
\bibitem{vanglabbeek01refinement}
\bibinfo{author}{R.J.~van Glabbeek} \& \bibinfo{author}{U.~Goltz}
  (\bibinfo{year}{2001}): \emph{\bibinfo{title}{Refinement of actions and
  equivalence notions for concurrent systems}}.
\newblock {\sl \bibinfo{journal}{Acta Informatica}}
  \bibinfo{volume}{37}(\bibinfo{number}{4/5}), pp. \bibinfo{pages}{229--327},
  \doi{10.1007/s002360000041}.

\bibitemdeclare{inproceedings}{glabbeek08syncasyncinteractionmfcs}
\bibitem{glabbeek08syncasyncinteractionmfcs}
\bibinfo{author}{R.J.~van Glabbeek}, \bibinfo{author}{U.~Goltz} \&
  \bibinfo{author}{J.-W. Schicke} (\bibinfo{year}{2008}):
  \emph{\bibinfo{title}{{On} Synchronous and Asynchronous Interaction in
  Distributed Systems}}.
\newblock In \bibinfo{editor}{E.~Ochma\'nski} \&
  \bibinfo{editor}{J.~Tyszkiewicz}, editors: {\sl
  \bibinfo{booktitle}{Mathematical Foundations of Computer Science 2008}}, {\sl
  \bibinfo{series}{\rm LNCS}} \bibinfo{volume}{5162},
  \bibinfo{publisher}{Springer}, pp. \bibinfo{pages}{16--35},
  \doi{10.1007/978-3-540-85238-4\_2}.
\newblock \bibinfo{note}{Full version available as Technical Report 2008-03,
  TU-Braunschweig; \url{http://arxiv.org/abs/0901.0048}}.

\bibitemdeclare{article}{glabbeek11ipl}
\bibitem{glabbeek11ipl}
\bibinfo{author}{R.J. van Glabbeek}, \bibinfo{author}{U.~Goltz} \&
  \bibinfo{author}{J.-W. Schicke} (\bibinfo{year}{2011}):
  \emph{\bibinfo{title}{Abstract Processes of Place/Transition Systems}}.
\newblock {\sl \bibinfo{journal}{Information Processing Letters}}
  \bibinfo{volume}{111}(\bibinfo{number}{13}), pp. \bibinfo{pages}{626 -- 633},
  \doi{10.1016/j.ipl.2011.03.013}.

\bibitemdeclare{article}{GLT09}
\bibitem{GLT09}
\bibinfo{author}{R.J. van Glabbeek}, \bibinfo{author}{B.~Luttik} \&
  \bibinfo{author}{N.~Tr\v{c}ka} (\bibinfo{year}{2009}):
  \emph{\bibinfo{title}{Branching Bisimilarity with Explicit Divergence}}.
\newblock {\sl \bibinfo{journal}{Fundamenta Informaticae}}
  \bibinfo{volume}{93}(\bibinfo{number}{4}), pp. \bibinfo{pages}{371--392}.
\newblock \bibinfo{note}{Archived at \url{http://arxiv.org/abs/0812.3068}}.

\bibitemdeclare{inproceedings}{GV87}
\bibitem{GV87}
\bibinfo{author}{R.J.~van Glabbeek} \& \bibinfo{author}{F.W. Vaandrager}
  (\bibinfo{year}{1987}): \emph{\bibinfo{title}{Petri net models for algebraic
  theories of concurrency (extended abstract)}}.
\newblock In: {\sl \bibinfo{booktitle}{{\rm Proceedings} PARLE '87}}, {\sl
  \bibinfo{series}{\rm LNCS}} \bibinfo{volume}{259},
  \bibinfo{publisher}{Springer}, pp. \bibinfo{pages}{224--242},
  \doi{10.1007/3-540-17945-3\_13}.
\newblock \urlprefix\url{http://kilby.stanford.edu/~rvg/pub/petri.pdf}.

\bibitemdeclare{article}{GW96}
\bibitem{GW96}
\bibinfo{author}{R.J.~van Glabbeek} \& \bibinfo{author}{W.P. Weijland}
  (\bibinfo{year}{1996}): \emph{\bibinfo{title}{Branching Time and Abstraction
  in Bisimulation Semantics}}.
\newblock {\sl \bibinfo{journal}{Journal of the ACM}}
  \bibinfo{volume}{43}(\bibinfo{number}{3}), pp. \bibinfo{pages}{555--600},
  \doi{10.1145/233551.233556}.

\bibitemdeclare{inproceedings}{hopkins91distnets}
\bibitem{hopkins91distnets}
\bibinfo{author}{R.P. Hopkins} (\bibinfo{year}{1991}):
  \emph{\bibinfo{title}{Distributable nets}}.
\newblock In: {\sl \bibinfo{booktitle}{Advances in {P}etri Nets 1991}}, {\sl
  \bibinfo{series}{\rm LNCS}} \bibinfo{volume}{524},
  \bibinfo{publisher}{Springer}, pp. \bibinfo{pages}{161--187},
  \doi{10.1007/BFb0019974}.

\bibitemdeclare{book}{Mi89}
\bibitem{Mi89}
\bibinfo{author}{R.~Milner} (\bibinfo{year}{1989}):
  \emph{\bibinfo{title}{Communication and Concurrency}}.
\newblock \bibinfo{publisher}{Prentice Hall}, \bibinfo{address}{Englewood
  Cliffs}.

\bibitemdeclare{article}{OH86}
\bibitem{OH86}
\bibinfo{author}{E.-R. Olderog} \& \bibinfo{author}{C.A.R. Hoare}
  (\bibinfo{year}{1986}): \emph{\bibinfo{title}{Specification-oriented
  semantics for communicating processes}}.
\newblock {\sl \bibinfo{journal}{Acta Informatica}} \bibinfo{volume}{23}, pp.
  \bibinfo{pages}{9--66}, \doi{10.1007/BF00268075}.

\bibitemdeclare{inproceedings}{EPTCS64.7}
\bibitem{EPTCS64.7}
\bibinfo{author}{K.~Peters}, \bibinfo{author}{J.-W. Schicke} \&
  \bibinfo{author}{U.~Nestmann} (\bibinfo{year}{2011}):
  \emph{\bibinfo{title}{Synchrony vs Causality in the Asynchronous
  Pi-Calculus}}.
\newblock In \bibinfo{editor}{B.~Luttik} \& \bibinfo{editor}{F.~Valencia},
  editors: {\sl \bibinfo{booktitle}{{\rm Proceedings 18th International
  Workshop on} Expressiveness in Concurrency, {\rm Aachen, Germany, 5th
  September 2011}}}, {\sl \bibinfo{series}{Electronic Proceedings in
  Theoretical Computer Science}}~\bibinfo{volume}{64}, pp.
  \bibinfo{pages}{89--103}, \doi{10.4204/EPTCS.64.7}.

\bibitemdeclare{article}{reisig82buffersync}
\bibitem{reisig82buffersync}
\bibinfo{author}{W.~Reisig} (\bibinfo{year}{1982}):
  \emph{\bibinfo{title}{Deterministic Buffer Synchronization of Sequential
  Processes}}.
\newblock {\sl \bibinfo{journal}{Acta Informatica}} \bibinfo{volume}{18}, pp.
  \bibinfo{pages}{115--134}, \doi{10.1007/BF00264434}.

\bibitemdeclare{inproceedings}{EPTCS64.9}
\bibitem{EPTCS64.9}
\bibinfo{author}{J.-W. Schicke}, \bibinfo{author}{K.~Peters} \&
  \bibinfo{author}{U.~Goltz} (\bibinfo{year}{2011}):
  \emph{\bibinfo{title}{Synchrony vs. Causality in Asynchronous Petri Nets}}.
\newblock In \bibinfo{editor}{B.~Luttik} \& \bibinfo{editor}{F.~Valencia},
  editors: {\sl \bibinfo{booktitle}{{\rm Proceedings 18th International
  Workshop on} Expressiveness in Concurrency, {\rm Aachen, Germany, 5th
  September 2011}}}, {\sl \bibinfo{series}{Electronic Proceedings in
  Theoretical Computer Science}}~\bibinfo{volume}{64}, pp.
  \bibinfo{pages}{119--131}, \doi{10.4204/EPTCS.64.9}.

\bibitemdeclare{article}{Vo93}
\bibitem{Vo93}
\bibinfo{author}{W.~Vogler} (\bibinfo{year}{1993}):
  \emph{\bibinfo{title}{Bisimulation and Action Refinement}}.
\newblock {\sl \bibinfo{journal}{Theoretical Computer Science}}
  \bibinfo{volume}{114}(\bibinfo{number}{1}), pp. \bibinfo{pages}{173--200},
  \doi{10.1016/0304-3975(93)90157-O}.

\end{thebibliography}

\end{document}